\newtheorem{thm}{Theorem}[section]
\newtheorem{lem}[thm]{Lemma}
\newtheorem{cor}[thm]{Corollary}
\theoremstyle{definition}
\newtheorem{defn}[thm]{Definition}
\theoremstyle{remark}
\newtheorem{rem}[thm]{Remark}
\newtheorem{expl}[thm]{Example}
\newcommand{\bvar}{{\textbf{bvar}}\xspace}
\newcommand{\fvar}{{\textbf{fvar}}\xspace}
\newcommand{\sth}{\theta}
\newcommand{\mea}{\llbracket}
\newcommand{\ning}{\rrbracket}
\newcommand{\me}[1]{\mea{#1}\ning}
\newcommand{\bracq}[1]{[{#1}]_i}
\newcommand{\brseq}[2]{\bracq{\seq{{#1}}{{#2}}}}
\newcommand{\braseq}[2]{\bracq{\seq{{#1}}{{#2}}}}
\newcommand{\idseq}[2]{\ensuremath {\langle\!\langle {#1}|{#2}\rangle\!\rangle}}
\newcommand{\mee}{\mea{\,}\ning}  
\newcommand{\fv}[1]{\ensuremath{{\sf fv}({#1})}} 
\newcommand{\imp}{\rightarrow}
\newcommand{\arr}{\!\rightarrow\!}
\newcommand{\NULL}{{\ensuremath \Box}}
\newcommand{\pmi}{\leftarrow}
\newcommand{\wimp}{\,{\Rightarrow}\,}
\newcommand{\cfpf}{\ensuremath{
\:\vdash^{^{\!\!\!_{\mbox{\scriptsize \sf cf}}\!}}}}
\newcommand{\seq}[2]{{#1}\vdash {#2}}
\newcommand{\cfseq}[2]{{#1}\cfpf {#2}}
\newcommand{\seqi}[2]{{#1}\vdash_i {#2}}
\newcommand{\cfseqi}[2]{{#1}\cfpf_i {#2}}
\newcommand{\seqx}[3]{{#2}\vdash_{#1} {#3}}
\newcommand{\cfseqx}[3]{{#2}\cfpf_{#1} {#3}}
\newcommand{\cg}{\mbox{${\cal G}$}}
\newcommand{\cv}{\mbox{${\cal V}$}}
\newcommand{\cl}{\mbox{${\cal L}$}}
\newcommand{\nf}[1]{{\bf nf(}{#1}{\bf )}}
\newcommand{\fa}{\ensuremath{\mathfrak A}}
\newcommand{\fb}{\ensuremath{\mathfrak B}}
\newcommand{\fd}{\mbox{${\mathfrak D} $}}
\newcommand{\sfd}{\mbox{${\mathfrak D}_{\sf s}$}}
\newcommand{\case}[1]{\vspace{0.2cm}\noindent\underline{#1}:\par\vspace{0.1cm}}
\newcommand{\sem}[2]{{\me{#1}_{#2}}}
\newcommand{\nat}{\mathbb{N}} 
\newcommand{\nn}{\ensuremath {\mathbb N }} 
\newcommand{\bbp}{\ensuremath {\mathbb P}}
\newcommand{\bool}{\ensuremath {\mathbb B }}
\newcommand{\nullset}{\varnothing}        
\newcommand{\warn}[1]{}
\newcommand{\cat}[1]{\mathbb{#1}}
\newcommand{\ids}{\ensuremath{{\mathit id}}}
\newcommand{\oml}{\text{$\Omega_\ell$}}  
\newcommand{\omfib}{\ensuremath{\widetilde{\Omega}}}
\newcommand{\fri}{\ensuremath{\mathfrak I}}
\newcommand{\clat}{c$\mathbb{L}$AT} 
\newcommand{\ang}[1]{\langle {#1} \rangle}   
\newcommand{\cc}{\mbox{${\cal C}\,$}}
\newcommand{\ct}{\mbox{${\cal T}\,$}}
\newcommand{\bd}{{\bf d}}         %
\newlength{\strikewidth}
\newcommand{\sfrac}[2]{\frac{\textstyle{#1}}{\textstyle{#2}}}
\newcommand{\sequent}[2]{{#1}\,\vdash\,{#2}}
\newcommand{\horseshoe}{\supset}
\newcommand{\hoe}{\horseshoe}
\newcommand{\dee}{{\sf D}}
\newcommand{\gee}{{\sf g}}
\newcommand{\aee}{{\sf a}}
\newcommand{\ree}{{\sf r}}
\newcommand{\pee}{{\sf p}}
\newcommand{\tdee}[2]{\dee_{{#1},{#2}}}
\newcommand{\deeip}{\tdee{i}{\pee}}
\newcommand{\deep}{\dee_\pee}
\newcommand{\deeig}{\tdee{i}{\gee}}
\newcommand{\deeih}{\tdee{i}{\hee}}
\newcommand{\deeia}{\tdee{i}{\aee}}
\newcommand{\deeir}{\tdee{i}{\ree}}
\newcommand{\deegg}{\dee_{\gee}}
\newcommand{\kforces}{\Vdash}  
\newcommand{\mforces}{\ |\!\!\!\!\!\models} 
\newcommand{\app}{{\sf App}}
\newcommand{\const}{\mbox{\sf Const}}
\newcommand{\inc}{\mbox{\sf Inc}}
\newcommand{\aph}{\varphi}
\newcommand{\thoint}[3]{\hoint_{#1}({#2},{#3})}
\newcommand{\linden}[3]{\cl_{#1}({#2},{#3})}
\newcommand{\sthoint}[3]{{(\shoint)_{#1}}({#2},{#3})}
\newcommand{\whoint}{I}  
\newcommand{\ifix}{{\ensuremath{I^{^\ast}\!\!}}}  
\newcommand{\ifixx}{{\ensuremath{I^{^\ast}\!\!\!}}}  
\newcommand{\tua}{\ensuremath{\mathfrak{T}}}  
\newcommand{\whobot}{I_\bot}  
\newcommand{\wthoint}[3]{\whoint_{#1}({#2},{#3})}
\newcommand{\hoforce}[4]{{#1}_{#2}({#3},{#4}) = \top_\Omega}
\newcommand{\psem}[1]{\mea{#1}\ning}
\newcommand{\harr}{\rightarrow}
\newcommand{\sos}{{\Omega}_{\sf s}}
\renewcommand{\owedge}{\wedge_\Omega}
\newcommand{\oleq}{\leq_\Omega}
\newcommand{\sleq}{\leq_{\sf s}}
\newcommand{\tops}{\top_{\sf s}}
\newcommand{\bigland}{\bigwedge}
\newcommand{\biglor}{\bigvee}
\newcommand{\hemod}{\ensuremath{{\mathfrak E}} }
\newcommand{\vars}{\ensuremath{{\cal V\, }} }
\newcommand{\bleadsto}{\ensuremath{\mathbf \leadsto}}
\newcommand{\bleadston}[1]{\ensuremath{\stackrel{#1}{\bleadsto}}}
\newcommand{\bareres}[3]{{#1}\ \stackrel{#2}{\bleadsto} \ {#3}}
\newcommand{\resdots}[3]{{#1}\   \stackrel{}{\bleadsto} \,
        \stackrel{#2}{\cdots} \,  \stackrel{}{\bleadsto} \ {#3}}
\newcommand{\cutrest}[5]{{#1} \, \stackrel{}{\bleadsto} \,
  \stackrel{#2}{\cdots} \, \stackrel{}{\bleadsto} \,  {#3}\, \stackrel{#4}{\bleadsto} \, {#5}
  }
\newcommand{\mfa}{\ensuremath{{\mathfrak A}}}
\newcommand{\mfai}{\ensuremath{{\mathfrak A}_1}}
\newcommand{\mfaii}{\ensuremath{{\mathfrak A}_2}}
\newcommand{\mfb}{\ensuremath{{\mathfrak B}} }
\newcommand{\mfc}{\ensuremath{{\mathfrak C}} }
\newcommand{\univ}{\ensuremath{{\cal U}} }
\newcommand{\leqt}{\leq_{\cal T}}
\newcommand{\eqt}{\sim_{\cal T}}
\newcommand{\wedget}{\wedge_{\cal T}}
\newcommand{\veet}{\vee_{\cal T}}
\newcommand{\srest}[3]{{#1} \, \stackrel{#2}{\bleadsto} \, {#3} }
\newcommand{\nullres}[1]{{#1} \, \stackrel{}{\bleadsto} \,
            \cdots \, \stackrel{}{\bleadsto} \, \NULL}
\newcommand{\nullsres}[2]{{#1} \, {\bleadsto} \,
            \stackrel{{#2}}{\cdots} \, \stackrel{}{\bleadsto} \, \NULL}
\newcommand{\midnullsres}[2]{{#1} \, \bleadsto \,
             \stackrel{{#2}}{\cdots} \, \stackrel{}{\bleadsto} \, \NULL}
\newcommand{\midnullsrest}[4]{{#1} \, \stackrel{#2}{\bleadsto} \, {#3}\,
           \stackrel{}{\bleadsto} \,   \stackrel{#4}{\cdots} \
           \stackrel{}{\bleadsto} \, \NULL}
\newcommand{\midnullsrestarray}[4]{
\begin{array}{ll}
{#1} \,  \stackrel{#2}{\bleadsto} & \\
\hspace*{3em} \, {#3}\, & \stackrel{}{\bleadsto} \,   \stackrel{#4}{\cdots} \
           \stackrel{}{\bleadsto} \, \NULL
\end{array}
}
\newcommand{\midres}[3]{{#1}  \, \bleadsto \, \stackrel{#2}{\cdots}
 \, \bleadsto   \, {#3}}
\newcommand{\midsrest}[5]{{#1} \, \stackrel{#2}{\bleadsto} \, {#3}\,
           \stackrel{}{\bleadsto} \,   \stackrel{#4}{\cdots} \
           \stackrel{}{\bleadsto} \, {#5}}
\newcommand{\starmidres}[3]{{#1}  \, \stto \,
 \stackrel{#2}{\cdots}
 \, \stto   \, {#3}}
\newcommand{\starnullsres}[2]{{#1} \, \stto \,
             \stackrel{{#2}}{\cdots} \, \stto \, \NULL}
\newcommand{\stto}{\ensuremath{\stackrel{*}{\bleadsto}}}
\newcommand{\hhh}{\ensuremath{{\mathcal H}}}
\newcommand{\hee}{{\sf h}}
\newcommand{\hoint}{\mathcal{I}}
\newcommand{\shoint}{{\ensuremath{\mathcal{I}_{\sf s}}}}
\newcommand{\umod}{\mathfrak{U}}
\newcommand{\tmod}{\mathfrak{T}}
\newcommand{\dmod}{\mathfrak{D}}
\newcommand{\tp}[1]{\ensuremath{T^{#1}(I_\bot)}}
\newcommand{\tpn}{\tp{n}}
\newcommand{\tpni}{\tp{n+1}}
\newcommand{\hotpn}[3]{\tpn_{#1}({#2},{#3})}
\newcommand{\elab}{\textsf{elab}}
\newcommand{\down}{{\downarrow}}
\newcommand{\upvec}{{\uparrow}}
\newcommand{\hresgoal}[4]{\langle {#2} ; {#3} \mid {#4} \rangle}
\newcommand{\ipresgoal}[3]{\langle {#1} ; {#2} \mid {#3} \rangle}
\newcommand{\pgstate}[3]{\ipresgoal{#1}{#2}{#3}}
\newcommand{\stipresgoal}[3]{\mfa\otimes \langle {#1} ; {#2} \mid {#3} \rangle
\otimes\mfb }
\newcommand{\stsipresgoal}[4]{\mfa {#4}\otimes \langle {#1} ; {#2}
\mid {#3} \rangle  \otimes\mfb {#4} }
\newcommand{\eresgoal}[1]{\langle \;  \rangle}
\newcommand{\btraca}{\ensuremath{\mathbf \vdash\!\!\!\!\!\!\vdash}}
\newcommand{\resub}[1]{\ensuremath {\ \stackrel{{#1}}{\bleadsto \ }}}
\newcommand{\residots}[1]{\ensuremath {\ {\bleadsto}\stackrel{{#1}}{\cdots}
  \bleadsto\ }}
\newcommand{\vladres}[3]{\mfa ,\ {#1},\ \mfb \
  \stackrel{#2}{\bleadsto} \ \mfa ,\ {#3} ,\ \mfb}
\newcommand{\vladsub}[3]{\mfa ,\ {#1} ,\ \mfb \
  \stackrel{#2}{\bleadsto} \ \mfa{#2} ,\ {#3} ,\ \mfb{#2}}
\newcommand{\vladnsub}[2]{\mfa ,\ {#1} ,\ \mfb \
  \stackrel{#2}{\bleadsto} \ \mfa{#2} ,\ \mfb{#2}}
\newcommand{\sqq}{\sqsubseteq}
\newcommand{\?}{\mbox{\tt  ?-  }}
\newcommand{\pmonoid}{{\sf M}_{\pee}}
\newcommand{\ol}[1]{\overline{{#1}}}
\newcommand{\oland}{\ol{\wedge}}
\newcommand{\olor}{\ol{\vee}}
\newcommand{\olhorseshoe}{\ol{\hoe}}
\newcommand{\olSigma}{\ol{\Sigma}}
\newcommand{\olPi}{\ol{\Pi}}
\newcommand{\ssq}{\subseteq}
\newcommand{\epr}{\ensuremath{\nullset}} 
\newcommand{\mland}{\bm{\wedge}}
\newcommand{\mPi}{\bm{\Pi}}
\newcommand{\dom}{\mathsf{dom}}
\newcommand{\rng}{\mathsf{rng}}
\newcommand{\oc}{\mathsf{oc}}
\newcommand{\firstd}{\mathsf{first}}
\newcommand{\lastd}{\mathsf{last}}
\newcommand{\derstep}[1]{\mathrel{\stackrel{#1}{\leadsto}}} 
\newcommand{\derivation}[1]{\mathrel{\stackrel{#1}{\leadsto \cdots \leadsto}}} 
\newcommand{\ICTT}{\textsf{ICTT}\xspace}
\newcommand{\UCTT}{\textsf{UCTT}\xspace}
\newcommand{\HOHH}{\text{HOHH}\xspace}
\newcommand{\FOHH}{\text{FOHH}\xspace}
\newcommand{\HOH}{\text{HOH}\xspace}  
\renewcommand{\vladres}[3]{\mfa\otimes {#1}\otimes \mfb \ 
  \stackrel{#2}{\bleadsto} \ \mfa\otimes {#3}\otimes \mfb}
\renewcommand{\vladsub}[3]{\mfa\otimes {#1}\otimes \mfb \ 
  \stackrel{#2}{\bleadsto} \ \mfa{#2}\otimes {#3}\otimes \mfb{#2}}
\renewcommand{\vladnsub}[2]{\mfa\otimes {#1}\otimes \mfb \ 
  \stackrel{#2}{\bleadsto} \ \mfa{#2}\otimes \mfb{#2}}
\renewcommand{\hee}{{\sf h}}
\renewcommand{\bool}{o}
\newcommand{\jlfbox}[1]{} 
\newcommand{\gabox}[1]{}
\newcommand{\gaboxcut}[1]{}
\newcommand{\gaboxsubs}[1]{}
\newcommand{\jlbox}[1]{}
\begin{document}

\title{
\vspace*{-4em} {
Uniform Algebras:\\ [0.5em]
{\large Models and constructive Completeness for \\
Full, Simply Typed $\lambda$Prolog}
}
}

\author{Gianluca Amato\\
University of Chieti--Pescara\\
\mbox{\small \tt gianluca.amato@unich.it}
\and
Mary DeMarco\\
Wesleyan University\\
\mbox{\small \tt mdemarco@wesleyan.edu}
\and
James Lipton\\
Wesleyan University\\
\mbox{\small \tt jlipton@wesleyan.edu}
}
\date{May 23, 2024}

\maketitle

\tableofcontents

\newpage

\bibliographystyle{plain}

\begin{abstract}

This paper introduces a model theory for resolution on Higher Order
Hereditarily Harrop formulae (\HOHH), the logic underlying the
$\lambda$Prolog programming language, and proves soundness and
completeness of resolution. The semantics and the proof
of completeness of the formal system is shown in several ways,
suitably adapted to deal with the
impredicativity of higher-order logic, which rules out definitions of
truth based on induction on formula structure. First, we use the least
fixed point of a certain operator on interpretations, 
in the style of Apt and Van Emden, Then a constructive
completeness theorem is given
using a proof theoretic variant of the Lindenbaum algebra, which also
contains a new approach to establishing cut-elimination.

\end{abstract}

\section{Preface}
HOHH is an executable fragment
of an intuitionistic sequent presentation of Church's theory of types,
here called \ICTT, introduced by Miller and Nadathur in
the early 1990s and shown to be an instance of their notion of an
{\em abstract logic programming language\/} in
\cite{uniform}.

A number of fragments of \HOHH have been shown to be suitable for logic
programming,  among them the Higher-Order Horn fragment (\HOH) and
the First-Order Hereditarily Harrop formulas (\FOHH).

An enrichment of \HOHH, with higher-order unification and a certain
degree of type polymorphism, constitutes the framework for the $\lambda$Prolog
programming language. This language allows programming in typed
higher-order intuitionistic logic, with a declarative
treatment of modules via implication and with universally quantified
goals, and the ability to handle abstract syntax via lambda-tree syntax as
well as abstract data types. Recently there has been a lot of work
to embed it, or a variant such as ELPI, 
in proof assistants like {\sf Coq}
\cite{tassi19itp,coq-elpi20web,dunchev15lpar,tassi18coqpl}.

Model theories have been defined for some \HOHH fragments.
Miller \cite{modules} defined an indexed variant of Kripke semantics
in 1989 for the $\forall$-free part of \FOHH. DeMarco extended these
models to full \FOHH and \HOHH in \cite{DeMarco}, and defined an equivalent
Kripke semantics, in 1999. This was the first time a sound and complete
semantics appeared for (intensional) \HOHH with the Apt-Van Emden
bottom-up $T_P$ operator. This paper expands the results in DeMarco's
dissertation \cite{DeMarco} in considering other classes of models as well.

Independently, Kripke semantics for a core \HOHH \emph{with
  constraints} was given in \cite{lipton-nieva-tcs}. Implicitly, this
gives a Kripke model theory for the logic of $\lambda$Prolog by
choosing the right constraint system, with $\beta\eta$-reduction
handled by equational constraint manipulation. The applicability to
$\lambda$Prolog is indirect, as it is based on a constraint-based
sequent calculus and not on
resolution rules. As a result the semantics is not operational, as the
one introduced in this paper (based on Nadathur's \cite{nadathurPP}) is, and
does not have the flexibility of 
our broader class of algebraic models. Lastres
\cite{elastres-phd,LastresMoreno98} produced a denotational semantics
for \FOHH and gave definitions for the higher-order case, in
2002. 
Also, the classical higher-order Horn fragment was shown sound and
complete with respect to a declarative semantics introduced by Wolfram in
\cite{wolf94}, and independently by Blair and Bai in \cite{bb92}, and
Wadge in \cite{wadge91,wadge13}. Interestingly, \cite{wolf94}
builds on the Takahashi--Andrews V-complex
model theory \cite{takahashi,andrews71}
originally introduced to solve the Takeuti conjecture  for
cut-elimination in classical higher-order logic, inducting on types
to deal with the impossibility of inducting on higher-order formulas, which form an
impredicative class.  However, the larger
hereditarily Harrop fragment considered here requires intuitionistic
logic as discussed in the cited papers by Miller and Nadathur,
and, consequently, an intuitionistic
model theory. See \cite{ccct05}
for semantics of \ICTT, the
intuitionistic version of Church's type theory
of which \HOHH is an executable
fragment.

HOHH with uniform proofs is here
shown sound and complete with respect to {\em Uniform Algebras},
an indexed version of typed Omega-set semantics similar to that
introduced by Rasiowa and Sikorski \cite{RasiowaSikorski,TvD} and
Fourman and Scott \cite{FourmanScott}, for
higher-order logic,
and used to model \ICTT in \cite{ccct05}.
A bottom-up semantics is also developed here, based on these structures,
producing a minimal term model as the least fixed point of
an appropriate continuous operator.

The reader should note that \eqref{mforces}
\gabox{(1) and (2) have not been defined yet.}
essentially mimics the AUGMENT rule \cite{uniform} of the uniform proof theory for
$\lambda$Prolog  for which reason we  term it operational semantics.

It is easily shown sound and complete for
resolution, and admits a minimal Herbrand--style model, definable as a
suitable fixed point.
  What is harder is to show that it agrees with full semantic
logical consequence. In other words, that an equivalent model of type
(1) above exists.
The formal difference above between \eqref{kforces} and
\eqref{mforces} reflects, in the semantics,
the difference between  full provability and uniform provability.

The equivalence of the two notions of deduction was shown
syntactically in \cite{uniform}. Semantic equivalence for two
corresponding model theories (one for \HOHH and one for \ICTT)
was shown
directly for the \FOHH fragment in \cite{DeMarco}, and for the \ICTT/\HOHH
logics with constraints in \cite{lipton-nieva-tcs}.


We now proceed to define the syntax and proof theory of \HOHH. In
order to justify the presence of (numerically) indexed variables in the syntax we
 briefly discuss in the Appendix the problem that arises in proof search if we do not
 \gabox{We repeat ourselved at the end of the section}
enforce a device for keeping track of the change of signature that
takes place when solving universally quantified goals  to  avoid the
inconsistency that arises from 
allowing pre-existing logic variables to unify with constants.

In order to reflect this problem \emph{semantically}
we construct a countable hierarchy of universes.
We will need a supply of countably many fresh variables and constants
at each integer level $i$ and at every type $\alpha$, as is further
explained in the development of the \UCTT type theory, below.



\paragraph{Indexed models, operationality and bottom-up semantics}

In \cite{modules}, Miller produced an indexed variant of Kripke models
for a first-order fragment of $\lambda$Prolog in which the
conventional definition of truth in Kripke semantics at a possible
world $p$ in an arbitrary poset:
\begin{equation}
\label{kforces}
p\kforces A\horseshoe B \qquad \mbox{\it iff \quad for every}\quad
q\geq p,\qquad q\kforces A
\wimp q \kforces B
\end{equation}
is replaced by:
\begin{equation}
\label{mforces}
P \mforces D\horseshoe G \qquad\mbox{\it iff }\qquad P,D \mforces G
\end{equation}
where $P$ and $D$ must be program formulae and $G$ a goal.
Since the set of
possible worlds is fixed in advance, and consists of formulae,
this semantics --- as far as programs are concerned ---
has some of the flavor of a
term model.  In this paper, the model theory
is in no sense restricted to term models:
we supply arbitrary carriers for all types, including that
of program formulae. In this way we generalize Miller's semantics
in \cite{modules}, in addition to extending it
to higher order logic with implication and
universal quantification in goals as well.




We now proceed to define the syntax and proof theory of \HOHH. In
order to justify the presence of (numerically) indexed variables in the syntax we
 briefly discuss in the Appendix the problem that arises in proof
 search if we do not enforce a device for keeping track of the change
 of signature that takes place when solving universally quantified goals.




\section{Syntax of the Logic and the Programming Language}
\gabox{Change name of this section}

The fundamental aim of this paper is to define a semantics for an executable
fragment of logic, one that need not extend easily to the full
underlying logic. It is the underlying logic (\ICTT), however, that
gives declarative content to a logic program, and it is for this
reason that we term a model theory for \ICTT, as developed in
\cite{ccct05}, {\em declarative\/} and a model theory for \HOHH \textit{operational}.


Before presenting the calculus we review the definition of \HOHH
programs and goals, upon which \UCTT based.


\subsection{The \ICTT Calculus and \HOHH Programs}
\label{subsec:ICTT}
We recall the definitions of the Intuitionistic fragment of
Church's Theory of Types (\ICTT) upon which the \HOHH is built and programs
defined. 

Church's calculus consists of a collection of higher-order functional
types together with a simply typed lambda-calculus of terms.
Connectives and quantifiers are given explicitly as typed constants
 in the language. The set of base types must include a type of
logical formulas, as well as a type of individuals. Proofs are defined by a
sequent calculus for logical formulas. In more detail:
\begin{description}
    \item[Type expressions] are formed from
a set of {\em base types\/} (which must include at least the type $\iota$ of
individuals and $\bool$ of truth values, or {\it formulae\/}) via
the functional (or arrow)  type constructor:
if $\alpha$ and $\beta$ are types then so is
 $\alpha\arr\beta$. {\bf Church's notation} for functional type
expressions is sometimes more convenient because of its brevity:
$(\beta\alpha)$ denotes the type $\alpha\arr\beta$.
Association, in Church's notation, is to the {\it left\/}
whereas arrow terms associate to the right: $\alpha\beta\gamma$ denotes
$(\alpha\beta)\gamma$, meaning $\gamma\arr\beta\arr\alpha$,
which is the same as  $\gamma\arr(\beta\arr\alpha)$.
We will use his notation interchangeably with arrow notation,
especially with complex type expressions.
  \item[Terms] are always associated with a type expression, and are
built up from typed constants and typed variables (countably many  at each
type $\alpha$):
$x^1_\alpha,x^2_\alpha,\ldots,y^1_\alpha,y^2_\alpha,\ldots$
according to the following definition:
\begin{enumerate}
    \item A variable $x_\alpha$ is a term of type $\alpha$.
  \item   A constant of type $\alpha$ is a term of type $\alpha$.
  \item If $t_1$ is a term of type $\alpha\arr\beta$ and $t_2$ is a
  term of type $\alpha$ then $t_1 t_2$ is a term of type $\beta$,
  called an application.
  \item If $t$ is a term of type $\beta$ then $\lambda x_\alpha.t$ is
  a term of type $\alpha\arr\beta$, called an abstraction.
\end{enumerate}
Terms may be displayed subscripted with their associated type, when
convenient. We omit such subscripts from variables when clear from
context, or when the information is of no interest.

\item[Logical Constants] The following logical constants are available
to define formulas:
\[
 \top_\bool, \bot_\bool, \land_{\bool\bool\bool},
\lor_{\bool\bool\bool}, \horseshoe_{\bool\bool\bool},
\]
as well as
\[
\Sigma_{\bool (\bool\alpha)} \quad \mbox{\small  and }
\quad \Pi_{\bool (\bool\alpha)},
\]
the latter two existing for all types $\alpha$. 

Note that because of currying in Church's type theory, the constants
$\wedge,\vee$ corresponding to the usual binary logical connectives,
have higher types $\bool\bool\bool$,
i.e. $\bool\arr(\bool\arr\bool)$.
\end{description}

The meaning of the logical constants $\land$, $\lor$ and $\horseshoe$ is
evident, while $\Sigma$ and $\Pi$ are used to define the quantifiers
$\exists$ and $\forall$: $\exists x_\alpha. P_\bool$ is an
abbreviation for $\Sigma_{\bool (\bool\alpha)} \lambda x_\alpha
. P_{\bool}$ and $\forall x_\alpha. P_\bool$ is an
abbreviation for $\Pi_{\bool (\bool\alpha)} \lambda x_\alpha
. P_{\bool}$.  Also, $f(x_\alpha)$ is an
abbreviation of $f_{\beta\alpha}  x_\alpha$, for some $\beta$.
We will often write logical formulas in infix form, e.g.\@ $A\hoe B,
A\wedge B$. This should be viewed as informal notation for the true
terms involved: $\hoe_{ooo}AB, \wedge_{ooo}AB$.

\paragraph{Reduction}
Two terms are called \emph{$\alpha$-convertible} if it is possible to obtain
one from the other by renaming bound variables.   
  We recall that a $\beta$-{\em redex\/} is a term  $(\lambda
  x.t ) u$,
formed by application of an abstraction to any compatibly typed
  term. The {\em contractum\/} (or $\beta$-contraction)
 of such a term is defined to be
 any term $t[u/x]$ resulting from the substitution (after renaming of
  bound variables if necessary to avoid capture) of  $u$ for all free
  occurrences of $x$ in $t$.  $\beta$-contraction also refers to the
  binary relation
  between redices and contracta. The transitive closure of this binary
  relation and its extension to terms containing redices in subterms
  is called $\beta$-reduction. Contraction
  of  $\eta$-redices $\lambda x.tx$ to $t$ when $x$ does not occur
  freely in $t$ induces $\eta$-reduction. In this paper we will refer
  to the combination of both reductions, $\beta\eta$-reduction as
  $\lambda$-reduction, and the reflexive, symmetric, transitive
  closure of this relation as $\lambda$-{\bf equivalence.}
  \gabox{I think it would be better to ay that $\lambda$-equivalence also include $\alpha$-conversion}
  If a term cannot be $\lambda$-reduced we say it is in \emph{normal
    form}.


    In Church's Theory, as in Andrews, Miller and Nadathur's work, a
  canonical linear ordering of variables is used to ensure the existence of
  unique  \emph{principal} normal forms%
  \footnote{A term in principal normal form, as defined in
  \cite{church40} is a term in  normal form in which no variable occurs
  both free and bound, and in which bound variables must  be
  consecutive variables in increasing order, with no omissions, save
  those variables which occur freely in the term. In \cite{church40} it
  is shown that for each normal term there is a unique equivalent term in
  principal normal form.}
$\nf{t}$ for  each term $t$, 
  by enforcing a specific choice of bound variables. Since we do not
  want to bother with  $\lambda$-equivalence and $\alpha$-conversion,
  in the following we assume that all  terms are always in principal
  normal form, and automatically converted to principal 
  normal form at the end of every operation.

 \begin{defn}
A {\bf formula} is any term of type $\bool$.
 An atomic formula is a term of type $\bool$ which is not
$\lambda$-equivalent to any term whose head is a logical
constant.
\end{defn}

An atomic formula (in normal form) is either a variable $X_o$ or a term of the
form $p t_1\cdots t_m$ with the $t_i$ (each of type $\alpha_i$)
in normal form and $p$ a non-logical
constant or variable of type $\alpha_1\imp\cdots\alpha_m\imp o$.
If $p$ is a constant, the formula is called \emph{rigid}, otherwise \emph{flex}.

\begin{figure}[t]
  \protect
  \label{fig:ictt}
$$
\infer[\top_R
]
{\sequent{\Gamma}{\top}}
{}
\qquad
\infer[Ax
]
{\sequent{\Gamma, U}{U}}
{}
$$

$$
\infer[\land_L]
{\sequent{\Gamma, B \land C}{A}}
{\sequent{\Gamma, B, C}{A}}
\qquad
\infer[\land_R]
{\sequent{\Gamma}{B \land C}}
{\sequent{\Gamma}{B} & \sequent{\Gamma}{C}}
$$

$$
\infer[\lor_L]
{\sequent{\Gamma, B \lor C}{A}}
{\sequent{\Gamma, B}{A} & \sequent{\Gamma, C}{A}}
\qquad
\infer[\lor_R]
{\sequent{\Gamma}{B_1 \lor B_2}}
{\sequent{\Gamma}{B_i}}
$$

$$
\infer[\horseshoe_L]
{\sequent{\Gamma, B \horseshoe C}{A}}
{\sequent{\Gamma}{B} & \sequent{\Gamma, C}{A}}
\qquad
\infer[\horseshoe_R]
{\sequent{\Gamma}{B \horseshoe C}}
{\sequent{\Gamma, B}{C}}
$$

$$
\infer[\forall_L]
{\sequent{\Gamma, \forall x.P}{A}}
{\sequent{\Gamma, P[t/x]}{A}}
\qquad
\infer[\forall_R\,\ast]
{\sequent{\Gamma}{\forall x.P}}
{\sequent{\Gamma}{P[c/x]}}
$$

$$
\infer[\exists_L\,\ast]
{\sequent{\Gamma, \exists x.P}{A}}
{\sequent{\Gamma, P[c/x]}{A}}
\qquad
\infer[\exists_R]
{\sequent{\Gamma}{\exists x.P}}
{\sequent{\Gamma}{P[t/x]}}
$$

$$
\infer[\bot_R]
{\sequent{\Gamma}{B}}
{\sequent{\Gamma}{\bot}}
$$

\centering
{\small Constants appearing in rules marked with an asterisk are fresh.}
\caption{\label{fig:deduction}Higher-order Sequent Rules of $\ICTT^c$.}
\end{figure}

\subsection{Deduction}
We assume familiarity with Gentzen's sequent calculus, and in
particular the intuitionistic LJ (see e.g.\@ \cite{GLT}).
We present here, in Figure~\ref{fig:deduction},
a sequent system for
\ICTT, similar to that of e.g.\@ \cite{uniform} of Miller
et al. Following them, we modify the $\forall_R$ and
$\exists_L$ rules of the calculus to use fresh constants instead of
fresh eigenvariables. We call the modified calculus $\ICTT^c$. It
is easily seen equivalent to the more conventional version. Since all formulas are in principal normal
form, we avoid the need for Miller and Nadathur's $\lambda$-rule (which allows
replacement of any parts of a sequent by $\lambda$-equivalent
formulas). In the rest of the paper, with \ICTT we denote the $\ICTT^c$
calculus in Figure~\ref{fig:deduction}.

  In the sequent calculus displayed above
all {\em sequents\/} $\sequent{\Gamma }{A}$ consist
  of a {\it set\/} of formulas (antecedents) $\Gamma$ and a single
  formula $A$ (consequent). The symbol $U$ in the identity rule stands
for any atomic formula.
 The rules of inference shown in the
  figure consist of {\em premisses:\/} the zero, one or two sequents
 above the line and {\it conclusions:\/} the lone sequent below the
  line.

\begin{lem}
  Left-weakening, left-contraction and CUT are derived rules in \ICTT.
\end{lem}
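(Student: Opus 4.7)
The plan is to treat the three rules separately, with CUT being the main challenge.

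\textbf{Left-contraction} is immediate. Since the antecedent $\Gamma$ in every sequent is a \emph{set} of formulas, the premise $\sequent{\Gamma, B, B}{A}$ and the conclusion $\sequent{\Gamma, B}{A}$ denote literally the same sequent, so the rule is trivially admissible with no proof transformation required.

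\textbf{Left-weakening} I would prove by induction on the height of a given derivation of $\sequent{\Gamma}{A}$, showing that a derivation of $\sequent{\Gamma, B}{A}$ can be constructed for any formula $B$. The axiom cases ($Ax$ and $\top_R$) are immediate, since extending $\Gamma$ with $B$ preserves their side conditions. For every connective rule in Figure~\ref{fig:deduction}, inserting $B$ on the left commutes with the rule: apply the induction hypothesis to each premise and reapply the same rule with $B$ present throughout. The only delicate case is a final $\forall_R$ or $\exists_L$ whose eigenconstant $c$ happens to occur in $B$; one first renames $c$ to a truly fresh constant $c'$ throughout the subderivation (a routine substitution lemma for the proof system), and then invokes the induction hypothesis.

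\textbf{CUT} is the main obstacle. The standard Gentzen argument, by double induction on the complexity of the cut formula and the heights of the two subderivations, fails in the higher-order setting: an instance $P[Q/x_{\bool\arr\bool}]$ of a quantified cut formula $\forall x.P$ can have strictly greater logical complexity than the cut formula itself. This impredicativity is precisely the Takeuti-conjecture obstruction mentioned in the preface. I would therefore establish CUT \emph{semantically}, postponing the hard work to the sections in which the proof-theoretic Lindenbaum algebra is developed: the argument proceeds by showing (i) soundness of the cut-free fragment of \ICTT with respect to the uniform algebra models, which is a routine induction on cut-free derivations, and (ii) completeness of the cut-free fragment with respect to the same class of models, which is the content of the later Lindenbaum-algebra construction and avoids any structural induction on formulas by working inside the algebraic quotient of cut-free provable equivalence. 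Combining (i) and (ii), if $\sequent{\Gamma}{A}$ and $\sequent{\Gamma, A}{B}$ are both derivable (possibly with CUT), then $\sequent{\Gamma}{B}$ is valid in every model, hence admits a cut-free derivation. The main technical hurdle will clearly be (ii); everything else in this lemma reduces to careful bookkeeping.
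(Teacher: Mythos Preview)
Your treatment of left-weakening and left-contraction matches the paper's: it appeals to the set-based antecedents for contraction and to an induction on proof height (using that both initial sequents already allow an arbitrary $\Gamma$) for weakening.

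For CUT, the paper does not prove anything; it simply cites \cite{ccct05}, which develops Heyting-valued models for \emph{full} \ICTT and derives cut-elimination there. Your proposed route via the Lindenbaum construction of \emph{this} paper does not work, for a structural reason: the uniform algebras and the algebra $\oml$ built in Section~5 interpret only program--goal sequents from the \HOHH fragment. An arbitrary \ICTT cut formula need not be a program formula (it may contain $\vee$, $\exists$, or $\bot$ positively), so the sequent $\sequent{\Gamma,A}{B}$ is in general not one that these models can even evaluate. At best your argument would yield admissibility of cut restricted to cut formulas that happen to be programs, which is far weaker than the lemma's claim.

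There is also a smaller logical slip in the argument as written: from ``soundness of the cut-free fragment'' and ``completeness of the cut-free fragment'' you cannot conclude that a sequent derivable \emph{with} CUT is valid. The standard semantic cut-elimination argument needs soundness of the full calculus including CUT (which is where the semantic transitivity/modus ponens step enters), together with completeness of the cut-free calculus. You have the ingredients reversed. This is easily repaired in a setting where arbitrary sequents have a semantics, but as noted above that setting is \cite{ccct05}, not the uniform algebras developed here.
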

The first claim is a very easy consequence, by induction on proof depth,
of the fact that the two initial sequents in \ICTT, the
sequents $\sequent{\Gamma}{\top}$ and $\sequent{\Gamma,U}{U}$ are allowed
for any set of formulas $\Gamma$.
The derivability of CUT in \ICTT is established in \cite{ccct05}.

\subsection{The \UCTT Calculus}
\paragraph*{Programs and Goals}
Let $\hhh$ be the set of {\em positive terms}, the terms which do not
include $\bot$, $\horseshoe$ or $\forall$; $A$ will
refer to an atomic formula over $\hhh$ and $A_r$ to a rigid atomic
formula over $\hhh$.
  Let $G$ refer to a goal formula, $D$ to a definite clause (that is, a single
program formula), and $P$ to a set of program formulae; $P \cup D$ or
$P,D$ when clear from context,
will stand for $P \cup \{D\}$. All programs and goals are assumed in
normal form.

\begin{defn}
\label{def:hoprograms}
HOHH program formulae $D$ and goal formulae $G$
are given by:
\begin{eqnarray*}
D & ::= & A_r \mid G\supset A_r \mid  D\land D \mid  \forall x.D
\\
G & ::=& \top \mid A\mid G\lor G \mid G\land G \mid \exists x.G \mid
D\supset G \mid \forall x.G
\end{eqnarray*}
\end{defn}



%
%

\paragraph*{Bound and free variables}
In order to simplify working with terms and substitutions, the language of the \UCTT 
fragment we now  define will contain countably many symbols to be used
{\em only\/} as bound variables in logical formulas,
and countably many symbols for free variables, also called \emph{logic
variables}. Free and bound variables are disjoint. The definition of principal normal
form is simplified, since no variables may occur both free and
bound. Occasionally we will need to be especially careful about free
and bound variables and our notational conventions. To this end we
make the following definition.
\begin{defn}
  We define the set \bvar of bound variables, and the set \fvar,
  of free, or logic variables. The \bvar set has countably many variables
  of each of the types specified below. 
\end{defn}
\begin{rem}
  \label{rem:vars}
The reader should be aware of certain uses of notation in this paper.
For example, when we define the {\bf instance} resolution rule below,
we replace a goal of the form $\exists x.G$ by the goal $G[t/x]$ where
$t$ is an arbitrary term \emph{all of whose free variables are in}
\fvar. We are replacing the \emph{freely occurring} \bvar
$x$ in $G$ by the term $t$. The result of this replacement is a goal
in which all freely occurring variables are in \fvar. In other words,
variables in \bvar may occur freely, but only in intermediate computations.
\end{rem}

\paragraph*{Subtypes of the formula type}
\gaboxcut{I propose to remove the type system and just use the type $o$, as in the standard \ICTT. Then, 
Positive formulae, rigid atoms, atoms, programs and goals are simply
the terms of the corresponding {\em subtypes\/} of Church's type $o$ of
logical formulae. Certain constants
Definition~\ref{def:hoprograms} gives the allowed formulas in \UCTT.}
and functions are barred from the calculus to make it impossible to
form illegal programs or goals. In particular, all terms appearing in
atomic formulae are positive. \UCTT has, in addition, the following restrictions

\begin{enumerate}

\item The base types are the types $\iota$ of
individuals, $\hee$ of (positive)
logical formulae, $\aee$ of atoms, $\ree$ of rigid
atoms, $\pee$ of programs and $\gee$ of goals, in addition to any
user-defined types.

\item There are inclusions between types as indicated in Figure~\ref{fig:UCTTdiagram}.
Thus, a term $t_\aee$ of atomic type may also be viewed as a term
of goal type.
   One way of formalizing this inclusion is to add constants such as 
   ${\sf Inc}_{\aee\ree}$, $\inc_{\pee\aee}$ ${\sf Inc}_{\gee\aee}$.
We will occasionally display these maps explicitly, but to save the
reader fatigue,  we will eventually  employ them tacitly
in the semantics (as inclusions). Other implicit inclusions are shown as dashed lines
above.
\item All other types are
functional types $\beta\alpha$.
\item The logical constants are
$\top_\gee$, $\top_\hee$, $\land_{\pee\pee\pee}$,
$\land_{\gee\gee\gee}$, $\land_{\hee\hee\hee}$, $\lor_{\gee\gee\gee}$,
$\lor_{\hee\hee\hee}$, $\horseshoe_{\gee\gee\pee}$,
$\horseshoe_{\pee\ree\gee}$, $\Sigma_{\gee(\gee\alpha)}$,
$\Sigma_{\hee(\hee\alpha)}$, $\Pi_{\gee(\gee\alpha)}$,
$\Pi_{\pee(\pee\alpha)}$ and $\Pi_{\hee(\hee\alpha)}$.

\item  Constants
of type $\ree\alpha_1,\cdots\alpha_n$ and both constants and variables of type
$\aee\alpha_1\cdots\alpha_n$ are allowed, for $n\geq 0$,
and $\alpha_i=\hee$ or a non-logical type.
\item There may be
no other constants or variables of types $\pee$ or $\gee$, nor of
functional types involving $\pee$ or $\gee$, other than those which
come from the included types 
$\ree\subseteq\aee\subseteq\gee$.
For example $X_\aee$ is a legal term of type $\gee$,
because it is a legal atom, but $X_{\pee\gee}$ is not, since it could
never have been formed in $\aee$.
\end{enumerate}
Note that because of the typing of $\hoe$ it is impossible
to construct negative formulae in $\hee$ or as subformulae of atomic formulae.

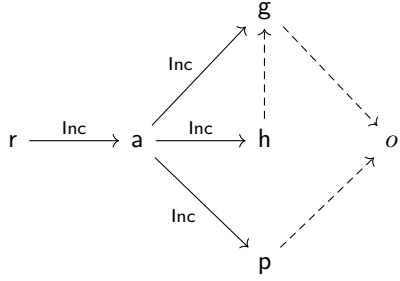
\begin{figure}  
  \centering
  \begin{tikzcd}[sep=8ex,arrow style=tikz,diagrams={>={Latex}}]
      & & \gee \arrow[dr, densely dashed] \\
      \ree \arrow[r, "\textsf{Inc}"] & \aee \arrow[ur, "\textsf{Inc}"] \arrow[r, "\textsf{Inc}"] \arrow[dr, swap, "\textsf{Inc}"]  & \hee \arrow[u, densely dashed] & o\\
      & & \pee \arrow[ur, densely dashed]
  \end{tikzcd}
  \caption{\label{fig:UCTTdiagram}Type Structure of \UCTT}
\end{figure}

\subsection{Levels and Signatures}
\label{hosigs}
\jlbox{Too much bla bla bla... in all the page}



As discussed in the Appendix~\ref{sec:varindex}
we will need to
construct a countable hierarchy of universes (signatures), in order to
avoid the  inconsistency that arises from
allowing pre-existing logic variables to unify with constants
introduced during execution for the sake of universal goals.

Let $\univ_0(\alpha)$ include all constants and variables in \bvar of type $\alpha$, along with
countably many logic variables in \fvar; constants and
variables of $\univ_0$ are labelled with the integer
$0$, indicated with a subscript unless clear from context
or immaterial to the discussion at hand.
The set $\univ_i(\alpha)$ will include
$\univ_{j}(\alpha)$ for all $j<i$ 
along with a new set of countably many logic
variables in \fvar and countably many fresh constants of type $\alpha$,
labelled with the index $i$.

A (positive) term over $\univ_i$ contains only typed constants and variables from $\bigcup_\alpha \univ_i(\alpha)$.  A program over $\univ_i$ is a program whose terms are all over $\univ_i$.

We will often refer to the {\em level\/} of a term, program, goal, set
of goals, or individual or sequences of states, to be defined
below. It always refers to the maximum label of any constant or
logic variable present in the corresponding structure.
If $t$ is a term, we write $\fv{t}$ to refer to the set of free
variables in $t$. If $S$ is a set of terms, then  $\fv{S}$ is the union of
all $\fv{t}$ with $t\in S$.
%

\paragraph{Substitutions}
Substitutions are functions from variables to
terms {\em with finite
support\/}. That is to say, off of a finite set of variables, a
substitution must coincide with the identity function. We will refer
to the set of variables {\em not fixed\/} by a substitution as its
{\it domain\/} or support.
The image of a substitution when restricted to its
set of support will be called its {\em range\/}. As is often done in
the literature, we will sometimes just mean the set of free variables
occurring in this image.

We will call a substitution  {\em legal} if its
range is in the set of {\em positive\/} terms,
it preserves type,
 and is level-preserving,
so that only a term over the universe $\univ_i$ is
substituted for a logic variable of label $i$. The level of a substitution is
the maximum level of all the terms in its range. Since substitutions
will almost always be legal in this paper (which will not stop us from
reminding the reader of their legality from time to time)
we distinguish notationally
between substitutions, represented by Greek letters
such as $\theta,\theta_1,\theta_2,
\ldots, \varphi,\varphi_1,\varphi_2\ldots$,
and {\em instantiations\/}
$Q[t/x]$ which denote the result of replacing all free occurrences of
$x$ in the goal, program, state or state vector $Q$ by the term $t$.
Unless explicitly stated, instantiations may violate the level
restriction imposed on substitutions, but never type preservation.
 If we call an instantiation
legal, we mean levels are preserved.

\paragraph*{Applying a substitution to a term.} If $\theta$ is a
substitution, we will abuse language and denote its lifting to a
function on {\em terms\/} with the same symbol. This lifting is
defined as follows
\begin{equation*}
\theta(t)  = t[\theta(x_1)/x_1, \ldots, \theta(x_n)/x_n]
\end{equation*}
where $x_1,\ldots,x_n$ are precisely the variables in the domain of $\theta$. Note
that no capturing of free variables may occur since we have disjoint sets for free and
bound variables.
We will write the application of a
substitution $\theta$ to a variable (or term)
$u$ as $u\theta$ as an alternative to $\theta(u)$.


Now that we have defined the application of a substitution to a term,
composition of substitutions makes sense.
We will write compositions of substitutions in diagrammatic form:
$\theta_1\theta_2$ is the substitution that maps a variable $x$ to
$\theta_2(\theta_1(x))$.

When a substitution is applied to a set of terms, it is understood to be applied to each term separately. This notation will be used in particular for programs.

In order to avoid many problems with possible variable
name clashes in deductions, we will make use of a particularly
useful kind of substitution.
\begin{defn}
  We will say a substitution is {\em safe\/} or
{\em idempotent\/}
if the free variables in
its range are disjoint from its domain.
\end{defn}
It is easy to see that a substitution $\theta$
is safe precisely when it is {\em idempotent\/} in the usual algebraic
sense, namely when $\theta\theta=\theta$.

\subsection{Proof procedures for \HOHH Formulae}

We now present two non-deterministic transition systems we call
RES(Y) and RES(t), along the lines of \cite{nadathurPP} but with some
significant notational differences, as a proof theory for \UCTT.
These differences are discussed in some detail in Section~\ref{subsec:pp} in the appendix.

Instead of the constrained sequent system that was used in
\cite{uniform} to define uniform proofs
we use an operational proof system so as
better to model the behavior of programs.  Rather than attempt to automate the
$\land_L$ and $\forall_L$ rules of the sequent calculus explicitly, we
decompose the program into simple clauses.


 We begin with a few technical lemmas.
  \begin{lem}
    \label{lem:well-order}
    The following complexity  measure defines a well-ordering on program
    formulas.
    \begin{itemize}
    \item $\delta(A_r) = 0$
    \item $\delta(G\hoe A_r) = 0$
    \item $\delta(D_1 \wedge D_2) = 1 + max\{\delta(D_1), \delta(D_2)\}$
    \item $\delta(\forall x D) = 1 + \delta(D)$
    \end{itemize}
      \end{lem}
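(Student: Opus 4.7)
The plan is to prove the statement by establishing two separate claims. First, I would verify that $\delta$ is well-defined as a total function from program formulas into $\mathbb{N}$. This follows by structural induction on the grammar for $D$ given in Definition~\ref{def:hoprograms}: the four clauses of the definition of $\delta$ correspond exactly to the four production rules for $D$, namely $A_r$, $G\hoe A_r$, $D_1 \wedge D_2$, and $\forall x. D$, and in each recursive case $\delta$ is invoked on strictly smaller subformulas. Since program formulas are finite syntactic objects, the recursion terminates and yields a natural number in every case.

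Second, I would observe that since $\delta$ takes values in $(\mathbb{N}, <)$, which is well-ordered by the standard strict order, the pulled-back relation $D \prec D'$ defined by $\delta(D) < \delta(D')$ is automatically well-founded on program formulas: any infinite descending chain $D_0 \succ D_1 \succ D_2 \succ \cdots$ would induce an infinite strictly descending chain in $\mathbb{N}$, contradicting the well-ordering of the naturals. This suffices for the later proofs about the \textsf{RES} transition systems, where $\delta$ will be used as an induction measure when decomposing program clauses under the $\land$ and $\forall$ connectives.

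I would be explicit that, strictly speaking, what is obtained is a well-founded (pre)order rather than a total well-order, since syntactically distinct program formulas routinely share the same $\delta$-value (e.g.~every $A_r$ and every $G \hoe A_r$ is assigned $0$). This is intentional: the measure is designed to track precisely the two program connectives, $\wedge$ and $\forall$, that the forthcoming resolution procedure decomposes, while ignoring goal-level structure inside an implication, whose complexity is irrelevant for the termination arguments of Section~\ref{subsec:pp}.

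There is no real obstacle here; the only point worth double-checking is that the clauses of $\delta$ cover every syntactic case exactly once, which is immediate from the grammar in Definition~\ref{def:hoprograms}, and that the $\max$ in the $\wedge$-clause guarantees $\delta(D_1 \wedge D_2) > \delta(D_i)$ for $i = 1,2$, ensuring that decomposing a conjunction strictly decreases $\delta$ — the property actually needed downstream.
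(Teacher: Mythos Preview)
Your proposal is correct and, in fact, more thorough than the paper's own treatment: the paper states this lemma without proof, evidently regarding it as self-evident, and immediately moves on to the consequence that $\delta(D[t/x]) = \delta(D)$. Your observation that the measure yields a well-founded preorder rather than a genuine total well-order is a valid sharpening of the statement; the paper's phrasing is slightly loose here, but as you note, only well-foundedness is actually used downstream (in the inductive arguments on program-formula structure such as Lemmas~\ref{lem:elab-fv-compl}--\ref{lem:clause-deriv}).
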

  An easy consequence is
  \begin{lem}
    For all program formulas $D$ and terms $t$, $\delta(D[t/x]) = \delta(D)$.    
  \end{lem}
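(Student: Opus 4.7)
The plan is to prove this by straightforward structural induction on the program formula $D$, using the grammar from Definition~\ref{def:hoprograms}. The crucial observation, which makes the argument almost mechanical, is that the measure $\delta$ is entirely insensitive to the internal structure of rigid atoms and goal formulas: it only counts occurrences of the top-level program-formula connectives $\wedge$ and $\forall$. So all I really need to check is that substitution into $D$ preserves this outer program-formula skeleton.

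For the base cases, I would argue as follows. If $D = A_r$, then $A_r = c\,t_1\cdots t_m$ for some non-logical constant $c$ of appropriate type. Substitution $A_r[t/x]$ rewrites only the arguments, since constants are not affected, so the result is still a rigid atom, and $\delta(A_r[t/x]) = 0 = \delta(A_r)$. If $D = G\hoe A_r$, then $(G\hoe A_r)[t/x] = G[t/x] \hoe A_r[t/x]$, and the second component remains rigid by the previous argument, so $\delta = 0$ on both sides. Here I use the typing discipline of \UCTT: there are no variables of type $\ree$, and head constants of rigid atoms lie in $\ree\alpha_1\cdots\alpha_n$, so no substitution can destroy rigidness.

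For the inductive cases: if $D = D_1\wedge D_2$, then substitution distributes over the conjunction, and the induction hypothesis gives
\[
\delta((D_1\wedge D_2)[t/x]) = 1 + \max\{\delta(D_1[t/x]),\delta(D_2[t/x])\} = 1 + \max\{\delta(D_1),\delta(D_2)\} = \delta(D).
\]
If $D = \forall y.\,D'$, then because \bvar and \fvar are disjoint and $x$ is a free variable while $y\in\bvar$ is bound, no renaming is needed, and $(\forall y.D')[t/x] = \forall y.\,D'[t/x]$. Applying the induction hypothesis to $D'$ gives $\delta(\forall y.D'[t/x]) = 1 + \delta(D'[t/x]) = 1 + \delta(D') = \delta(\forall y.D')$.

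There is essentially no obstacle; the only point worth flagging is the variable-capture question in the $\forall$ case, which is resolved for free by the paper's standing convention that bound and free variables come from disjoint pools and that all terms are kept in principal normal form. Consequently no $\alpha$-conversion is required, and the induction goes through cleanly without appealing to any fact about substitution beyond its compositional definition.
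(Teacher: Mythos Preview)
Your proof is correct and is precisely the natural argument. The paper itself gives no proof at all, merely stating the lemma as ``an easy consequence'' of the preceding definition of $\delta$; your structural induction on $D$ is exactly what that phrase stands in for.
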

  
\begin{defn}[elaboration of a program]
  \label{hoextprog}
  Given a program formula $D$ we define the set of \emph{clauses} $\elab(D)$, the \emph{elaboration} of $D$, by induction on the complexity of $D$, as follows:
  \begin{itemize}
    \item $\elab(A_r) = \{ A_r \}$
    \item $\elab(G \horseshoe A_r) = \{ G \horseshoe A_r \}$
    \item $\elab(D_1 \wedge D_2) = \elab(D_1) \cup \elab(D_2)$
    \item $\elab(\forall x D) = \{ \forall x D'\mid D' \in
      \elab(D)\}.$
  \end{itemize}
  Moreover, if $P$ is a program, we have $\elab(P) = \bigcup_{D \in P} \elab(D)$.
\end{defn}

Formally, in order to respect the condition that only variables in \fvar may occur freely, the definiton of $\elab(\forall x D)$ should be  $\{ \forall x D'[x/X] \mid D' \in \elab(D[X/x]) \}$, where $X \in \fvar$ does not occur in $D$. However, when the formal definition is too cumbersome, like in this case, we prefer to use a more informal but intuitive notation, using the same symbol to denote both the bound variable in \bvar and the replaced variable in \fvar.

We will need some technical lemmas about $\elab$.

\begin{lem}
  \label{lem:elab-fv-compl}
  Given a program formula $D$ and a clause  $K \in \elab(D)$:
  \begin{itemize}
    \item the set of free variables occuring in $K$ is a subset of the set of free variables occuring in $D$;
    \item the complexity of $K$ is not greater than the complexity of $D$. 
  \end{itemize}
\end{lem}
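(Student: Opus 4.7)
The plan is a routine structural induction on the program formula $D$, using the well-ordering $\delta$ from Lemma~\ref{lem:well-order} (equivalently, induction on the inductive definition of $\elab$ in Definition~\ref{hoextprog}). Both conclusions are proved simultaneously, since the clauses produced by $\elab$ are obtained by purely syntactic dismantling of $D$.

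For the base cases ($D = A_r$ and $D = G \horseshoe A_r$), we have $\elab(D) = \{D\}$, so $K = D$ and both statements hold with equality. For the conjunction case $D = D_1 \wedge D_2$, any $K \in \elab(D)$ lies in some $\elab(D_i)$, and by the inductive hypothesis $\fv{K} \subseteq \fv{D_i} \subseteq \fv{D}$ and $\delta(K) \leq \delta(D_i) < 1 + \max\{\delta(D_1),\delta(D_2)\} = \delta(D)$.

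The only case requiring a bit of care is the universal quantifier $D = \forall x.\,D_0$. Here every $K \in \elab(D)$ has the form $\forall x.\,D'$ with $D' \in \elab(D_0)$ (understood in the formal sense of the Remark, i.e. after swapping the bound $x$ with a fresh logic variable, running $\elab$, and re-binding). By the inductive hypothesis applied to $D_0$, $\fv{D'} \subseteq \fv{D_0}$ and $\delta(D') \leq \delta(D_0)$. Since $\fv{\forall x.\,D'} = \fv{D'}\setminus\{x\}$ and $\fv{\forall x.\,D_0} = \fv{D_0}\setminus\{x\}$, the free-variable inclusion is immediate, and $\delta(K) = 1 + \delta(D') \leq 1 + \delta(D_0) = \delta(D)$.

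The main (minor) obstacle is merely bookkeeping: one must verify that the informal $\forall x.\,D'$ notation in the $\forall$-clause of Definition~\ref{hoextprog} really does preserve free-variable sets, which boils down to the fact that the auxiliary fresh logic variable introduced when computing $\elab(D_0[X/x])$ is removed again in the re-binding step, so no spurious free variables survive. Once this is checked, the induction goes through without further work.
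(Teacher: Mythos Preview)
Your proposal is correct and takes exactly the same approach as the paper: the paper's proof reads in its entirety ``Immediate by induction on the complexity of $D$,'' and you have simply written out that induction in detail. Your handling of the $\forall$-case, including the bookkeeping remark about the fresh logic variable being re-bound, is the right way to unpack the informal notation of Definition~\ref{hoextprog}.
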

\begin{proof}
  Immediate by induction on the complexity of $D$.
\end{proof}

\begin{lem}[clause instance]
\label{lem:clause-renaming}
Let $D$ be a program formula and $\theta$ be a substitution, either legal or not. If  $K \in \elab(P)$ then $K\theta\in \elab(P\theta)$.
\end{lem}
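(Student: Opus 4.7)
The plan is to proceed by induction on the complexity $\delta(D)$ of program formulas, as guaranteed by the well-ordering from Lemma~\ref{lem:well-order}, and then lift the result from a single program formula to a set of program formulas by unfolding the definition $\elab(P)=\bigcup_{D\in P}\elab(D)$ and observing that the claim commutes with unions. The statement of the lemma appears to conflate $D$ and $P$; I will first establish it for a single program formula $D$, which is the genuine content, and then extend it to arbitrary programs essentially by bookkeeping.

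For the two base cases ($D = A_r$ and $D = G \horseshoe A_r$), $\elab(D)=\{D\}$, so $K=D$ and $K\theta = D\theta \in \{D\theta\} = \elab(D\theta)$. The conjunction case is easy: if $D = D_1 \wedge D_2$, then $D\theta = D_1\theta \wedge D_2\theta$ since substitution commutes with the logical constants, so $\elab(D\theta) = \elab(D_1\theta) \cup \elab(D_2\theta)$. A clause $K \in \elab(D)$ belongs to some $\elab(D_i)$, whose complexity is strictly smaller by definition of $\delta$, so the induction hypothesis yields $K\theta \in \elab(D_i\theta)\subseteq \elab(D\theta)$.

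The delicate case, and the one I expect to require the most care, is $D = \forall x\, D_0$. Here a clause of $\elab(D)$ has the form $K = \forall x\, D'$ with $D' \in \elab(D_0)$. Since substitutions act only on free variables in \fvar and $x \in \bvar$, applying $\theta$ commutes with the binder: $K\theta = \forall x\, (D'\theta)$, and likewise $D\theta = \forall x\, (D_0\theta)$. By induction hypothesis, $D'\theta \in \elab(D_0\theta)$, whence $\forall x\,(D'\theta) \in \elab(\forall x\,(D_0\theta)) = \elab(D\theta)$. If one follows the formal rendering suggested in the remark after Definition~\ref{hoextprog}, where $\elab(\forall x\,D_0)$ is really $\{\forall x\, D''[x/X]\mid D'' \in \elab(D_0[X/x])\}$ for a fresh $X \in \fvar$, then one chooses $X$ outside the support and range of $\theta$ as well, so that $X$ and $x$ stay inert under $\theta$; the argument then reduces to the informal one above by a renaming.

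Since no step of the argument uses level preservation — the only algebraic facts invoked are that $\theta$ commutes with $\wedge$ and with binders on \bvar — the proof goes through for arbitrary $\theta$, legal or not, which is exactly what the statement requires. Finally, for a program $P = \{D_1,\ldots\}$, $K \in \elab(P)$ lies in some $\elab(D_i)$, and then $K\theta \in \elab(D_i\theta) \subseteq \bigcup_{D \in P}\elab(D\theta) = \elab(P\theta)$, completing the proof.
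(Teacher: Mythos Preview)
Your proof is correct and follows essentially the same route as the paper: induction on the complexity $\delta(D)$, with identical base and conjunction cases, and a $\forall$ case that hinges on $\theta$ leaving the bound variable inert. The paper's proof is slightly more formal in the $\forall$ case, introducing an auxiliary fresh variable $w$ and invoking the induction hypothesis twice (once for the renaming $[w/X]$, once for $\theta$) to track the formal definition of $\elab(\forall x D)$ precisely; your sketch of the formal version, choosing $X$ outside the support and range of $\theta$, is an equivalent way to discharge the same bookkeeping. You also make explicit the lift from a single formula $D$ to a program $P$, which the paper's proof leaves implicit.
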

\begin{proof}
  The proof is by induction on the complexity of $D$. If $D = G \horseshoe A_r$, then $K = D$ and $\elab(D\theta)=\elab(G\theta \horseshoe A\theta) = \{ G\theta \horseshoe A\theta\} = \{ D\theta \}$. The case when $D$ is a rigid atom is similar to the previous one. If $D = D_1 \wedge D_2$, assume $K \in \elab(D_j)$ for $j \in \{1,2\}$. By inductive hypotehsis, $K\theta \in \elab(D_j\theta)$, hence $K\theta \in \elab(D_1\theta \wedge D_2\theta) =  \elab(D\theta)$. 
  
  Finally, if $D = \forall x \bar D$ and using the more formal
  definition given above, we have that $K = \forall x (\bar D'[x/X])$
  with $\bar D' \in \elab(\bar D[X/x])$ and $X \in \fvar$ does not
  occur in $\bar D$. Now consider a variable $w$ which does not occur
  in $\bar D$, $\bar D'$ and $\theta$. By inductive hypothesis, $\bar
  D'[w/X]  \in \elab(\bar D[X/x][w/X]) = \elab(\bar D[w/x])$. Again,
  by inductive hypothesis,
  \[\bar D'[w/X] \theta \in \elab(\bar
  D[w/x]\theta) = \elab(\bar D\theta[w/x]) .\]
  Therefore, by definition
  of the elaboration of a program,  
  $\forall x (\bar D'[w/X] \theta [x/w]) \in \elab(\forall x (\bar
    D\theta)) =  \elab((\forall x \bar D)\theta) =
  \elab(D\theta)$. Finally, note that
  \[
    \forall x (\bar D'[w/X] \theta [x/w]) = \forall x (\bar
    D'[w/X][x/w])\theta = (\forall x \bar D'[x/X])\theta = K\theta.\qedhere
    \] 
  \end{proof}

  In the following prove, we omit the renaming of a bound variable $x$ with the free variable
  $X$, and we use the same letter to denote both the free and bound variable, although they are
  technically different. The following lemma is a partial converse of the preceding one.
  \begin{lem}
    \label{lem:elab-shift}
    Let $D$ be a program formula, $t$ a term and $K$ a clause.
    If $K \in \elab(D[t/x])$ for any term $t$,
      then there is a clause $\forall x K'$ in $\elab(\forall x D)$ such
      that $K'[t/x] = K$.
    \end{lem}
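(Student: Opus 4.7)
The plan is to induct on the complexity $\delta(D)$ of the program formula $D$, which is well-founded by Lemma~\ref{lem:well-order}. The key point is that substitution preserves complexity, so $\delta(D[t/x]) = \delta(D)$ and the induction is genuinely on $D$, not on $D[t/x]$. Throughout, I use the informal convention (as in the previous lemma) of identifying the bound variable with its \fvar{} renaming.

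For the base cases, suppose $D = A_r$ or $D = G \horseshoe A_r$. Then $\elab(D[t/x]) = \{D[t/x]\}$, so $K = D[t/x]$; taking $K' = D$ we have $K' \in \elab(D)$, hence $\forall x\, K' \in \elab(\forall x\, D)$ by definition, and clearly $K'[t/x] = K$.

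For the conjunctive case $D = D_1 \wedge D_2$, note that $\elab(D[t/x]) = \elab(D_1[t/x]) \cup \elab(D_2[t/x])$, so $K \in \elab(D_j[t/x])$ for some $j \in \{1,2\}$. The inductive hypothesis gives $K'_j \in \elab(D_j)$ with $K'_j[t/x] = K$. Since $\elab(D_j) \subseteq \elab(D)$, we get $\forall x\, K'_j \in \elab(\forall x\, D)$ as required.

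The universal case $D = \forall y\, D_1$ is where the bound-variable bookkeeping matters most, and this is the step I expect to require the most care. Choosing $y$ distinct from $x$ and from $\fv{t}$ (by renaming if necessary), we have $D[t/x] = \forall y\,(D_1[t/x])$, so $\elab(D[t/x]) = \{\forall y\, E \mid E \in \elab(D_1[t/x])\}$. Hence $K = \forall y\, E$ with $E \in \elab(D_1[t/x])$. Since $\delta(D_1) < \delta(D)$, the inductive hypothesis applied to $D_1$ yields some $E' \in \elab(D_1)$ with $E'[t/x] = E$. Take $K' = \forall y\, E'$; then $\forall x\, K' = \forall x\, \forall y\, E' \in \elab(\forall x\, D)$ by the definition of $\elab$ on a universal, and $K'[t/x] = \forall y\,(E'[t/x]) = \forall y\, E = K$ (using $y \neq x$ and that $y \notin \fv{t}$ so no capture occurs). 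The one obstacle is ensuring that this commutation of the substitution past the binder is justified under the formal \bvar/\fvar{} discipline; this is handled exactly as in Lemma~\ref{lem:clause-renaming}, by introducing a fresh witness variable and appealing to the inductive hypothesis twice if a fully formal rendering is wanted.
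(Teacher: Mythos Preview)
Your proof is correct and follows essentially the same approach as the paper: structural induction on $D$, with the same case analysis and the same use of the definition of $\elab$ on universals to pass between $K' \in \elab(D_j)$ and $\forall x\,K' \in \elab(\forall x\,D_j)$. Your treatment of the universal case is in fact slightly more careful about the $y \neq x$, $y \notin \fv{t}$ conditions than the paper's sketch.
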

    \begin{proof}
    We sketch a proof by induction on the program formula $D$. If $K$ is a
    clause in $\elab((G\hoe A_r)[t/x])$ then $K = (G\hoe A_r)[t/x]$
    and obviously $\forall x (G\hoe A_r) \in \elab(\forall x (G\hoe
    A_r))$. The case when $D$ is a rigid atom is similar to the previous one.
    The inductive cases are given by $D_1\wedge D_2$ and
    $\forall x D_0$. In the first case, $K \in \elab((D_1\wedge
    D_2)[t/x])$, hence  in one of $\elab((D_i[t/x])$ for $i= 1$ or $2$. By
    the induction hypothesis, there is a clause $\forall xK'$ in 
    $\elab(\forall x D_i)$ with $K'[t/x] = K$.
    But since $\elab(\forall x D_i) \subseteq
    \elab(\forall x (D_1\wedge D_2))$ we are done.
    Now suppose $K \in \elab(\forall y
    (D[t/x]))$ where since $y$ is in \bvar, $y$ is not free
    in $t$. Then $K \in \{\forall y
    D' \mid D'\in \elab(D[t/x])\}$, so $K = \forall y. D'$
    for some $D'$ in $\elab(D[t/x])$. By the induction hypothesis,
    there is $K' \in \elab(D)$ such that $K'[t/x] = D'$.
   Thus, $\forall x \forall y K'  \in \elab{\forall x D}$ and 
   $(\forall y K') [t/x] = K$.
  \end{proof}

\begin{lem}[clause generalization]
\label{lem:clause-generalization}
Let $D$ a program formula, $x$ a variable, $c$ a costant and $K \in \elab(D[c/x])$. Then there is a clause $\hat{K} \in \elab(D)$ such that $\hat{K}[c/x]=K$.
\end{lem}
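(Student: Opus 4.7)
The plan is to proceed by induction on the complexity $\delta(D)$ (as defined in Lemma~\ref{lem:well-order}), closely paralleling the proof of Lemma~\ref{lem:elab-shift}. By that lemma together with the fact that $\delta(D[t/x]) = \delta(D)$, complexity is a well-founded measure compatible with the substitution $[c/x]$.

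In the two base cases, the elaboration is a singleton. If $D = A_r$, then $\elab(D[c/x]) = \{A_r[c/x]\}$, so $K = A_r[c/x]$ and $\hat K := A_r \in \elab(D)$ evidently satisfies $\hat K[c/x] = K$; the case $D = G\horseshoe A_r$ is entirely analogous. For the conjunction $D = D_1\wedge D_2$, we decompose $\elab(D[c/x]) = \elab(D_1[c/x])\cup \elab(D_2[c/x])$, so $K$ lies in some $\elab(D_i[c/x])$; the induction hypothesis applied to $D_i$ yields $\hat K \in \elab(D_i)\subseteq \elab(D)$ with $\hat K[c/x] = K$.

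The main obstacle is the universal quantifier case $D = \forall y\, D_0$, whose formal treatment involves the auxiliary renaming $y \mapsto X$ with $X \in \fvar$ fresh. Because $y\in\bvar$ we may assume $y \neq x$ and $y$ does not occur in $c$; we also choose $X$ fresh for $D$, $x$ and $c$. Then $D[c/x] = \forall y\, (D_0[c/x])$ and, by these freshness assumptions, substitution commutes with renaming: $D_0[c/x][X/y] = D_0[X/y][c/x]$. A clause $K \in \elab(D[c/x])$ therefore has the form $\forall y\, \bar D'[y/X]$ with $\bar D' \in \elab(D_0[X/y][c/x])$. Since $\delta(D_0[X/y]) = \delta(D_0) < \delta(D)$, the induction hypothesis applied to the program $D_0[X/y]$ supplies some $\hat{\bar D} \in \elab(D_0[X/y])$ with $\hat{\bar D}[c/x] = \bar D'$. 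Setting $\hat K := \forall y\, \hat{\bar D}[y/X]$ then gives an element of $\elab(D)$ by definition of $\elab$ on $\forall y\, D_0$.

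It remains to verify $\hat K[c/x] = K$, which reduces to the commutation of the two substitutions $[c/x]$ and $[y/X]$: since $y, X$ are disjoint from $x$ and from the free variables of $c$, we have $\hat{\bar D}[y/X][c/x] = \hat{\bar D}[c/x][y/X] = \bar D'[y/X]$, yielding $\hat K[c/x] = \forall y\,\bar D'[y/X] = K$. The only delicate point in the whole argument is this bookkeeping between \bvar and \fvar variables in the quantifier case, which is why we follow the same informal renaming convention adopted in Lemma~\ref{lem:elab-shift}: if one keeps the freshness conditions on $y$ and $X$ straight, the computation is mechanical.
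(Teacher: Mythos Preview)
Your proposal is correct and follows exactly the approach the paper indicates: the paper's own proof consists of a single sentence, ``Follows by induction on the complexity of $D$, like the previous lemmas,'' and your argument is precisely that induction, with the quantifier case spelled out in the spirit of Lemma~\ref{lem:elab-shift}. Your careful handling of the \bvar/\fvar bookkeeping and the commutation of $[c/x]$ with $[y/X]$ is the only nontrivial point, and you treat it correctly.
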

\begin{proof}
  Follows by induction on the complexity of $D$, like the previous lemmas.
\end{proof}

\begin{lem}
\label{lem:clause-deriv}
  Suppose $K\in\elab(D)$ where $D$ is a program formula. If $\sequent{\Gamma, K}{G}$ then $\sequent{\Gamma, D}{G}$.
\end{lem}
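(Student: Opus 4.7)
The plan is to reduce the statement to the auxiliary claim that $\sequent{\Gamma,D}{K}$ holds whenever $K\in\elab(D)$. Given this, the lemma follows immediately: from the hypothesis $\sequent{\Gamma,K}{G}$, left-weakening gives $\sequent{\Gamma,D,K}{G}$, and an application of the derived CUT rule, eliminating $K$, yields the desired $\sequent{\Gamma,D}{G}$. Both left-weakening and CUT are available as derived rules in \ICTT.

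I would prove the auxiliary claim by induction on the complexity $\delta(D)$ (Lemma~\ref{lem:well-order}). In the base cases $D=A_r$ and $D=G_0\horseshoe A_r$ the elaboration is $\{D\}$, so $K=D$ and one needs $\sequent{\Gamma,D}{D}$; this is a standard identity-expansion result, derivable in \ICTT by a short induction on formula structure using the logical introduction rules. For $D=D_1\wedge D_2$ one has $K\in\elab(D_i)$ for some $i\in\{1,2\}$; the induction hypothesis on $D_i$ gives $\sequent{\Gamma,D_i}{K}$, from which $\wedge_L$ preceded by left-weakening produces $\sequent{\Gamma,D_1\wedge D_2}{K}$.

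The decisive case is $D=\forall x\,D_0$ with $K=\forall x\,D'$ and $D'\in\elab(D_0)$. Starting from the goal $\sequent{\Gamma,\forall x\,D_0}{\forall x\,D'}$, I apply $\forall_R$ with a constant $c$ fresh for $\Gamma$, $D_0$ and $D'$, reducing the goal to $\sequent{\Gamma,\forall x\,D_0}{D'[c/x]}$. A single application of $\forall_L$ with instantiating term $c$ further reduces it to $\sequent{\Gamma,D_0[c/x]}{D'[c/x]}$. By the clause-instance lemma (Lemma~\ref{lem:clause-renaming}) we have $D'[c/x]\in\elab(D_0[c/x])$, and since $\delta(D_0[c/x])=\delta(D_0)<\delta(D)$ the induction hypothesis applies and yields this last sequent.

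The main obstacle is precisely this quantifier case: the reasoning hinges on the interplay between the freshness side-condition of $\forall_R$, the existence of an elaborated clause instance after substitution, and the invariance of $\delta$ under substitution. A direct proof without cut is also possible but requires a nested induction on proof depth to replace every application of $\forall_L$ on the elaborated clause $\forall x\,D'$ by one on the original formula $\forall x\,D_0$, which is substantially more laborious.
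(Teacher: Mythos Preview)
Your argument is correct but follows a different route than the paper. The paper proves the lemma directly by induction on the height of the given derivation of $\sequent{\Gamma,K}{G}$, with a nested induction on $\delta(D)$ in the cases where $K$ is the principal formula (namely $Ax$, $\forall_L$, $\horseshoe_L$); in particular it never invokes CUT or identity on non-atomic formulas. Your approach instead factors through the auxiliary claim $\sequent{\Gamma,D}{K}$ and appeals to CUT, which the paper has already recorded as a derived rule of \ICTT. This is shorter and conceptually cleaner, isolating the real content (each elaboration of $D$ is a consequence of $D$) from the application. The price is twofold: you rely on the external cut-elimination result for \ICTT, and in the base case $D=G_0\horseshoe A_r$ you need identity on the possibly complex goal $G_0$, which is standard but not entirely ``short'' in a higher-order calculus. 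The paper's direct argument avoids both dependencies, which matters if one wants the lemma to stand independently of cut-elimination. Interestingly, you anticipate the paper's strategy in your last paragraph: the ``nested induction on proof depth'' you describe as the cut-free alternative is exactly what the paper does.
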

\begin{proof}
  We proceed by induction on the height of the proof of $\sequent{\Gamma, K}{G}$. We consider several cases, according to the last rule used in this proof.
  
  If the last (and only) rule is $\top_R$, then the lemma holds trivially. 
  For all the other rules, the cases when $K$ is not the principal formula easily follows by induction. However, particular care must be given to the $\forall_R$ and $\exists_L$, due to the freshness conditions.
  \begin{itemize}
    \item if the last rule used is $\forall_R$, then we have
    \[
      \infer[\forall_R]
      {\sequent{\Gamma, K}{\forall x. G}}
      {\sequent{\Gamma, K}{G[c/x]}}
      \]
    where $c$ is a fresh constant. It easy to check \gabox{we need to prove this?} that in the proof of $\sequent{\Gamma, K}{G[c/x]}$ we can replace $c$ with any constant $c'$ to get a proof of $\sequent{\Gamma, K}{G[c'/x]}$. Therefore, consider any $c'$ which does not occur freely in $D$ or $\Gamma$. By inductive hypothesis, we have $\sequent{\Gamma, D}{G[c'/x]}$ and since $c'$ does not occur in $\Gamma, D$, by $\forall_R$ we get $\sequent{\Gamma, D}{\forall x G}$.
    \item if the last rule used is $\exists_L$, the proof proceeds as above.
  \end{itemize}

  When we restrict our attention to the case in which $K$ is the principal formula, only  $\textit{Ax}$ and left rules need to be considered. Moreover,  if the last rule is $\exists_L$, $\vee_L$ or $\wedge_L$, then $K$ cannot be the principal formula.
  \begin{itemize}
    \item If the last rule is $\textit{Ax}$, then $K = A_r$ is a rigid atom, and the proof follows by induction on the complexity of $D$.
    If $D = A_r$, then the result follows by left contraction. If $D = D_1 \wedge D_2$, then  $K \in \elab(D_i)$ for some $i \in \{1,2\}$. By inductive hypothesis, $D_i, K \vdash G$, the result follows from the $\wedge_L$ rule and left weakening. The other cases for $D$ are incompatible with the form of $K$.
    
    \item  If the last rule is $\forall_L$, i.e., $K = \forall x K'$, we have
      \[
      \infer[\forall_L]
      {\sequent{\Gamma, \forall x K'}{G}}
      {\sequent{\Gamma, K' [t/x]}{G}}
      \]
      We now proceed by induction on the complexity of $D$. The cases when $D$ is an atom or an implication are not consistent with $K$. If   $D = \forall x D'$ then $K' \in  \elab(D')$. By Lemma~\ref{lem:clause-renaming} we have that $K'[t/x] \in \elab(D'[t/x])$, hence by inductive hypothesis on the height of the proof, $\sequent{\Gamma, D'[t/x]}{G}$. The result follows from the $\forall_L$ rule. Finally, if $D = D_1 \wedge D_2$ then $\forall x  K' \in \elab(D_j)$ for $j \in \{1,2\}$. By inductive hypothesis on the structure of $D$, we have $\sequent{\Gamma, D_j}{G}$ and the result follows from the $\wedge_L$ rule and left weakening.
    \item
    If the last rule is $\horseshoe_L$, i.e., $K = G' \horseshoe A$, we have
    \[
    \infer[\horseshoe_L]
    {\sequent{\Gamma, G' \horseshoe A}{G}}
    {\sequent{\Gamma}{G'} & \sequent{\Gamma, A}{G}}
    \]
    We now proceed by induction on the complexity of $D$. If $D = G' \horseshoe A$, there is nothing to prove. If $D = D_1 \wedge D_2$, then $G' \horseshoe A \in \elab(D_i)$ for some $i$. By inductive hypothesis, we have a proof of $\Gamma, D_i \vdash G$, and we get the proof of $\Gamma, D \vdash G$ using the $\wedge_L$ rule and left weakening. The other cases are not consistent with the form of $K$. \qedhere
  \end{itemize}
\end{proof}

\begin{defn} Let the signatures be given as above. 

A {\bf state} is a triple
$\hresgoal{\cstar}{i}{P}{G}$ where $i$ is a non-negative integer,
 $P$ is a program and $G$ a goal,
both over the signature $\univ_i$; $i$ is referred to as the {\bf
state index}.

A {\bf state vector}
$\hresgoal{\cc_1}{i_1}{P_1}{G_1}, \dots,
\hresgoal{\cc_i}{i_j}{P_j}{G_j},
\dots, \hresgoal{\cc_n}{i_n}{P_n}{G_n}
$ is a finite sequence of states.  State vectors will be abbreviated to
$\ {\mathfrak A},\ \hresgoal{\cc}{i}{P}{G},\
{\mathfrak B}$, where ${\mathfrak A}$ and ${\mathfrak B}$ denote (possibly
empty) finite sequences of states. The empty or null
state vector is denoted $\NULL$.

\end{defn}

The full structure of state
vectors is required on several accounts:  indices allow
manipulation of universal quantification in goals, as discussed above;
programs must be displayed
because they are susceptible to change during execution,
as they may be augmented to prove conditional goals.
Vectors of multiple states are
necessary for the decomposition and subsequent tracking
of conjunctive
goals, since,
under resolution, the state $\ipresgoal{i}{P}{G_1\wedge G_2}$
spawns two independent states
$\ipresgoal{i}{P}{G_1}, \ipresgoal{i}{P}{G_2}$. These, in turn may
undergo independent updates of $i$ and $P$ depending on the structure
of their respective goals.

When a substitution is applied to a state vector
$\mfa$, it is understood to be applied to the program and to the goal
of each state in the vector.  For example, if $\mfa$ were the vector
$\hresgoal{\cc_1}{i_1}{P_1}{G_1}, \dots,
\hresgoal{\cc_n}{i_n}{P_n}{G_n}$, then $\mfa\sth$ would be
$\hresgoal{\cc_1}{i_1}{P_1\sth}{G_1\sth}, \dots,
\hresgoal{\cc_n}{i_n}{P_n\sth}{G_n\sth}$.

\paragraph{Resolution rules} The resolution reduction rules
for \HOHH include many of those familiar from the first-order
case.  We make minor changes for handling higher-order terms.
We will use the
tensor symbol  $\otimes$ rather than the comma to indicate
concatenation of states and state-vectors,
to make clear the distinction between
formalized conjunction of goals written with conjunctions, (or with
commas in traditional Horn program
code) and its meta-mathematical counterpart on states.

Formally, resolution \mbox{$\bleadsto$}, is a ternary relation on
\[
  {\textit{state vectors} \times \textit{subs} \times \textit{state vectors}},
\]
\gabox{Due to the freshness conditions, which make it depends on past history, formally $\bleadsto$ should  be defined as a relation on derivations and not on state vectors. This is what Gopalan avoid by explicitly keeping the set of used variables in a component of the state.}
but where we write
$\fa\stackrel{\theta}{\bleadsto}\fb$ instead of
$\bleadsto(\fa,\theta,\fb)$. 


Certain instances of resolution below are
annotated with connectives $\forall,\exists,\hoe,\wedge,\vee$. They
denote the identity substitution, although clearly they give more
information, namely the rule invoked.  All resolution rules below
show a transition of a state vector of the form
$\mfa\otimes\ipresgoal{i}{P}{G}\otimes\mfb$ to another state vector.
The distinguished state $\ipresgoal{i}{P}{G}$ in such a transition is
called the {\em selected state\/} of the resolution step, and the rule
used, the
{\em selected rule.\/}

%
%

\begin{description}
\item[null] $\vladnsub{\hresgoal{\cc}{i}{P}{\top}}{}$

deletion of $\top$ goals.

\item[true] $\vladsub{\hresgoal{\cc}{i}{P}{A}}{\sth}
{\hresgoal{\cc}{i}{P\sth}{\top}}$

where $A$ atomic, $A\sth = \top$ and $\theta$ is not the identity.

When $\sth$ is the identity substitution,
this is not a legal resolution step since it makes no progress. In practice,
the null step above would be performed instead.

\item[backchain]
$\vladsub{\hresgoal{\cc}{i}{P}{A'}}{\sth}
{\hresgoal{\cc}{i}{P\sth}{G\delta\sth}}$

where  $A'$ atomic and rigid, $\forall x_1 \cdots \forall x_n (G \horseshoe A) \in \elab(P)$, $\delta =\{w_1/x_1, \ldots, w_n/x_n\}$ is a substitution mapping variables $x_1,\ldots,x_n$ to fresh variables of level $i$ or less, and $\theta$ is a legal substitution such that $A'\sth = A\delta\sth$. Here \emph{fresh} means that the variable has never occured before in the derivation.

\item[axiom]
$\vladnsub{\hresgoal{\cc}{i}{P}{A'}}{\sth}$

where  $A'$ atomic and rigid, $\forall x_1 \cdots \forall x_n A \in \elab(P)$ and $\delta, \theta$ as for the backchain step: this is essentially backchain for clauses without body.

\item[or$_j$]
$\vladres{\hresgoal{\cc}{i}{P}{G_1 \lor G_2}}{\lor}{
\hresgoal{\cc}{i}{P}{G_j}}$ \qquad $(j\in \{1,2\})$

\item[and]
$\vladres{\hresgoal{\cc}{i}{P}{G_1 \land G_2}}{\land}{
\hresgoal{\cc}{i}{P}{G_1}\otimes \hresgoal{\cc}{i}{P}{G_2}}$

\item[instance]
$\vladres{\hresgoal{\cc}{i}{P}{\exists x.G}}{\exists}{
\hresgoal{\cc}{i}{P}{G[t/x]}}$

where $t$ is a term of level $i$ or less.

\item[augment]
$\vladres{\hresgoal{\cc}{i}{P}{D \horseshoe G}}{\horseshoe}{
\hresgoal{\cc}{i}{P \cup D}{G}}$

\item[generic]
$\vladres{\hresgoal{\cc}{i}{P}{\forall x.G}}{\forall}{
\hresgoal{\cc}{i+1}{P}{G[c_{i+1}/x]}}$

where $c_{i+1}$ is a fresh constant of label $i+1$ in ${\mathcal
  U}_{i+1} \setminus {\mathcal U}_i$.
\end{description}

It should be remarked that the main utility of the
 {\bf true} rule is to dispatch
goals of the form $\hresgoal{\cc}{i}{P}{X_o}$
or $\hresgoal{\cc}{i}{P}{X t_1\cdots t_m}$
where the logic variable $X$ is of type
$\alpha_1\imp\cdots\imp\alpha_m\imp o$ and
does not
occur in $P$ or any other state.
Since free variables in goals arise
because at some earlier stage they were existentially quantified after
which the {\bf instance} rule was applied,
this rule, implicitly proposed by Nadathur in a proof procedure
described in
\cite{nadathurPP}, formalizes as a logic programming step the deduction
of the sequent $P\vdash \exists x.x t_1\cdots t_m$ by instantiating $x$ at
$\lambda u_1\cdots\lambda u_m\top$.

\begin{defn}
  A resolution, or derivation, or deduction {\bf d}, is a sequence of resolution
  steps
\[
\fa_1\bleadston{\theta_1}\fa_2\cdots\fa_{n-1}
\bleadston{\theta_{n-1}}\fa_n
\]
Such a sequence is sometimes denoted
\[
\resdots{\fa_1}{\theta}{\fa_n}
\]
with the convention that $\theta$ is the composition
$\theta_1 \cdots \theta_n$ of all the
  intervening substitutions {\em with domain restricted to the
  variables free in \/} $\fa_1$.
   The substitution $\theta$ is called the \emph{computed answer (substitution)} of the derivation.
  A resolution ending in $\NULL$ is called a {\bf successful resolution}.

\end{defn}

Note that a single state vector $\mfa$ may itself be viewed as a derivation with no steps. 
Here are some notational conventions that will be useful below.
\begin{defn}
If $\bd$ is a resolution 
its first and last
state will  be denoted by $\firstd(d)$ and $\lastd(d)$, its length by
$\ell(\bd)$. We write $\fv{\bd}$ to mean the set of all free variables
occurring anywhere in the sequence. 
\end{defn}

In general, many of our results will be stated for derivations with identity computed answers. They may be generalized to the case of arbitrary derivations, but this is not always immediate. If one wants to pursue this direction, it might be convenient to use a domain of existentially quantified substitutions, such $\textit{ISubst}_\sim$ in \cite{AmatoS09-tplp} or \textit{ESubst} in \cite{JacobsL92-jlp}.

For technical reasons, we will often restrict our interest to a particular case of derivations for which not only the computed answer is the identity, but also each intervening substitution has a very particular form.

\begin{defn}[flat derivations]
  A derivation $\fa_1\bleadston{\theta_1}\fa_2\cdots\fa_{n-1}
    \bleadston{\theta_{n-1}}\fa_n$ is called flat when:
  \begin{itemize}
    \item there are no occurrences of the \textbf{true} rule; 
    \item if the $i$-th step is an occurrence of the \textbf{backchain} or \textbf{axiom} rule, then $\theta_i$ is safe and $\dom(\theta_i)$ is restricted to the fresh variables used to rename apart bound variables (the range of $\delta$, see the definition of these two rules ).
  \end{itemize}
  We write \[ \mfa_1 \derivation{[id]} \mfa_2 \] 
  to denote a flat derivation.
\end{defn}


\subsection{Renaming Apart Derivations}

We now focus on the relation between renamings and derivations. We start with a technical lemma which explains how renaming of a  substitution, intended as a syntactic object, may be obtained trough composition.

\begin{lem}
  \label{lem:sub-renaming}
  Given a substitution $\theta = \{t_1/x_1, \ldots, t_n/x_n \}$ and a renaming $\rho$, we have that $\rho^{-1} \theta \rho = \{ t_1\rho / x_1\rho, \ldots, t_n\rho / x_n\rho \}$. Moreover, if $\theta$ is safe than $\rho^{-1} \theta \rho$ is safe, too.
\end{lem}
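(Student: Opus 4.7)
The plan is to verify the claimed equality of substitutions by checking that both sides agree as functions on every variable, and then derive safety from the algebraic characterization $\theta\theta = \theta$.

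First, I would observe that the right-hand side $\{t_1\rho/x_1\rho,\ldots,t_n\rho/x_n\rho\}$ is a well-formed substitution: since $\rho$ is a renaming (hence injective on variables), the images $x_1\rho,\ldots,x_n\rho$ are pairwise distinct, so the binding is unambiguous, and its support is finite. Writing the paper's diagrammatic convention as $u(\sigma_1\sigma_2) = (u\sigma_1)\sigma_2$, I would then do a case split on an arbitrary variable $y$. If $y = x_i\rho$ for some $i$, then because $\rho$ is a bijection on variables we have $y\rho^{-1} = x_i$; applying $\theta$ gives $t_i$, and applying $\rho$ yields $t_i\rho$. This matches the action of the right-hand side on $x_i\rho$. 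If $y$ is not in $\{x_1\rho,\ldots,x_n\rho\}$, then $y\rho^{-1}$ is not in $\{x_1,\ldots,x_n\} = \dom(\theta)$, so $\theta$ acts as identity, and $(y\rho^{-1})\rho = y$, again matching the right-hand side, which acts as identity on $y$. Since the two substitutions agree on every variable, they are equal.

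For the safety claim, I would appeal to the paper's remark that a substitution $\sigma$ is safe iff $\sigma\sigma = \sigma$. Assuming $\theta\theta = \theta$, I would compute
\[
(\rho^{-1}\theta\rho)(\rho^{-1}\theta\rho) \;=\; \rho^{-1}\theta(\rho\rho^{-1})\theta\rho \;=\; \rho^{-1}\theta\theta\rho \;=\; \rho^{-1}\theta\rho,
\]
using associativity of substitution composition and $\rho\rho^{-1} = \id$.

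The routine but easy-to-botch step is keeping straight the paper's diagrammatic composition versus the "applied to a term" direction, and being careful that $\rho^{-1}$ is genuinely a two-sided inverse on variables (which it is, because renamings are bijections on the variable set). No deep obstacle is expected; the whole argument is essentially bookkeeping with the functional description of substitutions.
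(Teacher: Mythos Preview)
Your proof is correct and follows essentially the same approach as the paper: both verify the equality by a case analysis on variables (you split on whether $y \in \{x_1\rho,\ldots,x_n\rho\}$, the paper equivalently splits on whether $x \in \{x_1,\ldots,x_n\}$ and tracks $x\rho$). Your safety argument via the idempotence characterization $(\rho^{-1}\theta\rho)(\rho^{-1}\theta\rho)=\rho^{-1}\theta\rho$ is more explicit than the paper's, which simply declares it ``easy to check''.
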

\begin{proof}
  Let use denote $\rho^{-1} \theta \rho$ by $\eta$. First of all, it is obvious that $x\rho\eta = x \theta \rho$ for each $x \in \vars$. Therefore, for each $i \in \{1,\ldots, n\}$, we have $x_i \rho \theta = t_i \rho \neq x_i \rho$. On the contrary if $x$ is not one of the $x_i$'s, then $x \notin \dom(\theta)$, hence $x\rho\eta = x\rho$. Therefore, $\dom(\eta) = \{ x_1\rho, \ldots, x_n\rho\}$. Finally, the preservation of safety is easy to check.
\end{proof}

\begin{defn}
  A renaming is called \emph{neutral} if it maps variables to variables of the same level.
\end{defn}

The following lemma states that we can apply a neutral renaming to a derivation and obtain another well-formed derivation.
\begin{lem}
  \label{lem:derivation-renaming}
  Given a derivation $\bd: \mfa \derivation{\theta} \mfb$ and a neutral renaming $\rho$, there is a derivation $\bd': \mfa \rho \derivation{\rho^{-1}\theta\rho} \mfb \rho$ obtained from $\bd$ by renaming all state vectors and substitutions with $\rho$. Moreover, if $\bd$ is flat, then $\bd'$ is flat, too.
\end{lem}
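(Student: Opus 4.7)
The plan is to proceed by induction on the length of $\bd$. The base case $\ell(\bd)=0$ is immediate, since then $\mfa=\mfb$, $\theta$ is the identity, and $\bd'=\mfa\rho$ does the job. For the inductive step the strategy is to verify that each single resolution step $\mfa_k \bleadston{\theta_k} \mfa_{k+1}$ in $\bd$ gives rise to a valid renamed step $\mfa_k\rho \bleadston{\rho^{-1}\theta_k\rho} \mfa_{k+1}\rho$, and then compose these steps into $\bd'$. The one identity on which everything rests is the easily verified fact that, since $\rho$ is a bijective renaming, $(u\rho)(\rho^{-1}\theta_k\rho) = u(\rho\rho^{-1}\theta_k\rho) = (u\theta_k)\rho$ for any term or state vector $u$; this guarantees that applying the renamed substitution to the renamed source state produces exactly the renamed target state.

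The verification then splits over the resolution rules. For \textbf{null}, \textbf{or}, \textbf{and}, \textbf{instance}, \textbf{augment}, and \textbf{generic}, the applied substitution is the identity, so $\rho^{-1}\mathrm{id}\rho=\mathrm{id}$, and the shape of the selected goal is preserved verbatim by $\rho$. In the \textbf{generic} case, since $\rho$ acts only on free variables and not on constants, the fresh constant $c_{i+1}$ is still fresh in the renamed derivation, and neutrality of $\rho$ preserves state indices; for \textbf{instance} the witnessing term $t$ simply becomes $t\rho$, which has the same level by neutrality. The substantive cases are \textbf{backchain} and \textbf{axiom} (and the rare nontrivial \textbf{true} step). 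For these, I would invoke Lemma~\ref{lem:clause-renaming} to obtain $(\forall x_1\cdots\forall x_n(G\hoe A))\rho \in \elab(P\rho)$, use Lemma~\ref{lem:sub-renaming} to compute $\rho^{-1}\delta\rho = \{w_1\rho/x_1,\ldots,w_n\rho/x_n\}$ (the bound variables $x_i\in\bvar$ are fixed by $\rho$, since renamings act on \fvar), and verify the unification side condition by the chain $(A'\rho)(\rho^{-1}\theta_k\rho) = (A'\theta_k)\rho = (A\delta\theta_k)\rho = (A\rho)(\rho^{-1}\delta\rho)(\rho^{-1}\theta_k\rho)$, which uses the identity above applied to both sides.

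For the flatness clause, the only steps of a flat derivation that carry non-identity substitutions are backchain/axiom steps, and there $\theta_k$ is safe with $\dom(\theta_k) \subseteq \{w_1,\ldots,w_n\}$. Lemma~\ref{lem:sub-renaming} immediately gives that $\rho^{-1}\theta_k\rho$ is again safe with domain $\{w_1\rho,\ldots,w_n\rho\}$, which by the computation of $\rho^{-1}\delta\rho$ above matches the renamed renaming-apart set; thus $\bd'$ satisfies both flatness conditions. The main obstacle in a careful write-up is the global freshness bookkeeping: both the fresh renaming-apart variables $w_i$ of backchain/axiom and the fresh constants $c_{i+1}$ of generic are required not to have occurred anywhere earlier in $\bd$, and one must confirm that injectivity of $\rho$ (for the $w_i$) and the fact that $\rho$ is the identity on constants (for $c_{i+1}$) together transfer this non-occurrence property to $\bd'$, where freshness must be read off the entire uniformly renamed history.
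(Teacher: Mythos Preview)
Your proposal is correct and follows essentially the same approach as the paper: induction on the length of the derivation, case analysis on the first rule, with the backchain case handled via Lemma~\ref{lem:clause-renaming} and Lemma~\ref{lem:sub-renaming}, and flatness checked through the domain description of $\rho^{-1}\theta_k\rho$. The paper organizes the argument slightly differently---it first verifies (assuming $\bd'$ is a derivation) that the composed answer substitution comes out to $\rho^{-1}\theta\rho$, and only then proves validity by induction---but the content is the same; your explicit statement of the identity $(u\rho)(\rho^{-1}\theta_k\rho)=(u\theta_k)\rho$ is precisely what drives both arguments.
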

\begin{proof}
  Consider $\bd'$ obtained from $\bd$ by renaming all state vectors and substitutions with $\rho$, whereas the renaming of a substitution $\gamma$ is $\rho^{-1}\gamma\rho$. For the moment, assume that $\bd'$ is a derivation, and we prove that its computed substitution is $\rho^{-1}\theta\rho$.

  Assume that $\gamma_1, \ldots, \gamma_n$ are the substitutions
  occurring in $\bd$. Its computed substitution is $\theta = (\gamma_1
  \cdots \gamma_n)_{|X}$ where $X = \fv{\mfa}$. The
  corresponding substitutions occurring in $\bd'$ are
  $\rho^{-1}\gamma_1 \rho, \ldots, \rho^{-1}\gamma_n\rho$, and the
  computed substitution is $\eta = (\rho^{-1}\gamma_1 \cdots
  \gamma_n\rho)_{|Y}$ where $Y=\fv{\mfa\rho} = \rho(X)$. It is
  easy to check that $\eta = \rho^{-1}\theta\rho$. If $y \in \rho(X)$
  the $y \rho^{-1} \in X$, hence $y \eta = y \rho^{-1} \gamma_1 \cdots
  \gamma_n \rho = y \rho^{-1} (\gamma_1 \cdots \gamma_n )_{|X}\rho = y
  \rho^{-1} \theta \rho$. If $y \notin \rho(X)$, then $y \rho^{-1}
  \notin X$, hence $y \eta = y$ and $y \rho^{-1} \theta \rho = y
  \rho^{-1}\rho = y$. 

  
  Note that, if $\bd$ is safe, than $\bd'$ does not contain any occurrence of the \textbf{true} rule, and all the substitutions in the \textbf{backchain} and \textbf{axiom} steps are safe. The condition on the domain of these substitutions will be checked later, completing the proof that $\bd'$ is safe, too.

  We now prove by induction on the lenght of $\bd$ that $\bd'$ is actually a derivation. The case when $\bd$ has length zero is trivial. For the inductive step, we consider different subcases according to the first rule used in the derivation.

  \case{true}
  The deduction $\bd$ has the following shape: 
  \[
    \rlap{$\underbrace{\mfa\otimes\ipresgoal{i}{P}{G}\otimes\mfb
    \derstep{\gamma} \mfa\gamma\otimes\ipresgoal{i}{P\gamma}{\top}\otimes\mfb\gamma}_{\bd_1}$}
    \mfa\otimes\ipresgoal{i}{P}{G}\otimes\mfb
    \derstep{\gamma} 
    \overbrace{\mfa\gamma\otimes\ipresgoal{i}{P\gamma}{\top}\otimes\mfb\gamma
    \derivation{\theta} \mfc}^{\bd_2}
  \]
  while $\bd'$ is
  \[
      \rlap{$\underbrace{\mfa\rho\otimes\ipresgoal{i}{P\rho}{G\rho}\otimes\mfb\rho
      \derstep{\rho^{-1}\gamma\rho}
      \mfa\gamma\rho\otimes\ipresgoal{i}{P\gamma\rho}{\top}\otimes\mfb\gamma\rho}_{\bd'_1}$}
      \mfa\rho\otimes\ipresgoal{i}{P\rho}{G\rho}\otimes\mfb\rho
      \derstep{\rho^{-1}\gamma\rho}
      \overbrace{\mfa\gamma\rho\otimes\ipresgoal{i}{P\gamma\rho}{\top}\otimes\mfb\gamma\rho  
       \derivation{\rho^{-1}\theta\rho} \mfc\rho}^{\bd'_2}
    \]
  By inductive hypothesis $\bd'_2$ is a derivation. Note that, since $\gamma$ is not the identity, the $\rho^{-1}\gamma\rho$ is not the identity subtitution. Therefore $\bd'$ is a derivation, too.

  \case{backchain}
  The proof is similar to the previous case, with 
  \[
    \bd = \mfa\otimes\ipresgoal{i}{P}{A'}\otimes\mfb
    \derstep{\gamma} 
    \mfa\gamma\otimes\ipresgoal{i}{P\gamma}{G\delta\gamma}\otimes\mfb\gamma
    \derivation{\theta} \mfc
  \]
  and 
  \[
      \bd' = \mfa\rho\otimes\ipresgoal{i}{P\rho}{A'\rho}\otimes\mfb\rho
      \derstep{\rho^{-1}\gamma\rho}
      \mfa\gamma\rho\otimes\ipresgoal{i}{P\gamma\rho}{G\delta\gamma\rho}\otimes\mfb\gamma\rho  \derivation{\rho^{-1}\theta\rho} \mfc\rho
  \]
  We need to check that the conditions for the correct application of the \textbf{backchain} rule are satisfied. Note that, since $\bd$ is a derivation, there is a clause $\forall x_1 \cdots \forall x_n (G \horseshoe A) \in \elab(P)$ such that $A \delta \gamma = A' \gamma$, where  $\delta = \{w_1/x_1, \ldots, w_n/x_n\}$ is the renaming-apart substitution. By Lemma~\ref{lem:clause-renaming}, we also have a clause $\forall x_1 \cdots \forall x_n (G\rho \horseshoe A\rho) \in \elab(P\rho)$. Consider $\delta'=\rho^{-1}\delta\rho = \{\rho(w_1) / \rho(x_1), \ldots, \rho(w_n)/\rho(x_n)\}$. Note that, since $w_i$ is fresh in the original derivation, $\rho(w_i)$ is fresh in the new one, hence $\delta'$ is a renaming-apart substitution. Since $\rho^{-1} \gamma \rho$ is an unifier for $A \rho \delta'$ and $A'\rho$, and since $G \rho \delta' \rho^{-1} \gamma \rho = G \delta \gamma \rho$, the conditions for the correct application of the \textbf{backchain} rule are satisfied.

  Moreover, if $\bd$ is flat then $\dom(\gamma) \subseteq \{w_1, \ldots, w_n\}$. It turns out  that $\rho^{-1}\gamma\rho$ is safe and $\dom(\rho^{-1}\gamma\rho) \subseteq \{\rho(w_1), \ldots, \rho(w_n)\} = \dom(\delta')$, which was the missing condition fot ensuring that $\bd'$ is flat. 

  \case{axiom}
  This is a simpler case of the backchain rule.

  \case{instance}
  In this case
  \[
    \bd = \mfa\otimes\ipresgoal{i}{P}{\exists x. G}\otimes\mfb \derstep{\exists} \mfa\otimes\ipresgoal{i}{P}{G[t/x]}\otimes\mfb \derivation{\theta'} \mfc
  \]
  and
  \[
    \bd' = \mfa\rho\otimes\ipresgoal{i}{P\rho}{\exists x. G\rho}\otimes\mfb \derstep{\exists} \mfa\rho\otimes\ipresgoal{i}{P\rho}{G[t/x]\rho}\otimes\mfb\rho \derivation{\rho^{-1}\theta'\rho} \mfc\rho
  \]
  where all but the first step in $\bd'$ are valid by inductive hypothesis. Note that $G[t/x]\rho = G\rho[t\rho/x]$, and since $\rho$ is neutral, $t$ and $t\rho$ have the same level. Hence the first step in $\bd'$ is valid, too.

  \case{null, or, and, augment, generic}
  These cases follow immediately by the inductive hypothesis.
\end{proof}

In general, in all the proofs which follow, we will omit discussing the \textbf{axiom} rule, since it is a simpler case of the \textbf{backchain} rule.

Using the previous result, the following lemma states that substitutions in a derivation
can be chosen not only to be idempotent (as we require) but in such a way as
to avoid any finite set of variables not containing any of the variables of
the first and last state vector.
\begin{lem}
  \label{lem:choicevar}
  Let \bd\ be a derivation $\mfa \derivation{\theta} \mfb$. If $X$ is a finite set of
  variables such that $X\cap (\fv{\mfa} \cup \fv\mfb \cap \fv\theta) = \nullset$, then
  there is a derivation of $\bd': \mfa \derivation{\theta} \mfb$, none of whose
  free variables lie in $X$. Moreover, if $\bd$ is flat, then $\bd'$ is flat, too.
\end{lem}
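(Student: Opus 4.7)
The plan is to reduce the statement to Lemma~\ref{lem:derivation-renaming} by constructing a neutral renaming $\rho$ that fixes every variable that must be preserved (namely those appearing in $\mfa$, $\mfb$, or $\theta$) and that displaces every other free variable of $\bd$ into the countable supply of fresh logic variables, away from $X$.

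First I would isolate the two finite sets $V^{\ast} = \fv{\mfa} \cup \fv{\mfb} \cup \fv{\theta}$ (the variables that must survive the renaming, and which by hypothesis are disjoint from $X$) and $V = \fv{\bd}$ (all free variables appearing anywhere in the derivation). I then define $\rho$ to be the identity on $V^{\ast}$ and on every variable outside $V$, and for each $v \in V \setminus V^{\ast}$ I pick $\rho(v)$ to be a variable of the same level and type as $v$, chosen from the countable supply of logic variables in $\univ_i \cap \fvar$ while avoiding $X \cup V$ and all previously assigned images. Because each level-indexed, type-indexed supply of logic variables is countably infinite and $X \cup V$ is finite, this is always possible, and the resulting $\rho$ is a well-defined neutral renaming.

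Next I would apply Lemma~\ref{lem:derivation-renaming} to obtain a derivation $\bd' : \mfa\rho \derivation{\rho^{-1}\theta\rho} \mfb\rho$, which inherits flatness from $\bd$. Since $\rho$ is the identity on $\fv{\mfa}$ and on $\fv{\mfb}$, the endpoints collapse back to $\mfa\rho = \mfa$ and $\mfb\rho = \mfb$. Since $\rho$ is also the identity on $\dom(\theta)$ and on the free variables occurring in the range of $\theta$, Lemma~\ref{lem:sub-renaming} gives $\rho^{-1}\theta\rho = \theta$, so in fact $\bd' : \mfa \derivation{\theta} \mfb$. Every free variable of $\bd'$ is of the form $v\rho$ for some $v \in V$, and therefore is either a variable of $V^{\ast}$ (fixed by $\rho$ and disjoint from $X$ by hypothesis) or a freshly chosen image which lies outside $X$ by construction; in either case it is not in $X$.

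The only real obstacle is bookkeeping at the level of the signatures: one must check that the displaced variables can be chosen while respecting the level restriction built into the definition of a neutral renaming. This is precisely what the countability of the logic variable supply at each level and type in $\univ_i$ guarantees, so the construction goes through without further difficulty.
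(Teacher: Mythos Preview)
Your approach is essentially the paper's: build a suitable neutral renaming and invoke Lemma~\ref{lem:derivation-renaming}. The paper makes the dual choice of renaming---it moves the variables in $X$ to fresh variables outside $X\cup\fv{\bd}$, rather than moving the ``internal'' variables $V\setminus V^{\ast}$ of $\bd$ away from $X$ as you do---but the verification that the endpoints and computed answer are unchanged is the same in either case, and your write-up actually spells out those checks more carefully than the paper does.

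There is one small slip in your construction. As literally described, $\rho$ is the identity on every variable outside $V$ \emph{and} sends each $v\in V\setminus V^{\ast}$ to some $y_v\notin V$; then $\rho(v)=\rho(y_v)=y_v$, so $\rho$ is not injective and hence not a renaming in the sense required by Lemma~\ref{lem:derivation-renaming} (which uses $\rho^{-1}$). The fix is routine: let $\rho$ \emph{swap} each $v$ with its chosen image $y_v$ and be the identity elsewhere. Since the $y_v$ lie outside $X\cup V$ and are pairwise distinct, this is a well-defined neutral permutation; it still fixes $V^{\ast}$ (so $\mfa$, $\mfb$, and $\theta$ are untouched, via Lemma~\ref{lem:sub-renaming}) and still sends every free variable of $\bd$ outside $X$. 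With that adjustment your argument goes through.
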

\begin{proof}
  Let $\rho$ be a neutral renaming which maps variables in $X$ to variables not in $X$ or $\fv{\bd}$. The derivation we are looking for is the one obtained by Lemma~\ref{lem:derivation-renaming}.
\end{proof}

Due to the freshness condition in the \textbf{generic} rule, we sometimes need to rename apart not only variable, but also constants.
 
\begin{defn}[constant replacer]
  A \emph{constant replacer} is a function mapping constants to constants of the same type, which is the identity everywhere but for a finite set of constants. If $X$ is any syntactic object and $\xi$ a constant renamer, $X\xi$ will be the result of replacing every constant $c$ with $\xi(c)$. Similarly to substitution, we define:
  \begin{itemize}
    \item the \emph{domain} of a costant replacer $\xi$ to be the set $\dom(\xi) = \{ c \mid \xi(c) \neq c \}$;
    \item the \emph{range} of a costant replacer $\xi$ to be the set $\rng(\xi) = \xi(\dom(\theta))$;
    \item the constant occuring in $\xi$ to be the set $\oc(\xi) = \dom(\xi) \cup \rng(\xi)$.
  \end{itemize}
  A constant replacer which maps every constants to another one of the same level is called \emph{neutral}, while a bijective constant replacer is called a \emph{constant renamer}. 
\end{defn}

\begin{lem}[clause constant replacing]
  \label{lem:cren-ep}
    Consider a constant replacer $\xi$. If there is a clause $K \in \elab(D)$ then there is a clause $K\xi \in \elab(D\xi)$.
\end{lem}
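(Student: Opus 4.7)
The plan is to mirror the proof of Lemma~\ref{lem:clause-renaming} (clause instance), replacing the substitution $\theta$ by the constant replacer $\xi$. Since constant replacers act uniformly on terms and do not interact with the binding structure of $\forall$ (they touch only constants, never variables), the argument is in fact simpler than the corresponding one for substitutions: no renaming of bound variables via auxiliary fresh free variables is required.

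I would proceed by induction on the complexity measure $\delta$ of $D$ defined in Lemma~\ref{lem:well-order}. For the base cases, if $D = A_r$ is a rigid atom, then $\elab(D) = \{A_r\}$, so $K = A_r$ and $K\xi = A_r\xi \in \{A_r\xi\} = \elab(A_r\xi) = \elab(D\xi)$. The case $D = G \horseshoe A_r$ is handled identically: $\elab(D) = \{G \horseshoe A_r\}$, and since $(G\horseshoe A_r)\xi = G\xi \horseshoe A_r\xi$, we have $K\xi \in \elab(D\xi)$ by definition of $\elab$.

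For the inductive cases, if $D = D_1 \wedge D_2$ and $K \in \elab(D)$, then $K \in \elab(D_j)$ for some $j \in \{1,2\}$. By the inductive hypothesis, $K\xi \in \elab(D_j\xi)$, and hence
\[
K\xi \in \elab(D_1\xi) \cup \elab(D_2\xi) = \elab(D_1\xi \wedge D_2\xi) = \elab((D_1 \wedge D_2)\xi) = \elab(D\xi).
\]
If $D = \forall x \bar{D}$ and $K \in \elab(D)$, then $K = \forall x K'$ for some $K' \in \elab(\bar{D})$. By the inductive hypothesis, $K'\xi \in \elab(\bar{D}\xi)$, and therefore
\[
K\xi = \forall x (K'\xi) \in \elab(\forall x (\bar{D}\xi)) = \elab((\forall x \bar{D})\xi) = \elab(D\xi),
\]
using the fact that $\xi$ leaves the bound variable $x$ untouched (it acts only on constants).

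The only mildly delicate point is this last observation that $\xi$ commutes with binders in the sense that $(\forall x \bar{D})\xi = \forall x (\bar{D}\xi)$. This is guaranteed by the definition of a constant replacer, whose domain is a finite set of \emph{constants}, disjoint by construction from the set \bvar of bound variables; hence, unlike in Lemma~\ref{lem:clause-renaming}, no care with $\alpha$-conversion or auxiliary free variables is needed.
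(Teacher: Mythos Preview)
Your proof is correct and follows exactly the approach the paper indicates: the paper's own proof consists of the single sentence ``The proof is immediate by induction on the complexity of $D$,'' and you have simply spelled out that induction in full. Your observation that the $\forall$ case is simpler here than in Lemma~\ref{lem:clause-renaming}, because a constant replacer cannot interfere with bound variables, is the reason the paper can afford to be so terse.
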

\begin{proof}
  The proof is immediate by induction on the complexity of $D$.
\end{proof}

\begin{lem}[constant renaming]
  \label{lem:derivation-crenaming}
  Given a derivation $\bd: \mfa \derivation{\theta} \mfb$ and a neutral constant renamer $\xi$, there is a derivation $\bd': \mfa \xi \derivation{\theta\xi} \mfb \xi$ obtained from $\bd$ by applying $\xi$ to all state vectors and substitutions. Moreover, if $\bd$ is flat, then $\bd'$ is flat, too.
\end{lem}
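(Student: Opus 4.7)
The plan is to mirror the structure of the proof of Lemma~\ref{lem:derivation-renaming} almost verbatim, replacing the variable-renaming $\rho$ with the constant-renamer $\xi$, and to proceed by induction on the length of $\bd$ with a case analysis on the first resolution step. Construct $\bd'$ directly by applying $\xi$ to every state vector, every intervening substitution, and every witnessing clause-instance appearing in $\bd$. Since constants are not variables, the set of free variables of each state vector is unchanged by $\xi$, and a straightforward computation shows that the composition $(\gamma_1\xi)\cdots(\gamma_n\xi)$ restricted to $\fv{\mfa\xi}=\fv{\mfa}$ equals $(\gamma_1\cdots\gamma_n)\xi|_{\fv{\mfa}}=\theta\xi$, so the computed answer is as claimed.

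For each rule I would verify that the renamed step is a legal instance of the same rule. The cases \textbf{null}, \textbf{or}, \textbf{and}, and \textbf{augment} are immediate, as they act on the state-vector structure without reference to terms or constants. For \textbf{instance}, the key identity is $G[t/x]\xi=G\xi[t\xi/x]$, together with the observation that $t\xi$ has the same level as $t$ because $\xi$ is neutral, so the new witness term still has level $\le i$. For \textbf{backchain} (and the analogous but simpler \textbf{axiom}), I would invoke Lemma~\ref{lem:cren-ep} to obtain $(\forall x_1\cdots\forall x_n(G\xi\hoe A\xi))\in\elab(P\xi)$, observe that the renaming-apart substitution $\delta=\{w_1/x_1,\ldots,w_n/x_n\}$ is unaffected by $\xi$ (its range consists of variables, and its domain of bound variables), and check that if $\gamma$ unifies $A\delta$ with $A'$, then $\gamma\xi$ unifies $A\xi\delta$ with $A'\xi$; the fresh-variable condition on $\delta$ transfers trivially since no variables are renamed.

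The main obstacle is the \textbf{generic} rule, where the freshness side-condition must be re-established in $\bd'$. The renamed step introduces the constant $c_{i+1}\xi$, which is of level $i+1$ by the neutrality of $\xi$. It remains fresh with respect to the history of $\bd'$: if $c_{i+1}\xi$ already occurred in some earlier $\mfa_k\xi$ of $\bd'$, then by injectivity of the renamer $\xi$ the original $c_{i+1}$ would already have occurred in $\mfa_k$, contradicting the freshness side-condition on $\bd$. Thus bijectivity of $\xi$ is exactly what is needed here, and it is why the hypothesis imposes a renamer rather than an arbitrary replacer.

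Finally, preservation of flatness is immediate: $\xi$ introduces no occurrences of the \textbf{true} rule, leaves substitution domains unchanged (so the restriction $\dom(\gamma)\subseteq\{w_1,\ldots,w_n\}$ is preserved), and commutes with the safety condition $\gamma\gamma=\gamma$, since applying $\xi$ to both sides of this equation yields $(\gamma\xi)(\gamma\xi)=\gamma\xi$ whenever $\xi$ acts only on constants.
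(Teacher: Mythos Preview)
Your proposal is correct and follows essentially the same approach as the paper's proof: both construct $\bd'$ by applying $\xi$ to all state vectors and substitutions, verify the computed answer via $(\gamma_1\xi)\cdots(\gamma_n\xi)=(\gamma_1\cdots\gamma_n)\xi$, and then induct on the length of $\bd$ with the same case analysis, invoking Lemma~\ref{lem:cren-ep} for \textbf{backchain} and using injectivity of $\xi$ to transfer the freshness condition in the \textbf{generic} case. The only organizational difference is that the paper dispatches the freshness-preservation argument once globally before the induction rather than inside the \textbf{generic} case, and you should remember to cover the \textbf{true} rule explicitly (it goes through exactly like \textbf{backchain}, with the extra remark that $\gamma\xi$ is non-identity iff $\gamma$ is).
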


\begin{proof}
  Consider $\bd'$ obtained from $\bd$ by applying $\xi$ to all state vectors and substitutions, where the application of $\xi$ to a substitution $\{t_1/x_1, \ldots, t_n/x_n\}$ is the substitution $\{t_1\xi/x_1, \ldots, t_n\xi/x_n\}$. The proof is similar to the one in Lemma~\ref{lem:derivation-renaming}. 
  
  We first assume that $\bd'$ is a derivation, and prove that its computed substitution is $\theta\xi$. Assume that $\gamma_1, \ldots, \gamma_n$ are the substitutions occurring in $\bd$. Its computed substitution is $\theta = (\gamma_1 \cdots \gamma_n)_{|X}$ where $X = \fv{\mfa}$. The corresponding substitutions occurring in $\bd'$ are $\gamma_1 \xi, \ldots, \gamma_n\xi$, and the computed substitution is $\eta = ((\gamma_1\xi) \cdots (\gamma_n\xi))_{|X}$. It is easy to check that $(\gamma_1 \xi) \cdots (\gamma_n \xi) = (\gamma_1 \cdots \gamma_n) \xi$, whence $\theta = \eta$.  Regarding flatness, since $\xi$ does not act on variables, if $\bd$ is flat, then $\bd'$ is flat, too.

  We now prove that $\bd'$ is a derivation.
  First of all, since $\xi$ is injective, freshness conditions for constants introduced by the \textbf{generic} rule in $\bd'$ are respected. The rest of the proof is by induction on the length of $\bd$. The case when $\bd$ has length zero is trivial. For the inductive step, we consider different cases according to the first rule used in the derivation.

  \case{backchain}
  The derivation $\bd$ has the form
  \[
    \mfa_1 \otimes \ipresgoal{i}{P}{A} \otimes \mfa_2 \derstep{\gamma} 
    \mfa_1 \otimes \ipresgoal{i}{P}{G\delta\gamma} \otimes \mfa_2 \derivation{\theta'} \NULL
  \]
  for some clause $\forall x_1 \cdots \forall x_n(G \horseshoe A') \in \elab(P)$ and $\delta=\{w_1/x_1, \ldots, w_n/x_n\}$ a renaming-apart substitution such that $A' \delta\gamma = A \gamma$. Meanwhile, $\bd'$ is
  \[
    \mfa_1\xi \otimes \ipresgoal{i}{P\xi}{A\xi} \otimes \mfa_2\xi \derstep{\gamma\xi} 
    \mfa_1\xi \otimes \ipresgoal{i}{P\xi}{G\delta \gamma \xi} \otimes \mfa_2\xi \derivation{\theta' \xi} \mfb\xi
  \]
   By Lemma~\ref{lem:cren-ep}, there is a clause $\forall x_1 \cdots \forall x_n(G \xi \horseshoe A'\xi) \in \elab(P\xi)$. Moreover, it is easy to check that if $A' \delta \gamma = A \gamma$ then $A' \delta \gamma \xi = A \gamma \xi$, while $A \gamma\xi = A \xi (\gamma \xi)$ and  $A' \delta \gamma\xi = A' \xi \delta (\gamma \xi)$, i.e., $\gamma\xi$ is an unifier of $A\xi$ and $A'\xi\delta$. Morevoer $G\delta \xi \gamma \xi = G\delta (\gamma \xi)$. Therefore, the first step of derivation $\bd'$ is valid, while validity of the rest of the derivation follows by inductive hypothesis.
  
  \case{true}
  This case is similar to the \textbf{backchain} one.

  \case{null, and, or, augment}
  Immediate by induction hypothesis.

  \case{generic}
  The derivation $\bd$ has the form
  \[
    \mfa_1 \otimes \ipresgoal{i}{P}{\forall x. G} \otimes \mfa_2 \derstep{\forall}\mfa_1 \otimes \ipresgoal{i+1}{P}{G[c/x]} \otimes \mfa_2 \derivation{\theta} \mfb
  \]
  where $c$ is a fresh constant of level $i+1$, while $\bd'$ is 
  \[
    \mfa_1 \xi \otimes \ipresgoal{i}{P\xi}{(\forall x. G)\xi} \otimes \mfa_2\xi
    \derstep{\forall}
    \mfa_1\xi \otimes \ipresgoal{i+1}{P\xi}{G[c/x]\xi} \otimes \mfa_2 \xi
    \derivation{\theta\xi} \mfb\xi
   \]
   Since $G[c/x]\xi = G\xi[\xi(c)/x]$, $\xi(c)$ is of level $i+1$ and by inductive hypotehsis, $\bd'$ is a valid derivation.

  \case{instance}
  This case is similar to the \textbf{generic} one. \qedhere
\end{proof}

Using the previous result, the following lemma states that constants in a derivation can be chosen to avoid any finite set of constants not containing any of the constants of the first and last state vector.
\begin{lem}
  \label{lem:choiceconst}
  Let $\bd$ be a derivation $\mfa \derivation{\theta} \mfb$. If $X$ is a finite set of
  constant not appearing in $\mfa$, $\mfb$ and $\theta$, then there is a derivation of $\bd': \mfa \derivation{\theta} \mfb$, none of whose constants lie in $X$. Moreover, $\bd'$ is flat if $\bd$ is flat.
\end{lem}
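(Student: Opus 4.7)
The plan is to mirror the proof of Lemma~\ref{lem:choicevar}, but using the constant-renaming device of Lemma~\ref{lem:derivation-crenaming} in place of its variable-renaming analogue. Concretely, I will construct a neutral constant renamer $\xi$ that, when applied to $\bd$, replaces every occurrence of a constant from $X$ by a fresh constant outside $X$, while leaving the endpoint state vectors and the computed answer substitution untouched.

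Let $C$ denote the (finite) set of constants appearing anywhere in $\bd$---in any state vector, substitution, or elaborated clause occurring along the derivation. Then $C\cap X$ is finite; enumerate it as $c_1,\ldots,c_k$. For each $c_j$, pick a constant $c'_j$ of the same type and level such that $c'_j\notin C\cup X\cup\{c'_1,\ldots,c'_{j-1}\}$; since at each type and each level the supply of constants is countably infinite, such a choice is always available. Define $\xi$ to be the constant replacer that swaps $c_j$ with $c'_j$ for every $j\in\{1,\ldots,k\}$ and fixes every other constant. By construction $\xi$ is bijective and level-preserving, hence a neutral constant renamer. Its domain, $\{c_1,\ldots,c_k,c'_1,\ldots,c'_k\}$, is disjoint from the constants occurring in $\mfa$, $\mfb$ and $\theta$: the $c_j$ lie in $X$, which by hypothesis avoids those constants, and the $c'_j$ lie outside $C$, which contains them. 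Consequently $\mfa\xi=\mfa$, $\mfb\xi=\mfb$ and $\theta\xi=\theta$.

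Applying Lemma~\ref{lem:derivation-crenaming} to $\bd$ and $\xi$ then yields a derivation $\bd':=\bd\xi$ of shape $\mfa\xi\derivation{\theta\xi}\mfb\xi$, which by the three identities above is precisely $\mfa\derivation{\theta}\mfb$. Any constant appearing in $\bd'$ is of the form $\xi(c)$ for some $c\in C$: if $c\in C\cap X$ then $\xi(c)=c'_j\notin X$, while if $c\in C\setminus X$ then $\xi(c)=c\notin X$. Hence no constant of $\bd'$ belongs to $X$. Preservation of flatness is immediate from the corresponding clause of Lemma~\ref{lem:derivation-crenaming}. The only delicate point---hardly an obstacle, but worth flagging---is the bookkeeping required to arrange the fresh $c'_j$ so as to avoid $X$, $C$ and each other simultaneously, thereby making $\xi$ a genuine bijection; this is routine and is enabled by the countable stock of constants at every type and level.
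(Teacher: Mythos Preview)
Your proof is correct and follows exactly the same approach as the paper: construct a neutral constant renamer sending the offending constants to fresh ones and invoke Lemma~\ref{lem:derivation-crenaming}. You simply spell out more of the bookkeeping (making $\xi$ an explicit swap so that bijectivity is evident, and checking that the endpoints and computed answer are fixed) than the paper's one-line proof does.
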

\begin{proof}
  Let $\xi$ be a neutral constant renamer which maps variables in $X$ to variables not in $X$ nor in $\bd$. The derivation we are looking for is the one obtained by Lemma~\ref{lem:derivation-crenaming}.
\end{proof}

A particular case of Lemmas~\ref{lem:choicevar} and~\ref{lem:choiceconst} wll be particularly usefult later and regards successful derivations with identity computed answer. In this case:
\begin{lem}
  \label{lem:derivation-fresh}
  Let $\bd$ be a derivation $\mfa \derivation{\ids} \NULL$, $X$ is a set of variables and $C$ a set of constants. Then, there exists a derivation $\bd':  \mfa \derivation{\ids} \NULL$ whose newly introduced variables and constants do not belong to $X$ or $C$.
\end{lem}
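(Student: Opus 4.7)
The plan is to apply Lemmas~\ref{lem:choicevar} and~\ref{lem:choiceconst} in succession, first to handle variables and then constants. Since $\bd$ has finite length, it mentions only finitely many variables and finitely many constants, so I first restrict attention to the finite sets $X_0 = (X \cap \fv{\bd}) \setminus \fv{\mfa}$ and $C_0$, the set of constants from $C$ that appear in $\bd$ but not in $\mfa$. The preconditions of the two earlier lemmas require that the set being avoided be disjoint from the free variables (respectively constants) of the endpoints and of the computed substitution; since here $\lastd(\bd) = \NULL$ and $\theta = \ids$, these conditions collapse to disjointness from $\mfa$, which is exactly how $X_0$ and $C_0$ are chosen.

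First, I apply Lemma~\ref{lem:choicevar} to $\bd$ with the set $X_0$, obtaining a derivation $\bd_1 : \mfa \derivation{\ids} \NULL$ none of whose free variables lies in $X_0$. Any variable of $\bd_1$ not occurring in $\mfa$ --- a ``newly introduced'' variable, in the sense of the lemma statement --- cannot lie in $X$: were it in $X$, it would belong to $X \setminus \fv{\mfa}$ and, being a variable of $\bd_1$, hence to $X_0$, contradicting the choice of $\bd_1$.

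Next, I apply Lemma~\ref{lem:choiceconst} to $\bd_1$ with $C_0$, yielding the desired $\bd' : \mfa \derivation{\ids} \NULL$ whose newly introduced constants avoid $C$ by the same argument. Since the constant replacer used in this second step acts only on constants and leaves variables untouched, the variable-freshness established at the first step is preserved, and $\bd'$ simultaneously satisfies both conditions.

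The only obstacle is bookkeeping rather than conceptual: one must confirm that $X_0$ and $C_0$ are indeed finite and disjoint from the variables and constants of $\mfa$ (both immediate from the finiteness of $\bd$ and the definitions), and one must observe that variable-renaming and constant-renaming do not interfere, which is clear because variables and constants are disjoint syntactic categories acted on by different kinds of maps (substitutions versus constant replacers).
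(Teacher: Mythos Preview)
Your overall strategy---apply Lemma~\ref{lem:choicevar} and then Lemma~\ref{lem:choiceconst}---is exactly what the paper intends; it presents this lemma as a direct particular case of those two results without further argument.

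There is, however, a slip in your execution. You define $X_0 = (X \cap \fv{\bd}) \setminus \fv{\mfa}$, presumably to guarantee finiteness, and then argue that a newly introduced variable $v$ of $\bd_1$ lying in $X$ must lie in $X_0$ ``being a variable of $\bd_1$.'' But $X_0$ is defined in terms of $\fv{\bd}$, not $\fv{\bd_1}$, and these differ: $\bd_1$ is obtained from $\bd$ by a renaming $\rho$ whose \emph{targets} are chosen (in the proof of Lemma~\ref{lem:choicevar}) only to avoid $X_0 \cup \fv{\bd}$, not all of $X$. So a target of $\rho$ could land in $X \setminus \fv{\bd}$, which is outside $X_0$; such a variable would then appear in $\bd_1$, be newly introduced, and lie in $X$, with no contradiction. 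The same issue recurs verbatim for constants in the second step.

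The fix is easy. If $X$ and $C$ are finite (as they are in every application in the paper), drop the intersection with $\fv{\bd}$ and take $X_0 = X \setminus \fv{\mfa}$ directly; then your argument goes through verbatim, since now ``$v \in X \setminus \fv{\mfa}$'' \emph{is} ``$v \in X_0$.'' If you want to allow infinite $X$, you must instead open up the proof of Lemma~\ref{lem:choicevar} and observe that the renaming $\rho$ can be chosen so that its (finitely many) targets avoid all of $X$, not just $X_0$---possible because only finitely many variables of $\bd$ need to be moved and there are infinitely many fresh variables at each level and type.
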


In the following, we will use Lemma~\ref{lem:derivation-fresh}, even implicitly, to ensure that variables introduced in derivations are fresh.

Finally, we examine a way to combine two derivations into a new one:

\begin{lem}[product lemma]
\label{lem:product}
If there are resolutions $\bd_a: \midnullsres{\mfa}{\ids}$ and
$\bd_b: \midnullsres{\mfb}{\ids}$, then there is a resolution $\bd: \midnullsres{\mfa\otimes\mfb}{[\ids]}$.
Moreover, $\bd$ is flat if $\bd_a$ and $\bd_b$ are both flat.
\end{lem}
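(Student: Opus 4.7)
The plan is to sequentially concatenate $\bd_a$ with $\bd_b$, after first renaming apart their freshly introduced logic variables and fresh constants so that the two sub-derivations do not interfere. Specifically, by Lemma~\ref{lem:derivation-fresh} I may assume that every variable newly introduced during $\bd_a$ (i.e.\ not already in $\fv{\mfa}$) and every constant freshly introduced by a generic step of $\bd_a$ lies outside $\fv{\mfb}$ and outside the set of variables and constants occurring anywhere in $\bd_b$; symmetrically for $\bd_b$ relative to $\mfa$ and the renamed $\bd_a$.

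Next, I construct $\bd$ by first replaying each step $\mfa_k \derstep{\theta_k} \mfa_{k+1}$ of $\bd_a$ in the enlarged context as $\mfa_k \otimes \mfb \derstep{\theta_k} \mfa_{k+1} \otimes \mfb$. For this to be a legal resolution step, the substitution $\theta_k$ must act as the identity on $\mfb$. In the flat case this is automatic: by definition of flatness, $\dom(\theta_k)$ is restricted to the freshly introduced $w_i$'s of the corresponding backchain or axiom step, and those variables have been renamed apart from $\fv{\mfb}$. After $\ell(\bd_a)$ such extended steps we reach $\NULL \otimes \mfb = \mfb$, at which point we simply append the renamed $\bd_b$ to conclude at $\NULL$. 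The computed answer of the combined derivation is the identity on $\fv{\mfa \otimes \mfb} = \fv{\mfa} \cup \fv{\mfb}$: its restriction to $\fv{\mfa}$ equals that of $\bd_a$ (the substitutions contributed by $\bd_b$ do not act on $\fv{\mfa}$ by apartness), which is the identity by hypothesis, and symmetrically for $\fv{\mfb}$.

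The main obstacle is the bookkeeping of freshness. One must verify that embedding $\bd_a$'s steps into the larger vector does not spuriously violate any freshness condition on variables or constants, and that the freshness conditions required by $\bd_b$'s generic and backchain steps survive once $\bd_b$ is executed after $\bd_a$ rather than from $\mfb$ alone; both issues are exactly what Lemma~\ref{lem:derivation-fresh} addresses. Flatness is preserved by inspection, since each condition defining flatness (absence of the true rule, safeness of every intervening substitution, and domain of each such substitution restricted to the $w_i$'s of the relevant step) is local to a single transition and insensitive to the surrounding state-vector context; hence if both $\bd_a$ and $\bd_b$ are flat, so is the concatenation. The cleanest way to organize the whole argument is by induction on $\ell(\bd_a)$ with case analysis on the first rule used, the only case requiring the constant-apartness explicitly being that of the generic rule.
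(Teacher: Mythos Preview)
Your approach is essentially the same as the paper's: rename apart using the freshness lemmas, replay $\bd_a$ with $\mfb$ appended to every state vector, then append $\bd_b$. The paper uses Lemma~\ref{lem:choicevar} for $\bd_a$ and Lemma~\ref{lem:derivation-fresh} for $\bd_b$, whereas you invoke Lemma~\ref{lem:derivation-fresh} for both, but this is a cosmetic difference.

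One minor imprecision worth tightening: you write each extended step as $\mfa_k\otimes\mfb \derstep{\theta_k} \mfa_{k+1}\otimes\mfb$ and then remark that this requires $\theta_k$ to act as the identity on $\mfb$, which you note is automatic in the flat case. In the non-flat case individual $\theta_k$ need not be the identity on variables in $\fv{\mfa}\cap\fv{\mfb}$, so the intermediate vectors are really $\mfa_k\otimes\mfb\theta_1\cdots\theta_{k-1}$; what ensures the phase ends at $\mfb$ is that the \emph{composite} $\theta_1\cdots\theta_n$ is the identity on $\fv{\mfa}$ (hypothesis on $\bd_a$) and the renamed $\bd_a$ contains no variables of $\fv{\mfb}\setminus\fv{\mfa}$. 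The paper simply asserts that $\bd'_a\otimes\mfb$ ends at $\mfb$ without spelling this out either, so your argument is no less complete than theirs---just be aware that the per-step claim as stated only holds verbatim in the flat case.
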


\begin{proof}
  Using Lemma~\ref{lem:choicevar} consider a variant $\bd'_a$ of $\bd_a$ which does not use any variable in $\fv{\mfb} \setminus \fv{\mfa}$. Now, given $X$ the set of variables occurring in $\bd'_a$, use Lemma~\ref{lem:derivation-fresh} to obtain a derivation $\bd'_b$ which 
  \begin{itemize}
    \item  does not use any variable in $X \setminus \fv{\mfb}$;
    \item  does not introduce with the \emph{generic} rule any constant occuring in $\bd'_{a}$
  \end{itemize}
  Then, we first take the derivation
  \[
    \bd'_a \otimes \mfb: \mfa \otimes \mfb \derivation{\ids} \mfb
  \]
  obtained by concatenating each state vector in $\bd'_a$ with $\mfb$. Then, $\bd$ is the concatenation of $\bd'_a \otimes \mfb$ with $\bd'_b$.
\end{proof}

The preceding operation might be generalized to the case where the computed substitutions are not identities. Moreover, other operations might be defined to turn derivations into a rich algebraic structure. We do not pursue this here, but refer the reader to \cite{CoLeviMeo} for details.

\subsection{Some specialization properties of \UCTT}

\begin{lem}[Instantiation Lemma]
  \label{lem:weak-lifting}
  Let $\mfa$ be a state vector. Suppose there is a deduction 
\[
  \nullsres{\mfa}{\theta}  
\]
with computed substitution $\theta$. Then there is a flat deduction of the same or shorter length
\[
\nullsres{\mfa\theta}{[id]}.
\]
\end{lem}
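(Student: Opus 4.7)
The plan is induction on the length $n$ of the given deduction. Write it as $\mfa \derstep{\gamma_1} \mfa_1 \derivation{[\text{tail}]} \NULL$ and let $\theta_1$ denote the computed answer of the tail (restricted to $\fv{\mfa_1}$). A preliminary observation, obtained from associativity of substitution composition together with the definition of the restricted composition, is that $\theta$ and $\gamma_1\theta_1$ agree on $\fv{\mfa}$; this identity drives the entire argument. The base case $n=0$ is immediate.

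In the inductive step I case-split on the first rule. The easiest cases are the identity-substitution rules (\textbf{null}, \textbf{or}, \textbf{and}, \textbf{augment}, \textbf{generic}, \textbf{instance}): I apply the same rule to $\mfa\theta$ and verify that the resulting state vector is exactly $\mfa_1\theta_1$ (for \textbf{instance} the witness term must be taken to be $t\theta_1$ rather than $t$, since the tail may further instantiate fresh variables of $t$; for \textbf{generic} one must pick the fresh constant outside the range of $\theta$, which by Lemmas~\ref{lem:derivation-crenaming} and~\ref{lem:choiceconst} merely requires renaming the remaining tail). The prepended step carries the identity substitution, so appending the flat derivation supplied by the induction hypothesis preserves flatness, the identity computed answer, and the length bound.

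The \textbf{backchain} (and \textbf{axiom}) case is the algebraic heart of the proof. Let $\forall x_1\cdots\forall x_n(G\horseshoe A)\in\elab(P)$ be the clause used, $\delta=\{w_i/x_i\}$ the renaming, with $A'\gamma_1=A\delta\gamma_1$. I will apply \textbf{backchain} to $\mfa\theta$ using the clause $\forall x_1\cdots\forall x_n(G\theta\horseshoe A\theta)\in\elab(P\theta)$ provided by Lemma~\ref{lem:clause-renaming}, choose a fresh renaming $\delta''=\{w''_i/x_i\}$, and take $\eta$ to be $\{w_i\gamma_1\theta_1/w''_i\}$. The key identity is
\[
A'\theta \;=\; A'\gamma_1\theta_1 \;=\; A\delta\gamma_1\theta_1 \;=\; A\theta\delta''\eta,
\]
using that $\theta$ and $\gamma_1\theta_1$ coincide on the \fvar-variables of $A'$ and of $A$ (both subsets of $\fv{\mfa}$), while the $w_i$ are disposed of by the definition of $\eta$. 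An analogous computation gives $G\delta\gamma_1\theta_1=G\theta\delta''\eta$, hence $\mfa\theta\derstep{\eta}\mfa_1\theta_1$ in a single step. Choosing the $w''_i$ fresh with respect to both $\mfa\theta$ and the terms $w_i\gamma_1\theta_1$ makes $\eta$ safe with $\dom(\eta)\subseteq\{w''_1,\ldots,w''_n\}$, so the step is flat, and the induction hypothesis on the tail completes the derivation.

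The \textbf{true} case is the principal obstacle, since a flat derivation may contain no \textbf{true} step at all. Here $A\gamma_1=\top$ forces $A$'s head to be a logic variable $X\in\fv{\mfa}$ with $\gamma_1(X)=\lambda\vec u.\top$; since $\theta(X)=\gamma_1(X)\theta_1=\lambda\vec u.\top$, we have $A\theta\;=\;\top$ after $\beta$-reduction. The new derivation opens with $\mfa\theta\derstep{\text{null}}\mfa'\theta\otimes\mfb\theta$. The induction hypothesis yields a flat derivation of $\mfa_1\theta_1=\mfa'\theta\otimes\ipresgoal{i}{P\theta}{\top}\otimes\mfb\theta$ of length at most $n-1$, but what I need is a flat derivation of $\mfa'\theta\otimes\mfb\theta$ of length at most $n-2$. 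To bridge this gap I would prove an auxiliary permutation lemma: in any flat derivation from a state vector containing a distinguished $\top$-state, the unique \textbf{null} step eliminating that state commutes with every other step (substitutions fix $\top$ and no other rule is applicable to a $\top$-goal), and may therefore be excised, yielding a flat derivation one step shorter that begins at the vector with the $\top$-state removed. Concatenation with the opening \textbf{null} step gives the required derivation, of total length at most $n-1$, which is strictly shorter than $n$.
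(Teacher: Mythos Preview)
Your argument is correct and follows the paper's structure closely in all cases except \textbf{true}, where you take an unnecessary detour. You correctly observe that $A\theta=\top$, but then you insist on \emph{prepending} a \textbf{null} step to the flat derivation obtained from the induction hypothesis, which forces you to invent a permutation lemma to excise the surviving $\top$-state. The paper's treatment is simpler: once you know $A\theta=\top$, the state vector $\mfa\theta$ \emph{is literally} $\mfa'\theta\otimes\ipresgoal{i}{P\theta}{\top}\otimes\mfb\theta=\mfa_1\theta_1$, so the flat derivation of length $\leq n-1$ supplied by the induction hypothesis is already a derivation from $\mfa\theta$ with identity answer---nothing needs to be prepended, and the \textbf{true} case is the one place where the new derivation is \emph{strictly} shorter. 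Your permutation lemma is true and your argument goes through, but it is superfluous.

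A minor point on bookkeeping: in the \textbf{backchain} case you should also note (as the paper does, via Lemmas~\ref{lem:choicevar} and~\ref{lem:choiceconst}) that the flat tail obtained from the induction hypothesis can be assumed to avoid your fresh $w''_i$, so that the concatenation respects the freshness conditions; you implicitly rely on this when you append the tail.
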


\begin{proof}
  We prove this by induction on the length $\ell$ of the given
  proof. The cases $\ell = 0$
  is trivial. Now assume the given deduction has length $\ell>0$ and that the first step is

\case{null}
Then the deduction has the following shape:
\[
\midnullsrest{\mfa\otimes\ipresgoal{i}{P}{\top}\otimes\mfb}
{null}
{\mfa\otimes\mfb}
{\theta} 
\]
By the inductive hypothesis  applied to the deduction starting from the
second state vector, there is a derivation 
\[
\nullsres
{\mfa\theta\otimes\mfb\theta}
{[id]}
\]
By Lemmas~\ref{lem:choicevar} and~\ref{lem:choiceconst} we may assume that fresh variables and constants introduced in this derivation do not occur in $P\theta$. Therefore,
\[
\midnullsrest{\mfa\theta\otimes\ipresgoal{i}{P\theta}{\top}\otimes\mfb\theta}
{null}
{\mfa\theta\otimes\mfb\theta}
{[id]} 
\]
is a flat derivation.

\case{true}
Then the deduction has the following shape:
\[
\midnullsrest{\mfa\otimes\ipresgoal{i}{P}{G}\otimes\mfb}
{\gamma}
{\mfa\gamma\otimes\ipresgoal{i}{P\gamma}{\top}\otimes\mfb\gamma}
{\theta'} 
\]
By induction hypothesis, applied to the deduction starting from the
second state vector, there is a deduction
\[
\nullsres
{\mfa\gamma\theta'\otimes\ipresgoal{i}{P\gamma\theta'}{\top}\otimes\mfb\gamma\theta'}
{[id]}
\]
Since $\gamma \theta' = \theta$ when
restricted to the free variables in the initial state vector, this is actually a derivation
\[
\nullsres
{\mfa\theta\otimes\ipresgoal{i}{P\theta}{\top}\otimes\mfb\theta}
{[id]}
\]
which is what we wanted to prove.

\case{backchain}
We assume the proof has the following shape:
\[
\midnullsrest{\mfa\otimes\ipresgoal{i}{P}{A}\otimes\mfb}
{\gamma}
{\mfa\gamma\otimes\ipresgoal{i}{P\gamma}{G\delta\gamma}\otimes\mfb\gamma}
{\theta'}
\]
where $\forall x_1, \ldots x_n (G \hoe A') \in \elab(P)$, $\delta =\{w_1/x_1, \ldots, w_n/x_n\}$ is the renaming-apart subtitution and $A'\delta\gamma=A\gamma$ and $\gamma\theta'=\theta$ when restricted to variables free in the initial state vector.
Applying the induction hypothesis to the derivation of length $\ell-1$
starting with the second state vector, we have
\[
  \nullsres{\mfa\theta\otimes\ipresgoal{i}{P\theta}{G\delta\gamma\theta'} \otimes \mfb \theta
  }
{[id]}
\]
By Lemmas~\ref{lem:choicevar} and~\ref{lem:choiceconst} we can assume that fresh variables and constants introduced in this derivation are disjoint from those in $A \theta$. By the clause instance lemma (Lemma~\ref{lem:clause-renaming}), 
we have $\forall \vec x (G\gamma\theta'\hoe A'\gamma\theta')\in\elab(P\theta)$. Consider a new set of fresh variables $w'_1, \ldots, w'_n$ and the renaming apart substitution $\delta' = \{ w'_1/x_1, \ldots, w'_n/x_n \}$ for this clauses. Consider also $\gamma' = \{ w_1\gamma\theta' / w'_1, \ldots, w_n\gamma\theta' / w'_n \}$.

We want to prove that $\gamma'$ is an unifier of $A\theta$ and $A'\gamma\theta'\delta'$. Since $A\theta =  A' \delta \gamma \theta'$, we need to prove that $A' \delta \gamma \theta' \gamma' = 
A'\gamma\theta'\delta'\gamma'$. It is enough to prove that $x \delta \gamma \theta' \gamma' =  x \gamma\theta'\delta'\gamma'$ for each $x \in \fv{A'}$. Note that $x \delta \gamma \theta'$ does not contain any of the new fresh $w'_i$'s, hence $x \delta \gamma \theta' \gamma' = x \delta \gamma \theta'$. We distinguish two cases:
\begin{itemize}
  \item if $x = x_i$ for one of the bound variables of the clause, then $x_i \gamma\theta'\delta'\gamma' = x_i \delta' \gamma'$ since $\gamma\theta'$ does not contain variables in $\bvar$, and $x_i \delta' \gamma' = w'_i \gamma' = w_i \gamma \theta' = x_i \delta \gamma \theta'$.
  \item if $x$ is not one of the bound variable of the clause, then  $x \gamma\theta'\delta'\gamma' = x \gamma \theta'= x \delta \gamma \theta'$.
\end{itemize}
For the same reason $G \gamma \theta' \delta' \gamma' = G \delta \gamma \theta' \gamma'$. Therefore, the following is a legal derivation of length
at most $\ell$
\[
\midnullsrest{\mfa\theta\otimes\ipresgoal{i}{P\theta}{A\theta}
\otimes\mfb\theta}
{\gamma'}
{\mfa\theta\otimes\ipresgoal{i}{P\theta}{G\delta\gamma\theta'}
\otimes \mfb \theta}
{[id]}
\]
as we wanted to show. Note that $\gamma'$ respects the conditions for flat derivations.

\case{and}
Suppose the given deduction is 
\[
\midnullsrest{\mfa\otimes\ipresgoal{i}{P}{G_1\wedge G_2}\otimes\mfb}
{\wedge}
{\mfa\otimes\ipresgoal{i}{P}{G_1}
\otimes\ipresgoal{i}{P}{G_2}\otimes\mfb}
{\theta}.
\]
Applying the induction hypothesis to the resolution beginning with the
second step, we obtain a deduction
\[
\midnullsrest{\mfa\theta\otimes\ipresgoal{i}{P\theta}{G_1\theta\wedge
G_2\theta}\otimes\mfb\theta}
{\wedge}
{\mfa\theta\otimes\ipresgoal{i}{P\theta}{G_1\theta}
\otimes\ipresgoal{i}{P\theta}{G_2\theta}\otimes\mfb\theta}
{[id]}.
\]

\case{or}
The proof is similar to the one for the \textit{null} case.

\case{instance}
Suppose the given deduction is 
\[
\midnullsrest{\mfa\otimes\ipresgoal{i}{P}{\exists x G}
\otimes\mfb}
{\exists}
{\mfa\otimes\ipresgoal{i}{P}{G[t/x]}
\otimes\mfb}
{\theta'}
\]
where $\theta'=\theta$ when 
restricted to the free variables  in the original state vector.
%
Using this fact, together with the induction hypothesis on the
deduction beginning with the second state vector, we have
\[
\nullsres{\mfa\theta\otimes\ipresgoal{i}{P\theta}
{G[t/x]\theta'}\otimes\mfb\theta}
{[id]}
\]
Since $x \in \bvar$, hence it does not occur free in $\theta'$, $G[t/x]\theta' = G\theta[t\theta'/x]$. Note that, since all substitutions are valid, $t\theta'$ is a term of the same level of $x$ or less. Using $t\theta'$ as the witness for the instance step, we
obtain the existence of a deduction of length at most $\ell$
\[
\midnullsrest{\mfa\theta\otimes\ipresgoal{i}{P\theta}{(\exists x G)\theta}
\otimes\mfb\theta}
{\exists}
{\mfa\theta\otimes\ipresgoal{i}{P\theta}{G\theta[t\theta'/x]}
\otimes\mfb\theta}
{[id]}
\]
as we wanted to show.

\case{generic}
Then the deduction has the following shape:
\[
\midnullsrest{\mfa\otimes\ipresgoal{i}{P}{\forall x. G}\otimes\mfb}
{\forall}
{\mfa\otimes\ipresgoal{i+1}{P}{G[c/x]}\otimes\mfb}
{\theta}.
\]
By inductive hypothesis, we have a derivation
\[
\nullsres{\mfa\theta\otimes\ipresgoal{i+1}{P\theta}
{G[c/x]\theta}\otimes\mfb\theta}
{[id]}
\]
Since $x \in \bvar$, we have $G[c/x]\theta = G\theta[c/x]$. Therefore, we get the derivation
\[
\midnullsrest{\mfa\theta\otimes\ipresgoal{i}{P\theta}{\forall x. G\theta}\otimes\mfb\theta}
{\forall}
{\mfa\theta\otimes\ipresgoal{i+1}{P\theta}{G\theta[c/x]}\otimes\mfb\theta}
{[id]}. \qedhere
\]
\end{proof}

A natural question to ask here is whether we can also establish
the converse of this lemma,
usually known in the literature, in perhaps a slightly different form,
as the {\em lifting lemma\/}. This question is taken up in
Section~\ref{sec:enriched-resolution}.

We now give a number of technical results about
resolution proofs that
will be needed to prove soundness and completeness with respect to our
semantics.

\begin{lem}[Left weakening]
  \label{lem:left-weakening}
  Let $\bd: \mfa_1 \otimes \ipresgoal{i}{P}{G} \otimes \mfa_2 \derivation{[\ids]} \NULL$ be a derivation. For each program $P'$ of level $i$ or less such that $\elab(P') \supseteq \elab(P)$, there is a derivation $\bd': \mfa_1 \otimes \ipresgoal{i}{P'}{G} \otimes \mfa_2 \derivation{[\ids]} \NULL$.
\end{lem}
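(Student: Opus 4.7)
The plan is to proceed by induction on the length of the flat derivation $\bd$, case-splitting on the first resolution step. Since $\bd$ is flat (as indicated by the $[\ids]$ annotation), every intervening substitution is either the identity or, in the case of \textbf{backchain} and \textbf{axiom}, a safe substitution whose domain is confined to the fresh variables introduced by the renaming-apart substitution $\delta$. This flatness is crucial: it guarantees that substitutions never touch the program $P$ or the goal $G$ of the distinguished middle state (nor any other state), so replacing $P$ by $P'$ throughout the derivation is locally compatible at each step.

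As a preliminary step I would invoke Lemma~\ref{lem:derivation-fresh} to rename (using a neutral variable renaming and a neutral constant renamer) the derivation $\bd$ so that every variable introduced by a \textbf{backchain} step and every constant introduced by a \textbf{generic} step is disjoint from the free variables and constants of $P'$. This preprocessing lets us apply the same \textbf{backchain} and \textbf{generic} steps verbatim in the new derivation without violating the freshness side-conditions.

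For the inductive step I split on the first rule of $\bd$. If it acts on a state inside $\mfa_1$ or $\mfa_2$, the middle state $\ipresgoal{i}{P}{G}$ is carried across unchanged (by flatness and freshness), and I can directly apply the induction hypothesis to the tail of the derivation (with $P$ still in the middle state). If it acts on the middle state $\ipresgoal{i}{P}{G}$, the easy subcases are \textbf{null}, \textbf{and}, \textbf{or}, \textbf{instance}, and \textbf{generic}: none of these inspects $P$ beyond carrying it forward, so I replay the same step starting from $\ipresgoal{i}{P'}{G}$ and invoke induction on the resulting tail. For \textbf{backchain} (and likewise \textbf{axiom}), the step uses some clause $\forall x_1\cdots\forall x_n (G'\horseshoe A') \in \elab(P)$; by the hypothesis $\elab(P')\supseteq \elab(P)$, the same clause lies in $\elab(P')$, so the identical step (with the same $\delta$ and $\theta$) is legal starting from $\ipresgoal{i}{P'}{A}$, leaving $P'$ unchanged in the resulting state. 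For \textbf{augment}, where $G = D\horseshoe G'$ and the step produces $\ipresgoal{i}{P\cup D}{G'}$, I note that
\[
  \elab(P'\cup D) \;=\; \elab(P')\cup \elab(D) \;\supseteq\; \elab(P)\cup \elab(D) \;=\; \elab(P\cup D),
\]
so the induction hypothesis applies to the tail with the enlarged program $P'\cup D$ replacing $P\cup D$.

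The main obstacle I anticipate is bookkeeping around freshness rather than any deep conceptual difficulty: I must ensure that the \textbf{backchain} renamings $\delta$ and the \textbf{generic} constants introduced downstream remain fresh with respect to $P'$, not merely with respect to $P$. This is precisely what the initial application of Lemma~\ref{lem:derivation-fresh} secures. Since no \textbf{true} step occurs (flatness) and every other step preserves flatness when replayed with $P'$, the resulting derivation $\bd'$ is again flat and has identity computed answer, completing the induction.
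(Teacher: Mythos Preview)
Your proposal is correct and follows essentially the same route as the paper: freshness preprocessing via Lemma~\ref{lem:derivation-fresh}, induction on the length of the flat derivation, and a case split on whether the first step selects the distinguished state. One small point the paper makes explicit that you gloss over: in the \textbf{and} case the tail contains \emph{two} states carrying the program $P$, so a single invocation of the induction hypothesis only replaces one of them; the paper applies the induction hypothesis twice (which is legitimate since the construction does not increase length). You should say this, or else strengthen the induction hypothesis to allow replacing $P$ by $P'$ in several states simultaneously.
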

\begin{proof}
  First of all, assume without loss of generality that $\bd$ does not use any variable or constant occuring in $\fv{P'} \setminus \fv{\firstd(\bd)}$. The proof is by induction on the length of $\bd$, and it is quite straightforward once we observe that if $\elab(P')\supseteq \elab(P)$, then also $\elab(P'\gamma) \supseteq \elab(P\gamma)$ for each substitution $\gamma$ and $\elab(P', D) \supseteq  \elab(P, D)$ for each program formula $D$. We show just a few cases.
  
  If $\bd$ has length zero, the result is trivial. For the inductive step, we  first distinguish two cases.
  \paragraph*{First case}
  If the selected goal is not $\ipresgoal{i}{P}{G}$, then assume without loss of generality that it is in $\mfa_1$. Then 
  \[
    \bd: \mfa_1 \otimes \ipresgoal{i}{P}{G} \otimes \mfa_2 \derstep{\gamma} \mfa'_1\gamma \otimes \ipresgoal{i}{P\gamma}{G\gamma} \otimes \mfa_2\gamma \derivation{[\ids]} \NULL \enspace .
  \]
  Note that we need flatness of $\bd$ to conclude that the computed answer of the substitution starting from the second state vector is $[\ids]$. By inductive hypothesis, there exists a fresh derivation
  \[
      \bd'': \mfa'_1 \otimes \ipresgoal{i}{P'\gamma}{G\gamma} \otimes \mfa_2\gamma \derivation{[\ids]} \NULL \enspace .
  \] 
  Due to the choice of $\bd$, the same rule used in the first step of $\bd$ may be also applied to $\mfa_1 \otimes \ipresgoal{i}{P'}{G} \otimes \mfa_2$. Therefore, the derivation we are looking for is
  \[
    \bd': \mfa_1 \otimes \ipresgoal{i}{P'}{G} \otimes \mfa_2 \derstep{\gamma} \mfa'_1 \otimes \ipresgoal{i}{P'\gamma}{G\gamma} \otimes \mfa_2\gamma \derivation{[\ids]} \NULL \enspace .
  \]

  \paragraph*{Second case}
  If the selected goal is $\ipresgoal{i}{P}{G}$, we proceed according to the rule used. 

  \case{and}
  The derivation $\bd$ has the form
  \[
    \bd: \mfa_1 \otimes \ipresgoal{i}{P}{G_1 \wedge G_2} \otimes \mfa_2 \derstep{} \mfa_1 \otimes \ipresgoal{i}{P}{G_1} \otimes \ipresgoal{i}{P}{G_2} \otimes \mfa_2 \derivation{[\ids]} \NULL \enspace .
  \]
  Applying two times the inductive hypothesis on the derivation starting from the second step of $\bd$, we obtain a fresh derivation
  \[
    \bd'': \mfa_1 \otimes \ipresgoal{i}{P'}{G_1} \otimes \ipresgoal{i}{P'}{G_2} \otimes \mfa_2 \derivation{[\ids]} \NULL \enspace ,
  \]
  whence
  \[
    \bd': \mfa_1 \otimes \ipresgoal{i}{P'}{G_1 \wedge G_2} \mfa_2  \derstep{\wedge} \mfa_1 \otimes \ipresgoal{i}{P'}{G_1} \otimes \ipresgoal{i}{P'}{G_2} \derivation{[\ids]} \NULL \enspace .
  \]

  \case{backchain}
  The derivation $\bd$ has the form
  \[
    \bd: \mfa_1 \otimes \ipresgoal{i}{P}{A_r} \otimes \mfa_2 \derstep{\gamma} \mfa_1\gamma \otimes \ipresgoal{i}{P\gamma}{H} \otimes \mfa_2 \gamma  \derivation{[\ids]} \NULL \enspace ,
  \]
  By inductive hypothesis, we have a fresh derivation
  \[
    \bd'': \mfa_1\gamma \otimes \ipresgoal{i}{P' \gamma}{H} \otimes \mfa_2 \gamma  \derivation{[\ids]} \NULL \enspace .
  \]
  Since $\elab(P') \supseteq \elab(P)$ and since $\bd$ does not use variables in $\fv{P'} \setminus \fv{\firstd(\bd)}$, the same backchain step used in $\bd$ may be also applied to $\ipresgoal{i}{P'}{A_r}$. 
  Therefore we obtain:
  \[
    \bd': \mfa_1 \otimes \ipresgoal{i}{P'}{G} \otimes \mfa_2 \derstep{\gamma} \mfa_1\gamma \otimes \ipresgoal{i}{P' \gamma}{H} \otimes \mfa_2 \gamma  \derivation{[\ids]} \NULL \enspace .
  \]

  \case{other rules}
  The proof is similar to what is shown above.
  %
\end{proof}

\begin{lem}[Specialization]
\label{lem:sub-t}
  Let $\mfa$ be a state vector. If there is a deduction
\begin{equation}
\label{ded:givent}
\bd: \nullsres{\mfa}{[id]} \enspace ,
\end{equation}
a variable $x \in \fvar$ of level $i$
and a term $t$ of level $i$ or less, then there is a deduction
\[
\nullsres{\mfa[t/x]}{[id]}
\]
\end{lem}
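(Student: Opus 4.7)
I would prove this by induction on the length $\ell$ of the flat derivation $\bd$, with case analysis on the first resolution rule. The base case $\ell=0$ is immediate since $\mfa=\NULL$ and $\NULL[t/x]=\NULL$. Before starting, I would invoke Lemma~\ref{lem:derivation-fresh} to obtain a variant of $\bd$ whose freshly introduced variables (in backchain/axiom) and constants (in generic) avoid both $\{x\}$ and the set of variables and constants occurring in $t$; this is safe because $x\in\fvar$ has a fixed level $i$ and $t$ is fixed once and for all. Because $\bd$ is flat, there are no \textbf{true} steps to worry about.

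For the propagation rules \textbf{null}, \textbf{and}, \textbf{or}, and \textbf{augment}, the substitution $[t/x]$ simply distributes over the state vector: since none of these rules acts on constants or introduces fresh variables, one applies $[t/x]$ to the second state vector, invokes the induction hypothesis, and then prefixes the corresponding rule on $\mfa[t/x]$. The \textbf{generic} case is similar: the fresh constant $c$ of level $i'{+}1$ lies outside $t$ by the choice of $\bd$, so $G[c/y][t/x] = G[t/x][c/y]$ and the new derivation is built by a \textbf{generic} step followed by the induction hypothesis. For \textbf{instance}, the selected step introduces a witness $s$ of level $\le i'$. If $i'<i$ then $x$ cannot occur in $s$ (since $x$ has level $i$), so $s[t/x]=s$; if $i'\ge i$ then $s[t/x]$ has level $\le\max(i',i)=i'$. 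In both sub-cases $s[t/x]$ is a legal witness of level $\le i'$, and since $y\in\bvar$ is disjoint from $x\in\fvar$ we have $G[s/y][t/x]=G[t/x][s[t/x]/y]$.

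The main obstacle is the \textbf{backchain} case, which I would handle as follows. Suppose the first step is
\[
\mfa_1\otimes\ipresgoal{i'}{P}{A_r}\otimes\mfa_2 \derstep{\gamma}
\mfa_1\otimes\ipresgoal{i'}{P}{G'\delta\gamma}\otimes\mfa_2
\]
using the clause $\forall x_1\cdots\forall x_n(G'\hoe A')\in\elab(P)$, the renaming-apart $\delta=\{w_1/x_1,\ldots,w_n/x_n\}$, and a safe $\gamma$ with $\dom(\gamma)\subseteq\{w_1,\ldots,w_n\}$ satisfying $A'\delta\gamma=A_r\gamma$. By Lemma~\ref{lem:clause-renaming}, $\forall x_1\cdots\forall x_n(G'[t/x]\hoe A'[t/x])\in\elab(P[t/x])$. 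Define $\gamma'$ pointwise by $w_i\mapsto (w_i\gamma)[t/x]$. Because the $w_i$ are fresh (so do not appear in $t$) and $x\notin\dom(\gamma)$, a direct variable-by-variable check shows $(s\gamma)[t/x]=s[t/x]\gamma'$ for every term $s$. In particular $A_r[t/x]\gamma'=(A_r\gamma)[t/x]=(A'\delta\gamma)[t/x]=A'[t/x]\delta\gamma'$, so $\gamma'$ is a legal unifier; moreover $\dom(\gamma')\subseteq\{w_1,\ldots,w_n\}$ and $\gamma'$ is safe, preserving flatness. Applying $[t/x]$ to the rest of the derivation and using the identity $(G'\delta\gamma)[t/x]=G'[t/x]\delta\gamma'$, the induction hypothesis delivers a flat successful derivation from the second state vector, which we prepend with the new backchain step to conclude. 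The \textbf{axiom} case is a degenerate specialization of this argument.
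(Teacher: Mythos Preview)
Your proof is correct and follows the same overall scheme as the paper: induction on the length of the flat derivation, case analysis on the first rule, after first renaming the fresh variables and constants of $\bd$ away from $x$ and $\fv{t}$. Your treatment of \textbf{null}, \textbf{and}, \textbf{or}, \textbf{augment}, \textbf{generic}, and \textbf{instance} matches the paper's; your level analysis for the \textbf{instance} witness $s[t/x]$ is in fact more explicit than what the paper records.

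The one substantive difference is in the \textbf{backchain} case. The paper reuses the original unifier $\gamma$ unchanged, arguing that $A'[t/x]\delta\gamma = A'\delta\gamma[t/x]$ and $G[t/x]\delta\gamma = G\delta\gamma[t/x]$ follow from $\gamma(x)=x$. But these commutations also require $x\notin\fv{w_j\gamma}$ for each renaming variable $w_j$, and flatness does not guarantee this: if $x$ occurs in the selected atom $A_r$ (or in $P$), the unifier may well send some $w_j$ to a term containing $x$. Your construction of $\gamma'$ via $w_j\mapsto (w_j\gamma)[t/x]$ avoids this entirely: the identity $(s\gamma)[t/x]=s[t/x]\gamma'$ holds for arbitrary $s$ once the $w_j$ are chosen outside $\{x\}\cup\fv{t}$, and you correctly verify that $\gamma'$ is legal (level-preserving, since $t$ has level~$\le i$ and can only replace occurrences of $x$ of level~$i$), safe, and supported on $\{w_1,\ldots,w_n\}$, so flatness is preserved. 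Your backchain argument is thus the cleaner of the two.
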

\begin{proof}
First of all, assume $x \in \fv{\mfa}$, otherwise there is nothing to prove. Moreover, without loss of generality, we may assume $\bd$ does not use any variable in $\fv{t} \setminus \fv{\mfa}$. The proof is by induction on the length $\ell$ of the given deduction.
Suppose the deduction has length 0. Then there is nothing to prove.
Suppose the lemma holds for all shorter deductions, and now consider
all possible rules used in the first step.

\case{true}
The case for {\em true\/} resolution step is vacuous.
Since all substitutions are identities, $\bd$ cannot contain  a
{\em true\/} step.

\case{backchain}
We assume that the proof looks like this
\[
\midnullsrest{\stipresgoal{i}{P}{A}}{\gamma} {\stipresgoal{i}{P}{G\delta\gamma}}{[id]}
\]
for some clause $\forall \vec y (G \hoe A') \in \elab(P)$, where $\delta =\{w_1/y_1, \ldots, w_n/y_n\}$ is the renaming-apart substitution and $A' \delta \gamma = A$ (since the derivation is flat, hence $\gamma$ is the identity on the variables of $A$). By induction hypothesis, there is a fresh derivation
\[
\nullsres{\stsipresgoal{i}{P[t/x]}{G\delta\gamma[t/x]}{[t/x]}}{[id]}
\] 
By the hypothesis at the beginning of the proof, we have that $\gamma(x)=x$, $G \delta \gamma[t/x] = G [t/x] \delta \gamma$. 
By the clause instance lemma (Lemma~\ref{lem:clause-renaming}) we
have that $\forall \vec y (G[t/x]\hoe A [t/x]) \in \elab(P[t/x])$.
Then the following resolution exists
\[
\midnullsrest{\stsipresgoal{i}{P[t/x]}{A[t/x]}{[t/x]}}{\gamma}
{\stsipresgoal{i}{P[t/x]}{G[t/x]\delta\gamma}{[t/x]}}{[id]}
\]
since $\gamma$ is an unifier of $A [t/x]$ and $A' [t/x]$: $A [t/x] \gamma = A[t/x]$ and $A' [t/x] \delta  \gamma = A' \delta \gamma [t/x] = A[t/x]$.

\case{instance}
Suppose the first step of the given deduction is an \textbf{instance}
step
\[
\midnullsrest{\stipresgoal{i}{P}{\exists y G}}{\exists}
{\stipresgoal{i}{P}{G[s/y]}}{[id]}
\]
By induction hypothesis there is a proof
\[
\nullsres{\stsipresgoal{i}{P[t/x]}{G[s/y][t/x]}{[t/x]}}{[id]}.
\]
Letting $s'=s[t/x]$ and taking $s'$ as the witness
used in the first instance step, we have a deduction
\[
\midnullsrestarray{\stsipresgoal{i}{P[t/x]}{\exists
y(G[t/x])}{[t/x]}}{\exists}
{\stsipresgoal{i}{P[t/x]}{G[t/x][s'/y]}{[t/x]}}{[id]}
\]
as we wanted to show.

\case{generic}
Suppose the first step of the given deduction is an instance of the
\textbf{generic} rule
\[
\midnullsrest{\stipresgoal{i}{P}{\forall y G}}{\forall}
{\stipresgoal{i+1}{P}{G[c/y]}}{[id]}
\]
Since $y \in \bvar$ we have $G[c/y][t/x]=G[t/x][c/y]$ so,
applying the induction hypothesis to the shorter proof starting at the
second state vector above, we obtain a deduction
\[
\midnullsrestarray{\stsipresgoal{i}{P[t/x]}{(\forall y
G)[t/x]}{[t/x]}}
{\forall}
{\stsipresgoal{i+1}{P[t/x]}{G[t/x][c/y]}{[t/x]}}{[id]}
\]

\noindent The arguments for the {\bf null, augment} and {\bf or} cases are
easy and left to the reader.
\end{proof}
Iteration of this lemma easily proves a bit more.
\begin{cor}
\label{cor:sub-theta}
  Let $\mfa$ be a state vector and $\theta$ a substitution. If there is a deduction
  \[ \nullsres{\mfa}{[id]} \]
  then there is a deduction
  \[
  \nullsres{\mfa\theta}{[id]} \enspace. 
  \]
\end{cor}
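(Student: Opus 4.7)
The plan is to reduce the corollary to repeated applications of the Specialization Lemma (Lemma \ref{lem:sub-t}). Write the substitution as $\theta = \{t_1/x_1,\ldots,t_n/x_n\}$ with each $x_i \in \fvar$ of some level $\ell_i$ and each $t_i$ of level at most $\ell_i$ (by legality of $\theta$); I induct on $n$. The case $n=0$ is trivial since then $\theta$ is the identity.

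The naive inductive step would apply Lemma \ref{lem:sub-t} successively with $[t_1/x_1],[t_2/x_2],\ldots,[t_n/x_n]$. This works cleanly when $\theta$ is idempotent, but in general $\mfa[t_1/x_1]\cdots[t_n/x_n]$ need not equal $\mfa\theta$: a later $t_j$ might contain an $x_i$ (with $i<j$) that has already been replaced, or an earlier $t_i$ might contain an $x_j$ that would then be substituted a second time.

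To sidestep this I use a fresh-variable detour. I will pick variables $y_1,\ldots,y_n \in \fvar$ with $y_i$ at level $\ell_i$, chosen so that no $y_i$ occurs in $\mfa$ or in any $t_j$. Applying Lemma \ref{lem:sub-t} successively with $[y_1/x_1],\ldots,[y_n/x_n]$ produces a derivation ending in $\mfa\{y_1/x_1,\ldots,y_n/x_n\}$, each step being legal because $y_i$ has level exactly $\ell_i$. Then applying Lemma \ref{lem:sub-t} a further $n$ times with $[t_1/y_1],\ldots,[t_n/y_n]$ is still legal, since each $t_i$ has level at most $\ell_i$, which is the level of $y_i$. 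Because the $y_i$ occur in the intermediate state vector only in the positions originally occupied by the $x_i$, and are absent from every $t_j$, the final state vector coincides with $\mfa\theta$, as required.

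The only real bookkeeping point — and the closest thing here to an obstacle — is verifying at each stage that the variable currently being substituted has not been accidentally reintroduced by an earlier substitution. This is precisely what the freshness of the $y_i$, together with their absence from the $t_j$, guarantees: none of the $y_i$ appear in any $t_j$, so later substitutions $[t_i/y_i]$ do not interact with the images of the earlier ones.
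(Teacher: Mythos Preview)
Your proof is correct and follows essentially the same strategy as the paper: both reduce to iterated applications of Lemma~\ref{lem:sub-t} via a fresh-variable detour to handle the non-idempotent case. The paper factors $\theta$ as a safe substitution $\theta_1=\{t_i\rho/x_i\}$ followed by a swap renaming $\rho$, while you rename $x_i\mapsto y_i$ first and then apply $y_i\mapsto t_i$; the underlying idea is the same, and your decomposition is arguably a bit more transparent.
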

\begin{proof}
  The proof is immediate if $\theta$ is safe, since it is enough to proceed one binding at a time by using Lemma~\ref{lem:sub-t}. If $\theta$ is not safe, it is possibile to find safe derivations $\theta_1$ and $\theta_2$ such that $\mfa \theta = \mfa \theta_1 \theta_2$. If $\theta = \{ t_1/x_1, \ldots, t_n/x_n \}$, consider the fresh variables $w_1, \ldots w_n$ and the renaming $\rho = \{ w_1/x_1, x_1/w_1,\ldots, w_n/x_n, x_n/w_n \}$. Then let $\theta_1 = \{ t_1 \rho/x_1, \ldots, t_n\rho /x_n \}$ and $\theta_2 = \rho$, we have $\mfa \theta = \mfa \theta_1 \theta_2$. Therefore, we obtain the required derivation by applying Lemma~\ref{lem:sub-t} one binding at a time, first with the bindings of $\theta_1$, then with the bindings of $\theta_2$.
\end{proof}

%
%

\begin{lem}[Constant replacement]
  \label{lem:constant-replacement}
  If there is a deduction
  \[
    \mfa [c/x] \derivation{[id]} \NULL
  \]
  where $x$ is a variable of level $i$ and $c$ a constant of level $i$ or less, then there exists a deduction
  \[
    \mfa \derivation{[id]} \NULL \enspace .
  \]
\end{lem}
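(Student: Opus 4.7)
The proof proceeds by induction on the length $\ell$ of the given derivation, under the natural side hypothesis that the constant $c$ does not occur in $\mfa$. (This hypothesis is necessary: for instance, when $\mfa = \ipresgoal{0}{p(c)}{p(x)}$, the derivation of $\mfa[c/x]$ is a trivial axiom step, but no flat derivation of $\mfa$ has identity computed answer, since unifying $p(x)$ with $p(c)$ requires $[c/x]$, whose domain is not a renaming-apart variable.) The base case $\ell = 0$ forces $\mfa = \NULL$. For the inductive step, first invoke Lemmas~\ref{lem:choicevar} and~\ref{lem:choiceconst} to replace the derivation by an equivalent one in which the variable $x$ nowhere appears---permissible since $x \notin \mfa[c/x]$ and can be avoided as a fresh variable---and in which every constant introduced by a \textbf{generic} step differs from $c$. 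Combined with $c \notin \mfa$, this ensures that every occurrence of $c$ in the derivation either originates from the substitution $[c/x]$ or was freely chosen by the prover as an instance witness or in a unifier, and so can safely be reinterpreted as $x$.

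We then case-split on the first rule, constructing a matching step on $\mfa$ whose result $\mfa_1$ satisfies $\mfa_1[c/x] = \mfb_1$, where $\mfb_1$ is the state vector after the first step of the original derivation. The \textbf{null}, \textbf{and}, \textbf{or}, and \textbf{augment} cases are immediate, as these rules act only on the top-level structure of the selected goal. For \textbf{generic}, reuse the same fresh constant (distinct from $c$ by setup); commutativity with $[c/x]$ follows because the bound variable is distinct from $x$ and the fresh constant is distinct from $c$. For \textbf{instance} with witness $t$, apply the instance rule to $\mfa$ with the witness $t[x/c]$ obtained by textually replacing each occurrence of $c$ in $t$ by $x$: since $x \notin t$, the identity $(t[x/c])[c/x] = t$ holds. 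The level condition on the new witness is preserved because state indices are non-decreasing along any derivation, so $x$ (of level $i$) is always admissible.

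The main obstacle is the \textbf{backchain} case. Suppose the original step uses a clause $K = \forall \vec y(G \horseshoe A') \in \elab(P_j[c/x])$, renaming-apart $\delta = \{w_k/y_k\}$, and flat unifier $\gamma$ with $\dom(\gamma) \subseteq \{w_k\}$. By Lemma~\ref{lem:clause-generalization} there is a clause $\hat K = \forall \vec y(G^* \horseshoe A^*) \in \elab(P_j)$ with $\hat K[c/x] = K$, and $c \notin A^*, G^*$ since $c \notin P_j$. We mimic the step on $\mfa$ using $\hat K$, fresh renaming-apart variables $\delta' = \{w_k'/y_k\}$ (disjoint from those used originally), and the unifier $\gamma^*$ defined on $\{w_k'\}$ by $w_k' \gamma^* := (w_k \gamma)[x/c]$. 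Safety and the flat-domain restriction are immediate. For the unification equation $A = A^* \delta' \gamma^*$, apply $[c/x]$ to both sides and use the corresponding equation for $\gamma$ together with the identity $(t[x/c])[c/x] = t$ (valid because $x$ does not appear in the original derivation); then conclude by injectivity of $[c/x]$ on terms containing no $c$ (both $A$ and $A^* \delta' \gamma^*$ are $c$-free by construction). The induction hypothesis applied to the tail of length $\ell - 1$, prepended by the mimicked step, yields the desired flat derivation of $\mfa$.
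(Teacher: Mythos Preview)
Your proof is correct and takes essentially the same approach as the paper: the paper simply says to replace every occurrence of $c$ by $x$ throughout the derivation (after invoking Lemma~\ref{lem:derivation-fresh} to ensure $x$ does not appear), while you spell this global replacement out as an explicit induction on derivation length with a case analysis on the first rule. Your addition of the side hypothesis $c \notin \mfa$, together with the counterexample $\mfa = \ipresgoal{0}{p(c)}{p(x)}$, is a genuine correction to the lemma as stated --- the paper's two-line proof tacitly relies on this hypothesis (otherwise the replacement yields a derivation of $\mfa[x/c]$ rather than $\mfa$) but never states it, and the paper only ever invokes the lemma in Corollary~\ref{cor:genconst} where the constant is fresh by construction.
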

\begin{proof}
  Using Lemma~\ref{lem:derivation-fresh}, we may assume that  $\bd: \mfa [c/x] \derivation{[id]} \NULL$ does not use the variable $x$. Therefore, it should be quite evident that the derivation we are looking for is obtained from $\bd$ by replacing every occurrence of $c$ with $x$. The fact that the level of $c$ is less or equal to the level of $x$ ensures that every valid substitution remains valid after the replacement.
\end{proof}
  

For the following lemma, we need to introduce new pieces of notation. Assume $\mfa$ is a state vector whose states are all at level one or more. Then, we denote by $\mfa\down$ the state vector obtained reducing by one the level of all the states. Note that, in the general case, $\mfa\down$ might not be a valid state vector.

\begin{lem}[Level reduction]
  \label{lem:level-reduction}
  Suppose $\mfa$ is a state vector whose states are all at level $i > 0$ or more. Let:
  \begin{itemize}
    \item $\xi$ be a constant replacer mapping every constant of level $j \geq i$ occuring in $\mfa$ to a constant of level $j-1$ or less, and every constant of level $j < i$ occuring in $\mfa$ to a constant of the same level;
    \item $\theta$ be a substitution mapping every variable of level $j \geq i$ occuring in $\mfa$ to a term of level $j-1$ or less, and every variable of level $j < i$ occuring in $\mfa$ to a term of the same level.
  \end{itemize}
  If there is a flat deduction
  \[  \bd: \nullsres{\mfa}{[id]} \enspace ,\]
  then there is a flat deduction
  \[ \nullsres{\mfa\xi\theta\down}{[id]} \]
  of the same length.
\end{lem}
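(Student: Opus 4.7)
The plan is induction on the length $\ell$ of $\bd$, with the understanding that $\xi$ and $\theta$ may be extended in the course of the induction to cover fresh variables and constants introduced by later steps. For $\ell = 0$ the reduced vector $\mfa\xi\theta\down$ is itself a zero-step derivation, and the hypotheses on $\xi$ and $\theta$ directly guarantee that every level remains non-negative and every type is preserved.

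For $\ell > 0$, using Lemmas~\ref{lem:choicevar} and~\ref{lem:choiceconst}, I would first rename apart $\bd$ so that every variable and constant occurring in $\bd$ but not in $\mfa$ lies outside $\dom(\xi) \cup \rng(\xi) \cup \dom(\theta) \cup \rng(\theta)$. The routine rules (\textbf{null}, \textbf{and}, \textbf{or}, \textbf{augment}) introduce no fresh symbols and do not alter levels, so they commute with the transformation and are handled by a direct appeal to the inductive hypothesis. For the \textbf{instance} rule with witness $t$ at a state of level $j \geq i$, I extend $\xi$ and $\theta$ to cover any constants or variables of $t$ not yet in their domains, mapping those of level $\geq i$ to fresh symbols one level lower and fixing those of level $< i$; then $t\xi\theta$ has level $\leq j-1$, and since $x \in \bvar$ is disjoint from \fvar, we have $G[t/x]\xi\theta = (G\xi\theta)[t\xi\theta/x]$, so $t\xi\theta$ is a legal witness at the reduced state $\ipresgoal{j-1}{P\xi\theta}{\exists x (G\xi\theta)}$.

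The two interesting cases are \textbf{generic} and \textbf{backchain}. For \textbf{generic}, I extend $\xi$ by the single mapping $c_{j+1} \mapsto c'$ for a freshly chosen $c'$ at level $j$; the reduced state at level $j-1$ then legally advances to level $j$ with the fresh constant $c'$, as the rule requires. For \textbf{backchain} with clause $\forall \vec y (G \hoe A') \in \elab(P)$, renaming $\delta = \{w_1/y_1, \ldots, w_n/y_n\}$, and unifier $\gamma$, I invoke Lemmas~\ref{lem:clause-renaming} and~\ref{lem:cren-ep} to obtain $\forall \vec y ((G\xi\theta) \hoe (A'\xi\theta)) \in \elab(P\xi\theta)$, pick a new renaming $\delta' = \{w'_1/y_1, \ldots, w'_n/y_n\}$ whose $w'_k$ are fresh at levels obtained from those of $w_k$ by reducing by one those $\geq i$, and define $\gamma'(w'_k) = \gamma(w_k)\xi\theta$ (after further extending $\xi,\theta$ on the fresh symbols in $\gamma$'s range). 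A direct calculation shows $A'\xi\theta\gamma' = A'\xi\theta$ and $A\xi\theta\delta'\gamma' = A\delta\gamma\xi\theta$, so $\gamma'$ is a legal unifier and the residual goal matches; flatness is preserved because $\dom(\gamma')$ is confined to $\{w'_1, \ldots, w'_n\}$. The inductive hypothesis, applied with the extended $\xi$ and $\theta$ to the subderivation, then completes the step. The \textbf{axiom} rule is the degenerate case of \textbf{backchain} with no body.

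The main obstacle is the careful juggling of fresh symbols and level constraints under repeated extensions of $\xi$ and $\theta$: for \textbf{backchain} one must verify that the reconstructed unifier $\gamma'$ indeed sends the level-reduced atoms into coincidence while still respecting the flat-derivation constraints, and for \textbf{generic} the replacement constant must be genuinely fresh throughout the remainder of the reduced derivation, a condition readily ensured by the initial renaming provided by Lemmas~\ref{lem:choicevar} and~\ref{lem:choiceconst}. Once these bookkeeping details are settled, the argument is structurally an adaptation of the variable- and constant-renaming arguments of Lemmas~\ref{lem:derivation-renaming} and~\ref{lem:derivation-crenaming} to the present level-lowering setting, with the length of the reduced derivation equal to that of $\bd$ by construction.
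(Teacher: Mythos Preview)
Your proposal is correct and follows essentially the same approach as the paper: induction on the length of the derivation, extending $\xi$ and $\theta$ at each step to cover newly introduced symbols, and reconstructing the backchain unifier from the image of the old one under $\xi\theta$. The paper invokes Lemma~\ref{lem:derivation-fresh} rather than Lemmas~\ref{lem:choicevar} and~\ref{lem:choiceconst} for the initial freshness assumption, and in the backchain case it reuses the same renaming variables $w_k$ rather than picking new level-reduced $w'_k$ as you do (your choice is arguably cleaner, since it makes the level constraint on the renaming-apart variables explicit at the reduced state index $j-1$). One small slip: in your backchain calculation you have swapped the roles of $A$ and $A'$ relative to your own notation---with $A'$ the clause head, the two identities you want are that the \emph{goal} atom satisfies $A\xi\theta\gamma' = A\xi\theta$ and the \emph{head} satisfies $A'\xi\theta\delta'\gamma' = A'\delta\gamma\xi\theta$---but the intended content is clear.
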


  \begin{proof}
   First of all, note that by Lemma~\ref{lem:derivation-fresh} we may assume that fresh variables and constants introduced in $\bd$ are disjoint from $\oc(\xi)$ and $\fv{\theta}$. Also note that $\mfa\xi\theta\down$ is a valid state vector, since applying $\xi$ and $\theta$ reduces by one the effective level of each program -- goal pair. The proof is by induction on the length $\ell$ of the given proof, with the base case ($\ell=0$) trivial. For the inductive step, we proceed according to the first rule used in the derivation.

  \case{backchain}
  Suppose $\bd$ begins with the \textbf{backchain} rule
  \[
    \mfa_1 \otimes \ipresgoal{j}{P}{A} \otimes \mfa_2 \derstep{\gamma} \mfa_1 \otimes \ipresgoal{j}{P}{G\delta\gamma} \otimes \mfa_2 \derivation{[id]} \NULL
  \]
  for some clause $\forall \vec x (G \hoe A_r) \in \elab(P)$, renaming-apart derivation $\delta=\{w_1/x_1, \ldots, w_n/x_n \}$ and unifier $\gamma$ such that $A = A \gamma = A_r\delta\gamma$. Let $\xi'$ and $\theta'$ be extensions of $\xi$ and $\theta$ mapping new constants and variables of level $j \geq i$ occurring in $G \delta \gamma$ but not in $\mfa$ to constants and variables of level $j-1$, and behaving like $\xi$ and $\theta$ everywhere else. By Lemmas~\ref{lem:clause-renaming} and~\ref{lem:cren-ep} it turns out that there is a clause $\forall \vec x (G \xi\theta \hoe A_r\xi\theta) \in \elab(P\xi\theta)$.
  
 Consider $\gamma' = \{ w_1\gamma\xi'\theta' / w_1, \ldots, w_n\gamma\xi'\theta' / w_n \}$ and observe that it is a legal substitution. It is the case that $\gamma'$ is an unifier for $A \xi\theta$ and $A_r\xi\theta\delta$. Actually $A \xi\theta$ does not contain variables $\{ w_1, \ldots, w_n\}$, hence $A \xi\theta \gamma' = A \xi\theta$. Since $\xi$ and $\theta$ do not contain $w_1, \ldots, w_n$, and since $\xi'\theta'$ behave on $A_r$ like $\xi\theta$, we have 
  \[
    A_r \xi\theta\delta\gamma' = A_r \xi\theta\{ w_1 \gamma \xi' \theta' / x_1, \ldots, w_n \gamma \xi' \theta'/ x_n \} = A_r \xi'\theta' \{ w_1 \gamma \xi'\theta' / x_1, \ldots, x_n \gamma \xi' \theta'/ x_n \} \enspace .
  \] 
  Now, since $\xi'$ and $\theta'$ do not contain variables $x_1, \ldots, x_n$, and since $\xi'\theta'$ behave on $A$ like $\xi\theta$,  we have 
  \[ 
    A' \xi'\theta' \{ w_1 \gamma \xi'\theta' / x_1, \ldots, w_n \gamma\xi'\theta' / x_n \} = A \{ w_1 \gamma/x_1, \ldots, w_n \gamma/x_n\} \xi'\theta' = A' \delta \gamma \xi'\theta' = A \xi'\theta' = A \xi\theta \enspace .
  \] 
  Similarly, $G\xi\theta\delta\gamma' = G\delta\gamma\xi'\theta'$. Note that $\mfa_1\xi\theta=\mfa_1\xi'\theta'$ and $\mfa_2\xi\theta=\mfa_2\xi'\theta$. Moreover,  $P\xi'\theta'=P\xi\theta$, $A \xi'\theta'=A \xi\theta$ and $G\delta\gamma\xi'\theta'$ have level $j-1$ or less, therefore we can replace the index $j$ with $j-1$ in the first state vector and by inductive hypothesis we get
  \[
    \mfa_1 \xi\theta\down \otimes \ipresgoal{j-1}{P\xi\theta}{A\xi\theta}\otimes \mfa_2 \xi\theta\down \derstep{\gamma'} \mfa_1 \xi'\theta'\down \otimes \ipresgoal{j-1}{P\xi'\theta'}{G\delta\gamma\xi'\theta'} \otimes \mfa_2 \xi'\theta'\down \derivation{[id]} \NULL
  \]
  \case{and}
  Suppose $\bd$ begins with the {\bf and} rule
  \[
  \mfa_1 \otimes \ipresgoal{j}{P}{G_1\wedge G_2} \otimes \mfa_2
    \derstep{\wedge} \mfa_1 \otimes \ipresgoal{j}{P}{G_1}\otimes
    \ipresgoal{j}{P}{G_2} \otimes \mfa_2 \derivation{[id]}  \NULL \enspace .
  \]
  Applying the induction hypothesis to the shorter deduction starting
  from the second step, we have a deduction
 \begin{multline*}
   \mfa_1\xi\theta\down \otimes \ipresgoal{j-1}{P\xi\theta}{(G_1\wedge G_2)\xi\theta}\otimes  \mfa_2\xi\theta\down \derstep{\wedge} \\
        \mfa_1\xi\theta\down  \otimes \ipresgoal{j-1}{P\xi\theta}{G_1\xi\theta} \otimes \ipresgoal{j-1}{P\xi\theta}{G_2\xi\theta} \otimes  \mfa_2\xi\theta\down  \derivation{[id]} \NULL \enspace .
 \end{multline*}
  
  \case{instance}
  Suppose $\bd$ begins with the \textbf{instance} rule
  \[
    \mfa_1\otimes \ipresgoal{j}{P}{\exists_y G} \otimes \mfa_2\derstep{\exists} \mfa_1\otimes \ipresgoal{j}{P}{G[t/y]}\otimes \mfa_2 \derivation{[id]} \NULL
  \]
  Consider $\xi'$ and $\theta'$ which behave like $\xi$ and $\theta$, but map all fresh constants and variables of level $l \geq i$ in $t$ to fresh constants and variables respectively of level $l-1$ of the same type. Then, by inductive hypothesis, we have the derivation
  \[
     \mfa_1 \xi'\theta'\down \otimes \ipresgoal{j-1}{P\xi'\theta'}{G[t/y]\xi'\theta'} \otimes \mfa_2 \xi'\theta'\down \derivation{[id]} \NULL
  \]
  Note that $\mfa_1\xi'\theta' = \mfa_1\xi\theta$,  $\mfa_2\xi'\theta' = \mfa_2\xi\theta$, $P\xi'\theta' = P\xi\theta$ and $G[t/y]\xi'\theta' = G\xi\theta[t\xi'\theta'/y]$, hence we have the following deduction:
  \[
    \mfa_1 \xi\theta\down \otimes \ipresgoal{j-1}{P\xi\theta}{(\exists_y G)\xi\theta} \otimes  \mfa_2 \xi\theta\down \derstep{\exists} \mfa_1 \xi\theta\down \otimes \ipresgoal{j-1}{P\xi\theta}{G\xi\theta[t\xi\theta'/y]} \otimes \mfa_2 \xi\theta\down \derivation{[id]} \NULL \enspace .
  \]
  
  \case{generic}
  Suppose $\bd$ begins with the \textbf{generic} rule
  \[
    \mfa_1 \otimes \ipresgoal{j}{P}{\forall_y G} \otimes \mfa_2 \derstep{\forall}
    \mfa_1 \otimes \ipresgoal{j+1}{P}{G[d/y]} \otimes \mfa_2 \derivation{[id]} \NULL
  \]
  where $d$ is a constant of level $j+1$. Consider $\xi'$ which behaves like $\xi$ but maps the constant $d$ to a fresh constant $d'$ of level $j$ and the same type of $d$. Then, by inductive hypothesis we have
  \[
    \mfa_1 \xi'\theta\down \otimes \ipresgoal{j}{P\xi'\theta'}{G[d/y]\xi\theta'} \otimes \mfa_2 \xi'\theta\down \derivation{[id]} \NULL
  \]
  with $\mfa_1\xi'\theta = \mfa_1\xi\theta$,  $\mfa_2\xi'\theta = \mfa_2\xi\theta$, $P\xi'\theta = P \xi\theta$, $G[d/y]\xi'\theta = G\xi\theta[d'/y]$. Hence, we have the deduction
  \[
    \mfa_1 \xi\theta\down \otimes \ipresgoal{j-1}{P\xi\theta}{(\forall_y G)\xi\theta} \otimes \mfa_2  \xi\theta\down \derstep{\forall}  \mfa_1 \xi\theta\down \otimes  \ipresgoal{j}{P\xi\theta}{G\xi\theta[d'/y]} \otimes \mfa_2 \xi\theta\down  \derivation{[id]} \NULL 
  \]

  \case{other cases} 
  They are similar to the previous ones.
\end{proof}

\begin{lem}
  \label{lem:level-increase}
  Suppose $\mfa$ is a state vector whose states are all at level $i > 0$ or more. Let $\xi$ be a constant replacer mapping constants of level $j \geq i$ occuring in $\mfa$ to constants of level $j+1$. If there is a flat deduction
  \[ \bd: \nullsres{\mfa}{[id]} \enspace ,\]
  then there is a flat deduction
  \[ \nullsres{\mfa\xi\upvec}{[id]} \]
  of the same length.
\end{lem}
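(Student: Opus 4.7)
The approach mirrors that of Lemma~\ref{lem:level-reduction}, proceeding by induction on the length $\ell$ of the deduction $\bd$, but the situation here is strictly more permissive: raising a level never invalidates a substitution or a legal term, since $\univ_j\subseteq\univ_{j+1}$, whereas lowering a level required the extra machinery of $\theta$ to pull variables down. In particular, we need no counterpart to the $\theta$-shift: only the constant replacer $\xi$ is required, and its action never threatens legality.

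For the base case $\ell=0$ there is nothing to prove. For the inductive step I would case-split on the first rule of $\bd$ exactly as in Lemma~\ref{lem:level-reduction}. In each non-trivial case I extend $\xi$ to a replacer $\xi'$ that also shifts up by one any constants of level $\geq i$ that appear in the state vector produced by the first step but not already in $\mfa$ (e.g.\ constants in the witness $t$ for \textbf{instance}, in the body of the selected clause for \textbf{backchain}, or the freshly introduced constant $c$ for \textbf{generic}). I then apply the induction hypothesis to the shorter tail deduction with $\xi'$, and prepend the ``shifted'' version of the selected step. For \textbf{backchain} I use Lemma~\ref{lem:cren-ep} together with Lemma~\ref{lem:clause-renaming} to produce a clause in $\elab(P\xi)$ matching $A\xi$, and build a unifier $\gamma'$ by composing $\gamma$ with $\xi'$ on the renaming-apart variables (analogous to the construction in Lemma~\ref{lem:level-reduction}); flatness is preserved because the domain of $\gamma'$ is still confined to the renaming-apart variables. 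For \textbf{generic} at level $j$, the original step introduces a fresh $c$ at level $j+1$; in the shifted derivation the state has index $j+1$, so the shifted step introduces a fresh $c'$ at level $j+2$, and we set $\xi'(c)=c'$. By Lemma~\ref{lem:derivation-fresh} we may arrange at the outset that $c$ and all other constants/variables introduced in $\bd$ avoid $\oc(\xi)$, which guarantees that the chosen $c'$ remains genuinely fresh in the new derivation.

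The remaining rules (\textbf{null}, \textbf{and}, \textbf{or}, \textbf{true}, \textbf{augment}) follow by direct application of the induction hypothesis, since they do not introduce fresh constants or affect levels. Because no \textbf{true} step occurs in a flat derivation and each \textbf{backchain} substitution is kept safe and confined to fresh renaming-apart variables, the output derivation remains flat. The computed answer of the output derivation is the identity by the same bookkeeping argument used in Lemma~\ref{lem:derivation-crenaming}: applying $\xi'$ commutes with composition of substitutions.

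The main obstacle, such as it is, is the housekeeping around the \textbf{generic} and \textbf{backchain} cases, where $\xi$ must be extended in lockstep with the fresh data introduced by each step and the unifier must be re-expressed over the shifted renaming-apart variables; this is precisely the calculation carried out in the corresponding cases of Lemma~\ref{lem:level-reduction}, and it goes through verbatim once we substitute ``$+1$'' for ``$-1$'' in level arithmetic, with the gratifying simplification that legality of substitutions is automatic.
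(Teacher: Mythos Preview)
Your overall strategy matches the paper's: induction on length, case analysis on the first rule, extending $\xi$ to cover newly introduced constants, and invoking Lemma~\ref{lem:cren-ep} to transport clauses along $\xi$. However, your claim that ``legality of substitutions is automatic'' in the \textbf{backchain} case is where the analogy with Lemma~\ref{lem:level-reduction} breaks down, and this is precisely the point requiring care.

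In Lemma~\ref{lem:level-reduction} the construction $\gamma' = \{w_k\gamma\xi'\theta'/w_k\}$ reuses the \emph{same} renaming-apart variables $w_k$, which works because $\xi'\theta'$ \emph{lowers} levels, so $w_k\gamma\xi'\theta'$ still has level at most that of $w_k$. Here $\xi$ \emph{raises} levels: if $w_k$ has level $m$, then $w_k\gamma$ has level $\leq m$, but $w_k\gamma\xi$ may have level $m+1$, so $\{w_k\gamma\xi/w_k\}$ is \emph{not} a legal substitution. The fix, which the paper makes explicit, is to choose a \emph{new} renaming-apart substitution $\delta' = \{w'_k/x_k\}$ where each $w'_k$ is fresh with level one more than $w_k$, and then set $\gamma' = \{w_k\gamma\xi/w'_k\}$; legality now holds because the target variables sit one level higher. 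This also keeps $\dom(\gamma')$ confined to the fresh $w'_k$, preserving flatness. So the situation is not ``strictly more permissive'' than level reduction---you trade the $\theta$-shift on existing variables for a mandatory level bump on the renaming-apart variables. Once you incorporate this adjustment, the rest of your plan goes through as stated.
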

\begin{proof}
  Note that by Lemma~\ref{lem:derivation-fresh} we may assume that fresh variables and constants introduced in $\bd$ are disjoint from $\oc(\xi)$.  The proof is by induction on the length of the derivation. The base case is trivial. For the inductive step, we consider the first rule used in the derivation. 
  
  \case{backchain}
  Suppose $\bd$ begins with the \textbf{backchain} rule
  \[
    \mfa_1 \otimes \ipresgoal{j}{P}{A} \otimes \mfa_2 \derstep{\gamma} \mfa_1 \otimes \ipresgoal{j}{P}{G\delta\gamma} \otimes \mfa_2 \derivation{[id]} \NULL
  \]
  for some clause $\forall \vec x (G \hoe A_r) \in \elab(P)$, renaming-apart derivation $\delta=\{w_1/x_1, \ldots, w_n/x_n \}$ and unifier $\gamma$ such that $A = A \gamma = A_r\delta\gamma$. By Lemma~\ref{lem:cren-ep} turns out that there is a clause $\forall \vec x (G \xi \hoe A_r\xi) \in \elab(P\xi)$. Consider now:
  \begin{itemize}
    \item  $\delta' = \{x_1/w'_1, \ldots, x_n/w'_n\}$,  a new renaming apart derivation, where $w'_1, \ldots, w'_n$ are fresh variables, whose level is one more than the level of the corresponding $w_i$;
    \item $\gamma' = \{ w_1\gamma\xi / w'_1, \ldots, w_n\gamma\xi / w'_n \}$. 
  \end{itemize}
  Note that $\gamma'$ is a legal substitution, since the level of each $w_i\gamma\xi$ is at most the level of $w_i$ plus one. It is the case that $\gamma'$ is an unifier for $A \xi$ and $A_r\xi\delta'$, since $A\xi\gamma'=A\xi$ and 
  \[
    A_r \xi\delta'\gamma' = A_r \xi\{ w_1 \gamma \xi / x_1, \ldots, w_n \gamma \xi/ x_n \} = A_r \{ w_1 \gamma \xi / x_1, \ldots, w_n \gamma \xi/ x_n \} \xi = A_r \delta \gamma \xi = A \xi \enspace .
  \] 
  In the same way, we have that $ G\xi \delta' \gamma = G\delta\gamma\xi$. Therefore, by induction on the derivation starting from the second state vector, we have
  \[
    \mfa_1 \xi\upvec \otimes \ipresgoal{j+1}{P\xi}{A\xi}\otimes \mfa_2 \xi\upvec \derstep{\gamma'} \mfa_1 \xi\upvec \otimes \ipresgoal{j+1}{P\xi}{G\delta\gamma\xi} \otimes \mfa_2 \xi\upvec \derivation{[id]} \NULL \enspace .
  \] 

  \case{and}
  Suppose $\bd$ begins with the \textbf{and} rule
  \[
  \mfa_1 \otimes \ipresgoal{j}{P}{G_1\wedge G_2} \otimes \mfa_2
    \derstep{\wedge} \mfa_1 \otimes \ipresgoal{j}{P}{G_1}\otimes
    \ipresgoal{j}{P}{G_2} \otimes \mfa_2 \derivation{[id]}  \NULL \enspace .
  \]
  Applying the induction hypothesis to the shorter deduction starting from the second step, we get the deduction
 \begin{multline*}
   \mfa_1\xi\upvec \otimes \ipresgoal{j+1}{P\xi}{(G_1\wedge G_2)\xi}\otimes \mfa_2\xi\upvec \derstep{\wedge} \\
        \mfa_1\xi\upvec  \otimes \ipresgoal{j+1}{P\xi}{G_1\xi} \otimes \ipresgoal{j+1}{P\xi}{G_2\xi} \otimes \mfa_2\xi\upvec  \derivation{[id]} \NULL \enspace .
 \end{multline*}

  \case{generic}
  Suppose $\bd$ begins with the \textbf{generic} rule:
  \[
    \mfa_1 \otimes \ipresgoal{j}{P}{\forall_x G} \otimes \mfa_2 \derstep{\forall} \mfa_1 \otimes \ipresgoal{j+1}{P}{G[d/x]} \otimes \mfa_2 \derivation{[id]} \NULL \enspace .
  \]
  Let $\xi'$ be a costant replacer which behaves like $\xi$ but mapping $d$ to a constant $d'$ of level $j+2$. Note that $\mfa_1 \xi' = \mfa_1 \xi$, $\mfa_2 \xi' = \mfa_2 \xi$, $P \xi' = P \xi$ and $G\xi[d'/x] = G[d/x]\xi'$. By induction hypothesis, we have a derivation,
  \[
    \mfa_1 \xi \upvec \otimes \ipresgoal{j+1}{P\xi}{\forall_x G\xi} \otimes \mfa_2\xi\upvec \derstep{\forall} \mfa_1\xi'\upvec \otimes \ipresgoal{j+2}{P}{G[d/x]\xi'} \otimes \mfa_2\xi'\upvec \derivation{[id]} \NULL \enspace .
  \]

  \case{other cases}
  They are similar to the previous ones.
\end{proof}

\begin{cor}[Level increase]
  \label{cor:level-increase}  
  If there is a derivation of the form
  \[
    \ipresgoal{i}{P}{G} \derivation{id} \NULL
  \]
  then there is a derivation
  \[
    \ipresgoal{i+1}{P}{G} \derivation{id} \NULL \enspace .
  \]
\end{cor}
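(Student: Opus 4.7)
The plan is to reduce the corollary directly to the Level Increase Lemma (Lemma~\ref{lem:level-increase}), after first flattening the given derivation. Concretely, the hypothesized derivation $\bd: \ipresgoal{i}{P}{G} \derivation{id} \NULL$ has identity computed answer, so the Instantiation Lemma (Lemma~\ref{lem:weak-lifting}) applied with $\theta = id$ produces a flat deduction of the same or shorter length. Since $\mfa\theta = \mfa$ when $\theta = id$, this flat deduction still starts from $\ipresgoal{i}{P}{G}$ and ends at $\NULL$, with computed answer $[id]$.

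Next I would invoke Lemma~\ref{lem:level-increase} on this flat derivation, taking $\mfa = \ipresgoal{i}{P}{G}$ and choosing the constant replacer $\xi$ to act trivially on the constants occurring in $\mfa$. The conclusion of Lemma~\ref{lem:level-increase} then yields a flat derivation of the same length from $\mfa\xi\upvec$, which under this choice of $\xi$ is precisely $\ipresgoal{i+1}{P}{G}$, to $\NULL$ with computed answer $[id]$. Because $[id]$ coincides with the identity substitution on $\fv{\ipresgoal{i+1}{P}{G}}$, this is exactly the derivation the corollary asserts.

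The one subtle point is whether $\xi$ may legitimately be taken trivial on the constants of $\mfa$. Reading the hypothesis of Lemma~\ref{lem:level-increase} as constraining only those constants in the non-trivial support of $\xi$ (which must of course be drawn from the level-$\geq i$ constants of $\mfa$, and whose images must have level bumped up by one), there is no obligation for $\xi$ to disturb every such constant in $\mfa$. This reading is in fact forced by the inductive argument in Lemma~\ref{lem:level-increase} itself: in the generic-rule case one extends $\xi$ to a larger replacer $\xi'$ that additionally relabels a freshly-introduced constant. Thus, starting from the empty/identity $\xi$, the induction of Lemma~\ref{lem:level-increase} grows $\xi$ to relabel each level-$(i+1)$ constant introduced by a generic step in the original derivation to a fresh level-$(i+2)$ constant in the new one, which is precisely what is needed to preserve the freshness side-condition of the generic rule at the shifted state index $i+1$. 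This relabeling of the generic-introduced constants is the main technical content driving the corollary; everything else passes through unchanged because the other rules (backchain, instance, axiom, augment, and, or, null, true) impose level constraints that are only loosened when the state index is increased.
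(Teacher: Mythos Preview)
Your proposal is correct and matches the paper's own proof essentially line for line: the paper also first assumes without loss of generality that the derivation is flat (which is exactly what your appeal to the Instantiation Lemma justifies) and then applies Lemma~\ref{lem:level-increase} with $\xi$ the identity. Your additional discussion of why the identity $\xi$ is admissible---namely that the hypothesis on $\xi$ should be read as constraining only its non-trivial support, as confirmed by the inductive extension of $\xi$ in the \textbf{generic} case of the lemma's proof---is a worthwhile clarification that the paper leaves implicit.
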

\begin{proof}
  First of all, without loss of generality we may assume that the derivation of $\ipresgoal{i}{P}{G}$ is flat. The result follow by applying Lemma~\ref{lem:level-increase} with $\xi$ being the identity function.
\end{proof}

\begin{cor}[Generic constants]
\label{cor:genconst}
  If there is a deduction
  \[
\nullsres{\ipresgoal{i+1}{P}{G[c/x]}}{[id]}
  \]  
where $c$ is a fresh constant of level $i+1$,
$x$ is not free in $P$  and $P,G$ is a
program-goal pair of level $i$, then for any positive term $t$ of
level $i$
  of the appropriate type there is a deduction
\begin{equation}
\label{eq:any-pos-t}
\nullsres{\ipresgoal{i}{P}{G[t/x]}}{[id]}
\end{equation}
Conversely if $P,G$ and $c$ satisfy the stated conditions and
for any positive term $t$ of level $i$ there is a deduction
\eqref{eq:any-pos-t}
then there is a deduction $\nullsres{\ipresgoal{i+1}{P}{G[c/x]}}{[id]}$.
\end{cor}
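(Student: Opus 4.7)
The plan is to prove both directions by chaining together the level-manipulation and substitution lemmas established above. In the forward direction I first replace the level-$(i+1)$ constant $c$ by a fresh level-$i$ constant, then convert that constant into a fresh logic variable, and finally substitute the given term $t$ for that variable. In the converse direction I instantiate the hypothesis at a fresh level-$i$ constant and then lift that constant up to $c$ at level $i+1$.

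For the forward direction, assume $\ipresgoal{i+1}{P}{G[c/x]} \derivation{[id]} \NULL$. First flatten the deduction by Lemma~\ref{lem:weak-lifting} (applied with $\theta=\ids$). Choose a fresh constant $c'$ of level $i$ and apply Lemma~\ref{lem:level-reduction} with the constant replacer $\xi$ sending $c$ to $c'$ (and fixing every other constant) together with $\theta = \ids$; since $P$ and $G$ live over $\univ_i$, the only constant of level $i+1$ appearing anywhere in the state is $c$, so the hypotheses of the lemma are satisfied and it produces a flat deduction $\ipresgoal{i}{P}{G[c'/x]} \derivation{[id]} \NULL$. Now pick a fresh logic variable $X \in \fvar$ of level $i$ that does not occur in $P$, $G$ or $t$; because $x$ is not free in $P$, we have the identification $\ipresgoal{i}{P}{G[c'/x]} = \ipresgoal{i}{P}{G[X/x]}[c'/X]$, so Lemma~\ref{lem:constant-replacement} yields $\ipresgoal{i}{P}{G[X/x]} \derivation{[id]} \NULL$. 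Finally, applying Lemma~\ref{lem:sub-t} with the variable $X$ and the positive term $t$ of level $i$ or less delivers $\ipresgoal{i}{P}{G[t/x]} \derivation{[id]} \NULL$, as required.

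For the converse direction, choose a fresh constant $c'$ of level $i$ not occurring in $P$ or $G$. Since $c'$ is a positive term of level $i$, the hypothesis supplies a deduction $\ipresgoal{i}{P}{G[c'/x]} \derivation{[id]} \NULL$. After flattening via Lemma~\ref{lem:weak-lifting}, apply Lemma~\ref{lem:level-increase} with the constant replacer $\xi$ sending $c'$ to $c$ and fixing every other constant. Because $c'$ does not occur in $P$ and $c$ is a fresh constant of level $i+1$, the hypotheses of the lemma hold and its conclusion is precisely $\ipresgoal{i+1}{P}{G[c/x]} \derivation{[id]} \NULL$.

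The main subtlety in executing this plan is keeping track of freshness and level constraints simultaneously, so that the auxiliary fresh entities $c'$ and $X$ are chosen to make every constant replacement and substitution invocation legal. The hypothesis that $x$ is not free in $P$ plays an essential role throughout, because it guarantees that each of these transformations leaves the program component $P$ untouched, and hence that the intermediate state vectors really are of the shape required by the next lemma to be applied.
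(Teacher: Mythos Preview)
Your proof is correct and follows essentially the same route as the paper: level reduction to replace $c$ by a level-$i$ constant $c'$, constant replacement to pass to a variable, then specialization to $t$ for the forward direction; and instantiation at a fresh level-$i$ constant followed by level increase for the converse. You are in fact a bit more careful than the paper in explicitly flattening before invoking the level-manipulation lemmas and in introducing a fresh $X\in\fvar$ rather than silently treating the bound variable $x$ as a level-$i$ free variable.
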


\begin{proof}
  For the first claim note that, without loss of generality, we may assume $x$ to be a variable of level $i$. Consider the constant renamer $\xi$ mapping $c$ to a constant $c'$ of level $i$. Using Lemma~\ref{lem:level-reduction} we get a derivation 
  $\nullsres{\ipresgoal{i}{P}{G[c'/x]}}{[id]}$. Then, by Lemma~\ref{lem:constant-replacement}, we get $\nullsres{\ipresgoal{i}{P}{G}}{[id]}$
  and finally, by Lemma~\ref{lem:sub-t}, $\nullsres{\ipresgoal{i}{P}{G[t/x]}}{[id]}$
  for any term $t$ of level $i$ or less.

  For the converse claim, consider the case when $t$ is a fresh constant $c'$ of level $i$ and we get a deduction $\nullsres{\ipresgoal{i}{P}{G[c'/x]}}{[id]}$. If $c$ is a fresh constant of level $i+1$ and $\xi$ is a constant replacer mapping $c'$ to $c$, then by Lemma~\ref{lem:level-increase} we have $\nullsres{\ipresgoal{i+1}{P}{G[c/x]}}{[id]}$.
\end{proof}







\subsection{Herbrand-Robinson Derivation: ``True'' Resolution}
The nondeterministic resolution rule system just given, equivalent
(see~\ref{app-subsec:res(t)}) to
the Miller-Nadathur-Pfenning-Scedrov \cite{uniform} definition of the
uniform proof system \HOHH, captures some of the behavior of a logic
programming implementation of \HOHH, especially since we use backchain
and \textsf{elab}, but not all. As in Prolog,
witnesses to existential queries are not magically guessed in one
step. They are produced incrementally by successive unifications when
the substitution generating rules {\bf backchain, true, identity}
occur in the derivation. In
other words \emph{instead of ranging over all possible terms, the
  existential quantifier's range can be restricted to witnesses
  produced by unification of queries and clause heads to determine how
  $\forall_L$ and $\exists_R$ should be executed. } This is a central
component of logic programming, and part of the import of Herbrand's
Theorem and Robinson's work in the Horn Clause context\footnote{See
  the discussion on p.130 of {\it Miller et al.}  \cite{uniform}
  regarding the deliberate omission of backchain in their definition
  of uniform proofs.}.

The purpose of this paper is to give several semantics with respect to
which the proof system of $\lambda$Prolog is sound and complete,
\emph{not} to analyze or optimize proof search, which has been done in
great detail by Nadathur
\cite{nadathur:thesis,milnad:higher-order,HHHorn,nadathurPP}. Thus,
we work with the non-deterministic RES(t) system, with all unifiers
allowed, not just those produced by UNIFY and SIMPL. We replace
existentially quantified variables with arbitrary terms, and show, in
the appendix that our system is equivalent to uniform proofs, and
hence to more efficient proof procedures in the literature. We also omit
discussion of the special kind of polymorphism present in
$\lambda$Prolog, which deserves a paper in its own right, as does the
use of focusing \cite{liangMiller2009,Andreoli2001}.

Our soundness  and completeness theorems, however, hold for these
practical resolution systems.

\begin{thm}
  \label{thm:herbrand}
  The resolution system $RES(Y)$ obtained by replacing  {\bf instance}  in the
  rule system
  above by the $\exists_Y$ rule
 \begin{equation}
    \label{eq:existsx}
\vladres{\hresgoal{\cc}{i}{P}{\exists x.G}}{\exists_Y}{
  \hresgoal{\cc}{i}{P}{G[Y/x]}}
\end{equation}
  where $Y$ is a fresh logic variable in {\sf fvar} of level $i$ or less,
  is equivalent to the one presented above: a goal succeeds in the
  first with some computed substitution $\theta$ if and only if it does
  with the second, with possibly another, more general  substitution.
 \end{thm}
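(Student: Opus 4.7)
The theorem is an ``if and only if'' between success in RES(Y) and RES(t), where the answer substitution need not agree on the nose but only up to generality. We prove each direction by induction on derivation length, transforming one resolution step at a time.

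\emph{Easy direction (RES(Y) $\Rightarrow$ RES(t)).} Given a successful RES(Y) derivation $\bd_Y:\mfa\derivation{\sigma}\NULL$, every rule other than $\exists_Y$ is already a legal RES(t) step. For an initial step
\[
  \ipresgoal{i}{P}{\exists x.G}\derstep{\exists_Y}\ipresgoal{i}{P}{G[Y/x]}
\]
with $Y$ fresh of level $\leq i$, set $t = Y\sigma$. Then $t$ is a positive term of level $\leq i$ by legality of $\sigma$, and $G[Y/x]\sigma = G\sigma[t/x]$. Applying the Instantiation Lemma~\ref{lem:weak-lifting} to the tail yields a flat derivation from $\ipresgoal{i}{P\sigma}{G\sigma[t/x]}$ to $\NULL$, which we prepend with the \textbf{instance} step using witness $t$; an inductive unwinding then produces a RES(t) derivation with the same computed answer $\sigma$.

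\emph{Hard direction (RES(t) $\Rightarrow$ RES(Y)).} This is the ``lifting'' direction. Given $\bd_t:\mfa\derivation{\theta}\NULL$ in RES(t), we construct $\bd_Y:\mfa\derivation{\sigma}\NULL$ in RES(Y) with $\theta = \sigma\rho$ on $\fv{\mfa}$ for some substitution $\rho$ whose domain is the set of fresh $Y$-variables introduced; hence $\sigma$ is at least as general as $\theta$ on the original free variables. To make the induction work we strengthen the claim: for any safe substitution $\rho_0$ whose domain consists of fresh $Y$-variables disjoint from $\fv{\mfa}$, whenever $\mfa\rho_0 \derivation{\theta}\NULL$ in RES(t) there is $\mfa \derivation{\sigma}\NULL$ in RES(Y) with $\theta = \sigma\rho_0$ on $\fv{\mfa}$. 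The case transformations are: each \textbf{instance} step in $\bd_t$ with witness $t$ becomes an $\exists_Y$ step on a freshly chosen $Y$ of level $i$, with $\rho_0$ extended by $[t/Y]$; each \textbf{backchain} step in $\bd_t$ unifying $A\rho_0$ with a clause head $A_r\delta$ becomes the corresponding step unifying $A$ with $A_r\delta$; and all other rules commute routinely with $\rho_0$.

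\emph{Main obstacle.} The crux is the \textbf{backchain} case of the strengthened induction: we must show that any unifier $\gamma$ of $A\rho_0$ and $A_r\delta$ decomposes as $\gamma = \gamma'\rho_0'$ for some legal, safe $\gamma'$ unifying $A$ with $A_r\delta$, and some extension $\rho_0'$ of $\rho_0$ binding only fresh $Y$-variables. This rests on the fact that, using Lemma~\ref{lem:clause-renaming} and choosing the rename-apart variables of the clause to avoid $\dom(\rho_0)$ by Lemmas~\ref{lem:choicevar} and~\ref{lem:choiceconst}, the variables in $\dom(\rho_0)$ do not appear in $A_r\delta$, so the bindings of $\gamma$ on $\dom(\rho_0)$ can be absorbed into $\rho_0'$ while the rest gives $\gamma'$. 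The level constraint ($Y$ chosen at level $i$, witness $t$ at level $\leq i$) guarantees that the binding $Y\mapsto t$ is legal, and Lemma~\ref{lem:derivation-fresh} lets us choose fresh variables and constants throughout so that freshness conditions for $\exists_Y$ and \textbf{generic} steps are never violated.
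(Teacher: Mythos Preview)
The paper does not give a direct proof of this theorem at all: it defers to Nadathur's treatment in \cite{nadathurPP} for the first-order case and to the results in \cite{nadathur:thesis,milnad:higher-order,HHHorn} for the higher-order adaptation, and then remarks that the appendix theorem on equivalence of RES(t) with uniform proofs yields the result indirectly, since Nadathur's RES(Y)-style procedure is already known to be equivalent to uniform proofs. So your attempt at a self-contained, step-by-step lifting argument is a genuinely different route from what the paper does.

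Your easy direction is fine. The hard direction, however, has a real gap that is specific to the higher-order setting, and it is exactly the reason the paper avoids a direct proof. In your backchain case you assume that if $A\rho_0$ is a rigid atom unifying with a clause head $A_r\delta$, then you can perform ``the corresponding step unifying $A$ with $A_r\delta$'' in RES(Y). But $A$ need not be rigid: if an earlier $\exists_Y$ step introduced a fresh variable $Y$ of predicate type and $A$ has $Y$ as its head, then $A$ is flex while $A\rho_0$ is rigid. The \textbf{backchain} rule requires the selected atom to be rigid, so it is simply inapplicable to $A$ in the RES(Y) derivation, and your unifier-decomposition argument never gets off the ground. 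This is precisely the phenomenon the paper flags in Section~\ref{sec:enriched-resolution} when it observes that the lifting lemma fails for RES(t) and motivates the \textbf{sub} rule: from $\ipresgoal{i}{P}{X}$ with $X$ a predicate variable one cannot in general recover a backchain-based derivation, only a \textbf{true}-based one.

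To repair the argument directly you would have to treat the flex case separately: when the RES(Y) state presents a flex atom $Y(\bar t)$, use the \textbf{true} rule with $\{\lambda\bar z.\top/Y\}$ instead of backchain, and then argue that the remaining states (in which every occurrence of $Y$ has collapsed to $\top$) still admit a successful RES(Y) derivation with a suitably related answer. That requires a separate induction and a careful comparison of answers on the original free variables, not the uniform ``absorb into $\rho_0'$'' mechanism you describe. The paper sidesteps all of this by going through uniform proofs.
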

A detailed treatment of this for \FOHH resolution can be found in
\cite{nadathurPP}, which also discusses how the argument must
change, using results in
\cite{nadathur:thesis,milnad:higher-order,HHHorn}, to adapt it to the
HOHH case.

Theorem~\ref{thm:equivalence} in the
Appendix, asserting the equivalence of RES(t) with uniform proofs,
obviously establishes equivalence of our resolution system with any
proof procedure equivalent to uniform proofs, in particular the one
presented using a rule similar to RES(Y) and a restricted class of
unifiers, in \cite{nadathurPP}.




\section{Semantics}


\subsection{The Main Obstacles}
\jlbox{severely reduce this subsection in size, or merge (shortened)
  with preface}
\label{subsec:obstacles}
If it is to capture the declarative content of
programs, a model theory for \HOHH must interpret 
an intuitionistic fragment
of Church's type 
theory. In particular,  it must be a model for the simply typed
lambda-calculus, sound for $\beta\eta$ conversion. 
It must also supply a carrier for predicates and
interpret the logical constants. Since quantifiers and connectives
are higher-order constants in the language,  the interpretation of terms 
cannot be easily divorced from the meaning of propositions. This
is one of the reasons we construct a model in which {\em the object of
truth values, a (parametrically) complete Heyting Algebra, is distinct
from the carriers of the types of propositions\/}.

\paragraph*{Impredicativity}
In Church's Type theory, instances of quantified formulas may have
arbitrarily large complexity, since variables of logical type range
over formulas. For example, any formula $G$ {\em including $\varphi$ itself\/}
may be an instance of the formula $\varphi= \exists x_o.x_o$ 
Thus the subformula relation does not give rise to a well-founded
order on formulas.
This means we may not define truth in a model by induction on
subformula structure.
For this reason assignments of truth values to
atomic formulas may not suffice to specify a model, and common elementary
properties (e.g.\@ monotonicity of truth in Kripke models) cannot be
defined by fiat for atoms and extended by induction to all formulas.
There are a number of ways to deal with this kind of problem, most
famously the approaches due to Takahashi, Tait, Andrews and Girard to prove
cut-elimination in impredicative systems (see \cite{andrews71,ccct05}
for extensions of these techniques to classical and intuitionistic versions 
 of Church's calculus). Sometimes one is able to define models,
and prove completeness by inducting on something else, e.g.\@ type
structure, term-complexity, or, as with our treatment of completeness
here, the number of iterations of an operator used to build initial
models. 

For this reason, truth in the models defined in this paper, {\em is
not inductively defined\/}. The conditions a definition of truth must satisfy with
respect to the connectives and quantifiers must be seen as constraints
to be met by a particular model construction.

There are, in addition, a series of requirements peculiar to logic
programming that must be met to supply a semantics with respect to
which resolution deductions on \HOHH are sound and complete. We
summarize them here.

  \paragraph*{Breaking truth-functionality of higher-order predicates:}

In $\lambda$Prolog we observe the following behavior. 
{\small 
\begin{alltt}
{\bf {\sf \large Program:}}
      module ntf.
      type p o.                               p :- q.
      type q o.                               q :- p.
      type f o -> o.                         f(p).   

{\bf {\sf \large Queries:}}
      ?- f(p).
      solved

      ?- f(q).
      no

      ?- f(p & p).
      no
\end{alltt}
}
The logical equivalence of $p$ and $q$, or of $p$ and $p\wedge p$
plays no role in the search for a proof of $f(q)$ or of  $f(p\wedge
p)$. In other words, the following inference fails in \ICTT, the underlying logic
  of $\lambda$Prolog, when $A,B$ are propositions and the
  type of $p$ is $o\imp o$:
\begin{equation}
\label{eq:leib}
\sfrac{A \longleftrightarrow B}{p(A)\imp p(B)}
\end{equation}

There are good programming language reasons for omitting this
inference. For example, a goal 
\begin{alltt}
 \hspace{2cm}      \(\cdots \) ,p(t) :- \(\cdots,\, \cdots\) ?- p(u).
\end{alltt}
with $t,u$ of the  type of propositions,
proceeds by higher-order unification of $t$ and $u$ and the appropriate
modification of the stack of current goals. If \eqref{eq:leib} were
a legitimate derived rule, we would have to invoke the entire theorem
proving mechanism of $\lambda$Prolog to see if $t$ and $u$ were 
logically equivalent.
Note that, if appropriate syntactic restrictions are imposed on programs,
truth-functionality becomes computationally feasible (see, for example,
\cite{wadge91,wadge13}).

Thus, 
if \HOHH-resolution is to be sound and complete with respect to our
model theory,
the meaning of the formula $f(p \wedge p)$ must not be
  identified with $f(p)$. Once again, 
  this is achieved by separating the object of truth values
  $\Omega$ from the carrier  {\sf D$_o$} of propositions. Although $p$
  and $p\wedge p$ will have the same truth value in  
  $\Omega$, they will not be identified in  {\sf D$_o$}, making it
  possible to consistently assign different truth values to 
 $f(p \wedge p)$ and  $f(p)$.


%



Our model theory for \UCTT is an indexed variant of Heyting Applicative
Structure semantics for the
intuitionistic version of Church's Theory of types given in
\cite{ccct05}. It brings together several of the ideas discussed above:
\begin{itemize}
\item Miller's indexed model theory \cite{modules} for a first-order fragment of \FOHH
(meaning is assigned to goals $G$ {\em with respect to ambient programs\/}
$P$);
\item   stratification of carriers to represent different index
levels, i.e.\@ change of ambient signatures;
\item separate carriers for programs and goals and their components, and an
object $\Omega$ of truth values
distinct from the carrier of logical formulae.
\end{itemize}

The carrier of the program type is a pre-ordered monoid, and so, in particular,
has a Kripke-like structure.
\jlbox{compare with Def of Unif Alg preorder structure. It may not be a pullback.}
Unlike Miller's model, the carriers of programs and goals do not
consist of the programs and goals themselves, but of interpreted
programs and goals.


The interpretation of logic (program-goal pairs)
 in our semantics arises from the
composition of two maps:
\[
\pee\times\gee\stackrel{\umod\times\umod}{\longrightarrow}\deeip\times \deeig
\stackrel{\hoint}{\imp}\Omega
\]
where $\pee$ and $\gee$ stand for the sets of terms of program and
goal type respectively, mapped by $\umod$ into carriers of
their type in some {\em Uniform Applicative Structure\/} $\dmod$,
defined below,  and $\hoint=\{\hoint_i:i\in\nn\}$ is an
integer-indexed family of maps from
stratified carriers $\deeip\times\deeig$ into the object of truth
values $\Omega$. Readers familiar with Miller's forcing notation
\cite{modules}  will
see that our interpretation $I_i(\umod(P),\umod(G))=\top_{\Omega}$
of the program-goal pair $P,G$
corresponds to his
\[
\hoint,P\mforces G.
\]

But we generalize his forcing in a number of ways: we include
 universal quantification (which requires adding the signature-levels
 $i\in\nn$), higher-order logic (which requires the presence of an
 applicative structure to model lambda-abstraction and application),
 and we factor our interpretation through a structure which supplies
 carriers $\deeip,\deeig$ for programs as well as goals.

We now give the details. We start by recapitulating
 the definition of a standard notion of model for the simply
 typed lambda-calculus, which will be built into our structures.
 They were first introduced systematically by H. Friedman in \cite{Friedman75},
although obviously implicit to a lesser or greater degree
in \cite{henkin50,lauchli:realizability-70,
plotkin:lambdaDef}.  (See also \cite{mi96} for a detailed discussion.)

\subsection{Applicative Structures}

Let ${\cal S}$
be a language for \ICTT, that is to say a set of typed constants.
\begin{defn}
\label{appstru}
A typed applicative structure $\langle \dee, \app, \const \rangle$
for ${\cal S}$
consists of an indexed family $\dee=\{\dee_\alpha\}$ of
sets $\dee_{\alpha}$
for each type $\alpha$, an indexed family $\app$ of
functions $\app_{\alpha,\beta}:
\dee_{\beta\alpha} \times \dee_\alpha \rightarrow
\dee_\beta$ for each pair $(\alpha, \beta)$ of types, and an
(indexed) intepretation function $\const=\{\const_\alpha\}$
taking constants of each type $\alpha$ to
elements of $\dee_\alpha$.
\end{defn}

%

A typed applicative structure is (functionally)
{\it extensional\/} if for every $f$, $g$ in
$\dee_{\beta\alpha}$, $\app(f,d) = \app(g,d)$ for all $d \in
\dee_\alpha$ implies $f = g$.

The models we construct in this paper are extensional.
It should be noted, however, that we do not work with equality, so
there is no way to explicitly specify (or prove) extensionality in the
theory. We now modify the semantics, first to introduce the richer notion of
indexed terms, and to model some of the relationships between certain
logical types.

We retain the notation ${\cal S}$ for the language we are working
over, by which we mean a set of \UCTT-typed constants, each with its
associated level.
\begin{defn}
\label{def:usappst}
A uniform ${\cal S}$-applicative structure $\fd= \langle
\dee, \app, \const, \inc, \pmonoid
 \rangle$ for \UCTT is an extensional, typed
applicative structure indexed over the natural numbers.
That is to say:
\begin{itemize}
\item $\dee$ is a family of sets $\tdee{i}{\alpha}$ for each index $i$ and type $\alpha$;
\item $\app$ is a family of functions, indexed over the natural numbers and pairs of types
\[
\app_{i\beta\alpha}:\tdee{i}{\beta\alpha}\times
\tdee{i}{\alpha}\imp\tdee{i}{\beta} ;
\]
\item $\const$ is a family of fuctions, indexed over natural numbers and types, such that $\const_{i\alpha}$ maps constants of type $\alpha$ and label $i$ to elements of $\tdee{i}{\alpha}$.
\end{itemize}

In addition, the structure contains a family of injective
functions $\inc_{i\pee\ree}:\deeir \rightarrow
\deeip$, $\inc_{i\gee\aee}:\deeia \rightarrow \deeig$,
$\inc_{i\aee\ree}:\deeir \rightarrow \deeia$, $\inc_{i\gee\hee}:\deeih
\rightarrow \deeig$ and $\inc_{(i+1)i\alpha}:\tdee{i}{\alpha} \rightarrow
\tdee{i+1}{\alpha}$ (and by composition $\inc_{jk\alpha}$ for $j \geq k$, with $\inc_{jj\alpha}$  the identity),
which must satisfy
$\inc_{ij\beta}(\app(g,s))=\app(\inc_{ij(\beta\alpha)}(g),
\inc_{ij\alpha}(s))$. Indexes may be omitted when clear from the context
or irrelevant to the discussion.

%


Also, $\deeip$ has a pre-ordered monoid\gabox{The term pre-ordered monoid is misleading since $\wedge_{\pee\pee\pee}$ neighte associative nor monotone} structure ${\sf M}_{i,\pee} = \langle \deeip, \sqq_{\pee}, \top_\pee, \mland_{\pee\pee\pee}, \mPi_{\pee(\pee\alpha)} \rangle$ where:
\begin{itemize}
     \item $\sqq_\pee$ is a pre-order over $\deeip$;
     \item $\mland_{\pee\pee\pee} = \inc_{i0(\pee\pee\pee)}(\const(\wedge_{\pee\pee\pee})) \in \dee_{i,\pee\pee\pee}$ is a binary operation;
     \item $\top_\pee \in \dee_{i,\pee}$ is an identity element for $\mland_{\pee\pee\pee}$;
     \gabox{Are  $\top_\pee$'s at different levels linked by $\inc$?}
     \item for every type $\alpha$, $\mPi_{\pee(\pee\alpha)} = \inc_{i0(\pee(\pee\alpha))}(\const(\Pi_{\pee(\pee\alpha)})) \in \tdee{i}{\pee(\pee\alpha)}$ is an unary operator
\end{itemize}
\gabox{Do we need to add the condition that $\mland_{\pee\pee\pee}$ and  $\mPi_{\pee(\pee\alpha)}$ are monontone?}
\noindent such that for $p, q \in \deeip$, $d \in \tdee{i}{\alpha}$ and $f \in 
\tdee{i}{\pee\alpha}$,
$p \sqq_\pee p \mland q$,
$\land_{\pee\pee\pee}$ is commutative
{\em up to equivalence induced by\/} $\sqq_{\pee}$
(that is to say $p\land q$ and $q \land p$ are $\sqq_\pee$ equivalent)
 and $\app(f,d) \sqq_\pee
\app(\mPi_{\pee(\pee\alpha)},f)$.
\end{defn}

\begin{defn}
  Given a uniform applicative structure, we will often refer to the
{\bf interpreted parameters\/} or, equivalently,
the  interpreted constants of the structure, as the elements
$\const(c)$ where $c$ ranges over logical constants,
 or the non-logical constants of ${\cal S}$. When the structure is
clear from context, we will write $\ol{c}$  (or just $c$ if clear from
context) for the interpreted 
constants, in particular the logical ones. With an abuse of notation,
we will also denote by $\ol{c}$ any possible injection of $\inc_{j0\alpha}(\const(c))$,
as it is needed in the context.
\end{defn}


\subsection{Uniform Algebras}

We define a valuation function from the uniform
applicative structure into an  {\em object of truth values\/},
namely a distributive lattice, or in some cases a Heyting algebra
$\Omega$%
\footnote{These structures must have enough suprema and infima
  as will be specified below.}.
A recapitulation of the definitions and some simple properties of
distributive lattices and Heyting Algebras is given in the appendix.
We distinguish
the lattice operations
from their formal counterparts in the syntax
 with subscripts $\Omega$ where
necessary.

\begin{defn}
\label{interp}
Let  $\Omega$ be a distributive lattice (or a Heyting Algebra).
An $\Omega$-{\bf valuation} on a uniform applicative structure \fd\
 is a function $\hoint:i \times
\deeip \times \deeig \rightarrow \Omega$, $i \in \mathbb{N}$,
satisfying the condition
\begin{equation}
\label{uaatom}
\thoint{i}{\inc_{\pee\ree}(r)}{\inc_{\gee\aee}(\inc_{\aee\ree}(r))} =
     \top_\Omega
\end{equation}
for rigid atoms, the monotonicity condition
\begin{equation}
\label{umonotone}
\thoint{i}{p}{g} \oleq
\thoint{i+1}{\inc_{(i+1)i}(p)}{\inc_{(i+1)i}(g)}
\end{equation}
 and the following equalities:
\begin{align}
\label{uatopgoal}
\thoint{i}{p}{\ol{\top}_\gee} &= \top_\Omega\\
\label{uaand}
\thoint{i}{p}{g \oland_{\gee\gee\gee} g'}& = \thoint{i}{p}{g} \land_\Omega
     \thoint{i}{p}{g'}\\
\label{uaor}
\thoint{i}{p}{g \olor_{\gee\gee\gee} g'} &= \thoint{i}{p}{g} \lor_\Omega
     \thoint{i}{p}{g'}\\
\label{uahorseshoe}
\thoint{i}{p}{q\olhorseshoe_{\gee\gee\pee} g} &=
	\thoint{i}{p\land_{\pee\pee\pee} q}{g}\\
\label{uainstance}
\thoint{i}{p}{ \app(\olSigma_{\gee(\gee\alpha)},f)} &= \biglor_{d \in
     \tdee{i}{\alpha}} \thoint{i}{p}{\app(f,d)}
     && \textrm{ for }f \in \tdee{i}{\gee\alpha}\\
\label{uageneric}
\thoint{i}{p}{\app(\olPi_{\gee(\gee\alpha)},f)} &=
     \bigland_{d \in \tdee{i}{\alpha}}
     \thoint{i}{p}{\app(f,d)} && \textrm{ for }f \in
     \tdee{i}{\gee\alpha}.
\end{align}
%
For ${\cal I}$ to be an $\Omega$-valuation, the infima and suprema on
the right hand side in the last two conditions must exist.
We also add the following
conditions on $\sqq_\pee$:
\begin{align}
\label{uainclusion}
p \sqq_\pee p' &\textrm{ implies } \thoint{i}{p}{g} \leq_\Omega
     \thoint{i}{p'}{g}\\
\label{uabackchain}
(g \horseshoe_{\pee\gee\ree} r) \sqq_\pee p &\textrm{ implies }
     \thoint{i}{p}{g} \leq_\Omega \thoint{i}{p}
     {\inc_{\gee\aee}(\inc_{\aee\ree}(r))}
\end{align}
\end{defn}

Since an ordering of
formulas in higher-order logic
that allows instances of quantified formulas to be considered as subformulas
does {\it not\/} give rise to a well-ordering (as discussed in
Section~\ref{subsec:obstacles})
the reader should resist any attempt to view conditions
(\ref{uaand},~\ref{uaor},~\ref{uahorseshoe}), etc., as inductive definitions.
They are the properties that a valuation must satisfy, and it remains
to be shown that such valuations exist. We will construct one as the
limit of finite approximations following Lemma~\ref{lem:t-inflationary} below.

Finally, we interpret the logic in our structure. We will need to
make use of assignments or environments to handle free variables.
A \dee-assignment $\aph$ is a function from the free variables of the
language into $\dee$ which respects indices and types. That is to say,
if $x$ of type $\alpha$ has label $i$, then $\aph(x)\in \dee_{i,\alpha}$.


\begin{defn}
\label{def:ua}
A {\bf uniform algebra} or {\bf model} for \UCTT
is a quadruple $\ang{\fd, \umod,\Omega,\hoint}$, where
$\fd$ is a uniform applicative structure
$\ang{\dee, \app, \const, \inc,\pmonoid}$,
$\umod$ is a family of maps $\umod=\{\umod_\varphi\}$
from terms of level $i$ and type $\alpha$  to $\dee_{i,\alpha}$
idsexed over \dee-assignments $\varphi$,
$\Omega$ is a distributive lattice  and
$\hoint$ an $\Omega$-valuation
satisfying certain conditions spelled
out below.
\begin{align*}
\umod(c)_\varphi = & \ \const(c)\\
\umod(x)_{\varphi} =& \ \varphi(x)\\
\umod(t_1 t_2)_\aph  =&\  \app(\umod(t_1)_\aph, \umod(t_2)_\aph) \\
\umod(\lambda x.t)_\aph = & \text{ the unique member of $\fd$ s.t.\@ $\app(\umod(\lambda x.t)_\aph,d) = \umod(t)_{\aph[d/x]}$}
\end{align*}
\end{defn}

\begin{rem}
     In the definition of $\umod(t_1 t_2)_\aph$ there are some implicit inclusions we have omitted to ease notation. If $t_1$ has level $i$, $t_2$ has level $j$ and $m= \max(i,j)$, the full definition of $\umod(t_1 t_2)_\aph$ should be
     \[
     \umod(t_1 t_2)_\aph  = \app(\inc_{mi}\umod(t_1)_\aph, \inc_{mj}\umod(t_2)_\aph) \enspace .
     \]
\end{rem}

\begin{rem}
     Like in the definition of $\elab$, in order to respect the condition that only variables in \fvar may occur freely, the definiton of $\umod(\forall x. t)_\varphi$ should be
     \begin{align*}
          \umod(\lambda x.t)_\aph = & \text{ the unique member of $\fd$ s.t.\@ $\app(\umod(\lambda x.t)_\aph,d) = \umod(t[y/x])_{\aph[d/y]}$}\\
          & \text{ where $y \in \fvar$ is a fresh variable replacing the
            bound variable $x \in \bvar$,}
     \end{align*}
     but we prefer to use the more informal notation, letting $x$ to benote both a variable in $\bvar$ and a fresh variable in $\fvar$.
\end{rem}

The notation $\umod(P)$ where $P$ is a program rather than a single program formula, is
meant to denote $\umod(\bigland P)$ if $P$ is non-empty, $\top_\pee$ otherwise.

The last condition for ensuring $\umod$ is a model of the
lambda-calculus
 above, called the {\em environmental model condition\/}
in \cite{mi96}, asserts the existence and uniqueness of a ``right''
interpretation of abstraction. It ensures the equivalence of semantic
 and syntactic substitution  even in the absence of
 functional extensionality.

%

\begin{lem}[Free variables]
\label{lem:umod-freevars}
Let $\ang{\fd,\umod,\Omega,\hoint}$ be a uniform algebra and $N$ any term. If $\varphi$ and $\varphi'$ are two \dee-assignments that agree on the free variables of $N$, then $\umod(N)_\varphi = \umod(N)_{\varphi'}$.
\end{lem}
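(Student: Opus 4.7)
The plan is a straightforward structural induction on the term $N$. This is the standard coincidence lemma for environment-based semantics of the typed $\lambda$-calculus, and the uniform algebra gives us precisely the tools needed. Since the definition of $\umod$ is given by cases on term structure, and since the free-variable sets of subterms are controlled by those of the whole term, the induction goes through with the only delicate point being the abstraction case, where functional extensionality (built into uniform applicative structures) together with the environmental model condition is required.

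For the base cases: if $N$ is a constant $c$, then $\umod(c)_\varphi = \const(c) = \umod(c)_{\varphi'}$, independent of the environments. If $N = x$ is a variable, then $x \in \fv{N}$, so by hypothesis $\varphi(x)=\varphi'(x)$, hence $\umod(x)_\varphi = \varphi(x) = \varphi'(x) = \umod(x)_{\varphi'}$.

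For application $N = t_1 t_2$, we have $\fv{N} = \fv{t_1}\cup\fv{t_2}$, so $\varphi$ and $\varphi'$ agree on each of $\fv{t_1}$ and $\fv{t_2}$. By the induction hypothesis applied twice, $\umod(t_i)_\varphi = \umod(t_i)_{\varphi'}$ for $i=1,2$, and applying $\app$ (with whatever implicit inclusions are needed, which depend only on the levels of $t_1,t_2$, not on the environment) yields the result.

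The abstraction case is the one that requires real work. Let $N = \lambda x.t$, so $\fv{N} = \fv{t}\setminus\{x\}$. I want to conclude $\umod(\lambda x.t)_\varphi = \umod(\lambda x.t)_{\varphi'}$. By functional extensionality of the underlying applicative structure (guaranteed by Definition~\ref{def:usappst}), it suffices to show that these two elements agree under $\app$ with every $d$ of the appropriate type. Unfolding using the environmental model condition, this reduces to showing
\[
\umod(t)_{\varphi[d/x]} \;=\; \umod(t)_{\varphi'[d/x]}.
\]
Now $\varphi[d/x]$ and $\varphi'[d/x]$ agree on all of $\fv{t}$: on $x$ both give $d$, and on $\fv{t}\setminus\{x\}\subseteq\fv{N}$ they inherit agreement from $\varphi$ and $\varphi'$. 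The induction hypothesis applied to $t$ (with these two extended environments) then delivers the required equality, completing the induction.
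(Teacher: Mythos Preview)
Your proof is correct and follows essentially the same approach as the paper, which just remarks that the proof is routine by induction on the structure of $\lambda$-terms. The only cosmetic difference is that in the abstraction case you invoke functional extensionality, whereas the paper points to the \emph{uniqueness} provision of the environmental model condition; since uniform applicative structures are assumed extensional (Definition~\ref{def:usappst}), both justifications are available and yield the same argument.
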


\begin{lem}[Substitution]
\label{lem:substitutions}
Let $\ang{\fd,\umod,\Omega,\hoint}$ be a uniform algebra and let $\sth$ be the substitution $\{t_1/x_1,\dots, t_n/x_n\}$,
abbreviated $[\overline{t}/\overline{x}]$. Then, for any term $N$, $\umod(N\sth)_\aph=
\umod(N)_{\aph[\overline{\umod(t)_{\aph}}/\overline{x}]}$.
\end{lem}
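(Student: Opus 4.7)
The plan is to prove the lemma by induction on the structure of the term $N$, handling the four constructions (constants, variables, application, abstraction) separately, with the abstraction case being the crux.

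For the base cases, when $N$ is a constant $c$, both sides equal $\const(c)$, and when $N$ is a variable $x$ the argument splits in two: if $x = x_i$ for some $i$, then $N\sth = t_i$, so $\umod(N\sth)_\varphi = \umod(t_i)_\varphi$, and the right-hand side is $\varphi[\overline{\umod(t)_\varphi}/\overline{x}](x_i) = \umod(t_i)_\varphi$; if $x$ is not among the $x_i$, then $N\sth = x$ and both sides reduce to $\varphi(x)$. The application case $N = t_1 t_2$ follows immediately: using the inductive hypothesis on $t_1$ and $t_2$ and the definition of $\umod$ on applications, we get
\[
\umod((t_1 t_2)\sth)_\varphi = \app(\umod(t_1\sth)_\varphi, \umod(t_2\sth)_\varphi) = \app(\umod(t_1)_{\varphi'}, \umod(t_2)_{\varphi'}) = \umod(t_1 t_2)_{\varphi'},
\]
where $\varphi' = \aph[\overline{\umod(t)_{\aph}}/\overline{x}]$.

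The main obstacle is the abstraction case $N = \lambda x.M$. By standard $\alpha$-renaming (which is built into the convention on principal normal forms), we may assume $x$ is distinct from each $x_i$ and does not occur free in any $t_i$. Then $N\sth = \lambda x.(M\sth)$, and by the environmental model condition it suffices to show, for every $d \in \dee$ of the appropriate type, that $\app(\umod(\lambda x. M\sth)_\varphi, d) = \app(\umod(\lambda x. M)_{\varphi'}, d)$. Unfolding the defining property of abstraction on both sides reduces the claim to
\[
\umod(M\sth)_{\varphi[d/x]} \;=\; \umod(M)_{\varphi'[d/x]}.
\]
Applying the induction hypothesis to $M$ with the assignment $\varphi[d/x]$ gives
\[
\umod(M\sth)_{\varphi[d/x]} \;=\; \umod(M)_{(\varphi[d/x])[\overline{\umod(t)_{\varphi[d/x]}}/\overline{x}]}.
\]
Since $x$ does not occur free in any $t_i$, Lemma~\ref{lem:umod-freevars} gives $\umod(t_i)_{\varphi[d/x]} = \umod(t_i)_\varphi$ for each $i$; and since $x \neq x_i$ for every $i$, the two assignment updates commute, yielding
\[
(\varphi[d/x])[\overline{\umod(t)_\varphi}/\overline{x}] \;=\; \varphi'[d/x],
\]
which closes the induction. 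The key subtlety to manage carefully is this interplay between the binder $x$, the substituted variables $\overline{x}$, and the free variables of the $t_i$, which is exactly why the $\alpha$-renaming convention and Lemma~\ref{lem:umod-freevars} are invoked.
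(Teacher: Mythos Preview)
Your proof is correct and follows exactly the approach the paper indicates: induction on the structure of $N$, with the abstraction case handled via the uniqueness clause of the environmental model condition together with Lemma~\ref{lem:umod-freevars}. The only minor remark is that in this paper the bound variable $x\in\bvar$ is automatically distinct from the substituted variables $x_i\in\fvar$ and from the free variables of the $t_i$, so your $\alpha$-renaming assumption is already guaranteed by the \bvar/\fvar convention rather than needing to be imposed.
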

The proofs of these lemmas are routines, by induction on the structure of $\lambda$-terms.
For the case $N=\lambda x.M$, we must appeal to
the uniqueness provision of the environmental model condition
of Definition~\ref{def:ua}.

\begin{lem}
\label{lem:preservation}
In any uniform algebra $\ang{\fd,\umod,\Omega,\hoint}$, if $\forall \vec x (G \hoe A_r) \in \elab(P)$, then for any renaming-apart substitution $\delta=\{w_1/x_1, \ldots, w_n/x_n\}$, substitution $\gamma$ and \dee-assignment $\varphi$, we have $\umod(G \delta\gamma \hoe A_r\delta\gamma)_\varphi  \sqq_{\pee} \umod(P\gamma)_\varphi$.
\end{lem}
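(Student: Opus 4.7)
The plan is to proceed by induction on the complexity of the program formula $P$, in the sense of Lemma~\ref{lem:well-order}, which parallels the recursive definition of $\elab$ itself.

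For the base cases, if $P$ is a rigid atom then $\elab(P)=\{P\}$ contains no clause of the form $\forall\vec x(G\hoe A_r)$ with a body, so the premise is vacuous. If $P = G'\hoe A_r'$, then $\elab(P)=\{G'\hoe A_r'\}$, forcing $\vec x$ to be empty, $\delta$ to be the identity substitution, and $G\delta\gamma\hoe A_r\delta\gamma=(G'\hoe A_r')\gamma=P\gamma$; the claim reduces to reflexivity of $\sqq_\pee$. If $P=D_1\wedge D_2$, then $\forall\vec x(G\hoe A_r)\in\elab(D_i)$ for some $i\in\{1,2\}$, so by induction $\umod(G\delta\gamma\hoe A_r\delta\gamma)_\varphi\sqq_\pee\umod(D_i\gamma)_\varphi$; combining with the axioms $p\sqq_\pee p\mland q$ and the $\sqq_\pee$-commutativity of $\mland$ gives $\umod(D_i\gamma)_\varphi\sqq_\pee\umod(D_1\gamma)_\varphi\mland\umod(D_2\gamma)_\varphi=\umod(P\gamma)_\varphi$, and transitivity finishes this case.

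The universal case $P=\forall y\,D'$ carries the real content. Here $\forall\vec x(G\hoe A_r)=\forall y\,K'$ with $K'=\forall x_2\cdots\forall x_n(G\hoe A_r)\in\elab(D')$, so $y=x_1$. Split $\delta=\delta_1\cup\delta_2$, with $\delta_1=\{w_1/x_1\}$ and $\delta_2=\{w_2/x_2,\dots,w_n/x_n\}$. First, applied to $f=\umod(\lambda y.\,D'\gamma)_\varphi$ and $d=\umod(w_1)_\varphi$, the uniform-algebra axiom $\app(f,d)\sqq_\pee\app(\mPi_{\pee(\pee\alpha)},f)$ yields
\[
\app(\umod(\lambda y.D'\gamma)_\varphi,\umod(w_1)_\varphi)\;\sqq_\pee\;\umod(\forall y.D'\gamma)_\varphi=\umod(P\gamma)_\varphi.
\]
The environmental model condition identifies the left-hand side with $\umod(D'\gamma)_{\varphi[\umod(w_1)_\varphi/y]}$, and by Lemma~\ref{lem:substitutions} this equals $\umod(D'\delta_1\gamma)_\varphi$ (using freshness of $w_1$ to commute $\delta_1$ past $\gamma$). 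Now apply the inductive hypothesis to $D'$ with renaming-apart $\delta_2$ and substitution $\delta_1\gamma$ to obtain
\[
\umod(G\delta_2\delta_1\gamma\hoe A_r\delta_2\delta_1\gamma)_\varphi\;\sqq_\pee\;\umod(D'\delta_1\gamma)_\varphi.
\]
Since $\delta_1$ and $\delta_2$ have disjoint supports and fresh ranges, their composition equals $\delta$ as a substitution, so the left-hand side is $\umod(G\delta\gamma\hoe A_r\delta\gamma)_\varphi$. Transitivity of $\sqq_\pee$ closes the case.

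The main obstacle I expect is the bookkeeping in the $\forall$ case: verifying that the renaming-apart substitution decomposes as $\delta=\delta_2\delta_1=\delta_1\delta_2$, that the freshness of each $w_i$ guarantees $w_i\gamma=w_i$ and $D'\gamma[w_1/y]=D'\delta_1\gamma$, and that the formal $\fvar$-version of $\elab(\forall x D)$ (with the $[x/X]$ renaming) lines up so that peeling off one quantifier really does leave a clause in $\elab(D')$ amenable to the inductive hypothesis. Notably, the argument uses only the axiom $\app(f,d)\sqq_\pee\app(\mPi,f)$ at each instantiation step, so it sidesteps the question (flagged in the paper's margin note) of whether $\mPi$ is monotone in its functional argument.
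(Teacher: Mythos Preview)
Your proof is essentially correct and follows the same inductive strategy as the paper. One small gap: in the lemma $P$ is a \emph{program} (a set of program formulae, with $\umod(P)=\umod(\bigwedge P)$), not a single formula, so your case analysis on $P$ is ill-typed as written. The paper first picks $D\in P$ with $\forall\vec x(G\hoe A_r)\in\elab(D)$, observes $\umod(D\gamma)_\varphi\sqq_\pee\umod(P\gamma)_\varphi$ via the monoid axiom $p\sqq_\pee p\mland q$, and then inducts on the complexity of $D$; your argument goes through verbatim once this reduction is made explicit.

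In the $\forall$ case you and the paper differ only in bookkeeping. The paper applies the induction hypothesis first, with the substitution $\gamma\gamma'$ where $\gamma'=\{\gamma(w_1)/x_1\}$, and then invokes the substitution lemma and the $\Pi$-axiom; you apply the $\Pi$-axiom first (at $d=\umod(w_1)_\varphi$) and then the induction hypothesis with substitution $\delta_1\gamma$. Under freshness of $w_1$ (so $\gamma(w_1)=w_1$) these two routes coincide, both passing through the intermediate term $\umod(D'\gamma)_{\varphi[\umod(w_1)_\varphi/x_1]}$.
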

\begin{proof}
The lemma is trivially true if $P$ is empty. Therefore, assume $P$ to be non-empty.
By definition of $\elab$, there exists $D \in P$ such that $\forall \vec x (G \hoe A_r) \in \elab(D)$. Moreover, $\umod(P\gamma)_\varphi = \bm{\bigwedge}_{D' \in P\gamma} \umod(D')_\varphi \sqsupseteq_\pee \umod(D\gamma)_\varphi$, hence it is enough to prove that $\umod(G \delta\gamma \hoe A_r\delta\gamma)_\varphi  \sqq_{\pee} \umod(D\gamma)_\varphi$. The proof is by induction on the complexity of the program formula $D$.

If $D = G \hoe A_r$ there are no bound variables. Hence, we have $\delta = \epsilon$ and the result is trivial since $G\gamma \hoe A_r \gamma = D \gamma$.


If $D = D_1 \wedge D_2$, then assume without loss of generality that $\forall \vec x (G \hoe A_r) \in \elab(D_1)$. By inductive hypothesis, we have  $\umod(G \delta\gamma \hoe A_r\delta\gamma)_\varphi  \sqq_{\pee} \umod(D_1\gamma)_\varphi \sqq_{\pee} \umod(D_1\gamma)_\varphi \mland \umod(D_2\gamma)_\varphi = \umod((D_1 \wedge D_2)\gamma)_\varphi$.

Finally, if $D = \forall x_1. D'$ then  $\forall x_2 \ldots x_n (G \hoe A_r) \in \elab(D')$. By inductive hypothesis, we have 
\[ 
     \umod(G \delta\gamma \hoe A_r\delta\gamma)_\varphi  = \umod(G \delta'\gamma\gamma' \hoe A_r\delta'\gamma\gamma')_\varphi  \sqq_{\pee} \umod(D'\gamma\gamma')_\varphi 
\] 
where $\delta' = \{w_2/x_2, \ldots, w_n/x_n\}$ and $\gamma' = \{ \gamma(w_1)/x_1 \}$. By Lemma~\ref{lem:substitutions} we have $\umod(D'\gamma\gamma')_\varphi = \umod(D'\gamma)_{\varphi[v/x_1]}$, where $v=\umod(\gamma(w_1))_\varphi$. Then, by eta-conversion and the definition of uniform algebra:
\[
     \umod(D'\gamma)_{\varphi[v/x_1]} = \umod((\lambda x_1. D'\gamma)x_1)_{\varphi[v/x_1]} = \app(\umod(\lambda x_1. D' \gamma)_{\varphi[v/x_1]}, v) \enspace .
\]
Since $\lambda x_1. D' \gamma$ does not contain $x_1$ free, using Lemma~\ref{lem:umod-freevars} we have $\app(\umod(\lambda x_1. D' \gamma)_{\varphi[v/x_1]}, v) = \app(\umod(\lambda x_1. D' \gamma)_\varphi, v)$ and
\[
     \app(\umod(\lambda x_1. D' \gamma)_\varphi, v) \sqq_\pee \app(\mPi_{\pee(\pee\alpha)}, \umod(\lambda x_1. D' \gamma)_\varphi) =  \umod(\forall x_1. D'\gamma)_{\varphi} = \umod(D\gamma)_{\varphi} \enspace . \qedhere
\]
\end{proof}

\subsection{Soundness}
Let $\mfa$ be the state $\ipresgoal{i}{P}{G}$.
For the rest of the paper we will abbreviate our
notation by writing
$\sem{G}{\varphi}$ instead of $\umod_\varphi(G)$ and
$\sem{\ipresgoal{i}{P}{G}}{\varphi}$ instead of
$\hoint_i(\umod_\varphi(P), \umod_\varphi(G))$. We extend the semantic operator
$\me{\_}$ to state vectors by defining
\[
\sem{\mfa_1\otimes\cdots\otimes\mfa_n}{\varphi} =
\sem{\mfa_1}{\varphi}\wedge_\Omega\cdots\wedge_\Omega\sem{\mfa_n}{\varphi}.
\]
and 
\[
  \sem{\NULL} = \top_\Omega \enspace .
\]
\begin{thm}
  \label{thm:soundness}
  Let $\mfa$ and $\mfb$ be state vectors. Suppose there is a resolution
\[
\resdots{\mfai}{\theta}{\mfaii}.
\]
Then in any uniform algebra, with any \dee-assignment
\[
\sem{\mfaii}{\varphi}\oleq\sem{\mfai\theta}{\varphi}.
\]
\end{thm}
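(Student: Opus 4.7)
The proof proceeds by induction on the length $n$ of the resolution. The base case ($n=0$) is trivial, since then $\mfai = \mfaii$ and $\theta$ is the identity. For the inductive step, it suffices to establish a \emph{one-step} soundness lemma: if $\mfai \derstep{\theta_1} \mfa'$, then $\sem{\mfa'}{\psi} \oleq \sem{\mfai\theta_1}{\psi}$ for every $\dee$-assignment $\psi$. Granted this, one argues as follows. Write $\mfai \derstep{\theta_1} \mfa' \derivation{\theta_2} \mfaii$, where $\theta = \theta_1\theta_2$ restricted to $\fv{\mfai}$. By the induction hypothesis (applied at $\varphi$), $\sem{\mfaii}{\varphi}\oleq\sem{\mfa'\theta_2}{\varphi}$. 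By Lemma~\ref{lem:substitutions} we have $\sem{\mfa'\theta_2}{\varphi} = \sem{\mfa'}{\varphi'}$ and $\sem{\mfai\theta_1\theta_2}{\varphi} = \sem{\mfai\theta_1}{\varphi'}$ for the same assignment $\varphi' = \varphi[\overline{\umod(\theta_2(x))_\varphi}/\overline{x}]$. Applying one-step soundness at $\psi = \varphi'$ finishes the chain.

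The bulk of the work is then a case analysis on the resolution rule applied at the first step. The easy cases follow directly from the semantic clauses in Definition~\ref{interp}: \textbf{null} uses \eqref{uatopgoal}; \textbf{true} uses \eqref{uatopgoal} applied to $A\theta_1 = \top$; \textbf{and} gives equality via \eqref{uaand} together with the definition of $\sem{\_}{\varphi}$ on state vectors; \textbf{or}$_j$ follows from \eqref{uaor}; \textbf{instance} from \eqref{uainstance}, instantiating the supremum witness at $d=\umod(t)_\varphi$ and invoking the substitution lemma to identify $\umod(G[t/x])_\varphi$ with $\app(\umod(\lambda x.G)_\varphi, \umod(t)_\varphi)$; \textbf{augment} yields equality via \eqref{uahorseshoe}. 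All these steps only change the selected state in $\mfai$, so the surrounding $\mfa_1\otimes(\_)\otimes\mfa_2$ contribute identically on both sides.

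The two nontrivial cases are \textbf{backchain} (and its special case \textbf{axiom}) and \textbf{generic}. For \textbf{backchain}, with selected state $\ipresgoal{i}{P}{A'}$, clause $\forall\vec x(G\hoe A_r)\in\elab(P)$, renaming-apart $\delta$, and unifier $\theta_1$ with $A'\theta_1 = A_r\delta\theta_1$, Lemma~\ref{lem:preservation} gives $\umod(G\delta\theta_1\hoe A_r\delta\theta_1)_\varphi \sqq_\pee \umod(P\theta_1)_\varphi$. Condition~\eqref{uabackchain} then yields
\[
\hoint_i(\umod(P\theta_1)_\varphi,\umod(G\delta\theta_1)_\varphi)\;\oleq\;\hoint_i(\umod(P\theta_1)_\varphi,\umod(A_r\delta\theta_1)_\varphi)=\sem{\ipresgoal{i}{P\theta_1}{A'\theta_1}}{\varphi},
\]
which is exactly what is needed on the selected state. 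The \textbf{axiom} case is the specialization where the clause has empty body, which is handled by the same argument applied to the implicit $\top\hoe A_r$, giving the value $\top_\Omega$ via \eqref{uatopgoal}. For \textbf{generic}, we combine monotonicity of $\hoint$ across levels \eqref{umonotone} with \eqref{uageneric}: the infimum taken over $\dee_{i+1,\alpha}$ in $\sem{\ipresgoal{i+1}{P}{\forall x.G}}{\varphi}$ includes the summand at $d=\umod(c)_\varphi$, namely $\sem{\ipresgoal{i+1}{P}{G[c/x]}}{\varphi}$, so that $\sem{\ipresgoal{i+1}{P}{G[c/x]}}{\varphi}\oleq\sem{\ipresgoal{i}{P}{\forall x.G}}{\varphi}$.

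The main obstacle is the backchain case, whose correctness rests entirely on Lemma~\ref{lem:preservation} and the interaction between the pre-ordered monoid structure on $\deeip$ and the valuation $\hoint$ via conditions \eqref{uainclusion} and \eqref{uabackchain}; a secondary subtlety is the \textbf{generic} case, where soundness only goes through because valuations were required to be \emph{monotone across levels} in \eqref{umonotone}, so that the level shift $i\rightsquigarrow i+1$ is compatible with passage to the infimum over a larger domain.
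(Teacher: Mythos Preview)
Your overall strategy (one-step soundness plus chaining via the substitution lemma) is fine and essentially equivalent to the paper's direct induction on the tail of the derivation. The \textbf{backchain} case is handled correctly, exactly as the paper does it via Lemma~\ref{lem:preservation} and condition~\eqref{uabackchain}.

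However, there is a genuine gap in the \textbf{generic} case. You need
\[
\sem{\ipresgoal{i+1}{P}{G[c/x]}}{\varphi}\;\oleq\;\sem{\ipresgoal{i}{P}{\forall x.G}}{\varphi}
= \bigwedge_{d\in\dee_{i,\alpha}} \hoint_i\bigl(\umod(P)_\varphi,\app(\umod(\lambda x.G)_\varphi,d)\bigr),
\]
but the fact that the infimum ``includes the summand at $d=\umod(c)_\varphi$'' only gives you the reverse inequality (any infimum is $\oleq$ each of its terms), and \eqref{umonotone} also points the wrong way: it yields $\sem{\ipresgoal{i}{P}{\forall x.G}}{\varphi}\oleq\sem{\ipresgoal{i+1}{P}{\forall x.G}}{\varphi}$, which does not help. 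In a fixed uniform algebra, $\umod(c)_\varphi=\const(c)$ is a single fixed element of $\dee_{i+1,\alpha}$, so there is no reason the value at that one point should dominate the meet over all of $\dee_{i,\alpha}$.

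The paper closes this gap by exploiting the \emph{freshness} of $c$: for each $d\in\dee_{i,\alpha}$ it passes to a modified algebra $\cat{A}^d$ in which $\const(c)$ is redefined to be (the inclusion of) $d$. Because $c$ does not occur in $\mfaii$, $\mfa$, $\mfb$, or $P$, their interpretations are unchanged in $\cat{A}^d$, so the induction hypothesis applied in $\cat{A}^d$ gives
\[
\sem{\mfaii}{\varphi}\;\oleq\;\sem{\mfa\theta}{\varphi}\owedge
\hoint_i\bigl(\umod(P\theta)_\varphi,\app(\umod(\lambda x.G\theta)_\varphi,d)\bigr)\owedge\sem{\mfb\theta}{\varphi}
\]
for every $d$, and then one takes the meet over $d$ to obtain the $\forall$-value via \eqref{uageneric}. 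This varying-the-algebra trick is the missing idea; without it the generic case does not go through.
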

The proof is by induction on the length of the given resolution.
Choose a uniform algebra $\cat{A}=\ang{\fd,\umod,\Omega,\hoint}$, where
$\fd$ is the uniform applicative structure
$\ang{\dee, \app, \const, \inc,\pmonoid}$.
Let $\varphi$ be a \dee-assignment.

\begin{proof}
There is nothing to show in the case of a proof of length 0.
Let us inductively assume  that for all deductions of length less that
some natural number $n>0$, the conclusion of the theorem holds.
Now suppose the given deduction
has length $n$. We consider all
possible first steps.

\case{true}
Then the deduction has the following shape:
\[
\midsrest{\mfa\otimes\ipresgoal{i}{P}{G}\otimes\mfb}
{\gamma}
{\mfa\gamma\otimes\ipresgoal{i}{P\gamma}{\top}\otimes\mfb\gamma}
{\theta'}
{\mfaii}
\]
where $G\gamma = \top$ and $\gamma\theta' = \theta$ when restricted
to the free variables in the initial state vector.
By induction hypothesis, applied to the deduction starting from the
second state vector, we have
\[
\sem{\mfaii}{\varphi}\oleq\sem{\mfa\gamma\theta'\otimes\ipresgoal{i}
{P\gamma\theta'}{\top}\otimes\mfb\gamma\theta'}{\varphi} = \sem{\mfa\theta\otimes\ipresgoal{i}
{P\theta}{\top}\otimes\mfb\theta}{\varphi} .
\]
Since $\top$ is (the normal form of) $G\gamma\theta'$ this is precisely what the theorem claims.

\case{backchain}
We assume the proof has the following shape:
\[
\midsrest{\mfa\otimes\ipresgoal{i}{P}{A}\otimes\mfb}
{\gamma}
{\mfa\gamma\otimes\ipresgoal{i}{P\gamma}{G\delta\gamma}\otimes\mfb\gamma}
{\theta'}
{\mfaii}
\]
where $\forall \vec x(G\hoe A_r)$ is a clause in $\elab(P)$, $\delta=\{w_1/x_1, \ldots, w_n/x_n\}$ is the renaming-apart substitution, $A_r \delta \gamma = A \gamma$ and $\gamma\theta'=\theta$ when
restricted to the free variables in the initial state vector. By induction hypothesis
\[
\sem{\mfaii}{\varphi}\oleq
\sem{\mfa\gamma\theta'\otimes\ipresgoal{i}{P\gamma\theta'}{G\delta\gamma\theta'}
\otimes\mfb\gamma\theta'}{\varphi} \enspace .
\]
Now
\[
  \sem{\mfa\gamma\theta'\otimes\ipresgoal{i}{P\gamma\theta'}{G\delta\gamma\theta'}
  \otimes\mfb\gamma\theta'}{\varphi}  =
\sem{\mfa\theta}{\varphi}\wedge_\Omega
\thoint{i}{\sem{P\gamma\theta'}{\varphi}}{\sem{G\delta\gamma\theta'}{\varphi}}
\wedge_\Omega
\sem{\mfb\theta}{\varphi} \enspace .
\]
By Lemma~\ref{lem:preservation}
$\umod(G\delta\gamma\theta'\hoe A_r\delta\gamma\theta')_{\varphi} \sqq_{\pee} 
 \umod(P\gamma\theta')_{\varphi}$, and by
condition \eqref{uabackchain} of the Definition~\ref{interp}
of a valuation, this implies
\[
\thoint{i}{\sem{P\gamma\theta'}{\varphi}}{\sem{G\delta\gamma\theta'}{\varphi}} \oleq
\thoint{i}{\sem{P\gamma\theta'}{\varphi}}{\sem{A_r\delta\gamma\theta'}{\varphi}}
=  \thoint{i}{\sem{P\theta}{\varphi}}{\sem{A_r\delta\gamma\theta'}{\varphi}} \enspace.
\]
From this, and the fact that $A_r\delta\gamma\theta'=A\gamma\theta'=A\theta$,  we conclude
\[
\sem{\mfaii}{\varphi}\oleq
\sem{\mfa\theta\otimes\ipresgoal{i}{P\theta}{A\theta}
\otimes\mfb\theta}{\varphi}
\]
which is what we wanted to show.

\case{and}
Suppose the first step is an instance of the {\bf and}-rule:
\[
\midsrest{\mfa\otimes\ipresgoal{i}{P}{G_1\wedge G_2}\otimes\mfb}
{\wedge}
{\mfa\otimes\ipresgoal{i}{P}{G_1}\otimes\ipresgoal{i}{P}{G_2}\otimes\mfb}
{\theta}
{\mfaii}.
\]
By the induction hypothesis
\begin{eqnarray*}
\sem{\mfaii}{\varphi}&\oleq &
\sem{\mfa\theta\otimes\ipresgoal{i}{P\theta}{G_1\theta}\otimes
\ipresgoal{i}{P}{G_2\theta}\otimes\mfb\theta}{\varphi}\\
&=&\sem{\mfa\theta}{\varphi}\wedge_\Omega
\thoint{i}{\sem{P\theta}{\varphi}}{\sem{G_1\theta}{\varphi}}\wedge_\Omega
\thoint{i}{\sem{P\theta}{\varphi}}{\sem{G_2\theta}{\varphi}}\wedge_\Omega
\sem{\mfb\theta}{\varphi}
\end{eqnarray*}
Since
\[
\thoint{i}{\sem{P\theta}{\varphi}}{\sem{(G_1\wedge G_2)\theta}{\varphi}} =
\thoint{i}{\sem{P\theta}{\varphi}}{\sem{G_1\theta}{\varphi}}\wedge_\Omega
\thoint{i}{\sem{P\theta}{\varphi}}{\sem{G_2\theta}{\varphi}},
\]
we have
\[
\sem{\mfaii}{\varphi}\oleq
\sem{\mfa\theta\otimes\ipresgoal{i}{P\theta}{(G_1\wedge G_2)\theta}\otimes
\mfb\theta}{\varphi},
\]
as we wanted to show.

\case{or}
Suppose the first step is an instance of the {\bf or}-rule:
\[
\midsrest{\mfa\otimes\ipresgoal{i}{P}{G_1\vee G_2}\otimes\mfb}
{\vee}
{\mfa\otimes\ipresgoal{i}{P}{G_k}\otimes\mfb}
{\theta}
{\mfaii}.
\]
for $k \in \{1,2\}$. By the induction hypothesis
\begin{eqnarray*}
  \sem{\mfaii}{\varphi}&\oleq &
  \sem{\mfa\theta\otimes\ipresgoal{i}{P\theta}{G_k\theta}\otimes
  \mfb\theta}{\varphi}\\
  &=&\sem{\mfa\theta}{\varphi}\wedge_\Omega
  \thoint{i}{\sem{P\theta}{\varphi}}{\sem{G_k\theta}{\varphi}}\wedge_\Omega
  \sem{\mfb\theta}{\varphi}
  \end{eqnarray*}
  Since
  \begin{eqnarray*}
    \thoint{i}{\sem{P\theta}{\varphi}}{\sem{G_k\theta}{\varphi}} &\leq& 
    \thoint{i}{\sem{P\theta}{\varphi}}{\sem{G_1\theta}{\varphi}}\vee_\Omega
    \thoint{i}{\sem{P\theta}{\varphi}}{\sem{G_2\theta}{\varphi}} \\
    &=&  \thoint{i}{\sem{P\theta}{\varphi}}{\sem{(G_1\vee G_2)\theta}{\varphi}}
  \end{eqnarray*}
  we have
  \[
  \sem{\mfaii}{\varphi}\oleq
  \sem{\mfa\theta\otimes\ipresgoal{i}{P\theta}{(G_1\vee G_2)\theta}\otimes
  \mfb\theta}{\varphi} \enspace, 
  \]
  as we wanted to show.
  
  \case{null / augment} 
  The proof for these rules are similar to the previous cases and are left to the reader.

\case{instance}
Suppose the given proof starts with an {\bf instance} step:
\[
\midsrest{\mfa\otimes\ipresgoal{i}{P}{\exists x.G}\otimes\mfb}
{\exists}
{\mfa\otimes\ipresgoal{i}{P}{G[t/x]}\otimes\mfb}
{\theta'}
{\mfaii},
\]
where $\theta'$ is an extension, possibly by the identity, of the
substitution $\theta$ to include fresh variables occuring in $t$.
Thus
$\mfa\theta'=\mfa\theta, \mfb\theta'=\mfb\theta, P\theta'=P\theta$ and
$G[t/x]\theta'=G\theta[t\theta'/x]$. Letting $u=t\theta'$
and applying the inductive hypothesis to the deduction starting with the
second state, we have

\begin{eqnarray*}
 \sem{\mfaii}{\varphi}&\oleq &
\sem{\mfa\theta}{\varphi}\owedge
\thoint{i}{\sem{P\theta}{\varphi}}{\sem{G\theta[u/x]}{\varphi}}\owedge
 \sem{\mfb\theta}{\varphi}\\
&\oleq&
\sem{\mfa\theta}{\varphi}\owedge
\thoint{i}{\sem{P\theta}{\varphi}}{\sem{\exists x.(G\theta)}{\varphi}}\owedge
 \sem{\mfb\theta}{\varphi}\\
&=&
\sem{\mfa\theta}{\varphi}\owedge
\thoint{i}{\sem{P\theta}{\varphi}}{\sem{(\exists x.G)\theta}{\varphi}}\owedge
 \sem{\mfb\theta}{\varphi},
\end{eqnarray*}
which is what we wanted to show.

\case{generic}
Suppose the given proof starts with a {\bf generic} step:
\[
\midsrest{\mfa\otimes\ipresgoal{i}{P}{\forall x.G}\otimes\mfb}
{\forall}
{\mfa\otimes\ipresgoal{i+1}{P}{G[c/x]}\otimes\mfb}
{\theta}
{\mfaii}.
\]
Let $\alpha$ be the type of $x$ (and $c$).
Pick $d\in \dee_{i,\alpha}$ and define $\cat{A}^d$ to be the uniform algebra
$\ang{\fd^d,\umod,\Omega,\hoint}$ where
$\fd^d$ is the uniform applicative structure
$\ang{\dee, \app, \const^d, \inc,\pmonoid}$
with the same domains as $\fd$, but where $\const_{(i+1)\alpha}^d(c)=\inc_{(i+1)i\alpha}(d)$,
(which we will also denote by $d$)
and hence $\umod(c)_\varphi=d$.

Applying the induction hypothesis (and the substitution lemma)
to the uniform algebra $\cat{A}^d$
we have
\begin{eqnarray*}
 \sem{\mfaii}{\varphi}&\oleq &
\sem{\mfa\theta}{\varphi}\owedge
\thoint{i+1}{\sem{P\theta}{\varphi}}{\sem{G\theta[c/x]}{\varphi}}\owedge
 \sem{\mfb\theta}{\varphi}\\
&=&
\sem{\mfa\theta}{\varphi}\owedge
\thoint{i}{\sem{P\theta}{\varphi}}{\sem{G\theta}{\varphi[d/x]}}\owedge
 \sem{\mfb\theta}{\varphi}\\
&=&
\sem{\mfa\theta}{\varphi}\owedge
\thoint{i}{\sem{P\theta}{\varphi}}{{\sf App}(\sem{\lambda x.G\theta}{\varphi}, d)}\owedge
 \sem{\mfb\theta}{\varphi}.
\end{eqnarray*}

Since the choice of $d$ was arbitrary in $\dee_{i,\alpha}$, we may
take the meet of the right hand side over all $d\in \dee_{i,\alpha}$,
obtaining, by clause \eqref{uageneric} of Definition~\ref{interp}
\begin{eqnarray*}
 \sem{\mfaii}{\varphi}&\oleq &
\sem{\mfa\theta}\owedge
\thoint{i}{\sem{P\theta}{\varphi}}{\sem{\Pi_{\gee(\gee\alpha)}\lambda x.G\theta}{\varphi}}\owedge
 \sem{\mfb\theta}{\varphi}\\
&=&
\sem{\mfa\theta}\owedge
\thoint{i}{\sem{P\theta}{\varphi}}{\sem{\forall x.G\theta}{\varphi}}\owedge
 \sem{\mfb\theta}{\varphi},
\end{eqnarray*}
where we have acitly used the fact that
$\sem{\Pi_{\gee(\gee\alpha)}\lambda x.G\theta}{\varphi}=
{\sf App}(\overline{\Pi}_{\gee(\gee\alpha)},\sem{\lambda x.G\theta}{\varphi})$.
This completes the proof of the theorem.
\end{proof}




\section{Completeness}

We now establish completeness of resolution. First of all, we need to define the an auxiliary notation for the set of all instances of a program clause.

\begin{defn}[extension of a program]
  \label{def:extension}
  Given a program formula $D$ of level $i$  and $j \geq i$, we define the \emph{extension} of $D$ at level $j$ as
  \[
    \begin{split}
      [j; D] = \{ A_r\gamma \leftarrow G\gamma \mid&\ \forall x_1 \ldots x_n (G \hoe A_r) \in \elab(D), \gamma=\{t_1/x_1, \ldots, t_n/x_n \} , \\
      &\ t_1, \ldots, t_n \text{ are positive terms of level $j$ or less}  \} \enspace .
    \end{split}
  \]
  If $P$ is a program of level $i$ and $j \geq i$, we define
  \[
    [j; P] = \bigcup_{D \in P} [j; D] \enspace .
  \]
\end{defn}

The following lemma is a direct consequence of the definition of the extension of a program, and will be useful in some of the proofs.

\begin{lem}
  \label{lem:extension-ts}
  Consider the transition system $\to_i$ on program formulas (of level $i$ or less) given by the following transition rules:
  \begin{itemize}
    \item $D_1 \wedge D_2 \to_i D_k$ for $k \in \{1,2\}$;
    \item $\forall x_\alpha D \to_i D[t_\alpha/x_\alpha]$ for any positive term $t$ of level $i$.
  \end{itemize}
  Then, for any program formula $D$, we have that $Q \in [i; D]$ iff $D \to_i^* Q \not \to_i$.
\end{lem}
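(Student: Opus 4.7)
I would prove both directions by structural induction on the program formula $D$, carefully tracking how the elaboration and the transition system interact with binders and conjunction. The normal forms of $\to_i$ are precisely the program formulas that are neither conjunctions nor universals, i.e.\ rigid atoms $A_r$ or implications $G \hoe A_r$, which coincides exactly with the shape of members of $[i; D]$ (after instantiating away the leading $\forall$'s and dropping conjunctions via $\elab$).

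\medskip

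\emph{Direction} ($\Leftarrow$): if $D \to_i^\ast Q \not\to_i$ then $Q \in [i; D]$. In the base cases $D = A_r$ or $D = G \hoe A_r$, the formula $D$ is already a normal form, so $Q = D$, and taking the empty substitution $\gamma = \epsilon$ gives $Q \in [i; D]$ directly from the definition of $\elab$. For $D = D_1 \wedge D_2$, the first transition must be $D \to_i D_k$ for some $k \in \{1,2\}$, then $D_k \to_i^\ast Q$; by the inductive hypothesis $Q \in [i; D_k]$, and since $\elab(D_k) \subseteq \elab(D_1 \wedge D_2)$ we get $Q \in [i; D]$. For $D = \forall x D_0$, the first transition must be $\forall x D_0 \to_i D_0[t/x]$ for some positive term $t$ of level $\leq i$, then $D_0[t/x] \to_i^\ast Q$; by the inductive hypothesis $Q \in [i; D_0[t/x]]$, so $Q = A_r\gamma \leftarrow G\gamma$ for some clause $\forall \vec y (G \hoe A_r) \in \elab(D_0[t/x])$ and instantiation $\gamma$ of $\vec y$ by positive terms of level $\leq i$. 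Using Lemma~\ref{lem:elab-shift}, there is a clause $\forall x K' \in \elab(\forall x D_0)$ with $K'[t/x] = \forall \vec y (G \hoe A_r)$, say $K' = \forall \vec y (G' \hoe A'_r)$ with $G'[t/x]=G$ and $A'_r[t/x]=A_r$. Then the composite instantiation $\gamma' = \{t/x\} \cup \gamma$ witnesses $Q = A'_r\gamma' \leftarrow G'\gamma' \in [i; \forall x D_0]$.

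\medskip

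\emph{Direction} ($\Rightarrow$): if $Q \in [i; D]$ then $D \to_i^\ast Q \not\to_i$. Again by induction on $D$. The two base cases are immediate, since then $D$ itself is $\to_i$-normal and $[i; D]$ consists only of $D$ instantiated by the empty substitution. For $D = D_1 \wedge D_2$, we have $Q \in [i; D_1] \cup [i; D_2]$; say $Q \in [i; D_k]$; by the inductive hypothesis $D_k \to_i^\ast Q \not\to_i$, and prefixing the step $D_1 \wedge D_2 \to_i D_k$ gives the required derivation. For $D = \forall x D_0$, every clause in $\elab(\forall x D_0) = \{\forall x K' \mid K' \in \elab(D_0)\}$ is of the form $\forall x \forall \vec z (G' \hoe A'_r)$ (or without body). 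Writing $Q = A'_r\gamma' \leftarrow G'\gamma'$ with $\gamma' = \{t_0/x\} \cup \{t_j/z_j\}_j$, let $\gamma'' = \{t_j/z_j\}_j$; by Lemma~\ref{lem:clause-renaming}, $K'[t_0/x] \in \elab(D_0[t_0/x])$, and this clause instantiated by $\gamma''$ yields $Q$, so $Q \in [i; D_0[t_0/x]]$. By the inductive hypothesis, $D_0[t_0/x] \to_i^\ast Q \not\to_i$, and prepending $\forall x D_0 \to_i D_0[t_0/x]$ completes the derivation.

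\medskip

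\textbf{Main obstacle.} The only delicate point is the interaction between the bookkeeping of bound variables in \bvar\ (used inside clauses produced by $\elab$) and the substitution of positive terms in the transition $\forall x D \to_i D[t/x]$. The formal definition of $\elab(\forall x D)$ involves renaming the bound variable through \fvar, so the cleanest way to bridge the two directions is to route both through Lemma~\ref{lem:elab-shift} (for $\Leftarrow$) and Lemma~\ref{lem:clause-renaming} (for $\Rightarrow$), which is exactly what I do above; this keeps the argument uniform and side-steps the need to manipulate the \bvar/\fvar\ renaming explicitly.
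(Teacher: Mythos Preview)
Your proof is correct, but there are a couple of points worth noting in comparison with the paper's argument.

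First, a technical quibble: you call both directions ``structural induction on $D$,'' but in the $\forall$ case you apply the inductive hypothesis to $D_0[t/x]$, which is not a structural subterm of $\forall x D_0$. What you are really doing is induction on the complexity measure $\delta$ of Lemma~\ref{lem:well-order}, relying on the fact that $\delta(D_0[t/x]) = \delta(D_0) < \delta(\forall x D_0)$. This is perfectly sound, but should be stated as such.

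Second, the paper organizes the two directions differently. For $(\Rightarrow)$ it also inducts on the complexity of $D$, but in the $\forall$ case it applies the hypothesis to the \emph{unsubstituted} subformula $D'$ to obtain $D' \to_i^* Q'$, and then observes that $\to_i$ is stable under substitution, giving $D'[t_1/x_1] \to_i^* Q'[t_1/x_1] = Q$; you instead push the substitution in first via Lemma~\ref{lem:clause-renaming} and invoke the hypothesis on $D_0[t_0/x]$. For $(\Leftarrow)$ the paper inducts on the \emph{length of the $\to_i^*$ derivation} rather than on $D$; both schemes work, and both end up invoking Lemma~\ref{lem:elab-shift} in the $\forall$ case. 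Your route has the virtue of using a single induction principle throughout; the paper's route for $(\Leftarrow)$ avoids needing $\delta$-invariance under substitution. Neither is strictly simpler than the other.
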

\begin{proof}
  Assume $Q = G \hoe A  \in [i; D]$ and we prove $D \to_i^* Q \not \to_i$.  First of all, note that $Q \hoe A \not\to_i$ is immediate.  We know there exists $C = \forall \vec x(G_0 \hoe A_0) \in \elab(D)$ and $\gamma=\{t_1/x_1, \ldots, t_n/x_n\}$ such that $G=G_0\gamma$ and $A=A_0 \gamma$. We proceed by induction on the complexity of $D$.

  If $D = G' \hoe A'$, then it is $G_0 = G = G'$, $A_0 = A = A'$, hence $G \hoe A = D$, and the result follows by reflexivity of $\to_i^*$.
  If $D = D_1 \wedge D_2$, then it is  $C \in \elab(D_k)$ for $k \in \{1, 2\}$, hence $Q \in [i, D_k]$. By inductive hypothesis, $D_1 \wedge D_2 \to_i D_k \to_i^* Q$.
  If $D = \forall x D'$, then $C = \forall x C'$ with $C' \in \elab(D')$. If $\gamma'= \{t_2/x_2, \ldots, t_n/x_n\}$ we also have $Q' = G\gamma' \hoe A\gamma' \in [i; D']$. By inductive hypothesis, we have $D' \to_i^* Q'$. It is easy to check that $D'[t_1/x_1] \to_i^* Q'[t_1/x_1] = Q$, therefore we get the derivation $D \to D'[t_1/x_1] \to_i^* Q$.

  For the opposite equality, we proceed by induction on the length of the derivation of $D \to_i^* Q$. If this length is $0$, since $Q \not \to_i$, then $D = Q = G \hoe A$, and $Q \in [i; D]$ follows. Otherwise:
  \begin{itemize}
    \item if the derivation has the form $D_1 \wedge D_2 \to_i D_k \to_i^* Q$, then $Q \in [i; D_k]$ by inductive hypothesis. Therefore $Q = G\gamma \hoe A\gamma$ with $C = \forall \vec x(G \hoe A) \in \elab(D_k)$ and $\gamma=\{t_1/x_1, \ldots, t_n/x_n\}$. Obviously $C \in \elab(D_1 \wedge D_2)$, hence $Q \in [i; D_1 \wedge D_2]$.
    \item if the derivation has the form $\forall x D \to_i D[t/x] \to_i^* Q$, then $Q \in [i; D[t/x]]$ by inductive hypothesis. Therefore $Q= G\gamma \hoe A\gamma$ with $C = \forall \vec y(G \hoe A) \in \elab(D[t/x])$ and $\gamma=\{t_1/y_1, \ldots, t_n/y_n\}$. By Lemma~\ref{lem:elab-shift} there is $\forall x \forall \vec y C' \in \elab(\forall x. D)$ such that $C = C'[t/x]$, whence $Q \in [i; \forall x. D]$ with $\gamma = \{t/x, t_1/y_1, \ldots, t_n/y_n\}$. \qedhere
  \end{itemize}
\end{proof}

In what follows,
take $\Omega$ to be the Boolean algebra
$\{ \top, \bot \}$.\footnote{The reader should note
  that this does not make our semantics classical.  See
  Section~\ref{subsec:intuitionistic} for details.}
  In the next section we will consider broader classes of lattices
Consider the uniform applicative structure
(which we will call
the  {\bf term} structure)
$\tmod= \langle \dee, \app, \const, \inc,\pmonoid \rangle$
with $\dee_{i,\alpha}$ the set
of principal normal forms of \UCTT-terms of type $\alpha$ and level $i$ or less;
let $\app$ and $\const$ build the obvious terms and $\inc$ consist of
the appropriate inclusions.
The (commutative) pre-ordered monoid structure $\pmonoid$ is given as follows.
Let $P \sqq_\pee P'$ iff $[i ; P] \subseteq
  [i; P']$. The monoid and product operations are supplied by the
logical constants $\wedge_{\pee\pee\pee}$ and
$\Pi_{\pee{(\pee\alpha)}}$.
\gabox{IMPORTANT! We have a problem with $\top_\pee$, since there is no such element at
the synatactic level.}
The requirement on $\Pi$ in Definition~\ref{def:usappst},
namely that for any $f,d$
$\app(f,d) \sqq_\pee \app(\Pi_{\pee(\pee\alpha)},f)$ is easily checked,
since $\app(\Pi,f) = \Pi f$, which is beta-eta equivalent to
$\Pi\lambda x.fx$ and hence $\forall x.f x$
for any $x \in \bvar$. The requirement thus reduces to
  [$i;f d$]$\subseteq$ [$i;\forall x.f x$], which follows easily
from the definition of [$i;P$].

The quickest path to completeness is to show that,
with the term structure \tua,
the interpretation $I_{\vdash}$ given by
\[
  (I_{\vdash})_i(P,G)=\top_{\Omega} \iff
  \nullsres{\ipresgoal{i}{P}{G}}{id}
\]
gives rise to a genuine uniform algebra. Thus any closed
program-goal pair true in all uniform algebras is provable.
Since we want a bit more, namely how to define such a structure by
approximation from below, we will take a more circuitous approach.


We show how to define a higher-order analogue of the
least Herbrand Model, which validates only resolution-derivable
states, as the join of a countable directed set of
finitary partial models here called weak valuations. Weak valuations,
defined only on the term structure,
are closer to the proof theory than uniform algebras, and our main
result will
follow from showing that, in the limiting case, such valuations give
rise to a uniform algebra.

A {\bf weak valuation} for the uniform applicative structure $\tmod$
is an indexed family of functions
$\whoint=\{\whoint_i:i\in\nn\}$, where for each natural number $i$,
$\whoint_i: \deeip \times \deeig
  \rightarrow \Omega$  satisfying the two
  {\em monotonicity\/} conditions
\begin{eqnarray}
  \label{eq:moni}
  P \sqq_\pee Q & \wimp &\wthoint{i}{P}{G} \leq_\Omega
  \wthoint{i}{P}{G},\\
  \label{eq:monp}
  i \leq j  & \wimp &\wthoint{i}{P}{G} \leq_\Omega
  \wthoint{j}{P}{G}
\end{eqnarray}
as well as the conditions
\begin{eqnarray}
  \wthoint{i}{P}{\top} &=&\top_\Omega\\
  \wthoint{i}{P}{G_1\wedge G_2} & \leq_\Omega & \wthoint{i}{P}{G_1}\wedge_{\Omega} \wthoint{i}{P}{G_2}\\
  \wthoint{i}{P}{G_1\vee G_2} &  \leq_\Omega & \wthoint{i}{P}{G_1}\vee_{\Omega}
  \wthoint{i}{P}{G_2}\\
  \wthoint{i}{P}{Q \hoe G} &
  \leq_\Omega & \wthoint{i}{P\wedge Q}{G}\\
  \wthoint{i}{P}{\exists x_\alpha G } &\leq_\Omega & \bigvee
  \{\wthoint{i}{P}{G[t_\alpha/x_\alpha]} : t_\alpha\in\dee_{i,\alpha}\}
  \label{eq:vee}\\
  \wthoint{i}{P}{\forall x_\alpha G} & \leq_\Omega &
  \wthoint{i+1}{P}{G[c_\alpha/x_\alpha] } \quad
  \mbox{\small for all $c_\alpha\in \dee_{i+1,\alpha}\setminus
      \dee_{i,\alpha}$.}
\end{eqnarray}
Moreover, given $P \in \deeip$, $G \in \deeig$ and  fresh constants $c_\alpha, c'_\alpha \in \dee_{i+1,\alpha}$, the following invariant condition must hold:
\begin{equation}
  \label{eq:invariant}
  \wthoint{i+1}{P[c_\alpha/x_\alpha]}{G[c_\alpha/x_\alpha]} = \wthoint{i+1}{P[c'_\alpha/x_\alpha]}{G[c'_\alpha/x_\alpha]} \enspace .
\end{equation}
%

Note that we are using a more compact notation w.r.t.\@ the definition of uniform algebra (Definition~\ref{interp}), since we are only considering term structures here. In particular, we are compeltely omitting the $\inc$ operators, since they are just inclusions. 

We will call a weak interpretation {\em bounded\/} if it satisfies the additional condition

\begin{equation}
  \wthoint{i}{P}{A_r} \ \leq_\Omega \
  \bigvee \{\wthoint{i}{P}{G}: A_r\pmi G\in [i;P] \} \quad
  \mbox{\small for rigid atoms $A_r$ }. \label{eq:uwcptails}
\end{equation}

As we shall see below, the
only weak valuations we will need to prove the completeness theorem,
(the $T^n(I_{\bot})$ below) satisfy it automatically.

\begin{lem}
  If $\Omega$ is a lattice
  \gabox{I don't understand the point of this lemma and most of the next paragraph, since
    we do not work with complete lattice.
JL: Kleene's fixed point theorem only requires (1) T commutes with
$\omega$-chains, (2) $\Omega$ is a poset. But we need $\Omega$ chains
in the lattice of interpretations. Also, I think, $\bigvee,\bigwedge$
 \emph{parametrized} by terms. But this may be imprecise. Any
 countable set can be parametrized by some function on terms. So we
 probably mean monotone functions?
  }
  then the set ${\cal J}$ of  weak valuations form a lattice with the
  same suprema and infima and the same distributive laws
  with the partial order
  \[
    \whoint\sqq \whoint^\prime
    := (\forall i\in \nn)\  \whoint_i\sqq_i \whoint_i^\prime
  \]
  where the order $\sqq_i$ on the valuation fibers $\whoint_i$ is defined
  pointwise.
\end{lem}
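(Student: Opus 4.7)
The approach is to view $\cal J$ as a subset of the full function space $\Omega^{X}$ where $X = \nn \times \deeip \times \deeig$, which inherits a lattice structure (with the same distributive laws as $\Omega$) from pointwise $\wedge_\Omega$ and $\vee_\Omega$. Since the order on $\cal J$ in the statement is precisely the restriction of the pointwise order on $\Omega^X$, it suffices to show that pointwise joins and pointwise meets of two weak valuations are again weak valuations; the distributive laws then descend automatically from $\Omega$ by componentwise evaluation.

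First I would verify the ``cheap'' closure conditions: the equality $\whoint_i(P,\top)=\top_\Omega$ and the invariant under renaming of fresh constants are preserved by any pointwise binary lattice operation, while the two monotonicity conditions \eqref{eq:moni}, \eqref{eq:monp} are preserved because $\wedge_\Omega$ and $\vee_\Omega$ are themselves monotone. Next I would treat the connective inequalities for $\wedge$-goals, $\horseshoe$-goals, and $\forall$-goals. For joins, these follow from the associativity and commutativity of $\vee_\Omega$ applied termwise, combined with the defining inequalities for $\whoint$ and $\whoint'$. For meets of $\wedge$-goals, preservation reduces to the lattice identity $(a\wedge b)\wedge(c\wedge d) = (a\wedge c)\wedge(b\wedge d)$; for joins of $\vee$-goals, preservation uses the general inequality $(a\wedge b)\vee(a'\wedge b')\leq(a\vee a')\wedge(b\vee b')$ which holds in any lattice. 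The $\horseshoe$ and $\forall$ cases involve only a single right-hand bound and are preserved by both meets and joins directly.

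The quantifier case \eqref{eq:vee} is the most delicate. For the generic condition the right-hand side is a single value $\whoint_{i+1}(P,G[c/x])$ for each fresh $c$, so closure is straightforward for both meets and joins. For the existential condition the indexed join over witnesses $t\in\dee_{i,\alpha}$ commutes with pointwise $\vee_\Omega$, giving immediate closure under joins: $(\whoint\vee\whoint')_i(P,\exists x.G)\leq\bigvee_t\whoint_i(P,G[t/x])\vee\bigvee_t\whoint'_i(P,G[t/x])=\bigvee_t(\whoint\vee\whoint')_i(P,G[t/x])$. The main obstacle I expect is closure under meets for \eqref{eq:vee} and the analogous disjunction clause, since these require a distributive law linking binary $\wedge_\Omega$ with the indexed $\bigvee$; this is where the full distributivity hypothesis on $\Omega$ comes into play, and where in the intended application (with $\Omega$ the two-element Boolean lattice, see the following section) the estimate is immediate. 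Once all closure properties are established, the distributive laws asserted in the statement hold in $\cal J$ by pointwise evaluation, and the lemma is complete.
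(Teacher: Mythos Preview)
Your approach is far more thorough than the paper's, which disposes of the lemma in a single sentence (``since the order is built up pointwise, it is easily seen to inherit the structure of a complete lattice from $\Omega$'') without checking any of the weak-valuation clauses. So the comparison is: the paper asserts, you attempt to verify.

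However, your proposed resolution of the meet obstacle does not work. You write that closure under pointwise meets for the disjunction clause (and \eqref{eq:vee}) ``is where the full distributivity hypothesis on $\Omega$ comes into play'' and is ``immediate'' for $\Omega=\{\top,\bot\}$. But the required inequality is
\[
(a\vee b)\wedge(c\vee d)\ \leq_\Omega\ (a\wedge c)\vee(b\wedge d),
\]
and this fails already in the two-element Boolean algebra: take $a=d=\top$ and $b=c=\bot$, giving $\top\leq\bot$. Concretely, if $I$ validates $G_1$ but not $G_2$ and $I'$ validates $G_2$ but not $G_1$, both satisfy the disjunction clause at $G_1\vee G_2$, yet their pointwise meet assigns $\top$ to $G_1\vee G_2$ and $\bot$ to each $G_j$. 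So pointwise meets of weak valuations are \emph{not} in general weak valuations, and no amount of distributivity repairs this.

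This means the lemma's claim that $\mathcal{J}$ has ``the same infima'' as $\Omega$ is at best imprecise. What the paper actually uses downstream is only (i) the bottom element $I_\bot$, and (ii) pointwise suprema of increasing $\omega$-chains, both of which your argument handles correctly (joins commute with the right-hand sides of every clause). Kleene's fixed-point theorem needs nothing more. So your analysis is sound for everything the paper requires; the gap is only in the part of the lemma that the paper neither proves nor needs.
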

Since the the order (on the $\whoint_i$ and the
$\whoint$) is built up pointwise, it is easily
seen to inherit the structure of a complete lattice from $\Omega$.
In particular, the lattice has a bottom element
given by $\whoint_{\bot}(P,G) = \top_\Omega \mbox{ if } G= \top
  \mbox{ else } \bot_\Omega$. The top weak valuation maps all arguments
to $\top_{\Omega}$.


We now define an operators on weak valuations  and show it is 
continuous.
\begin{defn}
  \label{def:tp}
  Let $T:{\cal J}\imp {\cal J}$ be the following function:
  \begin{eqnarray*}
    T(I)_i(P, \top) &=&\top_\Omega\\
    T(I)_i(P,A) &=& I_i(P,A)\vee \bigvee \{\wthoint{i}{P}{G}: A \pmi G \in [i;P] \}
    \quad \mbox{\small $A$ atomic} \\
    T(I)_i(P,G_1\wedge G_2) &=& \wthoint{i}{P}{G_1} \wedge_\Omega
    \wthoint{i}{P}{G_2}\\
    T(I)_i(P,G_1\vee G_2) &=&   \wthoint{i}{P}{G_1}\vee_\Omega
    \wthoint{i}{P}{G_2}\\
    T(I)_i(P,Q\hoe G) &=& \wthoint{i}{P\wedge Q}{G}\\
    T(I)_i(P,\exists x_\alpha G) &=& \bigvee_{t_\alpha\in\tdee{i}{\alpha}}
    \wthoint{i}{P}{G[t_\alpha/x_\alpha]} \\
    T(I)_i(P,\forall x_\alpha G) &=&
    \wthoint{i+1}{P}{G[c_\alpha/x_\alpha]} \text{ \small for any $c_\alpha\in \dee_{i+1,\alpha}\setminus\dee_{i,\alpha}$}
  \end{eqnarray*}
\end{defn}
\noindent
Note that, in the last rule, the choice of $c_\alpha$ is irrelevant, thanks to \eqref{eq:invariant}.


\warn{Stronger treatment of flex goals?}

Some comments on this defintion are in order. First, the reader should
resist any temptation to see it as an inductive definition on formula
structure. {\em $T$ is defined separately for each kind of logical
    formula in terms of the pre-existing values of $I$, itself already
    defined on all goals.\/} Thus none of the
impredicativity problems discussed in the introduction
arise in this definition.

The first clause of the definition, which is the truth-value analogue
of the $T_P$ operator of Apt and Van Emden \cite{lloyd}, may be
reformulated in the following simpler and more familiar way
\[
  T(I)_i(P,A) \ = \  \bigvee \{\wthoint{i}{P}{G}: A\pmi G \in [i;P] \}
  \quad \mbox{\small for $A$ an atomic goal formula}.
\]
The additional disjunct in our definition is a technical convenience
that enforces the inflationary character of $T$ for all weak interpretations.
In the proof of the completeness theorem we will only consider
interpretations which are finite iterates of $T$ on the least
interpretation $I_{\bot}$, in which case the condition
\eqref{eq:uwcptails} is automatically satisfied and $T$ is trivially
inflationary.

Below we show that $T$ is well defined, mapping weak valuations
to weak valuations. Until then we will only assume it maps weak
valuations to $\Omega$-valued functions.

Note that once we have shown $T$ is well-defined, its monotonicity
as an operator on weak valuations
is immediate since the order on weak valuations is defined pointwise,
and $T$ is defined exclusively in terms of the lattice
operators on $\Omega$, all of them monotone.


\begin{lem}
  \label{lem:t-inflationary}
  The operator $T$ is inflationary: $I\sqq T(I)$ for any weak
  valuation $I$. It is
  well-defined on ${\cal J}$, that is to say it takes  weak
  valuations to  weak valuations, and it is
  continuous: if $\{I_n:n\in\nat\}$ is an increasing sequence of
  weak valuations, then
  \[
    T\left(\bigsqcup_{n}I_n\right) = \bigsqcup_{n}T(I_n).
  \]
\end{lem}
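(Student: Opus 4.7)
The three claims -- inflationary, well-defined, continuous -- are largely independent, so I would address them in that order, reusing inflationarity as a stepping stone for well-definedness.

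For the inflationary claim $I \sqq T(I)$, I would proceed by cases on the form of the goal in $T(I)_i(P,G)$. The case $G = \top$ is immediate since both sides equal $\top_\Omega$. The atomic case is built into the definition: the first disjunct of $T(I)_i(P,A)$ is literally $I_i(P,A)$. For the remaining six cases the inflationarity of $T$ is really just a reformulation of the inequalities that characterize a weak valuation: for instance $I_i(P, G_1 \wedge G_2) \leq_\Omega I_i(P,G_1) \wedge_\Omega I_i(P,G_2) = T(I)_i(P, G_1\wedge G_2)$, and analogously for $\vee$, $\hoe$, $\exists$, and $\forall$ (the last one using condition \eqref{eq:invariant} to justify the choice-independence of the fresh constant).

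For well-definedness, I need to verify that $T(I)$ itself satisfies every axiom in the definition of a weak valuation. The two monotonicity conditions \eqref{eq:moni} and \eqref{eq:monp} must be checked case-by-case on the goal, invoking the corresponding monotonicity property of $I$ together with monotonicity of the lattice operations in $\Omega$; for the atomic case one additionally uses that $P \sqq_\pee Q$ entails $[i;P] \subseteq [i;Q]$, so that the join in $T(I)_i(P,A)$ is taken over a smaller index set. The axioms bounding $T(I)_i(P, G_1 \wedge G_2)$, $T(I)_i(P, G_1 \vee G_2)$, $T(I)_i(P, Q \hoe G)$, $T(I)_i(P, \exists x.G)$, and $T(I)_i(P, \forall x.G)$ follow after substituting the defining equation of $T$ and combining it with the inflationary inequality $I \sqq T(I)$ already established: for example $T(I)_i(P, G_1 \wedge G_2) = I_i(P,G_1) \wedge_\Omega I_i(P,G_2) \leq_\Omega T(I)_i(P,G_1) \wedge_\Omega T(I)_i(P,G_2)$. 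The invariant condition \eqref{eq:invariant} for $T(I)$ reduces to the corresponding invariant on $I$ (plus, in the atomic case, the fact that $[i; P[c/x]]$ and $[i; P[c'/x]]$ are in bijection via exchanging $c$ and $c'$).

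For continuity on an increasing sequence $\{I_n\}$, I would observe that joins commute with joins, so the atomic clause, the $\vee$ clause, and the $\exists$ clause go through directly. The clauses for $\hoe$, $\forall$, and $\top$ are pointwise (no lattice operation is performed beyond an index shift), so they are immediate. The main obstacle, and the only clause that is not purely formal, is the $\wedge$ clause, which requires binary meets to distribute over the $\omega$-join $\bigsqcup_n I_n$. This is where the choice $\Omega = \{\top_\Omega, \bot_\Omega\}$ pays off: in a two-element Boolean algebra binary meet trivially commutes with arbitrary joins. More generally I would record that it suffices for $\Omega$ to be a frame, foreshadowing the broader classes of lattices considered in the next section. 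With continuity in hand, the standard Kleene fixed-point construction will then produce a least weak valuation above $I_\bot$ as $\bigsqcup_n T^n(I_\bot)$, which is the structure used in the completeness argument.
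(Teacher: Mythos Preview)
Your plan is correct and mirrors the paper's proof closely: inflationarity by case analysis on the goal (using the weak-valuation inequalities), well-definedness by combining the inflationary bound with monotonicity of the lattice operations and the bijection $[i;P[c/x]] \leftrightarrow [i;P[c'/x]]$ for the invariant condition, and continuity by commuting joins and invoking $\wedge\bigvee$-distributivity for the conjunction clause. The only minor difference is that the paper does not specialize to $\Omega=\{\top,\bot\}$ for the $\wedge$ clause but instead records explicitly (as you also note) that for increasing sequences $k_n,l_n$ one has $\bigvee_n k_n \wedge \bigvee_m l_m = \bigvee_n(k_n\wedge l_n)$ from $\wedge\bigvee$-distributivity alone.
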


\begin{proof}~
  \paragraph*{$T$ is inflationary}
  We begin by showing $T$ is inflationary, just as a function from weak
  valuations to $\Omega$-valued operations on program-goal pairs.
  This must be shown
  pointwise: for any $i,P,G$ we have
  $\wthoint{i}{P}{G}\leq_{\Omega} T(I)_i(P,G)$.
  We consider some of the various cases  for $G$:

  \case{$G$ is an atom $A$}
  This case is immediate, since the value of $T$ is defined as the
  supremum of the value of $I$ and another expression, as discussed
  above.

  %
  \case{$G_1\wedge G_2$} By assumption
  $\wthoint{i}{P}{G_1\wedge G_2}\leq_{\Omega} \wthoint{i}{P}{G_1}\wedge_{\Omega}
    \wthoint{i}{P}{G_2}$, which coincides with $T(I)_i(P,G_1\wedge G_2)$.

  \case{other cases} The remaining cases are similar and left to the reader.

  \paragraph*{$T$ is well defined}
  Now to show that $T$ is well defined, we must show that if $I$ is a
  weak valuation, then $T(I)$ is too.

  The monotonicity properties~(\ref{eq:moni},~\ref{eq:monp}) follow easily
  from the monotonicity of operators in the clauses of the definition of
  $T$, and the fact that the supremum in the atomic case of the
  definition of $T$ is taken over a bigger set of clauses for bigger
  programs.

  The logical cases are straightforward consequences of
  the inflationary character of $T$.
  For example,
  $T(I)_i(P,G_1\wedge G_2) = \wthoint{i}{P}{G_1}\wedge_{\Omega}
    \wthoint{i}{P}{G_2}$.
  Since $T$ is inflationary, the value of the right hand side is less
  than or equal to $T(I)_i(P,G_1)\wedge_{\Omega}T(I)_i(P,G_2)$.
  In the universal case,
  $T(I)_{i}(P,\forall x G) = \wthoint{i+1}{P}{G[c/x]} \leq
    \wthoint{i+2}{P}{G[c/x]}$. By the invariant condition \eqref{eq:invariant}, 
    $\wthoint{i+2}{P}{G[c/x]} =  \wthoint{i+2}{P}{G[c'/x]}$ for any constant $c'$ of level $i+1$, 
    and finally  $\wthoint{i+2}{P}{G[c'/x]} =  T(I)_{i+1}(P,\forall x G)$.
  The remaining cases involving logical constants similarly follow from the fact that $T$ is
  inflationary.

  For the invariant condition \eqref{eq:invariant}, the only non-trivial case is the atomic one. Given programs $\bar P \in \deeip$ and atom $\bar A \in \deeia$, let us denote $\bar P[c_\alpha/x_\alpha]$ and $\bar A[c_\alpha/x_\alpha]$ by $P$ and $A$ respectively, while $\bar P[c'_\alpha/x_\alpha]$ and $\bar A[c'_\alpha/x_\alpha]$ will be denoted by $P'$ and $A'$ respectively. It is enough to prove that $T(I)_{i+1}(P, A) \leq T(I)_{i+1}(P', A')$. We have that:
  \begin{equation*}
    T(I)_{i+1}(P, A) = I_{i+1}(P, A) \vee  \bigvee \{ I_{i+1}(P, G): A \pmi  G \in [i+1; P] \}
  \end{equation*}
  First, note that $I_{i+1}(P, A) = I_{i+1}(P', A')$. Consider now the clause $A \pmi G \in [i+1; P]$. We would like to write $G$ as $\bar G[c_\alpha/x_\alpha]$ for some goal $\bar G$, in order to use the invariant property of $I$, but this is not possible since $G$ might contain $x_\alpha$. Therefore, let $\hat x_\alpha$ be a fresh variable of type $\alpha$ and same level as $x_\alpha$, not occurring in $A \pmi G$ nor in $P$. Let $\hat A$, $\hat G$ and $\hat P$ be obtained by replacing $c_\alpha$ in $A$, $G$ and $P$ with $\hat x_\alpha$, in such a way that $A = \hat A[c_\alpha/\hat x_\alpha]$, $G = \hat G[c_\alpha/\hat x_\alpha]$ and $P =  \hat P[c_\alpha/\hat x_\alpha]$. Note also that $A' = \hat A[c'_\alpha/\hat x_\alpha]$ and $P'  =  \hat P[c'_\alpha/\hat x_\alpha]$.

  By Definition~\ref{def:extension}, there exists a clause $\forall x (G_0 \hoe A_0) \in \elab(P)$ and a substitution $\gamma = \{t_1/x_1, \ldots, t_n/x_n\}$ of level $i+1$ or less such that $G = G_0\gamma$ and $A = A_0\gamma$. Consider the constant replacer $\xi$ mapping $c_\alpha$ to $c'_\alpha$. By Lemma~\ref{lem:cren-ep} we have that there is a clause $\forall \vec x (G_0 \xi \hoe  A_0 \xi) \in \elab(P \xi) = \elab(P')$ and again by Definition~\ref{def:extension}, $A_0 \xi (\gamma \xi) \pmi G_0 \xi (\gamma \xi)  \in [i + 1; P']$. Note that $G_0 \xi (\gamma \xi) = G_0 \gamma \xi = G \xi = \hat G[c_\alpha/\hat x_\alpha]\xi = \hat G[c'_\alpha/\hat x_\alpha]$ and  $A_0 \xi (\gamma \xi)  = A_0 \gamma \xi = A \xi = \hat A[c_\alpha/\hat x_\alpha]\xi = \hat A [c'_\alpha/\hat x_\alpha]$, which is also equal to $A'$.

  Therefore, in the right side of $T(I)_{i+1}(P', A')$ the term $I_{i+1}(\hat P[c'_\alpha/\hat x_\alpha], \hat G[c'_\alpha/\hat x_\alpha])$ appears. By hypothesis, this is equal to the term  $I_{i+1}(\hat P[c_\alpha/\hat x_\alpha], \hat G[c_\alpha/\hat x_\alpha]) = I_{i+1}(P, G)$. Therefore, each disjunct in $T(I)_{i+1}(P, A)$ has a corresponding disjunct in
  $T(I)_{i+1}(P', A')$, proving $T(I)_{i+1}(P, A) \leq T(I)_{i+1}(P', A')$.

  \paragraph*{$T$ is continuous}
  The proof of continuity reduces to applying various distributivity
  properties,  the assumed $\wedge\bigvee$
  distributive law,
  and the fact that suprema commute with suprema. More
  precisely, we make use of the facts that in any lattice with enough
  suprema
  \begin{itemize}
    \item $\bigvee_i\bigvee_j m_{ij} = \bigvee_j\bigvee_i m_{ij}$.
    \item If $k_n$ and $l_n$ are possible infinite increasing sequences in $\Omega$
          then $\bigvee_n k_n \wedge \bigvee_m l_m = \bigvee_n (k_n \wedge
            l_n)$. This follows  just from  $\wedge\bigvee$ distributivity
          in $\Omega$ and the fact that for any $m,n$ the meet
          $k_n\wedge l_m$ is bounded by $
            (k_m\vee k_n)\wedge (l_m\vee l_n)$ which is $k_{max(m,n)}\wedge
            l_{max(m,n)}$.
  \end{itemize}
  We consider the various cases.

  \case{Atomic case}
  \begin{eqnarray*}
    \Big(\bigsqcup_{n}T(I_n)\Big)_i(P,A) &=& \bigvee_n
    T(I_n)_i(P,A)\\
    &=& \bigvee_n \big[ (I_n)_i(P,A) \vee \bigvee
      \{(I_n)_{i}({P},{G}): A\pmi G \in [i;P] \} \big] \\
    &=& \Big(\bigvee_n (I_n)_i(P,A)\Big) \vee \bigvee
    \Big\{\bigvee_n(I_n)_{i}({P},{G}): A\pmi G \in [i;P] \Big\} \\
    &=& \Big(\bigsqcup_n I_n\Big)_i(P,A) \vee \bigvee
    \Big\{\Big(\bigsqcup_n(I_n)\Big)_{i}({P},{G}): A\pmi G \in [i;P] \Big\} \\
    &=& T\Big(\bigsqcup_{n}(I_n)\Big)_i(P,A)
  \end{eqnarray*}
  repeatedly using the commutation of joins with joins.

  \case{conjunction}
  \begin{eqnarray*}
    \Big(\bigsqcup_{n}T(I_n)\Big)_i(P,G_1\wedge G_2) &=& \bigvee_n
    T(I_n)_i(P,G_1\wedge G_2) \\
    &=& \bigvee_{n}[(I_n)_i(P,G_1)
      \wedge(I_n)_i(P,G_2)]\\
    &=& \bigvee_{n}(I_n)_i(P,G_1)
    \wedge
    \bigvee_{n}(I_n)_i(P,G_2)\\
    &=& \Big(\bigsqcup_{n}(I_n)\Big)_i(P,G_1)
    \wedge
    \Big(\bigsqcup_{n}(I_n)\Big)_i(P,G_2)\\
    &=& T\Big(\bigsqcup_{n}(I_n)\Big)_i(P,G_1\wedge G_2).
  \end{eqnarray*}

  \case{$\exists$ case}
  \begin{eqnarray*}
    \Big(\bigsqcup_{n}T(I_n))_i(P,\exists x_\alpha G) &=& \bigvee_n
    T(I_n)_i(P,\exists x_\alpha G)\\
    &=& \bigvee_n \bigvee_{t_\alpha\in\tdee{i}{\alpha}}
    (I_n)_i(P,G[t_\alpha/x_\alpha])\\
    &=&  \bigvee_{t_\alpha\in \tdee{i}{\alpha}}
    \bigvee_n (I_n)_i(P,G[t_\alpha/x_\alpha])\\
    &=&  \bigvee_{t_\alpha\in \tdee{i}{\alpha}}\Big(\bigsqcup
    I_n\Big)_i(P,G[t_\alpha/x_\alpha])\\
    &=&  T\Big(\bigsqcup_{n}I_n\Big)_i(P,\exists x_\alpha G)
  \end{eqnarray*}

  \case{$\forall$ case}
  \begin{eqnarray*}
    \Big(\bigsqcup_{n}T(I_n)\Big)_i(P,\forall x_\alpha G) &=& \bigvee_n
    T(I_n)_i(P,\forall x_\alpha G)\\
    &=& \bigvee_n (I_n)_{i+1}(P,G[c_\alpha/x_\alpha])\\
    &=& \Big(\bigsqcup_n (I_n)\Big)_{i+1}(P,G[c_\alpha/x_\alpha])\\
    &=&  T\Big(\bigsqcup_{n}(I_n)\Big)_i(P,\forall x_\alpha G)
  \end{eqnarray*}
  \noindent
  where it is assumed that
  we always take the same generic witness $c_\alpha$ for all $n$.
\end{proof}

By Kleene's fixed point theorem, $T$ has a least fixed point
$T^\infty(\whobot)$, which we will
call $\ifix$, and it is the supremum of
the finite iterates of $T$
applied to the least weak valuation
$\hoint_\bot,T(\hoint_\bot),T^2(\hoint_\bot),\ldots$.

%
%
%

%
%

We now want to show that those program-goal pairs mapped to
$\top_{\Omega}$ by $\ifix$, the least
fixed point of $T$, are
precisely those provable by resolution with the identity
substitution. This will give us completeness of uniform algebras once
we show $\ifix$ is a genuine valuation, and not just a weak one.
Because of the different way weak valuations
treat universal goals,
this is not so straightforward, and
first requires showing directly the soundness of $\ifix$.

We need to show first that one property of genuine valuations is
shared by $\ifix$, namely \eqref{uabackchain}.

\begin{lem}
\label{lem:tp-clause}
  If $G \hoe A_r \sqq_\pee P$ then $\ifix_i(P,G) \leq_{\Omega} \ifix_i(P,A_r)$.
  \gabox{Why put this lemma here instead of inside (or just before of) Lemma~\ref{lem:istar-is-int}?}
\end{lem}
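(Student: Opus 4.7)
The plan is to unfold the definition of $\sqq_\pee$ on the term structure and then invoke the atomic clause in the definition of $T$, using the fact that $\ifix$ is a fixed point of $T$.

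First I would observe that in the term uniform applicative structure $\tua$, the relation $\sqq_\pee$ at level $i$ is defined by $P \sqq_\pee P'$ iff $[i;P]\subseteq [i;P']$. Since the program formula $G\hoe A_r$ contains no outermost universal quantifiers and no conjunctions, Lemma~\ref{lem:extension-ts} (or a direct application of Definition~\ref{def:extension}) gives $[i; G\hoe A_r] = \{G\hoe A_r\}$. Hence the hypothesis $G\hoe A_r \sqq_\pee P$ is equivalent to $A_r \pmi G \in [i;P]$.

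Next I would use that $\ifix$ is a fixed point of $T$, so in particular
\[
\ifix_i(P, A_r) = T(\ifix)_i(P, A_r) = \ifix_i(P,A_r) \; \vee \; \bigvee\{\ifix_i(P,G'): A_r\pmi G' \in [i;P]\}.
\]
Since $A_r \pmi G \in [i;P]$, the term $\ifix_i(P,G)$ appears as one of the disjuncts on the right-hand side, and so
\[
\ifix_i(P,G) \;\leq_\Omega\; \ifix_i(P,A_r),
\]
which is the desired inequality.

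The argument is almost entirely mechanical once the definition of $\sqq_\pee$ is spelled out; no real obstacle is expected. The only subtle point is to be clear that the extension $[i;G\hoe A_r]$ really is the singleton $\{G\hoe A_r\}$, which rests on the fact that clauses of the form $G\hoe A_r$ (as opposed to $\forall\vec x(G\hoe A_r)$ or $D_1\wedge D_2$) carry no bound variables or conjuncts and so admit no nontrivial instantiations under $\elab$. Once that is noted, the proof reduces to the single observation that the atomic clause of Definition~\ref{def:tp} takes a supremum over precisely the goals $G$ such that $A_r\pmi G\in [i;P]$.
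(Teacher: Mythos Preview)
Your proof is correct and rests on the same key observation as the paper's: unfolding $\sqq_\pee$ to get $A_r\pmi G\in [i;P]$, and then noting that the atomic clause of $T$ takes a supremum including $\ifix_i(P,G)$. The only difference is that you invoke the fixed-point equation $\ifix = T(\ifix)$ directly, whereas the paper works at each finite stage, showing $T^n(I_\bot)_i(P,G)\leq T^{n+1}(I_\bot)_i(P,A_r)$ and then passing to the supremum; your version is a little cleaner but the content is the same.
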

\begin{proof}
By definition, $G \hoe A_r \sqq_\pee P$ means $[i; G \hoe A_r] \subseteq [i; P]$ for each $i > 0$. Since $[i; G \hoe A_r] = \{ G \hoe A_r\}$, this is equivalent to $G \hoe A_r \in [i; P]$.

By Definition~\ref{def:tp} of $T$, for any natural number $n$ (and any
index $i$)
\[
T^{n+1}(\whoint_\bot)_i(P,A)  = T^{n}(\whoint_\bot)_i(P,A) \vee \bigvee_{A \pmi G' \in [i; P]}
T^n(\whoint_\bot)_i(P,G').
\]
Since $A \pmi G \in [i; P]$,
$T^n(\whoint_\bot)_i(P,G)$ is one of the terms
over which the supremum is being taken.
Therefore, $T^n(\whoint_\bot)_i(P,G)\leq
T^{n+1}(\whoint_\bot)_i(P,A)$. Taking suprema, (first on the right,
then on the left), we obtain
$
T^\infty(\whoint_\bot)_i(P,G)\leq
T^\infty(\whoint_\bot)_i(P,A)
$
as we wanted to show.
\end{proof}

\begin{lem}
  \label{lem:ifix_sound}
  Resolution in \UCTT is sound for the  interpretation $\ifix$ in the
  following sense: if $\mfa$ is a state vector and there is a resolution
  \[
    \nullsres{\mfa}{id} \enspace,
  \]
  then for each state $\ipresgoal{i}{P}{G}$ in $\mfa$,
  $\ifixx_i(P,G)=\top_\Omega$.
\end{lem}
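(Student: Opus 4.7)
The plan is to proceed by induction on the length of the given derivation, after first appealing to the Instantiation Lemma (Lemma~\ref{lem:weak-lifting}) to replace the derivation with a flat one of the same or shorter length. Flatness eliminates the \textbf{true} rule from consideration and guarantees that every \textbf{backchain}/\textbf{axiom} step has a safe substitution whose domain consists only of the fresh variables introduced by renaming apart bound clause variables. This lets us equate $P\gamma = P$, $\mfa_1\gamma = \mfa_1$, $\mfa_2\gamma = \mfa_2$, and $A'\delta\gamma = A'$ for states untouched by the substitution, streamlining the bookkeeping. The base case (empty derivation of $\NULL$) is trivial.

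For the inductive step, I would perform a case analysis on the first rule applied. In every case, all states other than the selected one are unchanged by the step, so the inductive hypothesis directly yields $\ifix_j(Q,H)=\top_\Omega$ for them. For the selected state $\ipresgoal{i}{P}{G}$, we combine the inductive hypothesis on the shorter tail with the key fact that $\ifix = T(\ifix)$, which turns each of the inequalities defining a weak valuation into equalities. Concretely:
\begin{itemize}
\item \textbf{null}: $\ifix_i(P,\top)=\top_\Omega$ by definition of $T$.
\item \textbf{and}, \textbf{or}, \textbf{augment}, \textbf{instance}, \textbf{generic}: each reduces immediately via the defining clause of $T$ and the corresponding inductive hypothesis; for example, in the \textbf{generic} case, $\ifix_i(P,\forall x G) = T(\ifix)_i(P,\forall x G) = \ifix_{i+1}(P, G[c/x]) = \top_\Omega$.
\item \textbf{backchain}: if $\forall \vec x(G \hoe A') \in \elab(P)$ with $A'\delta\gamma = A$, then by Definition~\ref{def:extension}, $A \pmi G\delta\gamma \in [i;P]$, so $\ifix_i(P,G\delta\gamma)$ is one of the disjuncts in $T(\ifix)_i(P,A) = \ifix_i(P,A)$, which must then equal $\top_\Omega$.
\item \textbf{axiom}: handled analogously, treating an atomic clause as the degenerate case $\top \hoe A$.
\end{itemize}

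The main obstacle is the \textbf{backchain} case, where one must check carefully that the substitution $\delta\gamma$ constructed by the resolution step produces a clause instance genuinely lying in $[i;P]$ --- i.e.\@ that $\delta\gamma$ is level-preserving and its range consists of positive terms of level at most $i$. This is where the assumption of flatness pays off: since $\delta$ sends the bound variables to fresh variables of level $\leq i$ and $\gamma$ in a flat derivation acts only on those fresh variables via a legal substitution, $\delta\gamma$ automatically satisfies the conditions of Definition~\ref{def:extension}. Given this, the conclusion $\ifix_i(P,A) = \top_\Omega$ follows immediately from Lemma~\ref{lem:tp-clause}-style reasoning (or directly from the atomic clause of $T$).
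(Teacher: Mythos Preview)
Your proposal is correct and follows the same overall strategy as the paper: induction on derivation length after flattening via the Instantiation Lemma, then case analysis on the first rule, with the \textbf{true} case vacuous and the \textbf{backchain} case handled by observing that $A \pmi G\delta\gamma \in [i;P]$.

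The one noteworthy difference is that you invoke the fixed-point equation $\ifix = T(\ifix)$ directly, whereas the paper instead expands $\ifix$ as $\bigvee_n T^{n+1}(I_\bot)$ and manipulates the approximants, which in the conjunction case forces an appeal to $\wedge\!\bigvee$-distributivity to push the join past the meet. Your route is slightly cleaner and sidesteps that distributivity step; the paper's route makes the finitary approximation structure more visible but is otherwise equivalent. Either way the argument goes through, and your check that $\delta\gamma$ lands in $[i;P]$ (via flatness and legality of substitutions) is exactly the point the paper relies on as well.
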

This proof, by induction on the length of the given resolution,
is quite similar to the soundness proof for
uniform algebras, in fact simpler in many of the
inductive cases since only the identity
substitution is involved, but with the added wrinkle that we must
manipulate the infinite joins that define $\ifix$.
Without loss of generality, thanks to the instantiation lemma,
we may assume that the resolution is flat.
\begin{proof}
  If the proof has length $0$, then $\mfa = \NULL$ and the result is trivially true. We now suppose $n>0$ and that
  the result holds for any deduction of length less than
  $n$. Assume the resolution $\nullsres{\mfa}{id}$ has length $n$. If the derivation has the form
  \[
    \mfa \otimes \ipresgoal{i}{P}{G} \otimes \mfb \derstep{id} \mfa \otimes \mfc \otimes \mfb \derivation{[id]} \NULL
  \]
  where $\ipresgoal{i}{P}{G}$ is the selected state, the inductive hypothesis automatically gives the theorem for all states in $\mfa$ and $\mfb$. Therefore, the only state which is interesting to examine is the selected state.  We now consider the possible cases arising in the first step.
  
  \case{null}
  Suppose the given resolution is of the form
  \[
    \mfa \otimes \ipresgoal{i}{P}{\top} \otimes \mfb \derstep{} \mfa \otimes \mfb \derivation{[id]} \NULL
  \]
  For the state  $\ipresgoal{i}{P}{\top}$, we have that $\ifixx_i(P, \top) = \top_\Omega$ by definition of weak evaluation.

  \case{true}
  The case for {\em true\/} resolution step is vacuous. Since all substitutions are identities in a flat derivation, the {\em true\/} step is not allowed.

  \case{backchain}
  Suppose the given resolution is of the form
  \[
    \mfa\otimes\ipresgoal{i}{P}{A}\otimes \mfb \derstep{\theta}
    \mfa\otimes\ipresgoal{i}{P}{G\delta\theta}\otimes \mfb \derivation{[id]} \NULL
  \]
  where $\forall x (G \hoe A_r) \in \elab(P)$, $\delta =\{w_1/x_1, \ldots, w_n/x_n\}$ is the renaming apart derivation and $\theta$ the unificator of $A$ and $A_r\delta$, which is the identitity on the variables of the initial state vector.
  By inductive hypothesis $\ifixx_i(P, G\delta\theta)= \top_\Omega$. By definition of the extension of a program $A \pmi G \delta \theta \in [i; P]$. Therefore
  \begin{eqnarray*}
    \ifixx_i(P, A)&=&\bigvee_n T^{n+1}(I_\bot)_i(P, A)\\
    &=& \bigvee_n \Big(T^{n}(I_\bot)_i(P, A) \vee \bigvee\{ T^{n}(I_\bot)_i(P, G') \mid A \pmi G' \in [i; P] \}\Big)\\
    &\geq & \bigvee_n  T^{n}(I_\bot)_i(P, G\delta\theta) \\
    &=&\ifixx_i(P,G\delta\theta)\\
    &=& \top_{\Omega},
  \end{eqnarray*}
  as we wanted to show.

  \case{and}
  Suppose the given deduction is of the form
  \[
    \midnullsrest{\mfa\otimes\ipresgoal{i}{P}{G_1\wedge G_2}\otimes\mfb}
    {\wedge}{\mfa\otimes\ipresgoal{i}{P}{G_1}\otimes
      \ipresgoal{i}{P}{G_2}\otimes\mfb}{[id]}
  \]
  By induction hypothesis $\ifixx_i(P,G_k)=\top_{\Omega}$ for every $k \in \{1,2\}$.
  Now for every natural number $n$ we have
  \[
    T^{n+1}(I_\bot)(P,G_1\wedge G_2) =
    T^n(I_\bot)_i(P,G_1)\wedge_{\Omega}
    T^n(I_\bot)_i(P,G_2).
  \]
  Taking first the supremum of the left hand side and then of the right,
  and using $\wedge\bigvee$-distributivity we have
  \[
    \ifixx_i(P,G_1\wedge G_2) \geq \ifixx_i(P,G_1)\wedge_{\Omega}
    \ifixx_i(P,G_1),
  \]
  from which we obtain $\ifixx_i(P,G_1\wedge G_2)=\top_{\Omega}$, as we
  wanted to show.

  \case{or / augment}
  The proofs for these rules are similar to the previous ones, and
  left to the reader.

  \case{instance}
  Suppose the given resolution is of the following form
  \[
    \midnullsrest{\mfa\otimes\ipresgoal{i}{P}{\exists xG}\otimes \mfb}{\exists}
    {\mfa\otimes\ipresgoal{i}{P}{G[t/x]}\otimes \mfb}{[id]} .
  \]
  By induction hypothesis $\ifixx_i(P,G[t/x])=\top_{\Omega}$,
  whence
  \begin{eqnarray*}
    \ifixx_i(P,\exists xG)&=&\bigvee_n T^{n+1}(I_\bot)_i(P,\exists x G)\\
    &=& \bigvee_n \bigvee_{u\in D_{i,\alpha}}T^n(I_\bot)_i(P,G[u/x])\\
    &\geq &\bigvee_n T^n(I_\bot)_i(P,G[t/x])\\
    &=&\ifixx_i(P,G[t/x])\\
    &=& \top_{\Omega},
  \end{eqnarray*}
  as we wanted to show.

  \case{generic}
  Suppose the given resolution is of the following form
  \[
    \midnullsrest{\mfa\otimes\ipresgoal{i}{P}{\forall xG}\otimes \mfb}{\exists}
    {\mfa\otimes\ipresgoal{i+1}{P}{G[c/x]}\otimes \mfb}{[id]}
  \]
  for a fresh constant $c$ of label $i$. By induction hypothesis $\ifixx_{i+1}(P,G[c/x])=\top_{\Omega}$, whence
  \begin{eqnarray*}
    \ifixx_i(P,\forall xG)&=&\bigvee_n T^{n+1}(I_\bot)_i(P,\forall x G)\\
    &=& \bigvee_n T^n(I_\bot)_{i+1}(P,G[c/x])\\
    &\geq&\ifixx_i(P,G[c/x])\\
    &=& \top_{\Omega},
  \end{eqnarray*}
  with a fixed canonical choice of the witness $c$ for all the iterates of $T^n$.

\end{proof}

\begin{thm}
  \label{thm:completeness}
  Suppose $\ifixx_i(P,G) =\top_{\Omega}$. Then
  there is a resolution
  \[
    \nullsres{\ipresgoal{i}{P}{G}}{id}.
  \]

\end{thm}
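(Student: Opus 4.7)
The plan is to reduce the claim to a statement about finite iterates of $T$ and then induct. Since $\Omega$ is the two-element Boolean algebra and $\ifix = \bigsqcup_n T^n(\whobot)$ by Lemma~\ref{lem:t-inflationary} together with Kleene's fixed point theorem, $\ifixx_i(P,G) = \top_\Omega$ if and only if $T^n(\whobot)_i(P,G) = \top_\Omega$ for some $n \in \nn$. I would therefore prove, by induction on $n$, the statement that for every index $i$, every program $P$, and every goal $G$, $T^n(\whobot)_i(P,G) = \top_\Omega$ implies $\nullsres{\ipresgoal{i}{P}{G}}{id}$.

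The base case $n = 0$ is immediate: $\whobot{}_i(P,G) = \top_\Omega$ only when $G = \top$, and the \textbf{null} rule disposes of such a state in one step. For the inductive step, unfolding Definition~\ref{def:tp} yields a case analysis on the outermost shape of $G$ that mirrors the resolution rules one for one. If $G = G_1 \wedge G_2$, both $T^n(\whobot)_i(P,G_k) = \top_\Omega$, so the induction hypothesis produces resolutions of each conjunct, which after an \textbf{and} step are stitched together by the product lemma (Lemma~\ref{lem:product}). The cases $G = G_1 \vee G_2$ (via \textbf{or}), $G = Q \hoe G'$ (via \textbf{augment}, using that the set $P \cup \{Q\}$ and the conjoined term $P \wedge Q$ share the same elaboration, so that the resolution supplied by the induction hypothesis may be transplanted via Lemma~\ref{lem:left-weakening}), and $G = \exists x\, G'$ (via \textbf{instance} with the witness realising the supremum in Definition~\ref{def:tp}) are analogous one-step appeals to the induction hypothesis. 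The case $G = \forall x\, G'$ uses \textbf{generic}: the invariant condition~\eqref{eq:invariant} guarantees that the fresh constant introduced by the resolution rule can be chosen to coincide with the witness occurring in Definition~\ref{def:tp}.

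The delicate case is the atomic one. If $G$ is an atom and $T^{n+1}(\whobot)_i(P,G) = \top_\Omega$, then either $T^n(\whobot)_i(P,G) = \top_\Omega$, in which case the induction hypothesis applies directly, or there exists an instance $G \pmi G' \in [i;P]$ with $T^n(\whobot)_i(P,G') = \top_\Omega$. In the latter case $G$ must be rigid, since every clause head appearing in $\elab(P)$ is rigid and rigidity is preserved under substitution; hence \textbf{backchain} is applicable. By Lemma~\ref{lem:extension-ts} there are a clause $\forall \vec{x}\,(G_0 \hoe A_0) \in \elab(P)$ and a substitution $\gamma = \{t_1/x_1,\ldots,t_k/x_k\}$ with $A_0\gamma = G$ and $G_0\gamma = G'$. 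Choosing a renaming-apart $\delta = \{w_1/x_1,\ldots,w_k/x_k\}$ with fresh variables $w_j$ and setting $\theta = \{t_1/w_1,\ldots,t_k/w_k\}$, one verifies that $\theta$ is safe, unifies $G$ with $A_0\delta$, and acts as the identity on free variables of $P$ and $G$; consequently it licenses a \textbf{backchain} step $\ipresgoal{i}{P}{G} \bleadsto \ipresgoal{i}{P}{G'}$ contributing the identity to the overall computed answer. Concatenating with the derivation supplied by the induction hypothesis produces the desired resolution of $\ipresgoal{i}{P}{G}$.

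The main obstacle is precisely this last bookkeeping step: one must certify that the unifier manufactured from the extension membership is safe, introduces only fresh variables, and produces a step whose accumulated substitution restricts to the identity on the free variables of the initial state, so that the induction hypothesis remains applicable to the tail derivation. A secondary and recurring concern is the interaction between the term-model monoidal product $P \wedge Q$ and the set-theoretic extension $P \cup \{Q\}$ used in the resolution rule \textbf{augment}, which is resolved by the elaboration coincidence noted above.
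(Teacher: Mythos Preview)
Your proposal is correct and follows essentially the same approach as the paper: reduce to finite iterates of $T$ via join-irreducibility of $\top_\Omega$, induct on the iteration index, and case-split on the shape of $G$ using Definition~\ref{def:tp}. The only cosmetic differences are that the paper first assumes $n+1$ is minimal (so the disjunct $T^n(\whobot)_i(P,A)$ in the atomic case is dispatched up front) and handles the flex-atom case by observing directly that $\ifixx_i(P,A)=\bot_\Omega$, whereas you fold this into the two-disjunct analysis; your explicit remarks about the $P\wedge Q$ versus $P\cup\{Q\}$ elaboration coincidence and the role of the invariant condition~\eqref{eq:invariant} in the \textbf{generic} case are points the paper leaves implicit.
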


\begin{proof}
  Since $\top_\Omega$ is join irreducible in $\Omega$, the condition
  $\ifixx_{i}(P,G) = \top_\Omega$ implies that for some
  $k\in\nat$ we must have
  $\hoforce{T^k(I_\bot)}{i}{P}{G}$.
  Thus it will suffice to
  show that this condition implies the conclusion of the
  theorem for any $k$.
  The proof is by induction on $k$ (not on goal structure).

  For $k=0$, the claim is trivial.
  $\hoforce{(I_\bot)}{i}{P}{G}$ can only happen if
  $G=\top$ in which case there is a one step resolution
  \[
    \srest{\ipresgoal{i}{P}{\top}}{\text{null}}{\NULL}
  \]

  Now we suppose the theorem holds for all $i,P,G$ for $k=n$ and
  assume $\,\hoforce{T^{n+1}(I_\bot)}{i}{P}{G}$.
  We may as well assume that $n+1$ is the least iteration $k$ of $T$ for
  which this holds, else we are done by the induction hypothesis.

  Now we consider the six cases that arise depending on the
  structure of $G$. The reader should {\em not\/} take this to be a
  proof by induction on the structure of $G$, which would lead us right
  into the trap of inducting over an impredicative higher-order formal
  system. We never appeal inductively to the truth of the claim for
  subformulas, only to the truth of the claim for smaller iterations of
  $T$.

  \case{atomic}
  If the goal is a flex atomic formula, the statement is vacuously true, since the $I_i^*(P, G) = \bot_\Omega$ in this case.
  If the goal is a rigid atomic formula $A_r$, then
  $\hoforce{\tpni}{i}{P}{A_r}$ and
  by join-irreducibility of $\top_\Omega$ (trivially in
  $\{\top,\bot\}$) one of the disjuncts
  on the right hand side of the first clause of Definition~\ref{def:tp} must evaluate to $\top_\Omega$.
  In particular this means for some
  $A_r \pmi G \in [i;P]$ we have
  $\hoforce{\tpn}{i}{P}{G}$, and by the induction hypothesis there is a
  resolution $\nullsres{\ipresgoal{i}{P}{G}}{id}$.

  By definition of the extension of the program, there is a clause $\forall \vec x (G_0 \hoe A_0) \in \elab(P)$ and $\gamma = \{t_1/x_1, \ldots, t_n/x_n\}$ such that $A_0 \gamma = A_r$ and $G_0 \gamma = G$, with terms $t_1, \ldots, t_n$ of level $i$ or less. Therefore, consider $\delta = \{w_1/x_1, \ldots, w_n/x_n\}$ a renaming-apart substitution and $\gamma' = \{t_1/w_1, \ldots, t_n/w_n\}$,
  we have
  \[
    \midnullsrest{\ipresgoal{i}{P}{A_r}}{\gamma'}{\ipresgoal{i}{P}{G}}{id}
  \]
  since $G_0 \delta \gamma' = G$ and $A_0 \delta \gamma' = A_r\gamma' = A_r$.

  \case{goal $\top$}
  If the goal is $\top$, there is a one-step SLD proof using
  the null rule.

  \case{disjunction}
  If the goal is of the form $G_1 \lor G_2$, then we
  have, by the definition of $T$,
  and irreducibility of $\top_\Omega$,
  $\hoforce{\tpn}{i}{P}{G_j}$ for some $j\in\{1,2\}$.
  By the induction hypothesis, we
  have a derivation $\nullsres{\hresgoal{}{i}{P}{G_j}}{id}$
  which by the {\em or} rule extends to
  $\midnullsrest{\hresgoal{}{i}{P}{G_1 \lor G_2}}
    {\vee}{\hresgoal{}{i}{P}{G_j}}{id}$.

  \case{conjunction}
  If the goal is $G_1 \land G_2$,
  then, by definition of $T$, we have both
  $\hoforce{\tpn}{i}{P}{G_1}$ and
  $\hoforce{\tpn}{i}{P}{G_2}$.
  By inductive hypothesis on $n$
  iterations of $T$ there are two resolutions
  $\bd_1: \nullsres{\hresgoal{}{i}{P}{G_1}}{id}$,
  $\bd_2: \nullsres{\hresgoal{}{i}{P}{G_2}}{id}$.

  Consider new derivations $\bd_1'$ and $\bd_2'$ obtained from $\bd_1$ and $\bd_2$ by renaming variables and constant in such a way that the only variables and constants in common between them are those which are already in common in the original state vectors.

  Then we may build a new derivation starting from $\ipresgoal{i}{P}{G_1} \otimes \ipresgoal{i}{P}{G_2}$ by first appending the state $\ipresgoal{i}{P}{G_2}$ to all state vectors of $\bd'_1$, then continuing with $\bd'_2$:
  \[
    \rlap{$\underbrace{\ipresgoal{i}{P}{G_1} \otimes \ipresgoal{i}{P}{G_2} \derivation{id}  \ipresgoal{i}{P}{G_2}}_{\bd_1 \otimes \ipresgoal{i}{P}{G_2}}$}
    \ipresgoal{i}{P}{G_1} \otimes \ipresgoal{i}{P}{G_2} \derivation{id}  \overbrace{\ipresgoal{i}{P}{G_2}
    \derivation{id} \NULL}^{\bd'_2} \enspace .
  \]
  Prefixing with the step $\ipresgoal{i}{P}{G_1 \wedge G_2} \derstep{\wedge}  \ipresgoal{i}{P}{G_1} \otimes \ipresgoal{i}{P}{G_2}$ we have the required derivation.


  \case{implication}
  In case $\hoforce{\tpni}{i}{P}{D \horseshoe G'}$, we have
  $\hoforce{\tpn}{i}{P \cup D}{G'}$ and so by the induction
  hypothesis, we have a derivation $\nullsres{\hresgoal{}{i}{P \cup
        D}{G'}}{id}$.  By the {\em
      augment} rule, we then have the desired SLD resolution
  $\midnullsrest{\hresgoal{}{i}{P}{D \horseshoe G'}}{\hoe}{\hresgoal{}{i}{P \cup
        D}{G'}}{id}$.

  \case{exists}
  Assume $\hoforce{\tpni}{i}{P}{\exists x_\alpha G'}$. Then
  \[
    \top_\Omega =\bigvee_{t_\alpha\in D_{i.\alpha}}
    \hotpn{i}{P}{G'[t_\alpha/x_\alpha]}.
  \]
  Since $\top_\Omega$ is join-irreducible in $\Omega$, for some positive\gabox{We should check that we use positive terms whenever we need them.} term $t$ of level $i$ or less,
  $\hoforce{\tpn}{i}{P}{G'[t_\alpha/x_\alpha]}$ and by the induction hypothesis
  there is a derivation $\nullsres{\hresgoal{}{i}{P}{G'[t_\alpha/x_\alpha]}}{id}$.
  This new derivation can be extended to
  $\midnullsrest{\hresgoal{}{i}{P}{\exists x_\alpha G'}}{\exists}
    {\hresgoal{}{i}{P}{G'[t_\alpha/x_\alpha]}}{id}$.

  \case{forall}
  Suppose $G$ is of the form $\forall x_\alpha G'$, so  $\hoforce{\tpni}{i}{P}{\forall x_\alpha G'}$. Then, for some fresh constant $c_\alpha$ of level $i+1$, we have $\hoforce{\tpn}{i+1}{P}{G'[c_\alpha/x_\alpha]}$. By induction hypothesis
  there is a deduction
  \[\nullsres{\ipresgoal{i+1}{P}{G'[c_\alpha/x_\alpha]}}{id}\]
  and therefore
  \[
    \midnullsrest{\ipresgoal{i}{P}{\forall x_\alpha G'}}
    {\forall}{\ipresgoal{i+1}{P}{G'[c_\alpha/x_\alpha]}}{id} \enspace .
  \]
\end{proof}

We note that we have shown that \ifix\ coincides with the
proof-theoretic mapping $I_{\vdash}$ defined by
\[
  (I_\vdash)_i(P,G)=\top_{\Omega} \text{ just in case }
  \nullsres{\ipresgoal{i}{P}{G}}{id}.
\]

Now we will show that \ifix\ is an $\Omega$-interpretation for
$\Omega=\{\top_\Omega,\bot_\Omega\}$ as given by Definition~\ref{interp}.
\begin{lem}
  \label{lem:istar-is-int}
  If $\Omega=\{\top_\Omega,\bot_\Omega\}$, then $\ifix$ is an $\Omega$-interpretation, and
  taking $\me{t}_{\varphi}=\nf{t\varphi}$, we have that
  $\ang{\tmod,\me{\ },\Omega,\ifix\ }$ is a uniform algebra.
\end{lem}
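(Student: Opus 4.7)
The plan is to verify in turn that (i) $\tmod$ is a uniform applicative structure and $\me{\cdot}$ is a valid term interpretation, and (ii) $\ifix$ satisfies every clause of Definition~\ref{interp}. Part (i) is essentially standard for term models: extensionality of principal normal forms follows from $\eta$-normality of elements of $\dee_{i,\alpha}$, and the environmental model condition for $\me{\lambda x.\, t}_\varphi = \nf{(\lambda x.\, t)\varphi}$ follows from agreement of semantic and syntactic substitution under normalization (Lemma~\ref{lem:substitutions}-style reasoning). The pre-ordered monoid structure on $\deeip$ was already checked at the site where $\tmod$ is defined: commutativity of $\wedge_{\pee\pee\pee}$ up to $\sqq_\pee$-equivalence, and the $\Pi$-inequality, reduce to the set inclusions on extensions $[i;\cdot]$.

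For part (ii), the pivotal observation, obtained by combining Lemma~\ref{lem:ifix_sound} with Theorem~\ref{thm:completeness}, is that $\ifix = I_{\vdash}$, i.e., $\ifix_i(P,G) = \top_\Omega$ iff there is a resolution $\nullsres{\ipresgoal{i}{P}{G}}{id}$. Most of the ten conditions then follow either from $\ifix$ being a weak valuation or from the fixed-point equation $\ifix = T(\ifix)$. Specifically, \eqref{umonotone} and \eqref{uainclusion} are the two monotonicity clauses already built into the notion of weak valuation; \eqref{uatopgoal}, \eqref{uaand}, \eqref{uaor}, \eqref{uahorseshoe}, and \eqref{uainstance} are the corresponding clauses of Definition~\ref{def:tp} read off at the fixed point (and give equalities rather than the mere inequalities a generic weak valuation would yield); \eqref{uaatom} follows from a single \textbf{axiom} step applied to $\ipresgoal{i}{\{r\}}{r}$ for any rigid atom $r$; and \eqref{uabackchain} is exactly Lemma~\ref{lem:tp-clause}.

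The main obstacle will be the universal-quantifier condition \eqref{uageneric}, because the definition of $T$ supplies only $\ifix_i(P,\forall x_\alpha G) = \ifix_{i+1}(P, G[c/x])$ for a single fresh constant $c$ of level $i{+}1$, whereas the valuation condition demands equality with $\bigwedge_{d \in \dee_{i,\alpha}} \ifix_i(P, G[d/x])$. Since $\Omega=\{\top_\Omega,\bot_\Omega\}$, both sides evaluate to $\top_\Omega$ precisely when every conjunct does, so via the proof-theoretic characterization $\ifix = I_\vdash$ the task reduces to showing that a derivation $\nullsres{\ipresgoal{i+1}{P}{G[c/x]}}{id}$ exists if and only if a derivation $\nullsres{\ipresgoal{i}{P}{G[t/x]}}{id}$ exists for every positive term $t$ of level $i$ or less. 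This is exactly the two-directional content of Corollary~\ref{cor:genconst}. Once \eqref{uageneric} is secured, all clauses of Definition~\ref{interp} hold, making $\ang{\tmod, \me{\,}, \Omega, \ifix}$ a uniform algebra, and completing the lemma.
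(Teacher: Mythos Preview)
Your proposal is correct and follows essentially the same route as the paper: both establish the connective clauses \eqref{uatopgoal}--\eqref{uainstance} from the fixed-point identity $\ifix = T(\ifix)$ (the paper phrases this as combining the inequalities from the soundness argument with the weak-valuation inequalities, which amounts to the same thing), both invoke Lemma~\ref{lem:tp-clause} for \eqref{uabackchain}, both obtain \eqref{uaatom} from the obvious one-step derivation, and both handle the crucial universal case \eqref{uageneric} via the proof-theoretic characterization $\ifix = I_\vdash$ together with Corollary~\ref{cor:genconst}.
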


\begin{proof}
  We have already shown in Lemma~\ref{lem:tp-clause} that \ifix\ satisfies property
  \eqref{uabackchain} of valuations. 
  Inspection of the proof of Lemma~\ref{lem:ifix_sound} shows that we
  have already proven that the left-hand sides of equations
  (\ref{uatopgoal},~\ref{uaand},~\ref{uaor},~\ref{uahorseshoe},~\ref{uainstance})
  in the definition of valuation are greater than or
  equal to the right-hand sides. Since $T$ is well-defined, \ifix\ is a
  weak valuation, which shows the inequalities going the other way.
  Monotonicity conditions \eqref{umonotone} and \eqref{uainclusion} are already part of the definition of weak evaluation.  It remains to prove both \eqref{uaatom} and \eqref{uageneric}.

  The proof of \eqref{uaatom}, i.e., $\ifixx_i(A_r, A_r) = \top_{\Omega}$ for each rigid atom $A_r$, follows immediately follows from the completeness theorem, given the derivation
  \[
    \ipresgoal{i}{A_r}{A_r} \derstep{id} \top \derstep{null} \NULL \enspace .
  \]

  We use soundness and completeness of \ifix\ to show the universal case \eqref{uageneric}.
  Suppose $\ifixx_i(P,\forall x.G) = \top_{\Omega}$. Then
  there is a deduction $\nullsres{\ipresgoal{i}{P}{\forall x G}}{id}$
  hence a derivation
  \begin{equation}
    \label{eqn:cres}
    \nullsres{\ipresgoal{i+1}{P}{G[c/x]}}{id}.
  \end{equation}
  By the first half of the
  generic constants corollary (Corollary~\ref{cor:genconst})
  for any positive term $t$ of
  level $i$ of the appropriate type there is a deduction
  \begin{equation}
    \label{eqn:tres}
    \nullsres{\ipresgoal{i}{P}{G[t/x]}}{id}.
  \end{equation}
  By soundness of \ifix, for any such $t$ we have
  $\ifixx_i(P,G[t/x]) = \top_{\Omega}$. Taking infima we obtain
  $\bigwedge_{t} \ifixx_i(P,G[t/x]) = \top_{\Omega}$.

  \medskip

  \noindent
  Using the second half of the
  generic constants corollary (Corollary~\ref{cor:genconst}),
  which tells us that if for every suitable term $t$ there is a
  deduction \eqref{eqn:tres} then there is one of the form
  \eqref{eqn:cres},
  we obtain the converse, namely that
  if $\bigwedge_{t} \ifixx_i(P,G[t/x]) = \top_{\Omega}$ then
  $\ifixx_i(P,\forall x.G) = \top_{\Omega}$. Since $\Omega=\{\top_\Omega,\bot_\Omega\}$, 
  this is enough to ensure that $\ifixx_i(P,\forall x.G) =
  \bigwedge_{t} \ifixx_i(P,G[t/x])$.
\end{proof}

\begin{cor}[Completeness]
  Let $P$ and $G$ be closed
  program and goal formulas respectively.
  If $\thoint{i}{\umod(P)}{\umod(G)} =
    \top_\Omega$ in all uniform algebras $\umod$ (ranging over all $\Omega$)
  then there is an SLD proof of $G$ from $P$  whose
  substitutions are the identity on the free variables of $P$ and $G$.
\end{cor}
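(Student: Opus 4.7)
The plan is to specialize the universal hypothesis to the single uniform algebra built out of the term structure $\tmod$, and then invoke the completeness theorem for $\ifix$ already established in Theorem~\ref{thm:completeness} and Lemma~\ref{lem:istar-is-int}.

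First, I would fix $\Omega = \{\top_\Omega, \bot_\Omega\}$ and consider the uniform algebra $\langle \tmod, \me{\ }, \Omega, \ifix \rangle$ of Lemma~\ref{lem:istar-is-int}. Since $P$ and $G$ are closed, the value of $\me{P}_\varphi$ and $\me{G}_\varphi$ does not depend on the \dee-assignment $\varphi$ (by Lemma~\ref{lem:umod-freevars}), and by the definition of $\me{\ }$ in the term structure we have $\me{P}_\varphi = \nf{P} = P$ and $\me{G}_\varphi = \nf{G} = G$, since all terms are assumed to already be in principal normal form.

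Next, I instantiate the hypothesis at this particular algebra: $\thoint{i}{\umod(P)}{\umod(G)} = \top_\Omega$ becomes $\ifix_i(P, G) = \top_\Omega$. By Theorem~\ref{thm:completeness}, this gives a resolution $\nullsres{\ipresgoal{i}{P}{G}}{id}$ whose computed answer is the identity substitution on $\fv{\ipresgoal{i}{P}{G}}$. Because $P$ and $G$ are closed, this set of free variables is empty, so the resolution is in particular an SLD proof whose substitutions are the identity on the free variables of $P$ and $G$, as required.

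The only subtlety worth checking carefully is that the choice of index $i$ in the hypothesis matches the one used in the resolution; this is immediate since the hypothesis is stated at a fixed $i$ and $P, G$ are of level $i$ or less, so the state $\ipresgoal{i}{P}{G}$ is legal. No further obstacles arise: all the heavy lifting has already been done in constructing $\ifix$ (Lemma~\ref{lem:t-inflationary}), showing its soundness (Lemma~\ref{lem:ifix_sound}) and completeness (Theorem~\ref{thm:completeness}), and verifying that it satisfies the axioms of a uniform algebra (Lemma~\ref{lem:istar-is-int}).
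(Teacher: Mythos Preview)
Your proposal is correct and follows essentially the same approach as the paper: specialize the hypothesis to the two-element Boolean algebra and the term-model uniform algebra $\langle \tmod, \me{\ }, \Omega, \ifix \rangle$ of Lemma~\ref{lem:istar-is-int}, then invoke Theorem~\ref{thm:completeness}. The paper's proof is terser (it simply takes the identity valuation rather than arguing via Lemma~\ref{lem:umod-freevars} that closedness makes the choice irrelevant), but the substance is identical.
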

\begin{proof}
It suffices just to consider just
the two-element Boolean algebra, where, obviously,
$\top$ is join irreducible, the uniform algebra
$(\tmod,\me{\_ },\Omega,\ifix\ )$  defined in
Lemma~\ref{lem:istar-is-int} and the identity valuation $\varphi$. Then,
the result is an immediate corollary of Theorem~\ref{thm:completeness}.
\end{proof}

It should be noted that by a straightforward adaptation of the
proof of the corresponding theorem in Miller et al. \cite{uniform},
and Theorem~\ref{thm:equivalence} in the Appendix,
an SLD proof of $G$ from $P$ exists iff a proof of $\sequent{P}{G}$
exists in \ICTT. Thus,
$\thoint{i}{\umod(P)}{\umod(G)} =
  \top_\Omega$ in all uniform algebras $\umod$ if and only if
$\psem{\sequent{P}{G}} = \top$
in all \ICTT global models $\hemod$, as defined in
\cite{ccct05}.

\begin{thm}\label{hoprecomplete}
  Let $\Omega$ be a complete Heyting Algebra
  in which $\top_\Omega$ is join-irreducible.
  Assume $\thoint{i}{\umod(P)}{\umod(G)}_\varphi = \top_\Omega$ in all uniform
  algebras $(\tmod,\umod,\Omega,\hoint)$ and environments $\varphi$.
  Then, there is an SLD derivation $\nullres{\hresgoal{id}{i}{P}{G}}$.
\end{thm}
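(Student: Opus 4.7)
The plan is to reduce this generalized completeness theorem to the previous corollary (which handled the two-element Boolean algebra case) by exhibiting a concrete uniform algebra over the given $\Omega$ in which truth of $(P,G)$ forces provability. Specifically, I would lift the least fixed-point interpretation $\ifix$ (valued in $\{\top,\bot\}$, as in Lemma~\ref{lem:istar-is-int}) along the evident embedding $\iota\colon\{\top,\bot\}\to\Omega$ sending $\top\mapsto\top_\Omega$ and $\bot\mapsto\bot_\Omega$, and define
\[
\hoint^\Omega_i(P,G) \;=\; \iota\bigl(\ifix_i(P,G)\bigr).
\]
The goal is then to verify that $(\tmod,\me{\_},\Omega,\hoint^\Omega)$ is a genuine $\Omega$-valued uniform algebra, apply the hypothesis to it at the identity environment $\varphi(x)=x$, and unfold the resulting equation.

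First I would check that $\iota$ is a homomorphism between the Heyting structures. Finite meets, joins, top and bottom are preserved trivially. For the infinitary operations, I would argue as follows: $\iota$ preserves arbitrary meets because any family in $\{\top,\bot\}$ whose meet is $\top$ must be constantly $\top$ (hence maps to $\top_\Omega$), while a family containing $\bot$ has meet $\bot$ and its image contains $\bot_\Omega$, forcing the image-meet to be $\bot_\Omega$. For arbitrary joins, I would use the crucial hypothesis that $\top_\Omega$ is join-irreducible in $\Omega$: if $\bigvee_t a_t=\top$ in the two-element algebra then some $a_t=\top$, so $\bigvee_t\iota(a_t)=\top_\Omega$; and conversely, if all $a_t=\bot$ then $\bigvee_t\iota(a_t)=\bot_\Omega$. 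Join-irreducibility of $\top_\Omega$ is what ensures $\iota\bigl(\bigvee_t a_t\bigr)=\bigvee_t\iota(a_t)$ in the critical direction.

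Next I would transfer the clauses of Definition~\ref{interp} from $\ifix$ to $\hoint^\Omega$ by applying $\iota$ to each side of the equations already established for $\ifix$ in Lemma~\ref{lem:istar-is-int}. The conjunction, disjunction, implication, atomic, and monotonicity clauses follow because $\iota$ commutes with $\wedge$, $\vee$, $\horseshoe$ and is order-preserving. The existential clause uses join preservation (and hence join-irreducibility of $\top_\Omega$); the universal clause uses meet preservation. Conditions~\eqref{uaatom} and~\eqref{uabackchain} are inherited from the corresponding facts about $\ifix$ (the latter via Lemma~\ref{lem:tp-clause}), again using that $\iota$ is order-preserving.

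Having a bona fide uniform algebra over $\Omega$, I would then apply the hypothesis to $(\tmod,\me{\_},\Omega,\hoint^\Omega)$ with the identity environment, which gives $\me{P}_\varphi=P$ and $\me{G}_\varphi=G$. The hypothesis then reads $\iota\bigl(\ifix_i(P,G)\bigr)=\top_\Omega$, and since $\iota$ is injective with $\iota(\bot)=\bot_\Omega\neq\top_\Omega$, we conclude $\ifix_i(P,G)=\top$. Invoking Theorem~\ref{thm:completeness} produces the desired derivation $\nullsres{\ipresgoal{i}{P}{G}}{id}$. The main obstacle in this argument is the verification of the existential clause, where the join-irreducibility assumption on $\top_\Omega$ does essential work: without it, one can only show $\iota\bigl(\bigvee_t a_t\bigr)\geq\bigvee_t\iota(a_t)$ in general, which is not enough to recover an $\Omega$-valuation from a $\{\top,\bot\}$-valued one.
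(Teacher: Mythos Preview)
Your approach is essentially the same as the paper's: instantiate the hypothesis at the term structure $\tmod$ with the identity environment and the fixed-point interpretation $\ifix$, then invoke Theorem~\ref{thm:completeness}. The paper's proof simply writes ``consider the uniform algebra $(\tmod,\me{\_},\Omega,\ifix)$'' without further comment; your version makes explicit the embedding $\iota:\{\top,\bot\}\hookrightarrow\Omega$ and checks that $\iota\circ\ifix$ is a genuine $\Omega$-valuation. This is a real improvement, since Lemma~\ref{lem:istar-is-int} only establishes that $\ifix$ is a valuation when $\Omega=\{\top,\bot\}$ (a point the paper itself flags as a gap in an authorial comment).

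One small correction: your attribution of where join-irreducibility of $\top_\Omega$ is used is off. The map $\iota$ preserves arbitrary joins unconditionally: if some $a_t=\top$ then both sides are $\top_\Omega$, and if all $a_t=\bot$ then both sides are $\bot_\Omega$. The passage ``if $\bigvee_t a_t=\top$ then some $a_t=\top$'' uses join-irreducibility of $\top$ in the \emph{two-element} algebra, which is automatic. Join-irreducibility of $\top_\Omega$ (in fact mere non-triviality $\top_\Omega\neq\bot_\Omega$) is only needed at the final step, to infer $\ifix_i(P,G)=\top$ from $\iota(\ifix_i(P,G))=\top_\Omega$. So your argument is correct, and in fact shows the hypothesis could be weakened; the mislabelling does not affect validity.
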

%
\begin{proof}
  We consider the uniform algebra
  \gabox{IMPORTANT! This does not work since if $\Omega \neq \{\top,\bot\}$ we 
  are not sure that $I^*$ is a valuation.}
  $(\tmod,\me{\_ },\Omega,\ifix)$
  where $\tmod$ is the uniform term algebra defined
  at the beginning of this section
  and  $\ifix$ the valuation $T^\infty(\hoint_\bot)$
  discussed above,
  and where we fix the environment $\varphi$ to be the identity
  and take $\me{\_ }$ (with the identity environment)
  to be the interpretation that maps open \UCTT terms to their normal form.
\end{proof}

\subsection{Why the Semantics is Intuitionistic}
\label{subsec:intuitionistic}
Even if we choose $\Omega$ to be a Boolean Algebra, (such as the two
element algebra) our
completeness theorem is for intuitionistic logic. This is
immediate from the fact that we obtain completeness at all, above.
Nonetheless, it is worth pointing out why a potentially classical
algebra of truth values will not sacrifice the inuitionistic character
of our semantics.

The main reason is that the object of truth values $\Omega$ is
modelling the metatheory, whereas, the program denotations are
modelling the theory, as the following remark should make clear.
One should think of
\[
  I(p,g) = u\in \Omega
\]
as saying something similar to what, in Kripke semantics, would read

\[
  ``p\kforces g'' \mbox{ has truth-value } u.
\]
Just to illustrate, we will look at the term-model
semantics of the classically,
but not intuitionistically valid law:
$(P\hoe Q)\vee (Q\hoe P)$,
using the empty program.

\begin{eqnarray*}
  I(\epr,(P\hoe Q)\vee (Q\hoe P)) &=& I(\epr,P\hoe Q)\vee I(\epr,Q\hoe
  P)\\
  &=& I(P,Q)\vee I(Q, P)
\end{eqnarray*}
If $Q$ and $P$ are distinct atoms, neither
$I(P,Q)$ nor $I(Q, P)$ will, in general,  be true.

Thus the cHa $\Omega$ gives us an extra degree of freedom to interpret
the metatheory of the underlying Kripke-like structure. How this can
be exploited is the subject of the rest of the paper.




\subsection{Enriched Resolution}
\label{sec:enriched-resolution}



We now need
to consider an enriched notion of
resolution, in which arbitrary instantiations of variables are allowed
in a single step. We can think of this as a way of making the converse
of the instantiation lemma (Lemma~\ref{lem:weak-lifting}), called the
\emph{lifting lemma} in the first-order Horn case \cite{lloyd}
which \emph{fails} in RES(t), hold.

If the lifting lemma held, for any state $\ipresgoal{i}{P}{G}$
from which we can resolve to $\NULL$, the following resolution 
would have to exist
\[
\midnullsres{\ipresgoal{i}{P}{X}}{\theta}.
\]
where $X$ is a predicate variable of type $o$.
It is impossible to guarantee the existence of a deduction 
with this particular substitution (it does exist with $\{X/\top\}$)
unless
there happens to be a clause in $\elab(P)$ whose head is $G$, which is not
even allowed if $G$ is non-atomic. But, in a sense, the obstacle seems
more a ``side-effect'' of the existence of higher-order variables than
essential.

\begin{defn}
\label{def:star-theta}
Let $\theta$ be a substitution, $\mfa,\mfb$ state vectors, $P$ a program
and $G$ a goal. Then the {\bf sub} resolution rule is defined by the
following transition:
\[
\vladsub{\hresgoal{\cc}{i}{P}{G}}{\sth}
{\hresgoal{\cc}{i}{P\sth}{G\sth}}
\]
We will say that a sequence of resolution steps possibly using this
additional sub-rule is a
\stto
resolution, and
write
  \begin{equation}
\label{ded:star-theta}
\starmidres{\mfa}{\theta}{\mfb}
\end{equation}
to indicate that such an enriched resolution has taken place with
$\theta$ the composition of all intervening substitutions, restricted
to the variables free in the initial state.
\end{defn}

First we show that *-deduction enriched with the possible use of the
{\bf sub}-rule  is sound for our
semantics, from which we immediately obtain its conservativity over
standard resolution.

\begin{thm}
  If there is a deduction
$\starmidres{\mfa}{\theta}{\mfb}$ then
in any uniform algebra $\sem{\mfb}{\eta}\oleq\sem{\mfa\theta}{\eta}$.
\end{thm}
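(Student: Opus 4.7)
The plan is to extend the proof of Theorem~\ref{thm:soundness} by induction on the length of the enriched derivation, adding exactly one case for the new \textbf{sub} rule and reusing the existing case analysis for all standard RES(t) rules. Since an enriched derivation is just an ordinary derivation that may additionally contain applications of \textbf{sub}, only this single additional case needs to be handled.

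First I would set up the induction exactly as in the proof of Theorem~\ref{thm:soundness}. The base case (length 0) gives $\mfa = \mfb$ and $\theta = \ids$, so $\sem{\mfb}{\eta} = \sem{\mfa\theta}{\eta}$ trivially. For the inductive step, suppose the derivation has length $n > 0$ and write it as
\[
\mfa \stackrel{\gamma_1}{\bleadsto} \mfa_1 \stackrel{\ast}{\bleadsto} \mfb,
\]
with the tail being an enriched derivation whose composed substitution $\theta'$ satisfies $\gamma_1\theta' = \theta$ on the free variables of $\mfa$. If the first step is one of the standard resolution rules (\textbf{null}, \textbf{true}, \textbf{backchain}, \textbf{and}, \textbf{or}, \textbf{instance}, \textbf{augment}, \textbf{generic}), I would simply invoke the corresponding case analysis already carried out in the proof of Theorem~\ref{thm:soundness}, observing that those arguments only used the semantic conditions of a uniform algebra and did not depend on whether subsequent steps are enriched or not; the inductive hypothesis is then applied to the shorter enriched tail.

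The only new case is the \textbf{sub} rule. Here the first step has the form
\[
\vladsub{\hresgoal{\cc}{i}{P}{G}}{\gamma_1}{\hresgoal{\cc}{i}{P\gamma_1}{G\gamma_1}},
\]
so that $\mfa_1 = \mfa\gamma_1$ (with the substitution applied throughout the entire state vector). By the inductive hypothesis applied to the tail,
\[
\sem{\mfb}{\eta} \oleq \sem{\mfa_1 \theta'}{\eta} = \sem{\mfa\gamma_1\theta'}{\eta} = \sem{\mfa\theta}{\eta},
\]
the last equality holding because $\gamma_1\theta'$ agrees with $\theta$ on all variables free in $\mfa$, and $\sem{-}{\eta}$ depends on a state vector only through its free variables (Lemma~\ref{lem:umod-freevars}) together with the substitution lemma (Lemma~\ref{lem:substitutions}).

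I do not expect any real obstacle here: the \textbf{sub} rule was tailored so that semantically it corresponds exactly to applying the substitution to the whole state vector, which is precisely the operation that appears on the right-hand side of the soundness inequality. The only thing worth being careful about is bookkeeping the relationship $\gamma_1\theta' = \theta$ restricted to the free variables of $\mfa$, which is built into Definition~\ref{def:star-theta}. Once the theorem is proved, conservativity of $\stto$-resolution over ordinary resolution (with respect to provable program-goal pairs with identity computed answer) follows immediately by combining the present soundness result with the completeness theorem for uniform algebras.
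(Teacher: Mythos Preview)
Your proposal is correct and follows essentially the same approach as the paper: induct on the length of the enriched derivation, reuse the case analysis of Theorem~\ref{thm:soundness} verbatim for the standard rules, and observe that the new \textbf{sub} step is semantically trivial since it just applies a substitution to the whole state vector. The paper phrases the \textbf{sub} case as a one-step soundness check ($\sem{\mfa\theta}{\eta}\oleq\sem{\mfa\theta}{\eta}$) rather than via the inductive hypothesis on the tail, but this is only a cosmetic difference.
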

The proof, by induction on the length of \stto-deductions, proceeds
identically to the preceding soundness proof for
$\bleadsto$-resolution. The only new case to consider is a proof that
starts with a {\bf sub}-step, in which case we have to show that this
single step is sound. But if $\mfa\derstep{\theta}\mfa\theta$ then
the conclusion of the soundness theorem, namely
 $\sem{\mfb}{\eta}\oleq\sem{\mfa\theta}{\eta}$,
reduces, in this case, where
$\mfb$ is $\mfa\theta$,
to $\sem{\mfa\theta}{\eta}\oleq\sem{\mfa\theta}{\eta}$,
which is trivially true.

\begin{cor}
\stto-deduction is conservative, for closed program-goal pairs
 over $\bleadsto$. If there is a deduction
  $\starnullsres{\mfa}{id}$, then there is a deduction
$\midnullsres{\mfa}{id}$.
\end{cor}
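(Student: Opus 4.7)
The plan is to chain soundness of $\stto$-resolution with completeness of ordinary resolution, using the term uniform algebra from Lemma~\ref{lem:istar-is-int} as the bridge. First, I would apply the theorem immediately preceding this corollary (soundness of $\stto$-resolution) to the given deduction $\starnullsres{\mfa}{id}$. Evaluating in any uniform algebra with any environment $\varphi$, this yields
\[
\top_\Omega \;=\; \sem{\NULL}{\varphi} \;\oleq\; \sem{\mfa\cdot id}{\varphi} \;=\; \sem{\mfa}{\varphi},
\]
so $\sem{\mfa}{\varphi}=\top_\Omega$ in every uniform algebra.

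Next, I would specialize to the term uniform algebra $\ang{\tmod,\me{\_},\Omega,\ifix}$ with $\Omega=\{\top_\Omega,\bot_\Omega\}$ constructed in Lemma~\ref{lem:istar-is-int}, taking $\varphi$ to be the identity. If $\mfa = \hresgoal{}{i_1}{P_1}{G_1}\otimes\cdots\otimes\hresgoal{}{i_n}{P_n}{G_n}$, then by definition
\[
\sem{\mfa}{\varphi} \;=\; \bigwedge_{j=1}^{n}\ifixx_{i_j}(\umod(P_j),\umod(G_j)) \;=\; \top_\Omega,
\]
and since we are in the two-element Boolean algebra this forces every conjunct to equal $\top_\Omega$, i.e.\ $\ifixx_{i_j}(P_j,G_j)=\top_\Omega$ for each $j$ (here I am using that the term algebra denotes each closed program and goal by itself in normal form).

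Then I would apply the completeness theorem (Theorem~\ref{thm:completeness}) to each state individually, obtaining ordinary resolutions $\bd_j:\nullsres{\ipresgoal{i_j}{P_j}{G_j}}{id}$ for $j=1,\dots,n$. Finally, I would stitch these together into a single successful resolution of the whole vector by an iterated application of the product lemma (Lemma~\ref{lem:product}), which tells us that from $\bd_a:\nullsres{\mfa'}{id}$ and $\bd_b:\nullsres{\mfb'}{id}$ we can produce a resolution $\nullsres{\mfa'\otimes\mfb'}{id}$. Induction on $n$ delivers the desired deduction $\midnullsres{\mfa}{id}$.

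The only delicate point — really the only thing beyond assembling quoted results — is the reduction from the state-vector semantics $\sem{\mfa}{\varphi}=\top_\Omega$ to the per-state equalities $\ifixx_{i_j}(P_j,G_j)=\top_\Omega$, together with checking that the product lemma applies iteratively even when the initial and last vectors share free variables. Both are immediate in the two-element algebra and from the formulation of Lemma~\ref{lem:product}, so I do not expect any real obstacle. The hypothesis that the program-goal pairs be closed is used only to ensure that the choice of environment $\varphi$ is immaterial when invoking completeness in the term model.
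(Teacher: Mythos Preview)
Your proposal is correct and follows the same soundness-then-completeness route as the paper. The paper's own proof is a two-line sketch that invokes ``the completeness theorem'' directly on the whole state vector; you have made explicit the steps it leaves implicit, namely the specialization to the two-element term algebra, the decomposition of $\sem{\mfa}{\varphi}$ into per-state conjuncts, the application of Theorem~\ref{thm:completeness} to each state, and the reassembly via the product lemma.
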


\begin{proof}
If there is a deduction   $\starnullsres{\mfa}{id}$, then by
soundness, $\top_{\Omega}\oleq\sem{\mfa}{\eta}$ in any uniform
algebra, with any environment. By the completeness theorem, there is a
proof $\midnullsres{\mfa}{id}$.
\end{proof}

The enriched \stto-deduction trivially satisfies what was called a
``lifting'' lemma in the discussion following the proof of
Lemma~\ref{lem:weak-lifting}: if there is a resolution
\[
\bd: \starmidres{\mfa\theta}{\varphi}{\mfb}
\]
then the substitution $\theta$ can be lifted
\[
\starmidres{\mfa}{\theta\varphi}{\mfb} \enspace .
\]
Just consider a variant $\bd'$ of $\bd$ which does not contain any variable 
in $\fv{\theta} \setminus \fv{\mfa \theta}$, and precede its first step with one {\bf sub}-step
$\mfa\stto^\theta\mfa\theta$. Thus, although \bleadsto\ does not
satisfy lifting, it can be conservatively added.

With the {\bf sub}-rule added to resolution, we are obviously also
able to restrict the {\bf instance} rule to replacement of
existentially quantified variables by fresh logic variables:
\[
\vladres{\hresgoal{\cc}{i}{P}{\exists x.G}}{\exists}{
\hresgoal{\cc}{i}{P}{G[z/x]}}
\]
with $z$ fresh and of level $i$ or less
since subsequent instantiation of $z$ can then be
carried out using the {\bf sub} rule.

\section{Some alternative models}

\subsection{Completeness: Substitution Semantics}

Inspection of the soundness and completeness theorems show that we
may allow $\Omega$ to be any complete Heyting Algebra  in which
the top element is join-irreducible. Furthermore, the bottom-up
\gabox{Not sure about this, at least with the current proof.}
(fixed point) characterization
of the least term-model used in the completeness theorem still holds.

We will now investigate how the as yet untapped degree of freedom
we have in choosing the object of truth values $\Omega$ can be
exploited to give more operationally sensitive information about
programs.
%

\begin{defn}
  Define the preorder $\leq_s$ on {\it safe\/} (i.e. idempotent) substitutions as follows.
\[
\theta\sleq\varphi \mbox{ iff } \exists\gamma(\varphi = \theta\gamma).
\]
We will also define the partial order $\theta\sqq\varphi$ to mean
that $\varphi$ extends
$\theta$ to a possibly larger domain, i.e. $\varphi$ restricted to the
variables in the domain of $\theta$ is identical to $\theta$.
\end{defn}

 Note that $\theta\sqq\varphi$
is a special case of $\theta\sleq\varphi$ for {\em safe\/}
substitutions,
since what we mean by the {\em domain\/} of a substitution is the set of
variables $x$ for which $x\neq \theta(x)$. In conventional
mathematical terms, its
true domain is the entire set of variables, so if  $\theta\sqq\varphi$
then letting $\gamma$ be the substitution
\[
\gamma(x) = \begin{cases}
  \varphi(x) & \mbox{ if $x \in \dom(\varphi) \setminus \dom(\theta)$}\\
  x & \mbox{otherwise,}
\end{cases}
\]
we have $\varphi = \theta\gamma$. This definition is consistent precisely
because $\varphi$ and $\theta$ are assumed safe. Any variable $x$ among
$\fv{t}$ for some $t$ in the range of $\theta$
is also a variable in the range of $\varphi$ by definition of
 $\theta\sqq\varphi$, hence not in the domain of $\varphi$.

Also $\sqq$ is a partial order, but $\sleq$ is not since
$\theta\sleq\varphi \wedge \varphi\sleq\theta$ implies that
$\theta\sim\varphi$ but not necessarily that $\theta=\varphi$
\begin{defn}
  Let $\sos$ be the set of all upward $\sleq$-closed sets of safe
  substitutions. That is to say $U\in\sos$ iff for every $u\in U$ if
$u\sleq v$ then $v\in U$.
\end{defn}
It is easy to see that $\sos$ has a topology (the so-called Alexandroff
\gabox{Wht care about the topology of $\sos$?}
Topology induced by a preorder), that is to say, the set $\tops$ of
all safe substitutions and the empty set are in $\sos$, and $\sos$ is
closed under unions and finite (in fact, arbitrary) intersections.
We also have that $\tops$ is join-irreducible, since $U=\tops$ iff
the identity substitution $id$ is in $U$. Thus,
\[
\tops\ssq V\cup V' \quad \mbox{iff} \quad id\in V\cup V'
 \quad \mbox{iff}  \quad id\in V \quad \mbox{or}\quad id\in V'
 \quad \mbox{iff} \quad \tops= V \quad \mbox{or}\quad \tops= V'.
\]
Thus, $(\sos,\ssq)$ is a cHa, and is a suitable object of truth
values for a uniform algebra. We now define one such algebra, the
substitution-algebra, over $\sos$.

\begin{defn}
  \label{def:cats}
  Let $\cat{S}=\ang{\sfd,\mee,\sos,\shoint}$ be the uniform algebra
  defined as follows. Take $(\sfd,\mee)$ to be the underlying structure
of the term-model, i.e.\@ the same uniform applicative structure used
  in the   completeness theorem, above. Define
\begin{equation}
\label{eq:is}
({\shoint})_i(P,G) = \{\theta: \theta \mbox{ is safe, of level $i$ or less, and }
\nullsres{\ipresgoal{i}{P\theta}{G\theta}}{id} \},
\end{equation}
where the domain of $\theta$ is {\bf not necessarily restricted to the
  variables free in $P,G$. }
\end{defn}

Note that the condition on the substitutions $\theta$  mentioned in
Definition~\ref{def:cats}, namely that they are not restricted to
the program and goal pairs, is essential here, otherwise the r.h.s.\@ of
\eqref{eq:is} would not be upward closed. It is easily seen so, since
for any substitution $\gamma$, if
$\nullsres{\ipresgoal{i}{P\theta}{G\theta}}{id}$ then
$\nullsres{\ipresgoal{i}{P\theta\gamma}{G\theta}}{id}$ by
Corollary~\ref{cor:sub-theta}.

\begin{lem}
  The indexed family of mappings $\shoint$ defined in \eqref{eq:is}
  makes\, $\cat{S}$ into a uniform algebra.
  \gabox{It would be nice if the fixpoint construction could build
    this valuation. JL: should we say something about this?
  }
\end{lem}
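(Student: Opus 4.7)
The plan is to verify, one by one, each of the conditions (\ref{uaatom})--(\ref{uabackchain}) of Definition~\ref{interp} with $\Omega = \sos$. First I would check well-definedness: each $\shoint_i(P,G)$ is $\sleq$-upward closed. If $\theta \in \shoint_i(P,G)$ and $\theta \sleq \varphi$, write $\varphi = \theta\gamma$; Corollary~\ref{cor:sub-theta} turns the derivation $\ipresgoal{i}{P\theta}{G\theta} \derivation{id} \NULL$ into $\ipresgoal{i}{P\theta\gamma}{G\theta\gamma} \derivation{id} \NULL$, so $\varphi \in \shoint_i(P,G)$. By the Instantiation Lemma (Lemma~\ref{lem:weak-lifting}), every $\theta \in \shoint_i(P,G)$ is witnessed by a \emph{flat} derivation, which is essential for the arguments below that pin down the first resolution step of a single-state derivation.

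For the logical conditions I would use the matching resolution rule in each direction. The atomic case (\ref{uaatom}) follows from a single \textbf{axiom} step on $\ipresgoal{i}{r\theta}{r\theta}$; (\ref{uatopgoal}) from \textbf{null}; (\ref{uaor}) from the two \textbf{or} rules. For (\ref{uahorseshoe}) the \textbf{augment} rule is a one-step bijection between derivations of $\ipresgoal{i}{P\theta}{(Q\hoe G)\theta}$ and $\ipresgoal{i}{(P\wedge Q)\theta}{G\theta}$. For conjunction (\ref{uaand}), the forward direction notes that a flat derivation of $\ipresgoal{i}{P\theta}{(G_1\wedge G_2)\theta}$ must begin with \textbf{and} (no other rule applies to a single conjunctive goal in a flat derivation), and the resulting two-state derivation projects onto independent sub-derivations of each conjunct; the reverse direction combines two such sub-derivations via the Product Lemma (Lemma~\ref{lem:product}) and prepends an \textbf{and} step.

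The quantifier cases are the main obstacle. For (\ref{uainstance}), the forward direction again uses flatness: the first step must be \textbf{instance}, producing $\ipresgoal{i}{P\theta}{G\theta[t/x]}$ for some positive $t$ of level $i$ or less; the reverse direction takes a derivation for $G[d/x]$ and prepends \textbf{instance} with witness $d\theta$, using $G[d/x]\theta = G\theta[d\theta/x]$ since $x\in\bvar$. For (\ref{uageneric}), the forward direction applies \textbf{generic} to get $\ipresgoal{i+1}{P\theta}{G\theta[c/x]} \derivation{id} \NULL$, and then the first half of Corollary~\ref{cor:genconst} supplies a derivation of $\ipresgoal{i}{P\theta}{G\theta[d\theta/x]}$ for every $d\in\tdee{i}{\alpha}$, as required. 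The reverse direction needs a small but crucial trick, since the naive map $d\mapsto d\theta$ is not surjective on terms of level $\le i$: pick $d = v$ for a fresh \fvar variable $v$ not in $\dom(\theta)$ or anywhere else, so that $v\theta = v$ and the hypothesis furnishes a derivation of $\ipresgoal{i}{P\theta}{G\theta[v/x]}$. For any positive $t$ of level $\le i$, Lemma~\ref{lem:sub-t} then specializes $v$ to $t$, producing $\ipresgoal{i}{P\theta}{G\theta[t/x]} \derivation{id} \NULL$. The second half of Corollary~\ref{cor:genconst} lifts the family of all such derivations to $\ipresgoal{i+1}{P\theta}{G\theta[c/x]}$ for fresh $c$, and a \textbf{generic} step yields $\theta\in\shoint_i(P,\forall x\, G)$.

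The remaining monotonicity and preorder conditions are routine. Condition (\ref{umonotone}) is immediate from Corollary~\ref{cor:level-increase}. Condition (\ref{uabackchain}) unpacks $G\hoe r \sqq_\pee P$ as $G\hoe r \in [i;P]$, so there is a clause $\forall\vec{x}(G_0\hoe r_0) \in \elab(P)$ and substitution $\eta$ with $G_0\eta = G$, $r_0\eta = r$; applying $\theta$ and Lemma~\ref{lem:clause-renaming} gives the corresponding clause in $\elab(P\theta)$, so a single \textbf{backchain} step reduces $\ipresgoal{i}{P\theta}{r\theta}$ to $\ipresgoal{i}{P\theta}{G\theta}$, which is resolvable by hypothesis. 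Finally (\ref{uainclusion}) needs a mild extension of Lemma~\ref{lem:left-weakening} from $\elab$-inclusion to $[i;\cdot]$-inclusion: in the inductive backchain case, an instance used from $[i;P]$ is by assumption an instance of some clause in $\elab(P')$, so the same step is available in the weakened derivation. The only genuinely delicate point in the whole proof is the fresh-variable trick for (\ref{uageneric}); everything else is a direct translation of a resolution rule into an operation on the upward-closed sets constituting $\sos$.
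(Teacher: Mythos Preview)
Your proposal is correct and follows essentially the same case-by-case verification as the paper. Two small points of comparison: for the forward direction of \eqref{uainstance} you stop after obtaining a derivation of $\ipresgoal{i}{P\theta}{G\theta[t/x]}$, but to place $\theta$ in $\sthoint{i}{P}{G[t/x]}$ you still need the goal $G[t/x]\theta = G\theta[t\theta/x]$, which the paper obtains by re-applying $\theta$ via Corollary~\ref{cor:sub-theta} (using idempotency of $\theta$); and for the reverse direction of \eqref{uageneric} the paper instantiates $u$ at a fresh \emph{constant} of level $i$ rather than your fresh variable, then appeals directly to the second half of Corollary~\ref{cor:genconst}---both tactics work, yours just routes through Lemma~\ref{lem:sub-t} first. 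Your remark that \eqref{uainclusion} requires strengthening Lemma~\ref{lem:left-weakening} to $[i;\cdot]$-inclusion is well taken; the paper's sketch of this case is looser than yours.
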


\begin{proof}
  We must check that $\shoint$ satisfies all the conditions of
Definition~\ref{interp}, above.

\case{condition \eqref{uaatom}}
If $A_r$ is a rigid atom,  we have the following derivation
\[
  \ipresgoal{i}{A_r}{A_r} \derstep{\ids} \ipresgoal{i}{A_r}{\top} \derstep{null} \NULL
\]
Therefore $\ids\in\sthoint{i}{A_r}{A_r}$, and hence  $\sthoint{i}{A_r}{A_r}=\tops$.

\case{condition \eqref{umonotone}}
If $\theta \in \sthoint{i}{P}{G}$ then there is a derivation 
\[
\nullsres{\ipresgoal{i}{P\theta}{G\theta}}{\ids} \enspace .
\]
By Corollary~\ref{cor:level-increase} there is also a derivation
\[
\nullsres{\ipresgoal{i+1}{P\theta}{G\theta}}{\ids} \enspace ,
\]
hence $\theta \in \sthoint{i+1}{P}{G}$.



\case{condition \eqref{uatopgoal}}
It states that $\thoint{i}{P}{\top} = \top_\Omega$, which is immediate.

\case{condition \eqref{uaand}}
Suppose $\theta$ is in $\sthoint{i}{P}{G_1\wedge G_2}$. Then
there is a resolution
\[
\midnullsrest{\ipresgoal{i}{P\theta}{G_1\theta \wedge G_2\theta}}
{id}
{\ipresgoal{i}{P\theta}{G_1\theta}\otimes
\ipresgoal{i}{P\theta}{G_2\theta}}
{id}
\]
By the product lemma (Lemma~\ref{lem:product})
this condition is equivalent to the existence of resolutions
$\midnullsres{\ipresgoal{i}{P\theta}{G_1\theta}}{\ids}$
and
$\midnullsres{\ipresgoal{i}{P\theta}{G_2\theta}}{\ids}$,
or equivalently,
 $\theta\in\sthoint{i}{P}{G_1}\cap \sthoint{i}{P}{G_2}$.

\case{condition \eqref{uaor} and \eqref{uahorseshoe}}
These cases are immediate, since the existence of an identity derivation
of $\NULL$
from $\ipresgoal{i}{P\theta}{G_1\theta\vee G_2\theta}$ is equivalent to
the existence of  such a derivation from
 $\ipresgoal{i}{P\theta}{G_i\theta}$ for some $i\in\{1,2\}$,
and the existence of an identity derivation of $\NULL$
from $\ipresgoal{i}{P\theta}{Q\theta\hoe G\theta}$ to one from
$\ipresgoal{i}{P\theta\wedge Q\theta}{G\theta}$.\\

\case{condition \eqref{uainstance}}
Suppose $\theta \in \sthoint{i}{P}{\exists xG}$,
where $x$ is of type $\alpha$. This is equivalent to the existence of a flat deduction
\[
\midnullsrest{\ipresgoal{i}{P\theta}{\exists x(G\theta)}}
{\exists}{\ipresgoal{i}{P\theta}{G\theta[t/x]}}{[\ids]}{} \enspace .
\]
By the specialization lemma (Lemma~\ref{lem:sub-t}), there is also a derivation of 
\[
  \ipresgoal{i}{P\theta}{G\theta[t/x]\theta} \derivation{[\ids]} \NULL \enspace ,
\]
and since $G\theta[t/x]\theta = G[t/x] \theta$, we have $\theta \in \sthoint{i}{P}{G}$.
Hence $\theta\in\bigvee_{d\in\dee_{i,\alpha}}
\sthoint{i}{P}{G[d/x]}$, which, in the term structure is equivalent to
the right hand side of condition \eqref{uainstance}.

Conversely, if $\theta\in\bigvee_{d\in\dee_{i,\alpha}}
\sthoint{i}{P}{G[d/x]}$, then for some term $t$
\[
\nullsres{\ipresgoal{i}{P\theta}{G[t/x]\theta}}{\ids}.
\]
Since $(G[t/x]\theta) = G\theta[t\theta/x]$, we have
\[
\midnullsrest{\ipresgoal{i}{P\theta}{\exists x(G\theta)}}
{\exists}{\ipresgoal{i}{P\theta}{G\theta[t\theta/x]}}{\ids}{} \enspace,
\]
and hence that
 $\theta\in\sthoint{i}{P}{\exists xG}$.

\case{condition \eqref{uageneric}}
Suppose $\theta$ is in $\sthoint{i}{P}{\forall xG}$,
with $x$ of type $\alpha$.
Then there is a derivation
\[
\nullsres{\ipresgoal{i}{P\theta}{(\forall xG)\theta}}{\ids}.
\]
Since no bound variable is in the domain of $\theta$,
the goal $(\forall xG)\theta$ is equivalent to
$\forall x(G\theta)$, and the rest of the derivation must be of the form
\[
\midnullsrest{\ipresgoal{i}{P\theta}{\forall x(G\theta)}}{\forall}
{\ipresgoal{i+1}{P\theta}{G\theta[c/x]}}{\ids}.
\]
By the (first claim) of
the  corollary on generic constants (Corollary~\ref{cor:genconst}), for
any positive term $t$, there is a derivation
\[
\nullsres{\ipresgoal{i}{P\theta}{G\theta[t/x]}}{\ids},
\]
and hence, for any positive term $u$, letting $t=u\theta$,
\[
\nullsres{\ipresgoal{i}{P\theta}{G[u/x]\theta}}{\ids}.
\]
Therefore,
$\theta\in \bigcap_{u\in\dee_{i,\alpha}}\sthoint{i}{P}{G[u/x]}$.

Conversely, suppose
$\theta\in \bigcap_{u\in\dee_{i,\alpha}}\sthoint{i}{P}{G[u/x]}$. In particular,
taking as $u$ a fresh constant $c$ of level $i$ and type $\alpha$, we have
\[
  \ipresgoal{i}{P\theta}{G\theta[c/x]} \derivation{\ids} \NULL \enspace ,
\]
By looking at the proof of the second part of the corollary on generic constants
(Corollary~\ref{cor:genconst}), it is clear that this is enough to ensure the existence of
a derivation
\[
  \ipresgoal{i+1}{P\theta}{G\theta[c'/x]} \derivation{\ids} \NULL \enspace ,
\]
where $c'$ is a fresh contant of level $i+1$, hence
\[
\midnullsrest{\ipresgoal{i}{P\theta}{(\forall x.G)\theta}}{\forall}
{\ipresgoal{i+1}{P\theta}{G\theta[c'/x]}}
{\ids},
\]
and $\theta \in \sthoint{i}{P}{\forall xG}$, as we wanted to show.

\case{condition \eqref{uainclusion}}
Assume $\theta \in \sthoint{i}{P}{G}$ and $P \sqq_\pee P'$. Then, there is a derivation
\[
  \ipresgoal{i}{P\theta}{G\theta} \derivation{\ids} \NULL \enspace .
\]
Since $P \sqq_\pee P'$, we have $\elab(P) \subseteq\elab(P')$. It is
easy to check that, under this condition, we have a derivation
\gabox{CHECK the proof with all details ?}
\[
  \ipresgoal{i}{P'\theta}{G\theta} \derivation{\ids} \NULL \enspace .
\]
hence $\theta \in \sthoint{i}{P'}{G}$.

\case{condition \eqref{uabackchain}}
Let $\theta \in \sthoint{i}{P}{G}$ and $G \hoe A_r \sqq_\pee P$. Then, there is a derivation
\[
  \ipresgoal{i}{P\theta}{G\theta} \derivation{\ids} \NULL \enspace .
\]
and $[i; G \hoe A_r] \subseteq [i; P]$. Since $[i; G \hoe A_r] = \{ G \pmi A_r\}$, this means there is a clause $\forall x (G' \hoe A'_r) \in \elab(P)$ and $\gamma' =\{t_1/x_1, \ldots, t_n/x_n \}$ such that $G'\gamma' = G$ and $A'_r\gamma'=A_r$. Therefore, by considering the renaming-apart substitution $\delta =\{w_1/x_1, \ldots, w_n/x_n\}$, we have that $\gamma =\{t_1/w_1, \ldots, t_n/w_n\}\theta$ is an unifier for $A_r'\delta$ and $A\theta$ which is the identity over the variables of $A\theta$, since $A_r \delta \gamma  = A_r \gamma' \theta = A_r\theta$ and $A_r \theta \theta = A_r \theta$ since $\theta$ is safe. Threfore, we have a derivation:
\[
  \ipresgoal{i}{P\theta}{A_r \theta} \derivation{\gamma} \ipresgoal{i}{P\theta}{G\theta}\derivation{\ids} \NULL \enspace ,
\]
whence $\theta \in  \sthoint{i}{P}{A_r}$.

\end{proof}



\subsection{Lindenbaum Algebra and \UCTT}

We will  now weaken the definition of Uniform Algebras to a class of
\gabox{I don't understand... the definition of UA in Section 3 already is quite liberal
in the meet and join we need.}
structures maintaining soundness, using an
idea in \cite{TvD}. It has long been known that it is sufficient
for a complete semantics to stipulate only that the object of
truth-values $\Omega$ contain {\em enough\/}  infinitary meets and
joins to interpret quantifiers and implication. In particular, the
{\em parametrized\/} family $\{I_i(P,G)_{\eta[x:=d]} \mid d\in D_{i,\alpha}\}$
must have a meet and join, and the $\wedge\bigvee$ distributive law
must hold for {\em existing\/} joins. Furthermore, we will restrict
meets and joins to the interpretation of different goals \emph{with
  the same ambient program $P$}. Letting \bbp\ stand for the partially
ordered set of programs, we will define a model \emph{indexed over}
\bbp, as explained below.

Then we will show how to build a (global) uniform algebra out of \bbp-indexed
and parametrized algebras, using an \emph{ideal completion} to build
the algebra, a construction with a long history in lattice theory and logic.
See \cite{TvD} for details.

\subsubsection{A \bbp-indexed Lindenbaum Algebra for \UCTT}

We are now going to give constructive completeness theorem for
$\UCTT$ which uses a Lindenbaum-style algebra construction. In essence
program-goal pairs are interpreted as ``themselves'' in the object of
truth-values.  More precisely, we will map states $\pgstate{i}{P}{G}$ to
equivalence classes $\braseq{P}{G}$ of sequents in \ICTT, with respect
to an equivalence relation to be defined below. We first fix our
notation conventions. We will write an
{\em indexed\/}
sequent $\seqi{P}{G}$ with
natural number index $i$ to indicate that the formulas in $P,G$ are
over universe $\univ_i$. Thus, if the notation $\seqi{P}{G}$ is
correct with respect to levels, so does $\seqx{k}{P}{G}$ for any $k>i$.
Indices may be omitted if clear from context, or when it is clear that
they are ranging over all possible natural number values, as in the
definition below.
\begin{defn}
  We define the binary relation $\leqt$ on indexed \ICTT
sequents as follows:
 $\seq{P}{G} \leqt \seq{P'}{G'}$ if and only if
in some metatheory $\ct$, presumed constructive, we can prove that if the sequent
$\seq{P}{G}$ has a {\em cut-free\/}
 proof in \ICTT then so does $\seq{P'}{G'}$.
\gabox{Note that this definition does not use indexes, since there is no reason
to use them. And I think there is no reason even later.}
We also describe
this state of affairs by
$\ct\btraca\: \cfseq{P}{G}\wimp \cfseq{P'}{G'}$, omitting mention of \ct
when clear from context.
We will also use the notation $\cfseq{P}{G}$ (with or without
numerical index) to abbreviate the predicate
``there is a cut-free proof of $\seq{P}{G}$.''

We denote by $\eqt$ the
equivalence relation induced by this order, i.e.
\[
\seqi{P}{G}\eqt \seqx{k}{P'}{G'} \quad \mbox{ iff } \quad
\seqi{P}{G}\leqt \seqx{k}{P'}{G'} \quad \mbox{ and } \quad
\seqx{k}{P'}{G'}\leqt \seqi{P}{G},
\]
and write the induced equivalence classes thus:
$\braseq{P}{G}=\{\seqx{k}{P'}{G'}: \seqi{P}{G} \eqt \seqx{k}{P'}{G'}
\}.$
\end{defn}
Cut-elimination holds in \ICTT \cite{ccct05}, but we do not require
it to be provable in \ct in the results that follow.

The reader should note that it is possible for a sequent indexed by
$i$ to imply sequents with other indices. For example, any sequent
lies below the identity $\seqi{A}{A}$ for any $i$ and, say, any atom
$A$ over $\univ_i$.
It is easy to see that
$\leqt$ is a preorder, $\eqt$ an equivalence relation and
 $\braseq{P}{G}$ is well-defined. We will abuse language and use the same
symbol for the preorder $\leqt$ and the induced order on equivalence
classes.

The reader should note that 

\begin{lem}
  \label{lem:supinf}
  Let $\oml$ be the poset of equivalence classes $\braseq{P}{G}$
with order induced by $\leqt$.
Then the following suprema and infima exist in
$\oml$:
\begin{itemize}
  \item $\braseq{P}{G_1}\wedget\braseq{P}{G_2} = \braseq{P}{G_1\wedge
  G_2}$
  \item $\braseq{P}{G_1}\veet\braseq{P}{G_2} = \braseq{P}{G_1\vee
  G_2}$
  \item $\bigvee_{t\in \dee_{i,\alpha}} \braseq{P}{G[t/x_\alpha]} = \braseq{P}{\exists
  x_{\alpha} G}$
 \item $\bigwedge_{t\in \dee_{i,\alpha}} \braseq{P}{G[t/x_\alpha]} =
\braseq{P}{\forall  x_{\alpha} G}$
\end{itemize}
The operations are indexed over a fixed program $P$. More formally, we
can think of $\oml$ as a mapping from programs $P$ in \bbp\ to the set
of all (equivalence classes of)  pairs $\braseq{P}{G}$ for goals
$G$. The $\wedge$ and $\vee$ operations are defined on each of these
sets, but not for pairs with different programs.
\end{lem}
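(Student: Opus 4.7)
The plan is to prove each of the four identities by verifying that the right-hand side satisfies the universal property of the claimed join or meet within the poset $\oml$ (or, as the remark after the lemma suggests, within each $\oml_P$). Because $\leqt$ is defined in terms of cut-free provability, we cannot appeal to cut to move between sequents. Instead the argument rests on two families of metatheoretic facts about cut-free proofs in \ICTT over HOHH-shaped programs and goals, both of which are straightforwardly formalizable in \ct: (i) invertibility of the right rules $\wedge_R$ and $\forall_R$, proved by structural induction on cut-free derivations; and (ii) the uniform-proof property of HOHH, which in our setting yields the disjunction property (a cut-free proof of $\sequent{P}{G_1\vee G_2}$ gives a cut-free proof of $\sequent{P}{G_1}$ or of $\sequent{P}{G_2}$) and the analogous existential property (a cut-free proof of $\sequent{P}{\exists x.G}$ yields a cut-free proof of $\sequent{P}{G[t/x]}$ for some explicit $t$).

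The two easy directions are handled by direct rule application. For the conjunction case, $\braseq{P}{G_1\wedge G_2}\leqt \braseq{P}{G_i}$ follows from $\wedge_R$ invertibility, and conversely if $\braseq{P'}{H}\leqt \braseq{P}{G_1}$ and $\braseq{P'}{H}\leqt \braseq{P}{G_2}$ then a proof of $\cfseq{P'}{H}$ yields cut-free proofs of $\cfseq{P}{G_1}$ and $\cfseq{P}{G_2}$, which combine via $\wedge_R$ into a cut-free proof of $\cfseq{P}{G_1\wedge G_2}$. For the universal case, the $\forall_R$ rule gives the upper-bound direction: from cut-free proofs of $\cfseq{P}{G[c/x_\alpha]}$ for a fresh $c$ (obtained by applying the hypothesis $\braseq{P'}{H}\leqt \braseq{P}{G[t/x_\alpha]}$ with $t=c$), we get $\cfseq{P}{\forall x_\alpha.G}$. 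For the lower-bound direction we apply invertibility of $\forall_R$ to extract $\cfseq{P}{G[c/x_\alpha]}$ for some fresh $c$ from a cut-free proof of $\cfseq{P}{\forall x_\alpha.G}$, and then replace $c$ uniformly by an arbitrary positive term $t$ of the right type throughout the proof (this replacement is legal because $c$ is fresh in $P$ and in the bound-variable form of $G$). The disjunction and existential upper-bound directions are trivial from $\vee_R$ and $\exists_R$ respectively.

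The main obstacle is the remaining direction in the disjunction and existential cases, where we must show that $\braseq{P}{G_1\vee G_2}$ and $\braseq{P}{\exists x_\alpha.G}$ are respectively the \emph{least} upper bounds. Here we need to convert a cut-free proof of $\cfseq{P}{G_1\vee G_2}$ (resp.\ $\cfseq{P}{\exists x_\alpha.G}$) into one of $\cfseq{P}{G_i}$ for some $i$ (resp.\ $\cfseq{P}{G[t/x_\alpha]}$ for some $t$). The argument proceeds by induction on the cut-free proof: because HOHH program formulas contain no disjunctions and no existentials, the only candidates for the last principal formula on the right are $\vee_R$, $\exists_R$, or $\bot_R$; any preceding left rule ($\wedge_L$, $\forall_L$, $\hoe_L$) can be pushed inside the inductive call, since it acts on the antecedent and preserves the shape of the succedent. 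The $\bot_R$ case is discharged by reapplying $\bot_R$ with the chosen disjunct (or witness) as succedent. Applying this procedure and then the relevant assumption $\braseq{P}{G_i}\leqt\braseq{P'}{H}$ (or $\braseq{P}{G[t/x_\alpha]}\leqt\braseq{P'}{H}$) yields the needed cut-free proof of $\cfseq{P'}{H}$; this step, equivalent to the uniform-proof property already exploited throughout the paper, is the only real work in the lemma.
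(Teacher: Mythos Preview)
Your approach is essentially the same as the paper's: both argue each claim by combining the easy direction (a single application of the relevant right rule) with a cut-free invertibility or extraction argument obtained by induction on derivation depth, exploiting the restricted shape of HOHH program formulas to limit the admissible left rules to $\wedge_L$, $\forall_L$, $\hoe_L$.

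Two remarks. First, a minor one: $\bot_R$ never arises, since neither programs nor goals in HOHH contain $\bot$; that case can be dropped.

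Second, and more substantively, there is a gap in your existential case. Your induction extracts \emph{some} witness $t$ from the $\exists_R$ step of the cut-free proof, but the hypothesis you then wish to invoke, namely $\braseq{P}{G[t/x_\alpha]}\leqt\braseq{P'}{H}$, is only assumed for $t\in\dee_{i,\alpha}$, i.e., for \emph{positive} terms of the appropriate type and level. An arbitrary \ICTT witness need not be positive. The paper flags exactly this point: the direct induction ``almost'' gives the result, and an additional appeal to Theorem~3.2.14 of Nadathur's dissertation (equivalently, the full uniform-proof theorem of Miller et al.) is needed to ensure that a positive witness can always be found. Your closing remark that the step is ``equivalent to the uniform-proof property'' is correct in spirit, but you should make explicit that it is the \emph{positivity} of the witness, not merely its existence, that forces the appeal to that result.
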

The choice of \ct matters in this lemma. It must be strong enough to
support the arguments below. A key point here is that these
arguments are clearly constructive. It will suffice, for our purposes
to take \ct to be the constructive set-theory IZF
\cite{beeson,mccarty86} 
although a considerably weaker
theory could be used.

\begin{proof}
  For the first claim, observe that
 $\cfseq{P}{G_1\wedge G_2}\wimp \cfseq{P}{G_1} \mbox{ and }
  \cfseq{P}{G_2}$, by induction
  on the depth of the given proof. If the depth is 0, then
 $\seq{P}{G_1\wedge G_2}$ was derived by the identity axiom, i.e.
$P=P',G_1\wedge G_2$, where $G_1,G_2$ must be supposed {\em core\/}
formulas, i.e. both program and goal formulas.
 But then, applying $\wedge_L$ to
the axiom $\seq{P',G_1,G_2}{G_i}$, we have
$\seq{P',G_1\wedge G_2}{G_i}$. 

The inductive cases are easy.
Consider the last rule used in a cut-free proof of
 $\seq{P}{G_1\wedge G_2}$.
Because of  the program-goal structure of $P,G$ the
only possibilities are $\wedge_R$, which would give us the desired
 result immediately, or the $\forall_L,\hoe_L$ and $\wedge_L$ rules.
In the latter cases, we apply the inductive hypothesis to the
 relevant premiss to obtain the desired proof.
From this we have that
any sequent below  $\seq{P}{G_1\wedge G_2}$ is below
the infimum of the other two.

%
Now suppose $\cfseq{Q}{G}
\wimp \cfseq{P}{G_1} \mbox{ and }
\cfseq{P}{G_2}$. Then
 $\cfseq{Q}{G} \wimp \cfseq{P}{G_1\wedge G_2}$, so
$\braseq{P}{G_1\wedge G_2} \eqt \braseq{P}{G_1}\wedget\braseq{P}{G_2}$.

One direction of the second claim follows from $\vee_R$:
$\cfseq{P}{G_i}\wimp \cfseq{P}{G_1\vee G_2}$. To show
the other direction we must show
\[
\cfseq{P}{G_1\vee G_2} \wimp \cfseq{P}{G_1} \mbox{ or }
\cfseq{P}{G_2}
\]
making use of the fact
that $P$ is a program formula. Thus the sequent
$\seq{P}{G_1\vee G_2}$ could not have been derived by the identity
axiom, since disjunctions are not allowed on the left.
 If the last rule used was $\vee_R$ we are done. If
it was one of the three allowed left rules, we proceed as in the
preceding case, applying the induction hypothesis to the relevant
premiss(es).

For the third claim we use $\exists_R$ for one direction, and then
establish the existence property for cut-free proofs of
program formulas \gabox{Note we are using here $\ICTT$ and not $\ICTT^c$.}
\[
\cfseq{P}{\exists x_{\alpha} G} \wimp 
 \cfseq{P}{G[t/x_\alpha]} \mbox{ for some } t \in \dee_{i,\alpha}
\]
As above, we induct on the depth of the given cut-free proof. The
depth cannot be 0 (or if the reader prefers to say it this way, the
base case is vacuously true) since an existential formula cannot be a
program formula. Suppose the claim holds for all shorter cut-free
proofs of an existential goal from a program formula. If the last rule
used was $\exists_R$ we are done. Otherwise one of the three allowed
left rules was used, and applying the induction hypothesis to the
relevant premiss(es) we are done.

This result {\em almost\/}
gives the other direction of the third claim.
 Since $t_\alpha\in\dee_{i,\alpha}$ means
that $t$ is a positive formula, we need to appeal to Theorem 3.2.14 of
Nadathur's dissertation and the subsequent Miller \emph{et al.}
\cite{nadathur:thesis,uniform} to show that if a cut-free proof of
$\seq{P}{\exists x_{\alpha} G}$ exists, then there must be a cut-free
proof for a {\em positive\/} witness.

Finally, the fourth claim follows from the fact that if
$\braseq{Q}{G} \leqt \braseq{P}{G[t/x]}$ for each $t\in \dee_{i,\alpha}$,
then $\cfseq{Q}{G}$ implies the existence of a cut-free proof of
every positive instance, in particular, for $t$ a fresh
variable, whence, by $\forall_R$, the derivability of
$\cfseq{P}{\forall x_\alpha G}$. For the other direction
argue by induction as in the preceding cases.
If a cut-free proof of $\seq{P}{\forall  x_{\alpha} G}$ has depth 0,
then $P=P',\forall x G$, and $G$ is both a program and goal formula.
 But then we can produce the desired cut-free
proof for any positive instance $t$
using the identity axiom and $\forall_L$:
\[
\infer[\forall_L]
{\seq{P',\forall x G}{G[t/x]}}
{\seq{P',G[t/x]}{G[t/x]}}
\]
For the inductive case, suppose the last rule used was $\forall_R$.
Then $x$ does not occur freely in the final sequent, and
there is a cut-free proof of its immediate premises, of the form
$\seq{P}{G}$. It is a straightforward exercise to show, by induction
on the length of this proof, that a cut-free proof of
  $\seq{P}{G}[t/x]$ must also exist.
If the last rule used is one of the three allowed left rules, the
argument proceeds as above.
\end{proof}

\begin{lem}[Generalized contraction]
  Suppose $Q \in [i;P]$ for some $i$. If $\cfseq{P,Q}{G}$ then
  $\cfseq{P}{G}$.
 \end{lem}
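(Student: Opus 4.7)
The plan is to trace the instance $Q$ back to the clause in $\elab(P)$ it came from, re-abstract the universal quantifiers using the $\forall_L$ rule until that clause is reconstructed, and then invoke Lemma~\ref{lem:clause-deriv} to absorb the reconstructed clause into the ambient program $P$.

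By the definition of $[i;P]$ (Definition~\ref{def:extension}), there is some program formula $D \in P$, a clause $K = \forall x_1 \cdots \forall x_n (G_0 \hoe A_r) \in \elab(D)$, and a substitution $\gamma = \{t_1/x_1, \ldots, t_n/x_n\}$ whose $t_j$ are positive terms of level $i$ or less, such that $Q = (G_0 \hoe A_r)\gamma$. (The bodyless case where $K = \forall \vec{x}\,A_r$ and $Q = A_r\gamma$ is handled identically by dropping the premise $G_0$ throughout.)

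Starting from the given cut-free proof of $\seq{P, Q}{G}$, I will apply $\forall_L$ exactly $n$ times, innermost quantifier first. The initial application uses bound variable $x_n$ and witness $t_n$: from $\seq{P, (G_0 \hoe A_r)[t_1/x_1]\cdots[t_n/x_n]}{G}$ it produces $\seq{P, \forall x_n.\,(G_0 \hoe A_r)[t_1/x_1]\cdots[t_{n-1}/x_{n-1}]}{G}$. Because $x_n \in \bvar$ does not occur in any of the positive terms $t_1, \ldots, t_{n-1}$ (whose free variables are confined to $\fvar$), this conclusion is syntactically identical to $(\forall x_n.\,(G_0 \hoe A_r))[t_1/x_1]\cdots[t_{n-1}/x_{n-1}]$. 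Iterating this for $x_{n-1}, \ldots, x_1$ produces a cut-free derivation of $\seq{P, K}{G}$, and no cuts are introduced because $\forall_L$ is a pure left rule.

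Finally I invoke Lemma~\ref{lem:clause-deriv} with $\Gamma := P$: since $K \in \elab(D)$, we obtain a proof of $\seq{P, D}{G}$, and inspection of that lemma's proof shows it only adjoins $\wedge_L$, $\forall_L$, $\hoe_L$, and $Ax$ steps, so cut-freeness is preserved. Because antecedents are \emph{sets} and $D \in P$, the sequent $\seq{P, D}{G}$ is literally $\seq{P}{G}$. The only delicate point is the bookkeeping in the middle step—namely the commutation of each $\forall x_k$ with the remaining outer substitutions $[t_j/x_j]$ for $j < k$—but this is automatic in \UCTT because \bvar and \fvar are strictly disjoint, so no $\alpha$-conversion is ever required.
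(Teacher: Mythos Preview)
Your proof is correct but follows a different decomposition than the paper's. The paper invokes Lemma~\ref{lem:extension-ts} to view $Q$ as the endpoint of a chain $D \to_i \cdots \to_i Q$ in the program-formula transition system, and then argues by induction on the length of that chain: the base case is literal contraction (since $D\in P$ and antecedents are sets), and each inductive step undoes one $\to_i$-transition by a single application of $\wedge_L$-plus-weakening or $\forall_L$. Your route instead splits the work in two: first you rebuild the clause $K\in\elab(D)$ from its instance $Q$ by $n$ applications of $\forall_L$, and then you offload the remaining passage from $K$ to $D$ to the already-proved Lemma~\ref{lem:clause-deriv}. Both arguments introduce only left rules and weakening, so both preserve cut-freeness; yours has the advantage of reusing an existing lemma rather than re-deriving its content inside an ad hoc induction, while the paper's version is self-contained and makes the parallel structure of $\elab$-unfolding and instance-unfolding explicit in a single pass. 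The bound/free variable hygiene point you flag (that $x_k\in\bvar$ cannot occur in the $t_j$) is exactly what makes the iterated $\forall_L$ step go through, and the paper relies on the same fact implicitly in its $\forall$-case.
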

\begin{proof}
  By definition of extension there is a program formula $D \in P$ such that $Q \in [i;D]$. We consider a generalization of the proof where $Q$ is any formula derivable from $D$ through $\to_i$, and we prove the result by induction on the length $n$ of the derivation.

  If  $n = 0$, then $Q = D = G \hoe A$, and the resut follows from contraction in \ICTT. Otherwise, consider the last step in the derivation of $Q$ through the transition system in Lemma~\ref{lem:extension-ts}.

\case{$Q_1 \wedge Q_2 \to_i Q_k$}
Then the derivation is of the following form: $D\to_i^{n-1}\, Q_1\wedge
Q_2\imp Q_k$. Assume without loss of generality that $k=1$. By the induction hypothesis, if $\cfseq{P,Q_1\wedge Q_2}{G}$ then $\cfseq{P}{G}$. Now suppose $\cfseq{P,Q_1}{G}$. Then
$\cfseq{P,Q_1,Q_2}{G}$ by weakening-left (an easily derived rule of \ICTT),
so $\cfseq{P,Q_1\wedge Q_2}{G}$ by $\wedge_L$, hence  $\cfseq{P}{G}$.


\case{$\forall x Q' \to_i Q'[t/x]$}
By the induction hypothesis
$\cfseq{P,\forall x Q'}{G}$ implies $\cfseq{P}{G}$. Now suppose
$\cfseq{P,Q'[t/x]}{G}$ and use $\forall_L$.
\end{proof}
Now we are able to prove the following lemma.
\begin{lem}
  \label{lem:cut25}
  If $G\hoe R\in [i;P]$ and $\cfseq{P}{G}$ then $\cfseq{P}{R}$.
\end{lem}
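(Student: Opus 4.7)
The strategy is to reduce to the generalized contraction lemma proved just above: since $G\hoe R\in [i;P]$, it suffices to derive the sequent $\cfseq{P, G\hoe R}{R}$, after which generalized contraction delivers $\cfseq{P}{R}$.

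To produce $\cfseq{P, G\hoe R}{R}$ I plan to apply the $\hoe_L$ rule with principal formula $G\hoe R$, whose two premises are $\sequent{P}{G}$ and $\sequent{P, R}{R}$. The left premise is immediate from the hypothesis $\cfseq{P}{G}$ by left weakening (a derived rule of \ICTT, noted earlier in the paper). The right premise follows from the identity axiom on $R$ (together with left weakening if needed); here I use that, since $G\hoe R\in[i;P]$ arises from some $\forall \vec x(G_0\hoe R_0)\in\elab(P)$ via an instantiation, $R$ is in particular a rigid atom and hence a legal instance of the $Ax$ rule. Combining these two premises with $\hoe_L$ yields a cut-free proof of $\sequent{P, G\hoe R}{R}$.

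Finally, applying the generalized contraction lemma to the hypothesis $G\hoe R\in[i;P]$ and the sequent $\cfseq{P, G\hoe R}{R}$ produces the desired $\cfseq{P}{R}$. There is no real obstacle here: the argument is essentially a short sequent-calculus derivation followed by one invocation of generalized contraction. The only subtlety to check is that left weakening is indeed available at the cut-free level, which is what the derivability-of-weakening lemma earlier in the paper ensures.
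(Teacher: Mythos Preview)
Your argument is correct and essentially identical to the paper's proof: derive $\cfseq{P,G\hoe R}{R}$ from the identity $\cfseq{P,R}{R}$ and the hypothesis $\cfseq{P}{G}$ via $\hoe_L$, then invoke generalized contraction. The only superfluous step is your appeal to left weakening for the premise $\sequent{P}{G}$: in the $\hoe_L$ rule the left premise already has antecedent $P$ (not $P,G\hoe R$), so the hypothesis applies directly.
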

\begin{proof}
We have that  $\cfseq{P}{G}$, and trivially $\cfseq{P,R}{R}$. By arrow-left
$\cfseq{P,G\hoe R}{R}$. By the preceding lemma $\cfseq{P}{R}$
  \end{proof}
Now we use $\oml$ as an object of truth-values for a parametrically complete
uniform algebra over the term model.
\begin{defn}
  Let $\cat{L}=\ang{\sfd,\umod,\oml,\cl}$ be the parametrically
complete uniform algebra
  defined as follows. Take $(\sfd,\umod)$ to be the underlying structure
of the term-model,    i.e. the same uniform applicative structure used
  in the  first completeness theorem, above.
Define the indexed family of mappings
$\cl_i:\dee_{i,\pee}\times\dee_{i,\gee}\imp \oml$ as follows.
\begin{equation}
\label{eq:L}
\cl_i(P,G) = \braseq{P}{G}
\end{equation}
\end{defn}

\begin{thm}
  \label{thm:lindenbaum}
  The mappings $\cl_i$ make $\cat{L}$ into a p.c.\@ uniform algebra.
\end{thm}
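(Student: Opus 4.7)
The plan is to verify that $\cl_i(P,G) = \braseq{P}{G}$ satisfies every clause of Definition~\ref{interp}. The applicative structure $\sfd$ and interpretation $\umod$ are the term-model ones already vetted in Lemma~\ref{lem:istar-is-int}, so the environmental model conditions come for free; the work is entirely in checking that $\cl$ is an $\oml$-valuation. Lemma~\ref{lem:supinf} is the central tool: it provides, on the nose, the suprema and infima required for the propositional and quantifier clauses.

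First I would dispatch the easy bookkeeping. Clause \eqref{uaatom} holds because the identity axiom gives $\cfseq{A_r}{A_r}$ unconditionally, making $\braseq{A_r}{A_r}$ the $\leqt$-top, and \eqref{uatopgoal} holds for the same reason via $\top_R$. The index-monotonicity clause \eqref{umonotone} is immediate since cut-free provability of a sequent depends only on its syntactic content, not on the ambient level assigned to its symbols, so $\braseq{P}{G}$ is literally the same equivalence class at any $j \geq i$. The four identities \eqref{uaand}, \eqref{uaor}, \eqref{uainstance} and \eqref{uageneric} are exactly the four bullets of Lemma~\ref{lem:supinf}, so they are simply quoted. (For \eqref{uainstance} I would note, as Lemma~\ref{lem:supinf} does, the appeal to Nadathur's positive-witness theorem to ensure the existential can be realized by a positive term.)

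Next I would handle the two implication-related clauses. For \eqref{uahorseshoe}, I need $\braseq{P}{Q\hoe G} = \braseq{P\wedge Q}{G}$; one direction is a single application of $\hoe_R$, and the other is the invertibility of $\hoe_R$ in the cut-free calculus, proved by a routine induction on proof depth observing that the admissible left rules on program formulas commute past $\hoe_R$. Clause \eqref{uabackchain} is an immediate consequence of Lemma~\ref{lem:cut25}: if $G\hoe A_r \sqq_\pee P$ then $G\hoe A_r \in [i;P]$, hence $\cfseq{P}{G}$ forces $\cfseq{P}{A_r}$.

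The main obstacle is \eqref{uainclusion}: from $P \sqq_\pee P'$, which unfolds to $[i;P] \subseteq [i;P']$, I must extract $\cfseq{P}{G} \Rightarrow \cfseq{P'}{G}$ constructively in \ct. My plan is two-step. First, for every clause-instance $Q \in [i;P']$ I construct directly a cut-free proof of $\cfseq{P'}{Q}$: by Lemma~\ref{lem:extension-ts} there is some $D'\in P'$ with $D' \to_{i}^{*} Q$, and each $\to_i$-step translates, running backwards, into a single $\wedge_L$ or $\forall_L$ step above the identity axiom $\seq{Q}{Q}$. Second, I induct on the height of the cut-free proof of $\seq{P}{G}$, translating it into a cut-free proof of $\seq{P'}{G}$; the only interesting case is when the last rule acts on a formula of $P$, and then the unfolded instance $Q$ it produces lies in $[i;P] \subseteq [i;P']$, so I splice in the cut-free derivation of $\cfseq{P'}{Q}$ built in the first step together with the generalized contraction lemma to reduce $\cfseq{P',Q}{G}$ back to $\cfseq{P'}{G}$. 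Concatenating the two steps completes \eqref{uainclusion} and finishes the proof that $\cat{L}$ is a parametrically complete uniform algebra.
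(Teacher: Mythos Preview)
Your proposal agrees with the paper on every clause except \eqref{uainclusion}: you make the same appeals to the identity axiom and $\top_R$ for \eqref{uaatom} and \eqref{uatopgoal}, the same citation of Lemma~\ref{lem:supinf} for \eqref{uaand}, \eqref{uaor}, \eqref{uainstance}, \eqref{uageneric}, the same invertibility-of-$\hoe_R$ argument for \eqref{uahorseshoe}, and the same use of Lemma~\ref{lem:cut25} for \eqref{uabackchain}.

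For \eqref{uainclusion} the paper takes a different and very brief route, asserting that $P \sqq_\pee P'$ gives $\elab(P)\subseteq\elab(P')$ and then citing Lemma~\ref{lem:clause-deriv}. Your direct induction on the cut-free proof, aiming to use generalized contraction, is a reasonable alternative, but as written it has a gap. The assertion ``the unfolded instance $Q$ it produces lies in $[i;P]$'' is not correct: a single left rule on a formula of $P$ yields an \emph{intermediate} program formula ($\wedge_L$ on $D_1\wedge D_2$ gives $D_1,D_2$; $\forall_L$ on $\forall x D$ gives $D[t/x]$), not a fully unfolded clause instance, so the generalized contraction lemma does not yet apply. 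Worse, the premise of that rule no longer has antecedent $P$, so your induction hypothesis, stated only for the sequent $\seq{P}{G}$, does not cover it. The repair is to strengthen the hypothesis: for any $\Gamma,\Delta$ with $[i;\Gamma]\subseteq [i;P']$, if $\cfseq{\Gamma,\Delta}{G}$ then $\cfseq{P',\Delta}{G}$. Then $\Delta$ absorbs augmentation formulas coming from $\hoe_R$ and the head atom left behind by $\hoe_L$; the containment $[i;\Gamma]\subseteq[i;P']$ is preserved under $\wedge_L$ and $\forall_L$ by Lemma~\ref{lem:extension-ts}; and when $\hoe_L$ fires on some $G'\hoe A\in\Gamma$ you at last have a genuine element of $[i;\Gamma]\subseteq[i;P']$ to which generalized contraction can be applied after reassembling the two premises.
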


\begin{proof}
  We check all the conditions of Definition~\ref{interp}, above.
  
\case{condition \eqref{uaatom}}
Let $R$ be a rigid atom of level $i$.
Then $\cl_i(R,R)=\braseq{R}{R}=\top_{\oml}$.

\case{condition \eqref{umonotone}}
Since $\cfseqi{P}{G}\wimp \cfseqx{i+1}{P}{G}$ this case is immediate.

\case{condition \eqref{uatopgoal}}
It states that $\cl_i(P,\top)=\braseq{P}{\top}=\top_{\oml}$ and it is immediate.

\case{condition \eqref{uaand}}
$\cl_i(P,G_1\wedge G_2) = \braseq{P}{G_1\wedge G_2} =
\braseq{P}{G_1}\wedget \braseq{P}{G_2}$ by the preceding lemma.

\case{conditions \eqref{uaor}, \eqref{uainstance}, \eqref{uageneric}}
Similar to the previous case. 

\case{condition \eqref{uahorseshoe}}
It is immediate, by $\hoe_R$ in one direction and cut-free invertibility of $\hoe_R$ by a
straightforward induction on the length of proofs in the other.

\case{condition \eqref{uainclusion}}
We have to show that $P\sqq P'$ implies $\cl_i(P,G) \leqt \cl_i(P',G)$. We can prove that $P \sqq P'$ implies $\elab(P) \subseteq \elab(P')$. Now use Lemma~\ref{lem:clause-deriv}.
\gabox{All this proof is quite sketchy.}

\case{condition \eqref{uabackchain}}
We need to prove that
\[
(G \horseshoe_{\pee\gee\ree} R) \sqq_\pee P \textrm{ implies }
     \linden{i}{P}{G} \leq_\Omega \linden{i}{P}
     {\inc_{\gee\aee}(\inc_{\aee\ree}(R))}.
\]
Note that
$(G \horseshoe_{\pee\gee\ree} R) \sqq_\pee P$  implies that
$(G \horseshoe_{\pee\gee\ree} R)\in [i; P]$, so by Lemma~\ref{lem:cut25}
if $\cfseq{P}{G}$ then $\cfseq{P}{R}$.
\end{proof}

%

\subsubsection{The Algebra of Ideals}
We now define a complete Heyting algebra $\widehat{\Omega}$ as follows. The elements
are sets of members $\brseq{P}{G}$ of $\oml$ ranging over all programs
$P$ and  goals $G$ \emph{downward closed under} $\leqt$. We define
$\bigvee$ and $\bigwedge$   by union and intersection. We call these
sets $S$ \emph{ideals}. They are
also required to satisfy the additional condition
\[
  X\subseteq S \mbox{ and } \bigvee X \in \oml \quad \wimp \quad \bigvee X \in S
\]


The ideals together with their members form a topology, in fact an Alexandroff topology (closed under arbitrary intersections) as is well known, so they are also a Heyting Algebra \omfib, which is what we require.

Define, for each $\braseq{P}{G}$ in $\oml$, the induced \emph{principal ideal} $\idseq{P}{G}_i$ to be the set
\[
  \{\braseq{Q}{H}\, :\, \braseq{Q}{H} \leqt \braseq{P}{G}\}
  \]

\begin{defn}
  Let \omfib\ be as defined above. We now define an interpretation
  \[
    \fri:\deep\times\deegg\imp \omfib
  \]
  tacitly indexed over the signatures $i$, as follows
  \[
    \fri(P,G) = \idseq{P}{G}
    \]
\end{defn}
Now the results of Lemma~\ref{lem:supinf} and Theorem~\ref{thm:lindenbaum} easily give us the following theorem.

\begin{thm}
\fri\ and \omfib\ define a uniform algebra, not just a \bbp-indexed model.
\gabox{We sometime use the term $\mathbb{P}$-indexed model, somtime p.\. uniform algebra. Choose one.}
\end{thm}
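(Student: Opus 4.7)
The plan is to verify that the quadruple $\ang{\sfd, \umod, \omfib, \fri}$ satisfies every clause of Definition~\ref{interp}, transferring properties from the $\bbp$-indexed algebra $\cat{L}$ of Theorem~\ref{thm:lindenbaum} through the principal-ideal embedding $\iota : \oml \to \omfib$, $\iota(\braseq{P}{G}) = \idseq{P}{G}$. The underlying applicative structure $\sfd$ and its environmental interpretation $\umod$ are identical to those used in $\cat{L}$, so the only work is to check the valuation conditions. Since $\omfib$ is the Alexandroff topology of ideals, it is a complete Heyting algebra, which supplies all the arbitrary joins and meets that $\oml$ lacks — this is precisely the gain that promotes the $\bbp$-indexed model to a genuine uniform algebra.

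The bookkeeping conditions are immediate. For the atomic case \eqref{uaatom}, $\fri_i(A_r, A_r) = \idseq{A_r}{A_r}$ equals the whole of $\omfib$, i.e.\ $\top_{\omfib}$, since $\braseq{A_r}{A_r} = \top_{\oml}$. Monotonicity \eqref{umonotone}, truth \eqref{uatopgoal}, inclusion \eqref{uainclusion} and backchain \eqref{uabackchain} all follow because $\iota$ is order-preserving and the corresponding statements hold in $\cat{L}$. The binary connective clauses follow from the Lindenbaum equalities proved in Lemma~\ref{lem:supinf}: $\idseq{P}{G_1 \wedge G_2} = \idseq{P}{G_1} \cap \idseq{P}{G_2}$ gives \eqref{uaand}; $\idseq{P}{G_1 \vee G_2}$ is the smallest ideal containing $\idseq{P}{G_1} \cup \idseq{P}{G_2}$ by the ideal closure condition, giving \eqref{uaor}; and $\idseq{P}{Q \hoe G} = \idseq{P \wedge Q}{G}$ by the usual correspondence between $\hoe_R$ and its cut-free invertibility, giving \eqref{uahorseshoe}.

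The core of the argument consists of the two quantifier clauses. For \eqref{uageneric}, $\bigwedge_{d \in \dee_{i,\alpha}} \fri_i(P, \app(f, d))$ is the intersection $\bigcap_{t} \idseq{P}{G[t/x_\alpha]}$; a class $\braseq{Q}{H}$ belongs to this intersection iff $\braseq{Q}{H} \leqt \braseq{P}{G[t/x_\alpha]}$ for every positive term $t$ of level $i$, iff $\braseq{Q}{H} \leqt \braseq{P}{\forall x_\alpha G}$ by the infimum clause of Lemma~\ref{lem:supinf}, iff $\braseq{Q}{H} \in \idseq{P}{\forall x_\alpha G} = \fri_i(P, \app(\olPi, f))$. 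For \eqref{uainstance}, $\bigvee_{d \in \dee_{i,\alpha}} \fri_i(P, \app(f,d))$ is the smallest ideal containing every $\idseq{P}{G[t/x_\alpha]}$; since $\braseq{P}{\exists x_\alpha G}$ is the join of $\{\braseq{P}{G[t/x_\alpha]} : t \in \dee_{i,\alpha}\}$ in $\oml$, the closure condition on ideals forces this join to contain $\braseq{P}{\exists x_\alpha G}$, hence all of $\idseq{P}{\exists x_\alpha G}$; the reverse inclusion is trivial because each $\braseq{P}{G[t/x_\alpha]} \leqt \braseq{P}{\exists x_\alpha G}$.

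The chief obstacle is the direction $\bigvee_t \idseq{P}{G[t/x_\alpha]} \supseteq \idseq{P}{\exists x_\alpha G}$ in \eqref{uainstance}: without the extra requirement that ideals be closed under suprema that already exist in $\oml$, the class $\braseq{P}{\exists x_\alpha G}$ itself would not be forced into the join, and the equality would fail. It is precisely this closure condition — together with the fact, proved in Lemma~\ref{lem:supinf} using the Miller--Nadathur positive-witness property for cut-free proofs, that the Lindenbaum supremum exists at all — that makes the passage from the $\bbp$-indexed model $\cat{L}$ to the genuine uniform algebra $\ang{\sfd, \umod, \omfib, \fri}$ possible. Everything else is routine pushforward along $\iota$.
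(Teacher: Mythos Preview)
Your proposal is correct and follows essentially the same route as the paper: verify each clause of Definition~\ref{interp} by transferring the corresponding fact about $\oml$ (Lemma~\ref{lem:supinf} and Theorem~\ref{thm:lindenbaum}) through the principal-ideal map. The paper's own proof is considerably terser — it checks \eqref{uaand} explicitly and then dismisses \eqref{uaor}, \eqref{uainstance}, \eqref{uageneric} as ``similar and left to the reader'' — so your write-up actually supplies the detail the paper omits, in particular the observation that the ideal closure condition (closure under $\oml$-suprema of subsets) is exactly what forces $\braseq{P}{\exists x_\alpha G}$ into the join $\bigvee_t \idseq{P}{G[t/x_\alpha]}$, which is the one non-trivial direction.
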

\begin{proof}
  First observe that $\fri_i(R,R)$ and $\fri_i(P,\top)$ each equal the
  set of all pair classes $\braseq{P}{G}$, which is $\top_{\omfib}$,
  since $R\cfpf R$ and $P\cfpf\top$.
  We check a few of the remaining conditions
  (\ref{uaand}),(\ref{uaor}),(\ref{uahorseshoe}),(\ref{uainstance}),(\ref{uageneric}).

  \case{condition \eqref{uaand}}
 $ \idseq{P}{G\wedge G'} = \idseq{P}{G} \cap \idseq{P}{G}$ since
 if $\braseq{Q}{H} \in  \idseq{P}{G\wedge G'} $ then  $\braseq{Q}{H} \leqt  \braseq{P}{G\wedge G'}$, hence  $\braseq{Q}{H} \leqt \braseq{P}{G}$ and
 $\braseq{Q}{H} \leqt \braseq{P}{G}$ by Lemma~\ref{lem:supinf}, therefore
 $\braseq{Q}{H} \in  \idseq{P}{G}\cap \idseq{P}{G'} $. The converse follows similarly by the same lemma.

 \case{conditions~(\ref{uaor}),(\ref{uainstance}),(\ref{uageneric})}
 The arguments are similar and left to the reader.

 \case{condition \eqref{uahorseshoe}}
 From Theorem~\ref{thm:lindenbaum} we know that $\braseq{P}{D \hoe G} = \braseq{P \wedge D}{G}$ for each $i$. Therefore, also $\fri_i(P,D \hoe G) = \fri_i(P \wedge D,G)$.
 
 \case{conditions \eqref{uainclusion} and \eqref{uabackchain}}
 Similar to the previous one.
\end{proof}

We can now give a second proof of completeness, using the results of
\cite{uniform} and the lemmas just established.
\begin{thm}
  If the program-goal pair $P$, $G$ is mapped to $\top$ in all uniform algebras, 
  then the sequent  $P \vdash G$ is derivable by a uniform proof.  
\end{thm}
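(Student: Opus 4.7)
The plan is to instantiate the hypothesis at the ideal-completed Lindenbaum uniform algebra $\langle \sfd, \umod, \omfib, \fri \rangle$ just constructed. As in the preceding completeness theorem, assume $P$ and $G$ are closed (existentially/universally closing free variables if necessary) and take the environment $\varphi$ to be the identity, so that in the term structure $\umod_\varphi(P) = P$ and $\umod_\varphi(G) = G$ (both in principal normal form). Then the hypothesis specializes to $\fri_i(P,G) = \idseq{P}{G} = \top_{\omfib}$.

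Next, $\top_{\omfib}$ is the maximal ideal, namely the set of \emph{all} equivalence classes $\braseq{Q}{H}$. In particular, $\braseq{\top}{\top}$ belongs to $\idseq{P}{G}$, and by definition of the principal ideal this means $\braseq{\top}{\top} \leqt \braseq{P}{G}$. Unfolding the definition of $\leqt$, we obtain the metatheoretic implication: if $\cfseq{\top}{\top}$ has a cut-free proof in \ICTT, then so does $\cfseq{P}{G}$. The antecedent is witnessed by the one-line derivation using $\top_R$, so we extract a cut-free \ICTT proof of $\seq{P}{G}$.

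Finally, to pass from a cut-free \ICTT proof to a uniform proof, we invoke Theorem~\ref{thm:equivalence} from the appendix (following Miller et al.~\cite{uniform}), which establishes that for program-goal pairs $P,G$ in \HOHH, derivability in \ICTT coincides with the existence of a uniform proof. This yields the desired uniform proof of $\seq{P}{G}$.

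The real work of this completeness argument has already been discharged by Lemma~\ref{lem:supinf} and Theorem~\ref{thm:lindenbaum} (that $\cat L$ is a $\mathbb{P}$-indexed parametrically complete uniform algebra with the expected suprema and infima on $\oml$) and by the preceding theorem establishing that the ideal completion $\langle \sfd,\umod,\omfib,\fri\rangle$ is a \emph{bona fide} uniform algebra, so that the hypothesis can legitimately be instantiated at it. Once both facts are in place, the extraction of the cut-free proof from the single observation that $\braseq{\top}{\top} \in \idseq{P}{G}$ is essentially trivial; the main obstacle, now overcome, was ensuring that the Lindenbaum construction lifts from a $\mathbb{P}$-indexed algebra to one satisfying the full (non-parametric) definition of uniform algebra via the ideal topology.
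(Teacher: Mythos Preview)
Your proposal is correct and follows essentially the same route as the paper's (very terse) proof: instantiate at the ideal-completed Lindenbaum algebra to get $\idseq{P}{G}=\top_{\omfib}$, extract a cut-free \ICTT proof of $\seq{P}{G}$, then appeal to Miller et al.\ for uniformity. One small notational slip: $\top$ is not a program formula in \UCTT (cf.\ Definition~\ref{def:hoprograms} and the paper's own remark that there is no syntactic $\top_\pee$), so the witness $\braseq{\top}{\top}$ is ill-formed as written; use instead $\braseq{A_r}{A_r}$ for any rigid atom, or the empty program with goal $\top$, both of which lie in every ideal and have the required cut-free proof.
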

\begin{proof}
\gabox{I don't understand the point of this theorem, since we need to appeal to the results of Millet et al.}
If $P,G$ is sent to $\top_{\omfib}$ then $P\vdash G$ has a cut-free
proof. By the results of Miller et. al., it has a uniform proof.
\end{proof}
\paragraph*{Semantic proof of uniformity}
A considerable reworking of the tools developed in this section is
required to show, directly in the semantics, that validity in the
algebra of ideals just developed
implies a uniform proof. It amounts to showing that the
implication operator in a Heyting Algebra interpretation coincides
with the index-shifting treatment of implication in goals in uniform
algebras, which is beyond the scope of this paper.
Such a \emph{semantic proof of uniformity}
is given, in the setting of Kripke models and $\lambda$Prolog with
constraints, in \cite{lipton-nieva-tcs}. We expect to explore the
Lindenbaum-style model further in future work.

\subsection{Indexed Models and Fibrations}

For the reader familiar with category theory, it should be clear that
we have been working throughout the paper with certain categories indexed over
a category \bbp\ of programs. Because of the operational
interpretation given to implications in goals, it is usually
sufficient to take these categories to be \clat, a category
of parametrically complete lattices, indexed over a category of
programs, whose objects are interpreted
goals or state vectors with
existing $\bigvee,\bigwedge,\top,\bot$ preserving
maps. 


For this treatment of both the syntax and
semantics of Horn and Hereditarily Harrop logic programming, see,
e.g., \cite{asdp09}. Some other categorical treatments are
\cite{powerK96,kp16,kp2011, ffl}.



\newpage

\appendix

\section{Appendix}

\subsection{The Variable Indexing Problem}
\label{sec:varindex}

Consider the following incorrect ``deduction'' of
\[
\forall x. Qxx \horseshoe \exists y\forall z.Qyz
\]
from the empty program.
\[
\begin{array}{ll}
 & \?  \forall x. Qxx \horseshoe \exists y\forall z.Qyz \\
\stackrel{\horseshoe}{\bleadsto} & \forall x.Qxx\ \?
\exists y\forall z. Qyz\\
\stackrel{\exists}{\bleadsto}&\forall x.Qxx\  \? \forall z.Qyz\\
\stackrel{\forall}{\bleadsto}&\forall x.Qxx\ \? Qyc\\
\stackrel{[c/y]}{\bleadsto}& \NULL
\end{array}
\]
versus the following failed deduction in which a counter keeps track
of changing signatures, and the (indexed) logic variable
$y_0$ created when signature 0 was the ambient signature, is inhibited
from unification with constant $c_1$ belonging to signature 1.

\[
\begin{array}{lll}
 & 0& \emptyset \ \?  \forall x. Qxx \horseshoe \exists y\forall z.Qyz \\
\stackrel{\horseshoe}{\bleadsto}&0&\forall x.Qxx\ \? \exists y\forall z. Qyz\\
\stackrel{\exists}{\bleadsto}&0&\forall x.Qxx\ \? \forall z.Qy_0z\\
\stackrel{\forall}{\bleadsto}&1&\forall x.Qxx\ \? Qy_0c_1\\
 & & FAIL
\end{array}
\]

  \subsection{Distributive Lattices and Heyting Algebras}
  \gaboxcut{Is this section really needed ? The paper only marginally deals with HAs. In any case,
  anyone reading a part on completenss in intuitionistic logic should know what a HA is.} 

In the interest of making this paper more self-contained we
briefly recapitulate some elementary definitions.

\begin{defn}A lattice is a partially ordered (nonempty) set with two
operations $\land$ and $\lor$, the meet and join, satisfying the
following axioms for all members $a$, $b$ and $c$ of the set:
\begin{gather}
\label{haor}a \leq a \lor b \ \ \ \ \  b \leq a \lor b\\
\label{haand}a \land b \leq a \ \ \ \ \ a \land b \leq b\\
\label{haandor}a \leq c \textrm{ and } b \leq c \textrm{ implies } a
\lor b \leq c\\
\label{haandand}c \leq a \textrm{ and } c \leq b \textrm{ implies } c
\leq a \land b
\end{gather}
A lattice is distributive if for all members $a$, $b$ and $c$ of the
partially ordered set,
\begin{gather}
\label{hadistand}a \land (b \lor c) = (a \land b) \lor (a \land c)\\
\label{hadistor}a \lor (b \land c) = (a \lor b) \land (a \lor c)
\end{gather}
The join $\biglor S$ of an arbitrary set S, if it exists, is the least
upper bound of the set.  Likewise, the meet $\bigland S$ would be its
greatest lower bound. A lattice is called complete if arbitrary joins
and meets exist.
\end{defn}

The meet and join operations serve to model conjunction 
and disjunction. Because of the way implications in goals can be
handled by \emph{shifting} or \emph{reindexing}, namely by augmenting
the program, \HOHH sequents do not strictly requite a semantic
implication operator, although the interpretations will remain (as we will show)  but we
need an additional operator corresponding to implication.  The
following definition of a Heyting (or pseudo-Boolean) algebra is taken
from \cite{TvD}.
\begin{defn}A Heyting algebra $\Omega$ is a lattice with a least element
$\bot$ and an operation $\harr$ defined on all pairs of elements of
$\Omega$ such  that
\begin{equation}\label{haarrow}
a \land b \leq c \textrm{ if and only if } a \leq b \harr c
\end{equation}
In a complete Heyting algebra, arbitrary meets and joins exist.
\end{defn}
In later sections, we will consider 
complete Heyting algebras, as well as those
 with only certain {\em parametrized\/}
infinite meets and joins
corresponding to universal and existential quantification.

It can be shown \cite{TvD} that a Heyting algebra is distributive
and that if $\biglor B$ exists then
\begin{equation}\label{hainfdist}
a \land \biglor B = \biglor \{ a \land b : b \in B\},
\end{equation}
an identity called $\wedge\bigvee$ distributivity.
It is clear from \eqref{haand} and \eqref{haarrow} that a Heyting
algebra has a top element ($a\harr a$ for any $a$), which we denote $\top$. 
The following characterization of the arrow connective will be useful below. 
Any complete lattice satisfying $\wedge\bigvee$ distributivity
\eqref{hainfdist} can be turned into a Heyting algebra by defining
\begin{equation}
\label{eq:haimpchar}
a\harr b := \bigvee \{x: x\wedge a\leq b\}.
\end{equation}
Furthermore, in any (not necessarily complete) Heyting algebra, the
supremum on the right hand side of \eqref{eq:haimpchar} exists, and
the identity holds.

The canonical example of a Heyting Algebra is the collection of open
sets in any topological space. 

\begin{expl}The open subsets of $\mathbb{R}$ form a Heyting algebra
  with $\lor$ denoting union, $\land$ the  intersection (and the
  interior of the intersection in the case of infinite joins),
$\leq$ the subset relation, $\bot$ the empty set and $a \harr b :=
\bigcup\{w: w \cap a \subseteq b \}$.  Then $\top$ is evidently
$\mathbb{R}$ and \eqref{hainfdist} holds, but the dual statement 
\begin{equation*}
a \lor \bigland B = \bigland \{ a \lor b : b \in B\}
\end{equation*}
need not:  let $a$ be $\mathbb{R} \setminus \{ 0 \}$ and $B$ be the
set of intervals $\{(-\frac{1}{n},\frac{1}{n}): n \in \mathbb{N}\}$.
Here $a \lor \bigland B = a \lor \emptyset = a$  but $\bigland \{ a
\lor b : b \in B\} = \bigland \{ \mathbb{R} : b \in B\} = \mathbb{R}$.
\end{expl}
The failed dual to $\wedge\biglor$-distributivity
 corresponds, in standard semantics for first-order logic,
 to the truth of the formula $A \lor \forall
x.B(x) \textrm{ iff } \forall x. (A \lor B(x))$, which is not an
intuitionistically valid principle, and 
therefore something we do not want to satisfy in our
models.  This formula corresponds in turn to 
Axiom $6^\alpha$ of Church's classical Theory of
Types \cite{church40}.

  \subsection{Logic Programming Proof procedures}
  \label{subsec:pp}
  
   We remind the reader that  a \emph{uniform proof} in \ICTT is a proof
  satisfying the following conditions:
  \begin{itemize}
  \item only program formulas $D$ may appear as antecedents in a
    sequent, and only goal formulas in the consequent, as defined in
    Definition~\ref{def:hoprograms},
  \item In the $\forall_L$ and $\exists_R$ rules, the witness $t$ must
    be a \emph{positive term} in $\hee$ ($\hhh_2$ in \cite{uniform}),
  \item Every sequent with a nonatomic consequent is the conclusion of
    a right rule.
  \end{itemize}
  In \cite{HHHorn} and \cite{nadathurPP} Miller and Nadathur define
  proof procedures for Higher-Order Horn and First-order Hereditarily
  Harrop Program-Goal pairs, and show them sound and complete with
  respect to Uniform Proofs. In \cite{nadathurPP}, Nadathur 
  extends the procedure to the higher-order (\HOHH) case.
  
  Our two formulations of a proof procedure  are based on theirs but
  have a somewhat different flavor, so we briefly discuss the
  relationship between them. Miller and Nadathur define a \emph{state}
  to be a triple $\ang{G,P,I}$ where $G,P$ are a program-goal pair and
  $I$ is a natural number bounding the labels (our levels) of all
  constants and variables appearing in $G,P$. Instead of our state
  \emph{vectors} they use a tuple $\ang{\cg,\cc,\cv, \cl,\theta}$ where 
  \cg\ is a \emph{set}  of states, \cc,\cv\  a set of constants and
  variables, \cl\ a \emph{labelling function} mapping $\cc\cup\cv$ to
  the set of natural numbers (to indicate their levels). In defining
  proof rules as a set of transitions between such tuples they have no
  need of a selection rule, since all such transitions involve a
  non-deterministic choice of an active member of \cg.

  \subsubsection{A sketch of the equivalence of the uniform proof system
    with the RES(t) system.}
\label{app-subsec:res(t)}

  Now we show that our derived resolution system RES(t) is sound and
  complete with respect to uniform proofs.
  \begin{thm}
    \label{thm:equivalence}
    Let ${\sf \ICTT}^c$ be the sequent calculus in
    Figure~\ref{fig:ictt}, with all constants $c$ drawn from the
    collection $\bigcup_\alpha \univ_i(\alpha)$ defined in
    Section~\ref{hosigs}. 
    Let $P$ be a program and $G$ a goal formula.  Then
    \begin{description}
    \item[(i)] 
    If the levels of $P$ and $G$ are less than or equal to $i \in \nn$,
    and sequent  $P\vdash G$ has a uniform proof there is a
    successful derivation in RES(t) of $\ipresgoal{i}{P}{G}$ with
    computed answer the identity substitution (with respect to the
    free variables in $P,G$).
  \item[(ii)]
     Let $\midres{\mfa}{\theta}{\NULL}$ be a resolution sequence in
  RES(t), and $\ipresgoal{i}{P}{G}$ a state in $\mfa$.
  Then there is a uniform proof
$P\theta\vdash_U G\theta$.
  \end{description}
\end{thm}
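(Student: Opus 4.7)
The plan is to prove both directions by induction on derivation size, using the elaboration machinery (Definition~\ref{hoextprog}) and the companion lemmas (Lemma~\ref{lem:clause-deriv}, Lemma~\ref{lem:clause-renaming}, and the substitution/level lemmas of the preceding sections) to bridge the one-step resolution rules with (possibly long) blocks of sequent rules.

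For direction (i) I would induct on the height of a given uniform proof of $P \vdash G$. Since uniform proofs insist that every non-atomic consequent be introduced by a right rule, I would translate right rules one-for-one with the matching RES(t) transition: $\top_R \rightsquigarrow$ \textbf{null}, $\wedge_R \rightsquigarrow$ \textbf{and}, $\vee_R \rightsquigarrow$ \textbf{or}$_j$, $\hoe_R \rightsquigarrow$ \textbf{augment}, $\exists_R \rightsquigarrow$ \textbf{instance} (reusing the witness $t$ from the sequent calculus), and $\forall_R \rightsquigarrow$ \textbf{generic} (reusing the fresh constant $c$ from $\ICTT^c$; this is where the level counter $i$ is bumped to $i+1$ in step with the fresh constant's label). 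For the {\bf and} case I invoke the product lemma (Lemma~\ref{lem:product}) to merge the two sub-derivations, and for \textbf{augment} the inductive hypothesis applies to $\sequent{P,D}{G}$ exactly as needed. When the consequent becomes atomic $A_r$, the remaining proof is a ``backchain block'': a chain of $\forall_L$/$\hoe_L$/$\wedge_L$ steps starting from some $D \in P$ and ending either at an axiom or at an $\hoe_L$ whose left premiss is a proof of the body of a clause. Reading off this block produces exactly a clause $K \in \elab(D)$ together with the witnesses $t_1,\ldots,t_n$ used in the $\forall_L$ steps. I then take $\delta=\{w_i/x_i\}$ with fresh $w_i$ of level $\leq i$ and $\theta=\{t_i/w_i\}$, so that $A\delta\theta = A[\vec t/\vec x] = A_r$ and the backchain (or \textbf{axiom}, when $K$ has no body) step fires with $\theta$ having domain disjoint from the original free variables. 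The remaining sub-proof of $\sequent{P}{G\delta\theta}$ is shorter than the original, so induction closes the case and the overall computed answer is the identity on $\fv{P,G}$.

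For direction (ii) I would induct on the length of $\midres{\mfa}{\theta}{\NULL}$. The empty state vector is handled by the fact that $\NULL$ corresponds to a vacuous conjunction. Since each step rewrites exactly one selected state $\ipresgoal{i}{P}{G}$ and leaves the rest of $\mfa$ alone (possibly applying a substitution), it suffices to build a uniform proof of $P\theta \vdash_U G\theta$ for the selected state using the inductive hypothesis applied to the tail of the derivation, after which the other states in $\mfa$ inherit their uniform proofs automatically (the selected substitution is applied uniformly to them). The \textbf{null}, \textbf{and}, \textbf{or}, \textbf{augment}, \textbf{instance}, and \textbf{generic} cases each introduce the matching right rule ($\top_R$, $\wedge_R$, $\vee_R$, $\hoe_R$, $\exists_R$ with the used witness, $\forall_R$ with the fresh constant of level $i+1$). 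The \textbf{true} case reduces to $\top_R$ after composition of substitutions. The substantive case is \textbf{backchain} (and its degenerate variant \textbf{axiom}): here the inductive hypothesis gives a uniform proof of $\sequent{P\theta'}{G\delta\theta'}$ for the appropriate clause $\forall \vec x (G \hoe A_r) \in \elab(P)$ and unifier $\gamma$ with $A\gamma = A_r\delta\gamma$. Applying $\hoe_L$ with the axiom $\sequent{P\theta', A_r\delta\gamma\theta''}{A_r\delta\gamma\theta''}$ and then a chain of $\forall_L$ steps with witnesses $w_i\gamma\theta''$ yields a uniform proof from $\sequent{P\theta', \forall\vec x(G\hoe A_r)\gamma\theta''}{A_r\gamma\theta''}$, and Lemma~\ref{lem:clause-deriv} lifts this into a uniform proof from the full program $P\theta$.

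The main obstacle will be keeping the substitution and level bookkeeping honest in direction (i). The sequent calculus tracks witnesses directly, whereas RES(t) introduces fresh logic variables via $\delta$ and then constrains them through the unifier $\theta$; I will need Lemma~\ref{lem:derivation-fresh} (together with the specialization Lemma~\ref{lem:sub-t} and the instantiation Lemma~\ref{lem:weak-lifting}) to ensure the fresh variables do not leak into the computed answer. A secondary subtlety is recognising the backchain block at an atomic consequent in the correct form: because $\elab$ commutes with substitution (Lemma~\ref{lem:clause-renaming}) and respects conjunctive/universal decomposition (Lemma~\ref{lem:elab-shift}), a straightforward inner induction on the structure of $D \in P$ shows every such block corresponds to some $K \in \elab(D)$, which is exactly what the backchain rule consumes. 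With these lemmas in hand, the translation in both directions is essentially bookkeeping.
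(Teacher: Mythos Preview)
Your treatment of direction (ii) is essentially the paper's: induct on derivation length, translate each resolution rule to the matching right rule, and for \textbf{backchain} use the induction hypothesis on the body, $\hoe_L$ against the axiom, a chain of $\forall_L$ steps, and then Lemma~\ref{lem:clause-deriv} to absorb the clause back into $P$. That part is fine.

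Direction (i), however, has a real gap. You assert that when the consequent is atomic the remaining uniform proof is a ``backchain block'' that decomposes \emph{one} $D\in P$ into a clause $K\in\elab(D)$ via $\wedge_L/\forall_L$ and then a single $\hoe_L$ whose right premiss is an axiom. But a uniform proof does not have this focused shape in general: left rules may interleave the decomposition of several program formulas, and the right premiss of an $\hoe_L$ (which still has atomic consequent) may itself be a long proof using further left rules on other clauses and on the newly added atom. Reading off a single clause from such a proof requires a focusing/permutation lemma that you neither state nor prove. The paper avoids this entirely by keeping the induction on proof depth uniform over \emph{all} last rules, including the left rules individually. The delicate case is $\hoe_L$ with conclusion $\sequent{P,\,G\hoe A}{K}$: from the induction hypothesis one has resolutions of $\ipresgoal{i}{P}{G}$ and of $\ipresgoal{i}{P,A}{K}$, and the paper obtains a resolution of $\ipresgoal{i}{P,G\hoe A}{K}$ by a splicing argument---locate each \textbf{backchain} against $A$ in the second resolution, replace it by a \textbf{backchain} against $G\hoe A$, and splice in (a weakened, level-lifted copy of) the first resolution to discharge the resulting goal $G$; if no such backchain occurs, $A$ is inert and can be dropped. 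The $\forall_L$ case is handled similarly via Lemma~\ref{lem:elab-shift}. Your ``inner induction on the structure of $D$'' does not supply this argument.

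A smaller point: for $\exists_R$ and $\forall_R$ you propose to ``reuse'' the sequent-calculus witness $t$ or constant $c$ directly. In $\ICTT^c$ these carry no level discipline, so they may lie outside $\univ_i$; the paper first replaces them by fresh constants/variables of the correct level (level $\le i$ for $\exists_R$, level $i{+}1$ for $\forall_R$) before invoking the induction hypothesis, which is why the resulting \textbf{instance} and \textbf{generic} steps are legal.
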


\begin{proof}[Proof \underline{of (i)}]
    Without loss of generality, all resolution sequences below will be
    assumed \emph{flat} (Lemma~\ref{lem:weak-lifting}).
    We induct on the \emph{depth} of a given uniform proof.
    Suppose $P\vdash G$ has a uniform proof. 
    If it is a single
    identity sequent, i.e. an occurrence of rule $Ax$ in
    Figure~\ref{fig:ictt}  with conclusion $\sequent{P,A}{A}$
    \gabox{Actually this case does not happen but is needed for inductive hypotehsis.}
    with $A$ atomic, then $\ipresgoal{i}{P,A}{A}$ resolves to
    $\ipresgoal{i}{P,A}{\top}$ in one step, and then to $\NULL$.
    We have a similar result if the proof is a single instance of
    $\top_R$.
    
    If the uniform proof has depth greater than 1, we need to consider
    the last rule used. If it is one of $\wedge_R$, $\vee_R$ or $\hoe_R$ this corresponds
    directly to a resolution step, and the induction hypothesis gives us the rest of the
    sequence to $\NULL$.

    If the last rule used is
    
    \case{$\exists_R$}
    this means there is an uniform proof of $P \vdash G[t/x]$ for some term  $t$ of an undetermined level. Consider $t'$ obtained from $t$ by replacing every variable or constant of level greater than $i$ with a fresh variable or constant of level $i$ or less. Then, by routine arguments we can prove that there is an uniform proof of $P \vdash G[t'/x]$. By inductive hypothesis, we have a successful derivation of $\ipresgoal{i}{P}{G[t'/x]}$. Hence, a single \textbf{instance} step suffices for our proof.
    
    \case{$\forall_R$}
    this means there is an
    uniform proof of $P \vdash G[c/x]$ where $c$ is a fresh constant
    of an undetermined level.  Then, by routine arguments we can prove
    that there is an uniform proof of $P \vdash G[c'/x]$, where $c'$
    is a consant of level $i+1$. By inductive hypothesis, we have a
    successful derivation of $\ipresgoal{i+1}{P}{G[c'/x]}$. Hence, a
    single \textbf{generic} step suffices for our proof.
   
    We consider now the left rules, when
    the consequent is atomic. Only $\wedge_L, \hoe_L,\forall_L$ apply
    in a uniform proof, since $\vee,\exists$ may not occur as
    principal connectives on the left.

    \case{$\wedge_L$}
    Suppose the given uniform proof ends with the following rule
    \[
      \infer[\wedge_L]
      {\sequent{P,D_1\wedge D_2}{G}}
      {\sequent{P,D_1,D_2}{G}}
    \]
    The induction hypothesis gives us a resolution sequence
    \begin{equation}
      \label{eq:commares}
      \resdots{\ipresgoal{i}{P,D_1,D_2}{G}}{id}{\NULL}.
    \end{equation}
    Since $\elab(P \cup \{D_1, D_2\}) = \elab(P \cup \{D_1 \wedge D_2\})$, by the left-weakening lemma (Lemma~\ref{lem:left-weakening}) we have
    \[
      \resdots{\ipresgoal{i}{P,D_1\wedge D_2}{G}}{id}{\NULL}.
    \]
    as we wanted to show.

    \case{$\hoe_L$}
    Suppose the last step in the given uniform proof is
    \[
      \infer[\hoe_L]
      {\sequent{P,G\hoe A}{K}}
      {\sequent{P}{G} & \sequent{P,A}{K}}
    \]
    with $K$ atomic.
    By the induction hypothesis we have RES(t) sequences
    \[
     \resdots{\ipresgoal{i}{P}{G}}{id}{\NULL} \qquad
     \resdots{\ipresgoal{i}{P,A}{K}}{id}{\NULL}.
   \]
   where in the two derivations, the only variables in common
   may be taken to be the ones in common in the two initial states.
   We now consider two cases in the {\bf second} resolution sequence above.
   Suppose {\bf backchain against the clause $A$ never occurs} in this
   sequence, i.e. any backchain step only makes use of
   $\elab(P)$. Then an easy induction on length shows that $A$ can be
   omitted, and by left weakening, we have
    \[
     \resdots{\ipresgoal{i}{P,G\hoe A}{K}}{id}{\NULL}.
   \]

   If  {\bf backchain against the clause $A$ does occur} using a
   substitution $\theta$ then 
  the resolution sequence has the following form where 
the {\bf first} occurrence of such a backchain step  is displayed, and
where weakening left has been used to add $G\hoe A $ to the program
\begin{multline*}
    \ipresgoal{i}{P,G\hoe A,A}{K} \derivation{\theta_1} 
       \mfa\otimes\ipresgoal{j}{P,Q,G\hoe A,A}{K'} 
       \otimes \mfb \derstep{\theta} \\
       \derstep{\theta}  \mfa\theta \otimes \ipresgoal{j}{P,Q,G\hoe A,A}{\top}  \otimes \mfb\theta  
       \derivation{\theta_2}\NULL
       \enspace ,
\end{multline*}
where $j \geq i$, $A\theta = K'\theta$ and $\theta_1\theta\theta_2$ is the
identity when restricted to the free variables of the original state vector.
Note that from $\mfa\theta \otimes \ipresgoal{j}{P,Q,G\hoe A,A}{\top}  \otimes \mfb\theta  \derivation{\theta_2}\NULL$ we may easily extract a derivation $ \mfa\theta \otimes \mfb\theta  \derivation{\theta_2}\NULL$ by removing a \textbf{null} step.


 We now
   \emph{replace the first backchain step against $A$
     with a backchain against the
     clause $G\hoe A$} to obtain
   {\small
     \[
     \cutrest{\ipresgoal{i}{P,G\hoe
         A,A}{K}}{\theta_1}{\mfa\otimes\ipresgoal{j}{P,Q,G\hoe A,A}{K'}
       \otimes \mfb}{\theta} {\mfa\theta\otimes
       \ipresgoal{}{P,Q\theta,G\hoe A,A}{G}
       \otimes \mfb\theta} 
  \]}

Now using the first sequence in the induction hypothesis, we have
$\resdots{\ipresgoal{i}{P}{G}}{id}{\NULL}$. By
  weakening left (Lemma~\ref{lem:left-weakening}) and level increase (Corollary~\ref{cor:level-increase}) we have
  $\resdots{\ipresgoal{j}{P,Q\theta,G\hoe
      A,A}{G}}{id}{\NULL}$.
  Thus we obtain a resolution sequence
  \[
   \resdots{\mfa\theta\otimes
       \ipresgoal{j}{P,Q\theta,G\hoe A,A}{G}
       \otimes \mfb\theta}{id}{\mfa\theta\otimes \mfb\theta} \enspace .
   \]
   Attaching $\resdots{\mfa\theta\otimes\mfb\theta}{\theta_2}{\NULL}$
   from above, we obtain $\resdots{\ipresgoal{i}{P,G\hoe
       A,A}{K}}{id}{\NULL}$.
     Repeating this process for all subsequent
     backchains against $A$ leaves us with a resolution sequence in
     which $A$ is never used (which can be established more formally
     by an easy induction on the length of the resolution sequence),
     and can therefore be removed.
  

     \case{$\forall_L$}
     Suppose the given uniform proof ends with
       \[
         \infer[\forall_L]
         {\sequent{P,\forall x D}{G}}
         {\sequent{P,D[t/x]}{G}}
       \]
       giving us, by the induction hypothesis,
       \[
         \resdots{\ipresgoal{i}{P,D[t/x]}{G}}{id}{\NULL}
       \]
       If no backchain step occurs using clauses in $\elab(D[t/x])$ in
       this proof, then it is straightforward to show, by induction on
       the length of the resolution sequence, that
       \[
         \resdots{\ipresgoal{i}{P,\forall x D}{G}}{id}{\NULL}.
       \]
       Otherwise, suppose the first backchain step using a
       clause in $\elab(D[t/x])$, say $\forall \vec{z}(H\hoe K')$,
       has the form
       \[
         \mfa\otimes\bareres{\ipresgoal{j}{P,D[t/x]}{K}\otimes
           \mfb}{\theta}{\mfa\theta\otimes\ipresgoal{j}{P,D[t/x]}{H'}\otimes\mfb\theta}
         \]
       with $\sigma$ the standardizing apart substitution from
       bound variables into
       fresh variables, $\theta(\sigma(H)) = H'$ and
       $\theta(\sigma(K')) = K$ ($\theta$ the identity on the free
       variables of $P, D[t/x], K$ since we are assuming the derivation to be flat).
       By Lemma~\ref{lem:elab-shift}
       there is a clause $\forall
       x\forall\vec{z}(\hat{H}\hoe\hat{K})\in \elab(\forall x D)$ with 
       $\hat{H}[t/x] = H$ and $\hat{K}[t/x] = K'$.
       For a fresh variable $w$, let the substitutions
       $\theta',\sigma'$ be defined by
       $\sigma'(x) = w$ and $\theta'(w) = t$
       and otherwise identical to $\sigma,\theta$.
       Then the following is a legal backchain step
         \[
         \mfa\otimes\bareres{\ipresgoal{j}{P,\forall x D}{K}\otimes
           \mfb}{\theta'}{\mfa\theta\otimes\ipresgoal{j}{P,\forall x
             D}{\hat{H}\sigma'\theta'}\otimes\mfb\theta} 
       \]
       with $\hat{H}\sigma'\theta' = H'$. Continuing this way with
       every relevant backchain step we obtain a successful resolution
       sequence of the form 
        \[
         \resdots{\ipresgoal{i}{P,\forall x D}{G}}{id}{\NULL}.
        \] \qedhere
     \end{proof}

     \begin{proof}[Proof \underline{of (ii)}]
       Part (ii) of this theorem is established, in a different setting,
       in the proof of Theorem 13 of
\cite{nadathurPP}, which shows that given a closed program formula $P$
and a goal $G$, if there is a derivation (using the proof procedure of
that paper) of $G$ relative to $P$, there is a substitution $\theta$
(the associated answer substitution) such that for any instance $G'$ of
$G\theta$ there is an \emph{intuitionistic} proof of $P\vdash
G'$. By \cite{uniform} the existence of an intuitionistic  proof for a
program-goal pair implies the existence of a uniform one. Our claim is
weaker, since we allow arbitrary unifiers and arbitrary instantiations
of existentially quantified variables. A direct proof is given here to
show that our variant proof system RES(t) is also sound with respect
to uniform proofs.

We induct on the length of the given RES(t) derivation. The base cases are easy and left to the reader.
%
%
%
%
  We also note that if 
   we first assume $\ipresgoal{i}{P}{G}$ in $\mfa$
is \emph{not} the selected goal,  the sequence looks like this: 
\[
\mfa_1\otimes
\ipresgoal{i}{P}{G}\otimes\mfa_2\resub{\theta_0}\mfa_1\theta_0\otimes
\ipresgoal{i}{P\theta_0}{G\theta_0}\otimes\mfa_2\theta_0\residots{\theta_1}\NULL 
\]
possibly with identity substitution $\theta_0$.
By induction hypothesis, letting $\theta = \theta_0\theta_1$ we have
$P\theta \vdash_U G\theta$.

  Suppose the claim holds for sequences of length less than $\ell$ and
  consider all possible first steps where $\ipresgoal{i}{P}{G}$ is the
  selected goal.
  
  \begin{description}
  \item[null] Assume the resolution sequence has the form
  \[
   \mfa \otimes \ipresgoal{i}{P}{\top} \otimes \mfb \derstep \mfa \otimes \mfb \derivation{\theta} \NULL
  \]
  We have $P \vdash_U \top$ by the $\top_R$ rule in \ICTT.
\item[true] Suppose the resolution sequence has the following form:
  \[
    \mfa \otimes \ipresgoal{i}{P}{A} \otimes \mfb \derstep\varphi \mfa\varphi \otimes \ipresgoal{i}{P\varphi}{\top} \otimes \mfb\varphi \derivation{\eta} \NULL  
  \]
  where $A\varphi = \top$ and $\theta$ is $\varphi\eta$ restricted to
  the free variables in the first state vector.
  Inductive hypothesis immediately gives an uniform proof of $P \varphi\eta \vdash \top\eta$, that is $P \theta \vdash_U \top$.

\item[and] is a straightforward case: if the resolution sequence is
   \[
         \mfa_1\otimes \ipresgoal{i}{P}{G_1\wedge G_2}\otimes\mfa_2
         \resub{\wedge}
         \mfa_1\otimes \ipresgoal{i}{P}{G_1}\otimes
         \ipresgoal{i}{P}{ G_2}\otimes\mfa_2 \residots{\varphi}
           \NULL
 \]
         the induction hypothesis gives $P\varphi\vdash_U G_1\varphi$
         and
         $P\varphi\vdash_U G_2\varphi$, so by $\wedge$-right
         $P\varphi\vdash_U (G_1\wedge G_2)\varphi$
\item[backchain] Suppose the resolution sequence has the following form:
  \[
    \mfa \otimes \ipresgoal{i}{P}{A'} \otimes \mfb \derstep{\varphi} \mfa\varphi \otimes \ipresgoal{i}{P\varphi}{G\varphi} \otimes \mfb\varphi\derivation{\eta}  \NULL 
  \]
  where $\forall \vec x(G\hoe A_r) \in \elab(P)$, $\delta$ is the renaming-apart substitution and $A'\varphi = A_r\delta\varphi$.
  By the induction hypothesis and the identity rule we have
  \[
    P\varphi\eta \vdash_U G\varphi\eta \mbox{ and } P\varphi\eta, A_r\delta\varphi\eta
  \vdash_U A'\varphi\eta
\]
so by $\hoe$-left
\[
  P\varphi\eta,G\varphi\eta\hoe A_r\delta\varphi\eta \vdash_U A'\varphi\eta.
\]
By the $U$-contraction lemma, we have $P\varphi\eta\vdash_U
A'\varphi\eta$ hence $P\theta \vdash_U A'\theta$.
\item[instance] Suppose the resolution sequence has the following form
 \[
    \mfa \otimes \ipresgoal{i}{P}{\exists x G} \otimes \mfb \derstep{\exists} \mfa \otimes \ipresgoal{i}{P}{G[t/x]} \otimes \mfb \derivation{\eta} \NULL  
    \]
    with $t$ in ${\cal U}_i$. 
    By the induction hypothesis $P\eta\vdash_U G[t/x] \eta$,
    hence  $P\eta\vdash_U (G\eta)[t\eta/x]$. By $\exists_R$
    we have $P\eta\vdash_U \exists x (G\eta)$ so
    $P\eta\vdash_U (\exists x G)\eta$.
  \item[augment] Easy application of $\hoe_R$, left to the reader.
  \item[generic] Suppose the resolution sequence has the following form:
  \[
    \mfa \otimes \ipresgoal{i}{P}{\forall x G} \otimes \mfb \derstep{\forall} \mfa \otimes \ipresgoal{i+1}{P}{G[c_{i+1}/x]} \otimes \mfb \derivation{\eta} \NULL  
    \]
  By the induction hypothesis, there is a proof of $P \eta \vdash_U G[c_{i+1}/x]\eta$ where $c_{i+1}$ is fresh. Since $\eta$ is level preserving and $G$ is of level $i$ or less, the constant $c_{i+1}$ cannot appear in $G\eta$. Therefore, a single application of the $\forall_R$ rule gives $P \vdash_U \forall x G$. \qedhere
  \end{description}
\end{proof}




{\footnotesize
\bibliography{newtau}}

\end{document}
